\numberwithin{equation}{section}
\theoremstyle{plain}
\newtheorem{theorem}{Theorem}
\newtheorem{lemma}{Lemma}[section]
\newtheorem{corollary}{Corollary}
\newtheorem{proposition}{Proposition}
\theoremstyle{definition}
\theoremstyle{remark}
\newtheorem{remark}{Remark}[section]
\newcommand{\R}{\mathbb{R}}
\newcommand{\E}{\mathbb{E}}
\newcommand{\cH}{\mathcal H}
\newcommand{\dist}{\,\mathrm{dist}\,}
\newcommand{\wt}{\widetilde}
\newcommand{\cF}{{\mathcal F}}
\newcommand{\cS}{{\mathcal S}}
\newcommand{\cP}{{\mathcal P}}
\newcommand{\cI}{{\mathcal I}}
\newcommand{\cB}{{\mathcal B}}
\newcommand{\cN}{{\mathcal N}}
\newcommand{\cQ}{{\mathcal Q}}
\newcommand{\cG}{{\mathcal G}}
\newcommand{\cE}{{\mathcal E}}
\newcommand{\cC}{{\mathcal C}}
\newcommand\red[1]{{\color{red}#1}}
\newcommand{\Z}{{\mathbb{Z}}}
\newcommand{\eps}{\varepsilon}
\newcommand{\bp}{\noindent {\it Proof}.\,\,}
\newcommand{\ep}{\hfill$\Box$ \vskip 0.08in}
\newcommand{\N}{\mathbb{N}}
\newcommand{\fl}[1]{\left \lfloor  #1 \right \rfloor}
\newcommand{\cl}[1]{\left \lceil  #1 \right \rceil}
\newcommand{\ipc}[2]{\left \langle #1 ,\, #2 \right \rangle}
\newcommand{\Ev}[1]{\mathbb{E} \left( #1 \right)}  %% produces \E( # )
\newcommand{\card}[1]{{\rm Card}\left\{\, #1 \, \right\}}
\newcommand{\cards}[1]{{\rm Card}\left( #1 \right)}
\renewcommand{\P}{\mathbb{P}}
\renewcommand{\Pr}[1]{ {\P}\left\{\, #1 \, \right\}}
\newcommand{\Prr}[1]{{\P}\left (\, #1 \, \right)}
\begin{document}

\title[The landscape law]{The landscape law for tight binding Hamiltonians}
\author[D. Arnold,  M. Filoche, S. Mayboroda, W. Wang, and S. Zhang]{Douglas Arnold,  Marcel Filoche, Svitlana Mayboroda, Wei Wang, and Shiwen Zhang
}
\newcommand{\Addresses}{{% additional braces for segregating \footnotesize
  \bigskip
  \vskip 0.08in \noindent --------------------------------------
%\vskip 0.10in

  \footnotesize

\medskip

D.~ Arnold, \textsc{School of Mathematics, University of Minnesota, 206 Church St SE, Minneapolis, MN 55455 USA}\par\nopagebreak
  \textit{E-mail address}: \texttt{arnold@umn.edu }

\vskip 0.4cm

M.~Filoche, \textsc{Physique de la Mati\`ere Condens\'ee, Ecole Polytechnique, CNRS, Institut Polytechnique de Paris, Palaiseau, France}\par\nopagebreak
  \textit{E-mail address}: \texttt{marcel.filoche@polytechnique.edu}
\vskip 0.4cm

  S.~Mayboroda, \textsc{School of Mathematics, University of Minnesota, 206 Church St SE, Minneapolis, MN 55455 USA}\par\nopagebreak
  \textit{E-mail address}: \texttt{svitlana@math.umn.edu}

\vskip 0.4cm

  W. ~Wang, \textsc{School of Mathematics, University of Minnesota, 206 Church St SE, Minneapolis, MN 55455 USA}\par\nopagebreak
  \textit{E-mail address}: \texttt{wang9585@umn.edu }
  
\vskip 0.4cm

S.~Zhang, \textsc{School of Mathematics, University of Minnesota, 206 Church St SE, Minneapolis, MN 55455 USA}\par\nopagebreak
  \textit{E-mail address}: \texttt{zhan7294@umn.edu }
  
}}

\begin{abstract} The present paper extends the landscape theory pioneered in \cite{FM, ADFJM-CPDE, DFM} to the tight-binding Schr\"odinger operator on $\Z^d$. In particular, we establish upper and lower bounds for the integrated density of states in terms of the counting function based upon the localization landscape. 
 \end{abstract}

%\date{01/04/2020. \ (Last updated.)}
\maketitle

%\ms\noindent{\bf Key words/Mots cl\'es.}

%\ms\noindent
% AMS classification: 49Q20 35B65.

\tableofcontents

\section{Introduction and main results}\label{sec:intro}
In this paper, we consider the discrete tight-binding Schr\"odinger operator $H=-\Delta+V$ on $\Z^d$. The traditional approaches to the estimates for the integrated density of states of its continuous analogue in $\R^d$ can roughly be split into two groups, both pertaining to the asymptotic regimes. The first one are the results akin to the Weyl law, estimating the asymptotics of the spectrum as the eigenvalue $\mu\to +\infty$ in terms of the volume in the phase space of the set $\bigl\{(\xi, x):\,|\xi|^2+V(x)\leq \mu\bigr\}$, or  in terms of the associated counting function according to the Fefferman-Phong uncertainty principle \cite{Fe}. Informally speaking, letting $\mu \to +\infty$ corresponds to considering length scales which tend to zero, and hence such estimates, by design, are not relevant for a tight-binding model on $\Z^d$. And indeed, deterministic potentials on $\Z^d$ were typically treated by more ad hoc approaches specific to their structure: periodicity, symmetries, etc, see, e.g., \cite{DLY,HJ}. The second type of results pertains to the case when the potential is random. Then the integrated density of states exhibits the so-called Lifschitz tails, an exponential asymptotic behavior at the %\red{leading} 
edge of spectrum. This regime is rather well-understood in both  $\Z^d$ and $\R^d$, but is in essence probabilistic, restricted to disordered potentials and insensitive to their individual features exhibited, for instance, on finite sets. 

The present paper introduces another approach. It takes advantage of the landscape function $u$ from \cite{FM,DFM} to build a box-counting somewhat analogous to the Weyl law and the Fefferman-Phong uncertainty principle, but associated to a {\it different} potential, the reciprocal of the landscape $1/u$. The use of the landscape in place of the original potential $V$ allows one to work in a non-asymptotic regime, contrary to aforementioned results, and in some sense to bring the ideas behind the original Weyl law to the lattice {\it without restrictions on the potential or the pertinent eigenvalues}, for both deterministic and random scenarios. In practice this approach provides a ``black box", in which the landscape, evaluated directly from the original Hamiltonian, yields an accurate approximation for the integrated density of states without any adjustable parameters, for  deterministic  and random potentials alike -- see the numerical experiments in \cite{FM, ADFJM-PRL, ADFJM-SIAM}. The present paper addresses the estimates from above and below and makes the first step towards the  mathematically rigorous understanding of the precision of the landscape predictions in the aforementioned works.

 In order to describe our main results, we introduce some notations.   Let $\Lambda= (\Z/K\Z)^d\cong \{\bar 1,\cdots,\bar K\}^d$ be an integer torus, where $K\in\N$, $K\geq 3$, and  $\bar k$, $1\leq k\leq K,$ is the congruence class, modulo $K$. For simplicity, we will omit the bar from $\bar k$ when it is clear.   Let $V=\{v_n\}_{n\in {\Lambda}}\in \ell^\infty(\Lambda)$ be a real-valued, non-constant, non-negative potential. We denote by $V_{\max}=\max_{n\in {\Lambda}}v_n$ the amplitude of the potential. 
The  tight binding  Hamiltonian $H $ is the linear operator on $\cH :=\ell^2({\Lambda}) \cong \R^{K^d}$ defined by
\begin{equation}
    (H \varphi)_n=-\sum_{|m-n|_1=1}\left(\varphi_m-\varphi_n\right)\,+ v_n \varphi_n,\ \ n\in {\Lambda}, \label{eq:opH-intro}
\end{equation}
where $|n|_1:=\sum_{i=1}^d|n_i|$ is the $1$-norm on $\Lambda$. We may think of $\varphi$ either as a periodic sequence $\varphi_n$ indexed by $n\in \Z^d$ or as a periodic function $\varphi(n)$ on $\Z^d$. We are interested in the normalized integrated density of states of $H$, i.e., the eigenvalue counting function per unit volume:
\begin{equation}\label{eq:Ndef-intro}
   N(\mu):=K^{-d}\,  \times  \left\{ \, \textrm{the number of eigenvalues\ } \lambda \ {\rm of}\ H\ {\rm such \ that}\ \lambda\le \mu  \right\} .
\end{equation}

In 2012, a new concept called the localization landscape was introduced in \cite{FM}. Given an operator $H$ as above, the discrete localization landscape function is the unique  solution $u=\{u_n\}_{n\in \Lambda}\in \cH$ to the equation $(H u)_n=1$. When $V=0$ and $u$ vanishes on the boundary, the landscape is simply the torsion function of the Dirichlet Laplacian. 

Next, let us define the landscape box counting function $N_u$.  We start by defining, for any positive integer $s$, a partition $\cP(s)$
of the set $\{1,\cdots,K\}^d$ into subsets which are mostly boxes of side length $s$, as follows. Writing $K = qs + r$
(where the quotient $q$ and the remainder $r$ are non-negative integers and $r < s$), we define a
partition $\cP_1(s)$ of the set $\{1,\cdots,K\}$ into $q$ subsets of $s$ consecutive elements, and, if $r>0$, one
additional subset of cardinality $r$. The partition $\cP(s)$ then consists of the boxes defined by the
Cartesian products of $d$ subsets from $\cP_1(s)$, see Figure \ref{fig:Disbox}. 
\begin{figure}
    \centering
    \includegraphics[width=0.5\textwidth]{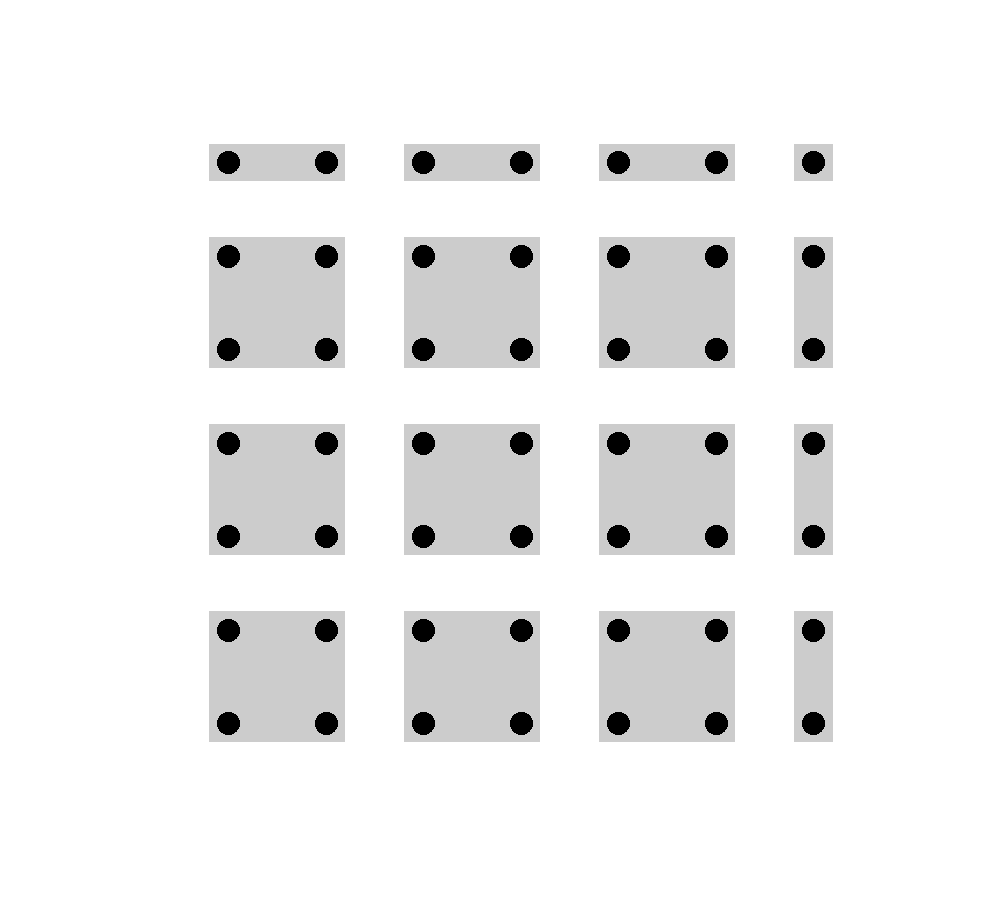}
    \caption{The partition $\cP(2)$ for $\Lambda=\{1,\cdots,7\}^2$.}
    \label{fig:Disbox}
\end{figure}
For given $\mu>0$, we then set $s(\mu)=\cl{\mu^{-1/2}}$, and define $N_u(\mu)$ as the
number of boxes on which the minimum of $1/u_n$ does not exceed $\mu$, normalized by the size of the
set $\Lambda$:
\begin{equation}\label{eq:Nudef-intro}
    N_u(\mu)=K^{-d}\times
     \, \left\{ \textrm{the number of \ } Q\in \cP\big(s(\mu)\big)\ {\rm such \ that}\ \min_{n\in Q}\frac{1}{u_n}\le \mu \right\}.
\end{equation}
Our goal is to estimate the integrated density of states $N$ in terms of the landscape box counting function $N_u$.
To this end, we will establish several estimates, stated below as Theorems~\ref{thm:NNu-intro},
\ref{thm:NNu-intro-2}, and \ref{thm:NNu-intro-scaling}, which we collectively refer to as the \emph{Landscape Law}.

The first result, which will be proven in Section~\ref{subsec:ub}, shows that, after a proper scaling, $N_u$ provides an upper bound for $N(\mu)$ over the whole range of $\mu$.
\begin{theorem}\label{thm:NNu-intro} Let $V\in \ell^\infty(\Lambda)$ be a non-constant, non-negative potential. Then there is a dimensional constant  $C_1>0$, such that 
\begin{equation}\label{eq:N<Nu-intro}
     N\left(\mu \right)\le N_u(C_1\mu) \quad \mbox{for all $\mu>0$.}
\end{equation}
\end{theorem}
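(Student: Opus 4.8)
The plan is to bound the number of eigenvalues of $H$ below $\mu$ by the number of ``active'' boxes in the partition $\cP(s(\mu))$, where a box $Q$ is active if $\min_{n\in Q}1/u_n\le C_1\mu$. The mechanism is the classical landscape inequality: if $Hu=\1$ (pointwise), then for any eigenfunction $H\psi=\lambda\psi$ one has the pointwise control $|\psi_n|\le \lambda\, u_n\,\|\psi\|_\infty$, equivalently $|\psi_n|/u_n\le \lambda\|\psi\|_\infty$. Hence an $\ell^2$-normalized eigenfunction with eigenvalue $\lambda\le\mu$ must be ``essentially supported'' on the sublevel set $\{n:1/u_n\le C\mu\}$ for a suitable constant, because outside a slightly larger sublevel set its pointwise values are forced to be small relative to its sup-norm. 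First I would make this precise: pick the point $n_0$ where $|\psi_{n_0}|=\|\psi\|_\infty$; then $1/u_{n_0}\le\lambda\le\mu$, so $n_0$ lies in an active box. This already shows every eigenfunction ``touches'' an active box, but to count eigenvalues we need a counting (dimension) argument, not just a single point.

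The key step is therefore a spreading/covering argument turning ``each low eigenfunction concentrates near the active set'' into ``$\dim(\text{low eigenspace})\le \#(\text{active boxes})\times C$''. I would argue via a Dirichlet--Neumann bracketing on the partition: decompose $\cH=\bigoplus_{Q\in\cP(s)}\ell^2(Q)$ and compare $H$ with the operator $H^N=\bigoplus_Q H^N_Q$ obtained by imposing Neumann-type conditions across the box boundaries (dropping the off-diagonal hopping terms joining different boxes). Since those dropped terms form a nonnegative operator (the discrete Laplacian boundary form $\sum_{|m-n|_1=1}|\varphi_m-\varphi_n|^2$ is nonnegative and the cross-box part is a sub-sum, which requires the standard sign bookkeeping), we get $H\ge H^N$ in the form sense, hence $N(\mu)\le K^{-d}\sum_Q \#\{\text{eigenvalues of }H^N_Q\le\mu\}$. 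On an \emph{inactive} box $Q$ we have $1/u_n>C_1\mu$ for all $n\in Q$, i.e. $V_n\ge (Hu)_n/u_n -(\text{Laplacian term})\ge \dots$; more robustly, the landscape inequality applied to the restricted problem, together with the fact that $u$ restricted to $Q$ still satisfies $H^N_Q u\le \1$ (the Neumann restriction only removes positive contributions), forces the bottom eigenvalue of $H^N_Q$ to exceed $c\,\min_{n\in Q}1/u_n > \mu$ once $C_1$ is chosen large enough depending on $d$ and on the box side length $s(\mu)\asymp\mu^{-1/2}$. Thus inactive boxes contribute nothing, and each active box contributes at most $|Q|\le s(\mu)^d\lesssim \mu^{-d/2}$ eigenvalues below $\mu$ — but this crude bound is too lossy.

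To get the clean inequality $N(\mu)\le N_u(C_1\mu)$ with each active box counted \emph{once}, the refinement I would use is to choose $s(\mu)=\cl{\mu^{-1/2}}$ precisely so that the free part alone controls the count: on each box $Q$, the Dirichlet Laplacian $-\Delta^D_Q$ has smallest eigenvalue $\asymp s(\mu)^{-2}\gtrsim\mu$, so $-\Delta_Q+V_Q\ge -\Delta^D_Q$ has \emph{at most one} eigenvalue below $\mu$ unless the second Dirichlet eigenvalue is also $\lesssim\mu$, which the $\mu^{-1/2}$ scaling of the side length is designed to preclude up to the dimensional constant. Combining: $N(\mu)=K^{-d}\#\{\lambda_j(H)\le\mu\}\le K^{-d}\sum_{Q}\#\{\lambda_k(H_Q)\le\mu\}\le K^{-d}\#\{Q:\text{active}\}=N_u(C_1\mu)$. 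The main obstacle I anticipate is precisely the bookkeeping in this last paragraph: making the ``at most one eigenvalue per box'' statement hold uniformly requires (i) the correct relation between the sublevel-set constant $C_1$, the side length $\cl{\mu^{-1/2}}$, and the spectral gap of the small box Laplacian, and (ii) handling the ``leftover'' boxes of side length $r<s(\mu)$ in the partition, which are smaller and hence only help. A secondary subtlety is the sign/bracketing bookkeeping for the discrete Laplacian: one must verify that deleting inter-box hopping edges genuinely lowers the quadratic form, which is true because $\langle -\Delta\varphi,\varphi\rangle=\tfrac12\sum_{|m-n|_1=1}|\varphi_m-\varphi_n|^2$ splits as (intra-box sum) $+$ (inter-box sum) with both nonnegative.
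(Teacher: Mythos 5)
Your approach is genuinely different from the paper's, and as sketched it has a real gap precisely at the step you hedge with ``more robustly.'' The claim you lean on is that the landscape function $u$, restricted to a box $Q$, still satisfies $H^N_Q\, u\le \1$ because ``the Neumann restriction only removes positive contributions.'' That monotonicity is correct for the \emph{quadratic form} (dropping inter-box hopping lowers $\langle f,Hf\rangle$, since each $|\varphi_m-\varphi_n|^2\ge 0$), but it does not carry over to the landscape \emph{equation}. Writing it out,
\[
(H^N_Q u)_n \;=\; (Hu)_n + \sum_{\substack{|m-n|_1=1\\ m\notin Q}} (u_m - u_n) \;=\; 1 + \sum_{\substack{|m-n|_1=1\\ m\notin Q}} (u_m - u_n),
\]
so whether $(H^N_Q u)_n\le 1$ or $\ge 1$ at a boundary site $n\in\partial Q$ depends on whether $u$ is smaller or larger just outside $Q$; there is no uniform sign. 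In particular, on an inactive box where $u$ happens to be a local peak of the global landscape, the Neumann landscape $w$ on $Q$ (solving $H^N_Q w=\1$) can be \emph{larger} than $u|_Q$, in which case $\min_Q 1/w$ can drop below $\mu$ even though $\min_Q 1/u>C_1\mu$. So the step ``inactive boxes contribute no Neumann eigenvalues below $\mu$'' is not established, and it is the crux of your counting. (Your second worry --- getting exactly one eigenvalue per active box --- is in fact fine: the second Neumann eigenvalue of $-\Delta^N_Q$ on a cube of side $\cl{\mu^{-1/2}}$ does exceed $\mu$.)

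The paper never localizes the landscape and so avoids this difficulty entirely. It stays with the global operator $H$ and takes the subspace $S$ of vectors whose average vanishes on each \emph{active} box, which has codimension equal to the number of active boxes. For $f\in S$ it combines the two lower bounds $\langle f,Hf\rangle\ge\sum_n\|\nabla f_n\|^2$ and $\langle f,Hf\rangle\ge\sum_n (1/u_n)f_n^2$ (the latter from the landscape uncertainty principle) and then splits by boxes: on an active box the zero-average constraint feeds a discrete Poincar\'e inequality $\sum_Q\|\nabla f\|^2\ge c\,r^{-2}\sum_Q f^2\gtrsim\mu\sum_Q f^2$ with $r=\cl{\mu^{-1/2}}$, while on an inactive box the \emph{pointwise} bound $1/u_n>\mu$ gives $\sum_Q(1/u_n)f_n^2\ge\mu\sum_Q f_n^2$. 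This yields $\langle f,Hf\rangle\ge c_d\,\mu\,\|f\|^2$ on $S$, and min-max gives $N(c_d\mu)\le N_u(\mu)$, i.e.\ $N(\mu)\le N_u(4d\,\mu)$. Notice this achieves ``at most one eigenvalue per active box'' via one linear constraint per box, not via a spectral gap of a localized Neumann operator, and it handles inactive boxes using only the pointwise landscape bound rather than any statement about $\inf\sigma(H^N_Q)$. If you want to rescue the bracketing route, you would need a local form of the landscape bound for $H^N_Q$ on inactive boxes --- essentially a Moser--Harnack type argument ruling out that $u$ is small on a large cube with small $V$ --- which is exactly the nontrivial estimate the paper's subspace construction lets one bypass.
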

\noindent
In saying that $C_1$ is a dimensional constant, we mean that it depends only on $d$, and, in particular, is independent of $K$ and $V$. In fact, we can take $C_1=4d$ in \eqref{eq:N<Nu-intro}.

The next theorem, proved in Section~\ref{sec:non-scaling}, contains the key estimate for obtaining a lower bound for $N(\mu)$.
\begin{theorem}\label{thm:NNu-intro-2}
Retain the hypotheses in Theorem \ref{thm:NNu-intro}. 	Then there are dimensional constants $c_\ast,c_0,c_1,C_0,\alpha_0,c'_0,c'_1,C'_0$ (in particular, independent of $K$ and $V$), such that 
\begin{equation}\label{eq:NNu-intro-2}
N(\mu)\ge c_0{\alpha^d}N_u(c_1 \alpha^{d+2}\mu)-C_0N_u(c_1 \alpha^{d+4}\mu)  \quad \mbox{for all $0<\mu\le c_\ast\alpha^{-4}$ and $0<\alpha<\alpha_0$,}
\end{equation} 
 and
\begin{equation}\label{eq:intro-mu>1}
    N(\mu)\ge c_0' N_u(c'_1 \alpha^{2}\mu)-C_0' N_u(c'_1 \alpha^{4}\mu)  \quad \mbox{for all $\mu> c_\ast\alpha^{-4}$ and $0<\alpha<\alpha_0$.}
\end{equation}

\end{theorem}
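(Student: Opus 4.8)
The plan is to establish the lower bound for $N(\mu)$ by a variational argument, constructing many orthogonal trial functions for $H$ supported on the "bad" boxes — those $Q\in\cP(s(\mu))$ on which $\min_{n\in Q}1/u_n\le\mu$ — and showing that each such box furnishes (after passing to a sub-box of comparable size) at least one test function with Rayleigh quotient $\lesssim\mu$. The natural object to exploit is the \emph{ground state transform}: writing $\varphi=u\psi$, the quadratic form $\langle H\varphi,\varphi\rangle$ transforms, via summation by parts on $\Z^d$, into a weighted Dirichlet form $\sum_{|m-n|_1=1} a_{mn}|\psi_m-\psi_n|^2 + \sum_n \frac{1}{u_n}|\varphi_n|^2$ with weights $a_{mn}\approx u_m u_n$ and an \emph{effective potential} $1/u$ replacing $V$ — this is the discrete analogue of the Filoche–Mayboroda landscape identity. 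On a box $Q$ where $1/u$ is small at some point, the uncertainty-principle heuristic says $1/u$ should be small on a definite fraction of $Q$; combined with Harnack-type control on $u$ (the landscape does not oscillate too wildly between neighbors, since $0<u_n\le \|H^{-1}\mathbf 1\|_\infty$ and $u$ satisfies a discrete elliptic equation), one produces a sub-box $Q'\subset Q$ of side length $\sim\alpha s(\mu)$ on which $1/u\lesssim\alpha^{-2}\mu\cdot(\text{something})$ — the powers of $\alpha$ in \eqref{eq:NNu-intro-2} are precisely the bookkeeping cost of shrinking the box by a factor $\alpha$ and of the weights $a_{mn}$ degenerating.

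Concretely I would proceed in the following steps. First, fix $\mu$ and $\alpha$, and for each bad box $Q$ pick $n_Q$ with $1/u_{n_Q}\le\mu$; using the equation $(Hu)_{n}=1$ and $V\ge0$ one gets a lower bound on $u$ along paths emanating from $n_Q$, hence $u_n\gtrsim u_{n_Q}\gtrsim\mu^{-1}$ on a ball of radius $\sim\alpha\mu^{-1/2}=\alpha s(\mu)$ around $n_Q$, provided $\mu$ is below the threshold $c_\ast\alpha^{-4}$ (this threshold is exactly what makes $\alpha s(\mu)\ge 1$ meaningful and the path argument nontrivial; when $\mu>c_\ast\alpha^{-4}$ the relevant sub-box is a single site or $O(1)$ sites, which is why \eqref{eq:intro-mu>1} has no $\alpha^d$ multiplicity gain). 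Second, on that ball take $\psi$ to be a fixed bump (say a discrete tent function) supported there; then $\varphi=u\psi$ has $\langle H\varphi,\varphi\rangle \lesssim \alpha^{-2}s(\mu)^{-2}\sum a_{mn}|\psi|^2$-type Dirichlet contribution plus $\sum \frac1{u_n}|\varphi_n|^2\lesssim\mu\|\varphi\|^2$ from the landscape-potential term, giving Rayleigh quotient $\lesssim c_1^{-1}\alpha^{-(d+2)}\mu$ after tracking the degeneracy of $a_{mn}\sim u_m u_n$ and the volume factor. Third, the balls around distinct $n_Q$ need not be disjoint, so I would run a Vitali-type selection (or a coloring argument on $\cP$) to extract a sub-collection of size $\gtrsim\alpha^d N_u(\cdots\mu)K^d$ whose associated $\varphi$'s have disjoint supports, hence are orthogonal in $\cH$; the min-max principle then yields $N(c_1\alpha^{d+2}\mu)\gtrsim\alpha^d N_u(c_1\alpha^{d+2}\mu) - (\text{error})$. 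Fourth, the subtracted term $-C_0 N_u(c_1\alpha^{d+4}\mu)$ accounts for the boxes where the above construction fails because $u$ drops too fast — i.e. where $1/u$ is large on \emph{most} of the shrunk box even though it dips below $\mu$ somewhere; these are controlled by a covering/chaining estimate showing their number is $\lesssim N_u$ evaluated at the more stringent threshold $c_1\alpha^{d+4}\mu$, and one discards their trial functions rather than using them.

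The main obstacle I anticipate is the second step: propagating the smallness of $1/u$ from a single site to a full sub-box of side $\sim\alpha s(\mu)$ with \emph{quantitatively sharp} $\alpha$-dependence. The continuous landscape proofs use the maximum principle together with $-\Delta u + Vu = 1$ and comparison with explicit torsion functions on balls; on $\Z^d$ one must replace this by discrete Harnack inequalities and discrete barrier constructions that remain valid when the box side $s(\mu)\sim\mu^{-1/2}$ is not large — indeed it may be $O(1)$ — so genuinely discrete (not asymptotic) arguments are needed, and the constants must be chased carefully to land on the exact exponents $\alpha^{d+2}$ and $\alpha^{d+4}$. A secondary difficulty is organizing the orthogonality/covering step uniformly in $K$ so that no constant depends on the torus size; I expect a deterministic coloring of the partition $\cP(s(\mu))$ by $O_d(1)$ colors, each color class consisting of well-separated boxes, will suffice and will also cleanly produce the two-term (positive minus negative) structure of \eqref{eq:NNu-intro-2}.
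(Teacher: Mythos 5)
Your overall philosophy (variational argument via trial functions built from the landscape, the two-term $+/-$ structure) is sound, but there is a genuine gap at the core step, and the scales are organized in a way that does not quite match what is actually needed.

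The gap is in your Step~2, where you propose to propagate $u_n \gtrsim u_{n_Q} \gtrsim \mu^{-1}$ \emph{pointwise} to a ball of radius $\sim \alpha\,\mu^{-1/2}$ around the site $n_Q$ where $1/u$ dips. The only way to get a two-sided pointwise Harnack for $u$ from $-\Delta u + Vu = 1$ is the inequality $u_n \ge u_m/(2d + V_{\max})$ for neighbors $m,n$; iterating over a ball of radius $\alpha\,\mu^{-1/2}$ gives a constant $(2d+V_{\max})^{-c\,\alpha\,\mu^{-1/2}}$, which depends exponentially on $V_{\max}$ and on $\mu^{-1/2}$ and therefore cannot produce the dimensional constants the theorem demands. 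The paper avoids this entirely: it does not propagate a pointwise lower bound for $u$ at all. Instead it uses only the one-sided, $V$-free inequality $-\Delta u \le 1$ (since $Vu\ge 0$), for which the \emph{subsolution} Moser--Harnack estimate (Lemma~\ref{lem:MH-2}, $\sum_{3\check q} u_n^2 \ge r^d(c_H\sup_{\check q} u^2 - r^4)$ with $c_H$ purely dimensional) converts a pointwise spike of $u$ into an $L^2$ lower bound for $\langle u^Q,u^Q\rangle$. Correspondingly, the roles of your two scales are reversed relative to the paper: the paper's trial function $u^Q_n = u_n\chi^Q_n$ is cut off on the \emph{large} cube $Q$ of side $R \sim \alpha^{-1}\mu^{-1/2}$, while the dip of $1/u$ is detected in a \emph{small} centric sub-cube $\check q$ of side $r\sim\mu^{-1/2}$; you shrink instead of grow, and your Rayleigh-quotient exponent $\alpha^{-(d+2)}$ (rather than $\alpha^{-2}$ for a tent on a ball of radius $\alpha\mu^{-1/2}$) is not actually produced by your construction.

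There is a second, related misidentification in your Step~4: the subtracted term $-C_0 N_u(c_1\alpha^{d+4}\mu)$ does not account for boxes where ``$u$ drops too fast''; it accounts for large cubes $Q$ where $\sup_Q u$ is \emph{too big}. The class $\cF'$ in the paper imposes both $\min_{\check q} 1/u \le \mu$ (spike present) \emph{and} $\min_Q 1/u \ge \alpha^2\mu$ (no extreme spike on the big cube); the second constraint is what bounds the numerator $\langle u^Q, Hu^Q\rangle \lesssim R^{d-2}\sup_Q u^2 + R^d\sup_Q u$, and discarding the cubes that violate it is exactly where $N_u(c_1\alpha^{d+4}\mu)$ enters. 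Once $\cF'$ is in hand, the remaining $\alpha^d$ factor in front of $N_u$ comes from a translation argument on the fine partition $\cP(r;\Lambda)$ (comparable in spirit to your coloring/Vitali idea, so that part is salvageable), but without the correct Moser--Harnack mechanism and the correct role of the large cube, the exponents $\alpha^{d+2}$ and $\alpha^{d+4}$ will not come out.
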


In order to get a positive lower bound on $N(\mu)$,  we will remove the negative correction terms 
on the right-hand side of \eqref{eq:NNu-intro-2} and \eqref{eq:intro-mu>1} 
through several complementary mechanisms. We roughly divide potentials into two regimes, corresponding to 
 potentials satisfying a  certain scaling condition  and to certain disordered potentials.  Note that  these regimes  could overlap and  do not between them cover all potentials. 

\subsection{Deterministic potentials subject to the doubling scaling estimates} 
\begin{theorem}\label{thm:NNu-intro-scaling}
Retain the hypotheses of Theorem \ref{thm:NNu-intro} and assume that $u$ satisfies the scaling condition 
\begin{equation}\label{eq:scaling}
	\sum_{n\in 3Q}u_n^2\le C_S\left(\sum_{n\in Q}u_n^2+\ell^{d+4}\right)
	\end{equation}
	for every cube $Q\subset \Lambda$ of side length $\ell$. Here, $3Q$ is the tripled cube  concentric with $Q$ (see the definition in \eqref{eq:3Q}).
Then there exist positive constants $C_1, \, c_3$ depending on dimension only and  
$c_2$ depending on $d$, $V_{\max}$ and $C_S$  such that 
	\begin{equation}\label{eq:N>Nu2-intro}
	c_3 N_u(c_2\,\mu)\leq N(\mu)\leq N_u(C_1\mu) \quad \mbox{for all $\mu>0$}.
	\end{equation}
\end{theorem}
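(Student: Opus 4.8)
Only the lower bound in \eqref{eq:N>Nu2-intro} requires proof: the upper bound $N(\mu)\le N_u(C_1\mu)$ is precisely Theorem~\ref{thm:NNu-intro} (with the dimensional constant $C_1=4d$), and the scaling hypothesis \eqref{eq:scaling} plays no role there. For the lower bound the plan is to apply Theorem~\ref{thm:NNu-intro-2} with a single, fixed, purely dimensional value $\alpha=\alpha_1\in(0,\alpha_0)$ of the auxiliary parameter, and then to use \eqref{eq:scaling} to remove the negative correction terms in \eqref{eq:NNu-intro-2} and \eqref{eq:intro-mu>1}. Thus it is enough to establish the two absorption inequalities
\[
C_0\,N_u\!\big(c_1\alpha_1^{d+4}\mu\big)\le\tfrac12\,c_0\,\alpha_1^{d}\,N_u\!\big(c_1\alpha_1^{d+2}\mu\big)\ \ (0<\mu\le c_\ast\alpha_1^{-4}),\qquad C_0'\,N_u\!\big(c_1'\alpha_1^{4}\mu\big)\le\tfrac12\,c_0'\,N_u\!\big(c_1'\alpha_1^{2}\mu\big)\ \ (\mu>c_\ast\alpha_1^{-4}),
\]
since then Theorem~\ref{thm:NNu-intro-2} gives $N(\mu)\ge\frac12 c_0\alpha_1^{d}N_u(c_1\alpha_1^{d+2}\mu)$ in the first range and the analogue in the second, and monotonicity of $N_u$ converts this into \eqref{eq:N>Nu2-intro} with $c_3=\min\{\frac12 c_0\alpha_1^{d},\frac12 c_0'\}$ (dimensional) and $c_2\le\min\{c_1,c_1'\}\alpha_1^{d+2}$, possibly to be lowered below.

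The absorption rests on converting $N_u$ into a statement about the size of the landscape. If a box $Q$ of side $s(\nu)=\cl{\nu^{-1/2}}$ has $\min_{n\in Q}1/u_n\le\nu$, then $u$ attains the value $\nu^{-1}$ in $Q$; the Harnack inequality for the discrete landscape — available because $Hu=1$, $u>0$ and $0\le V\le V_{\max}$, and applied at the scale $s(\nu)$ where $s(\nu)^2\asymp\nu^{-1}$ is of the order of the value in question — then forces $u_n\gtrsim\nu^{-1}$ on a definite fraction of $Q$, and on a whole coarser box concentric with $Q$ of comparable side. Two consequences follow. First, $\#\{n\in\Lambda:u_n\ge\nu^{-1}\}$ is comparable, up to $(d,V_{\max})$ constants and up to replacing $\nu$ by a fixed multiple of itself, to $s(\nu)^{d}K^{d}N_u(\nu)$, i.e.\ $N_u(\nu)\asymp\nu^{d/2}K^{-d}\#\{u\gtrsim\nu^{-1}\}$. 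Second, and this is where \eqref{eq:scaling} enters, the $\ell^2$-mass $\sum_{n\in3Q}u_n^2$ controlled by \eqref{eq:scaling} is, on the relevant boxes, comparable to $\nu^{-2}$ times the number of ``high'' sites they contain, because on a ``good'' box of side $\ell=s(\nu)$ the floor term $\ell^{d+4}\asymp\nu^{-(d+4)/2}$ in \eqref{eq:scaling} is exactly of the order of a genuine $\ell^2$-mass $\nu^{-2}|Q|$. Combining these, \eqref{eq:scaling} upgrades to a doubling-type growth bound $N_u(t\nu)\le C(t)N_u(\nu)$ together with the matching reverse comparison $N_u(\nu)\le C\,t^{-d/2}N_u(t\nu)$ up to negligible terms, for $t$ in a bounded range and with constants depending on $d$, $V_{\max}$, $C_S$. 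Since $\alpha_1$ is fixed and dimensional, the two scales $c_1\alpha_1^{d+4}\mu$ and $c_1\alpha_1^{d+2}\mu$ differ by the fixed ratio $\alpha_1^{-2}$, and a short bookkeeping of the powers of $\alpha_1$ shows that the reverse comparison is exactly the absorption inequality above — after the $C_S$- and $V_{\max}$-dependence has been pushed into $c_2$, by, if necessary, separating the two scales slightly further (comparing $c_1\alpha_1^{d+4}\mu$ with $c_2\mu$ for a smaller, dimensional-times-$C_S$-factor $c_2$), which only costs iterating \eqref{eq:scaling} a bounded number of times and hence a factor $C_S^{k}$ with $k$ dimensional. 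The range $\mu>c_\ast\alpha_1^{-4}$ is handled the same way using \eqref{eq:intro-mu>1}; there $s(\mu)$ is bounded (indeed $s(\mu)=1$ once $\mu\ge1$), the boxes are single sites, and everything reduces to the elementary uniform bounds $c(d,V_{\max})\le u_n\le\|u\|_{\ell^\infty}$, which make $N_u$ and $N$ both equal to $1$ for $\mu$ past a fixed multiple of $V_{\max}$.

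The main obstacle is the absorption/doubling step, and inside it two technical points. First, one needs the discrete Harnack inequality for $u$ with a constant depending only on $d$ and $V_{\max}$ — in particular not on $K$ nor on the scale — since this is what allows a single large value of $u$ to ``fill'' a coarse box and what makes $\min_Q 1/u$ and the averaged size of $u$ on $Q$ interchangeable. Second, one must verify that the floor term $\ell^{d+4}$ in \eqref{eq:scaling} is, precisely at the cut-off scale $\ell=s(\mu)=\cl{\mu^{-1/2}}$, no larger in order of magnitude than the true $\ell^2$-mass of $u$ on a box that actually contributes to $N_u(\mu)$, so that iterating \eqref{eq:scaling} across the bounded scale gap does not overwhelm the estimate. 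Once these are in place, the remainder is the constant-chasing that keeps the $C_S$- and $V_{\max}$-dependence in $c_2$ while $c_3$ stays dimensional.
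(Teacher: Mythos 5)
The proposal diverges from the paper at the outset and, more importantly, runs into a real gap. The paper does \emph{not} prove Theorem~\ref{thm:NNu-intro-scaling} by taking Theorem~\ref{thm:NNu-intro-2} and absorbing the negative terms. Instead it reruns the test-function construction from scratch: the family $\cF''$ consists of \emph{all} boxes $Q$ with $\min_{Q}1/u_n\le\mu$ (without the secondary constraint $\min_Q 1/u_n\ge\alpha^2\mu$ that is what produces the negative correction in \eqref{eq:NNu-intro-2}), and the scaling condition \eqref{eq:scaling} is used, together with the scale-free Moser--Harnack inequality for subsolutions of $-\Delta u\le 1$ (Lemma~\ref{lem:MH-2}), to bound $\ipc{u^Q}{u^Q}\ge\sum_{Q/3}u_n^2$ from below by $c\,R^d\sup_Qu_n^2$. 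There is therefore no negative term to absorb, and the constants carry $C_S$- and $V_{\max}$-dependence only in $c_2$. Your route, had it closed, would have been a genuinely different deduction (Theorem~\ref{thm:NNu-intro-2} plus a doubling property of $N_u$), but it does not close.

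The gap is in the step where you assert that the discrete Harnack inequality for $u$, with a constant depending only on $d$ and $V_{\max}$ ``and not on the scale,'' forces $u\gtrsim\nu^{-1}$ on a definite fraction of a good box and thus lets you pass from $\min_Q 1/u$ to the $\ell^2$ mass of $u$ and then to a doubling bound for $N_u$. No such scale-free Harnack estimate is available here. The Harnack inequality that the landscape function actually satisfies (Lemma~\ref{lem:Harnack}, applied to supersolutions of $-\Delta+V$ with $0\le V\le V_{\max}$) has constant $C(d,V_{\max})^{\ell(Q)}$, i.e.\ it degenerates exponentially as the box size grows — which is exactly the regime $\mu\to 0$, $s(\mu)\to\infty$ that matters. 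Indeed, the whole purpose of hypothesis \eqref{eq:scaling} is to serve as a scale-uniform \emph{substitute} for such a Harnack/doubling property: if a scale-free Harnack held, one would not need to assume \eqref{eq:scaling}. You flag this as a ``technical point,'' but it is a missing idea rather than a technicality; the mechanism you invoke to derive $N_u$-doubling from \eqref{eq:scaling} presupposes what is being assumed. A related problem: at the cut-off scale $\ell=s(\mu)\asymp\mu^{-1/2}$ the floor term $\ell^{4}$ in Moser--Harnack is of the same order as $\sup u^2\asymp\mu^{-2}$, so \eqref{eq:uu-lower}-type inequalities degenerate unless the box is taken smaller by a definite dimensional factor (this is why the paper works with $R=\cl{(2C'')^{-1/4}\mu^{-1/2}}$ rather than $\cl{\mu^{-1/2}}$ and only afterwards uses Remark~\ref{rem:pat} to compare partitions). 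Your observation that the floor term matches $\nu^{-2}|Q|$ in order of magnitude is not a strength here — it is precisely what would make the estimate useless without the extra shrinkage of the box.
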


Assumption~\eqref{eq:scaling} is analogous to doubling hypotheses which are commonly used in the continuous case for elliptic PDEs. Such estimates are standard consequences of the Harnack and De Giorgi--Nash--Moser arguments which hold for homogeneous equations and for the Schr\"odinger equation with relatively slowly varying potentials, for instance, within the Kato class. See the discussion in \cite{DFM, Ku, HL}.

The scaling condition \eqref{eq:scaling} also holds whenever $V$ is periodic.  Indeed, suppose that $\{v_n\}$ is periodic in each of the $d$ coordinate directions with period vector $\vec p=(p_1,\cdots,p_d)\in \N^d$ (see, e.g., \cite{DLY,Ea,HJ,RS}). Assume that $K$ is divisible by each $p_i$.
Then, as we show in Section \ref{sec:peri}, the scaling condition \eqref{eq:scaling} is satisfied with $C_S$  depending on $d,V_{\max}$, and $\vec p$, but not on $K$, which yields 
\begin{corollary}\label{cor:periodic}
 Let $H=-\Delta+V$ be as in \eqref{eq:opH-intro}, with a non-trivial periodic potential  $V=\{v_n\}_{n\in\Lambda}$ as above.  Then 
\begin{equation} 
\label{twosided} c_3\, N_u(c_2\,\mu)\, \leq \, N(\mu)\, \le\, N_u(C_1\, \mu)   \quad \mbox{for all $\mu>0$},\end{equation}
where  $C_1,c_3 $ are  dimensional constants and $c_2$ depends on $d,V_{\max}$, and $\vec p$ only. 
\end{corollary}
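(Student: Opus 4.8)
The plan is to deduce \eqref{twosided} by specializing the two general results already in hand. The upper inequality $N(\mu)\le N_u(C_1\mu)$ is exactly Theorem~\ref{thm:NNu-intro}. For the lower inequality I would invoke Theorem~\ref{thm:NNu-intro-scaling}, so the whole task reduces to one point: showing that the landscape $u$ of a non-trivial $\vec p$-periodic potential satisfies the scaling condition \eqref{eq:scaling} with a constant $C_S$ depending only on $d$, $V_{\max}$ and $\vec p=(p_1,\dots,p_d)$, and in particular \emph{not} on $K$. Granting this, Theorem~\ref{thm:NNu-intro-scaling} delivers the lower bound with a dimensional $c_3$ and with $c_2=c_2(d,V_{\max},C_S)=c_2(d,V_{\max},\vec p)$, which is what \eqref{twosided} asserts.

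The first step is to show that $u$ inherits the periodicity of $V$. Since $V\ge 0$ is non-constant, $H=-\Delta+V$ is positive definite on $\cH$, hence invertible, so $u=H^{-1}\1$ is the \emph{unique} solution of $Hu=\1$. Because $p_i\mid K$, the shift $\tau_i\colon n\mapsto n+p_i e_i$ is a well-defined map of $\Lambda$ onto itself; it commutes with $H$ (both the graph Laplacian and multiplication by the $\vec p$-periodic $V$ are $\tau_i$-invariant) and fixes $\1$, so $H(\tau_i u)=\tau_i(Hu)=\1$, and uniqueness forces $\tau_i u=u$. Thus $u$ is $\vec p$-periodic. In parallel I would record the elementary estimate obtained by reading off $Hu=\1$ at a vertex $n$, namely $\sum_{|m-n|_1=1}u_m=(2d+v_n)u_n-1\le(2d+V_{\max})u_n$: since the landscape is positive, each neighbour value obeys $u_m\le(2d+V_{\max})u_n$, and iterating along a shortest path gives $u_n\le(2d+V_{\max})^{|n-n'|_1}u_{n'}$ for all $n,n'\in\Lambda$.

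The second step is the verification of \eqref{eq:scaling}, which I would split according to the side length $\ell\ (\ge 1)$ of the cube $Q$, with threshold $\ell_0:=\max_i p_i$. Put $P:=\prod_i p_i$ and let $S$ be the common value of $\sum_n u_n^2$ over any fundamental period cell (a box of side lengths $p_1,\dots,p_d$). If $\ell\ge\ell_0$, then $Q$ contains at least $\prod_i\lfloor\ell/p_i\rfloor\ge c(\vec p)\,\ell^d/P$ full period cells while $3Q$ meets at most $C(\vec p)\,\ell^d/P$ of them, so by periodicity $\sum_{n\in 3Q}u_n^2\le C(\vec p)\,(\ell^d/P)\,S$ and $\sum_{n\in Q}u_n^2\ge c(\vec p)\,(\ell^d/P)\,S$, whence $\sum_{n\in 3Q}u_n^2\le C'(\vec p)\sum_{n\in Q}u_n^2$ (with the obvious modification when $3Q$ fills all of $\Lambda$). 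If instead $\ell<\ell_0$, then $3Q$ has side $<3\ell_0$ and every vertex of $3Q$ lies within $\ell_1$-distance $\le d\ell_0$ of a vertex of $Q$, so the iterated bound from the first step gives $\max_{n\in 3Q}u_n\le(2d+V_{\max})^{d\ell_0}\max_{n\in Q}u_n$ and hence $\sum_{n\in 3Q}u_n^2\le(3\ell_0)^d(2d+V_{\max})^{2d\ell_0}\sum_{n\in Q}u_n^2$. In either case $\sum_{n\in 3Q}u_n^2\le C_S\sum_{n\in Q}u_n^2\le C_S\big(\sum_{n\in Q}u_n^2+\ell^{d+4}\big)$ with $C_S=C_S(d,V_{\max},\vec p)$; note that for periodic potentials the correction term $\ell^{d+4}$ is not actually needed. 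This establishes \eqref{eq:scaling}, and Corollary~\ref{cor:periodic} follows from Theorem~\ref{thm:NNu-intro-scaling}.

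The one real obstacle I anticipate is the $K$-independence of $C_S$. The Harnack-type propagation used for small cubes loses a factor that grows exponentially in the side length, so on its own it cannot reach cubes of side comparable to $K$; the periodicity of $u$ — a soft consequence of uniqueness for the landscape equation — is exactly what controls the large-cube regime, and the two regimes $\ell<\ell_0$ and $\ell\ge\ell_0$ have to be dovetailed. The remaining point is pure bookkeeping at the torus boundary, i.e.\ when $3Q$ would exceed $\Lambda$ and wraps around: there $3Q$ is all of $\Lambda$ while $Q$ still contains on the order of $(K/\ell_0)^d$ full period cells, and the degenerate case $K<3\ell_0$ is trivial since $\Lambda$ then has bounded cardinality. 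Neither costs more than a factor depending on $d$, $V_{\max}$ and $\vec p$.
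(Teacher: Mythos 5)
Your proposal is correct and follows the same blueprint as the paper's proof in Section~\ref{sec:peri}: reduce the lower bound in \eqref{twosided} to verifying the scaling condition \eqref{eq:scaling} with a constant $C_S(d,V_{\max},\vec p)$ independent of $K$, invoke Theorem~\ref{thm:NNu-intro-scaling}, and verify \eqref{eq:scaling} by splitting into cubes whose side length is below or above the period. Your derivations of the $\vec p$-periodicity of $u$ (shift-invariance of $H$ and $\1$ plus uniqueness of the landscape) and of the large-cube bound (counting translates of the fundamental cell, exploiting that $\sum u_n^2$ is the same over every translate of a $p_1\times\cdots\times p_d$ box) are essentially the paper's. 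Where you genuinely diverge is the small-cube regime $\ell<\max_i p_i$: the paper combines the Harnack inequality of Lemma~\ref{lem:Harnack} with the Moser--Harnack inequality of Lemma~\ref{lem:MH-2} (the latter being the source of the $\ell^{d+4}$ correction in \eqref{eq:scaling}), whereas you re-derive the needed exponential propagation $u_m\le(2d+V_{\max})u_n$ for neighbours directly from the equation $(Hu)_n=1$ --- which is precisely the chain argument used to prove Lemma~\ref{lem:Harnack} in the appendix --- and then close with the trivial bound $\max_{n\in Q}u_n^2\le\sum_{n\in Q}u_n^2$ instead of Moser--Harnack. Your version is a touch more elementary and, as you note, shows the additive $\ell^{d+4}$ term in \eqref{eq:scaling} is superfluous for periodic potentials; the paper's route has the minor virtue of reusing the Moser--Harnack machinery already set up for Theorems~\ref{thm:NNu-intro-2} and~\ref{thm:NNu-intro-scaling}. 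Your wrap-around and $K<3\ell_0$ remarks correctly dispatch the boundary bookkeeping that the paper leaves implicit.
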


Perhaps the major example when \eqref{eq:scaling} might fail is that of disordered systems. Indeed, if on a 1-dimensional lattice we could have an arbitrarily  long  region of $V=0$ followed by an arbitrarily long  region of $V=1$, that would correspond to a region of $-\Delta u=1$ followed by a region of $-\Delta u+u =1$. In the first case $u$ is quadratic and in the second exponential, which clearly destroys the ``doubling" required by \eqref{eq:scaling}. Fortunately, there is a complementary mechanism to obtain an improved estimate akin to \eqref{eq:N>Nu2-intro} from \eqref{eq:NNu-intro-2}.  

To illustrate it, suppose that for $\mu$ belonging to some interval on the positive half-line, we have the bounds
\begin{equation*}
a\, \mu^\beta\, \le \,  N_u(\mu) \,  \le \,  b \,  \mu^\beta,
\end{equation*}
where the power $\beta>d/2$.  Substituting these bounds into \eqref{eq:NNu-intro-2} and choosing $\alpha$ sufficiently small,
it is easy to deduce the lower bound 
\eqref{eq:N>Nu2-intro} on $N(\mu)$ for $\mu$ in the same interval. 
A similar argument can be used to obtain a lower bound for $N(\mu)$, or, more precisely, for the expectation of $N(\mu)$, in the case of Anderson potentials or any disordered potential near  a fluctuation boundary. This is basically due to the fact that the aforementioned exponential nature of the Lifshitz tails ``beats" the negative polynomial correction in \eqref{eq:NNu-intro-2}. 

\subsection{Disordered potentials}\label{subs-disorder} 
Assume that the values $\{v_n\}_{n\in \Lambda}$ are given by independent, identically distributed (i.i.d.) random variables, with common probability measure $P_0$ on $\R$. Denote by $F(\delta)=P_0(v_n\le \delta)$ the common cumulative distribution function of $v_n$ and by 
\[{\rm supp}\,P_0=\Big\{ \,  \mu\in\R:\,P_0\bigl(v_n\in (\mu-\eps,\mu+\eps)\bigr)>0, \,\forall \,\eps>0\, \Big\}\]
the support of the measure $P_0$. 
We   assume that $\inf {\rm supp}\,P_0=0$ and $\sup {\rm supp}\,P_0=V_{\max}>0$. We denote by $\E(\cdot)$ the expectation with respect to the product measure on $\R^{|\Lambda|}$ generated by $P_0$.  
\begin{theorem}\label{thm:NNu-iid-intro}
Let $V=\{v_n\}_{n\in\Lambda}$ be an Anderson-type potential  as above. Let $C_1$ be as in Theorem \ref{thm:NNu-intro}. Then there are  constants $c_5,c_6>0$ depending  on $d$,  the expectation of the random variable, and $V_{\max}$, such that 
\begin{equation} \label{eq:NNu-And-intro}
   c_5\, \E N_u(c_6\,\mu) \le \E N(\mu)\le \E N_u(C_1\, \mu) \ \ \ {\textrm{for all}}\ \  \mu>0.
\end{equation}
Furthermore, there is a  constant $\mu_\ast>0$ depending on $d$ and expectation of the random variable, such that if, in addition, $\mu\le \mu_\ast$, then \eqref{eq:NNu-And-intro} holds with the constants $c_5,c_6$ independent of $V_{\max}$.
\end{theorem}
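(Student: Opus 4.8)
The upper bound in \eqref{eq:NNu-And-intro} is immediate: it is the expectation of the pointwise inequality \eqref{eq:N<Nu-intro}. For the lower bound the plan is to take expectations in Theorem~\ref{thm:NNu-intro-2} and then remove the negative correction terms, exploiting that for an Anderson-type potential $\E N_u$ decays faster than any power of $\mu$ near the bottom of the spectrum. Fixing a small $\alpha<\alpha_0$ to be pinned down later, \eqref{eq:NNu-intro-2} yields
\[
\E N(\mu)\ \ge\ c_0\alpha^d\,\E N_u\!\left(c_1\alpha^{d+2}\mu\right)\ -\ C_0\,\E N_u\!\left(c_1\alpha^{d+4}\mu\right),\qquad 0<\mu\le c_\ast\alpha^{-4},
\]
and likewise for \eqref{eq:intro-mu>1} when $\mu>c_\ast\alpha^{-4}$; everything then reduces to showing that the subtracted term is at most half the first, after which $c_6=c_1\alpha^{d+2}$ and $c_5=\tfrac12 c_0\alpha^d$ will do.

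The heart of the matter is a pair of matching Lifshitz-type bounds for $\E N_u$ at small $\mu$. For the lower bound we count the boxes of $\cP(s(\mu))$ on which the potential is uniformly small: if $v_n\le\delta$ for every $n$ in a cluster of adjacent boxes forming a cube of side $\asymp s(\mu)$, with $\delta\asymp\mu$ chosen small enough, then comparison with the torsion function of the Dirichlet Laplacian forces $\max u\ge\mu^{-1}$ on each box of the cluster; since the torus contains $\asymp\bigl(K/s(\mu)\bigr)^d$ disjoint such clusters and the $v_n$ are i.i.d., this gives $\E N_u(\mu)\gtrsim\mu^{d/2}F(c\mu)^{c'\mu^{-d/2}}$, uniformly in $K$ (the irregular boundary pieces of $\cP$ are a lower-order perturbation). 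For the upper bound we need the reverse implication: a deterministic estimate showing that $u$ can be large on a box only when the potential is small, in a suitable averaged sense, on a comparably large neighbourhood of that box; a union bound over boxes together with a Cram\'er-type large-deviation estimate for the averaged potential then gives $\E N_u(\mu)\lesssim\mu^{-d/2}F(C\mu)^{c''\mu^{-d/2}}$ (alternatively, such an upper bound can be read off from the classical Lifshitz tail for $\E N$ by iterating the rearranged form of \eqref{eq:NNu-intro-2}). Since the two bounds share the same exponential rate $\asymp\mu^{-d/2}$ up to slowly varying factors, replacing $\mu$ by $\alpha^2\mu$ in the subtracted term costs a factor of order $\exp\!\bigl(-c(\alpha^{-d}-1)\mu^{-d/2}\log(1/F(c\mu))\bigr)$, which beats the polynomial $\alpha^{-d}$ once $\mu\le\mu_\ast$; this pins down $\alpha$ and $\mu_\ast$ and settles the range $0<\mu\le\mu_\ast$.

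The remaining ranges are handled by softer arguments. When $\mu\ge V_{\max}+4d$ one has $s(\mu)=1$, $\|H\|\le V_{\max}+4d$, and $u_n\ge 1/V_{\max}$ for all $n$ (a maximum-principle computation on $Hu=1$), so $N_u(\mu)\equiv N(\mu)\equiv 1$ and the claim is trivial. For the intermediate window $\mu_\ast\le\mu\le V_{\max}+4d$ one uses \eqref{eq:intro-mu>1} or \eqref{eq:NNu-intro-2} together with the fact that there both $\E N$ and $\E N_u$ are comparable to the density of clean cubes, hence bounded above by $1$ and below by a positive $K$-uniform constant; a compactness argument then closes this window. Finally the very-low-energy regime $\mu\lesssim K^{-2}$, where $s(\mu)$ exceeds $K$ so that no clean cube fits, is elementary: there $N_u(\mu)$ equals $K^{-d}$ times the indicator of $\{\|u\|_\infty\ge\mu^{-1}\}$, and from $-\Delta u\le 1$ one gets that $\|u\|_\infty$ this large forces $\langle\1,u\rangle\gtrsim\|u\|_\infty K^d$, hence $\lambda_1(H)\lesssim\mu$ and $N(C\mu)\ge K^{-d}=N_u(\mu)$ deterministically. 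For the refinement with $V_{\max}$-free constants when $\mu\le\mu_\ast$, observe that every event used in the Lifshitz regime involves only the values $v_n\lesssim\mu$, so $V_{\max}$ drops out and only $d$, $F$ near $0$, and $\E v$ survive.

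I expect the main obstacle to be the deterministic reverse estimate underlying the Lifshitz upper bound for $\E N_u$ -- that $u$ large on a box implies the potential small, on average, on a comparable neighbourhood, which is a discrete unique-continuation/Harnack-type statement -- together with the bookkeeping required to keep all the clean-cube constants uniform in the torus size $K$.
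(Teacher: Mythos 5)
Your overall strategy matches the paper's: the upper bound is the expectation of the deterministic bound \eqref{eq:N<Nu-intro}; the lower bound comes from averaging \eqref{eq:NNu-intro-2}/\eqref{eq:intro-mu>1} and then killing the negative correction term with matching Lifshitz-type tails for $\E N_u$, and the lower tail $\E N_u(\mu)\gtrsim \mu^{d/2}F(c\mu)^{c'\mu^{-d/2}}$ is obtained, as you describe, by counting clusters of boxes where the potential is uniformly small. You also correctly identify the genuine difficulty: the upper tail $\E N_u(\mu)\lesssim \mu^{d/2}F(C\mu)^{c''\mu^{-d/2}}$ (note $\mu^{d/2}$, not $\mu^{-d/2}$, a typo in your sketch) rests on a deterministic reverse statement that $u$ can be large only if the potential is small on a comparable set. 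This is the paper's Lemma~\ref{lem:climb} and its consequences Lemmas~\ref{lem:Pr}--\ref{lem:Pr-M}: a ``climbing'' argument using the discrete Poisson kernel and Green's function to show that if $u_\xi\ge Mr^2$ on a box of side $r$ but \eqref{eq:card-J} holds (the potential is non-degenerate on a definite fraction of the box), then $u$ must be strictly larger, by a fixed factor $1+\eps$, somewhere in a slightly larger box; iterating this until the box exceeds the torus produces a contradiction with the a~priori bound $u\le 1/\min v_n$, so some $\Omega_k$ must occur, and Chernoff--Hoeffding then gives the tail. This is the core of the proof and is a substantial argument you would need to supply, not just flag.

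Two smaller points on the non-Lifshitz ranges. Your ``alternatively, iterate the rearranged form of \eqref{eq:NNu-intro-2} with the classical Lifshitz tail for $\E N$'' is not a safe shortcut: rearranging \eqref{eq:NNu-intro-2} to isolate $N_u$ produces geometrically growing prefactors $(c_0^{-1}\alpha^{-d}C_0)^k$, and it is not obvious the iteration closes. The paper does not attempt this. For the window $\mu_\ast\le\mu<4d+V_{\max}$ your ``compactness argument'' is not what the paper does and is not fully formed as stated: the quantities are not continuous in any parameter you can compactify, and $K$-uniformity of a lower bound on $\E N_u$ in that window itself requires the Lifshitz lower tail plus a size restriction $K\gtrsim\mu^{-1/2}$, so small $K$ would still need a separate treatment. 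The paper avoids all of this simply by taking $\alpha$ small enough (at the cost of a $V_{\max}$-dependence of $c_5,c_6$) so that $\mu_4=c_1\alpha^{d+4}\mu\le\mu_\ast$ and $\mu_2=c_1\alpha^{d+2}\mu\le 1$ automatically hold for every $\mu<4d+V_{\max}$; then the single Lifshitz-tail computation \eqref{eq:temp6}--\eqref{eq:temp8} covers the whole range at once, and the $V_{\max}$-independence for $\mu\le\mu_\ast$ falls out by rerunning with $\alpha$ chosen independently of $V_{\max}$. I would adopt that cleaner device rather than try to patch the compactness route.
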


We note that in the course of the proof of Theorem~\ref{thm:NNu-iid-intro} we prove the following universal bound on Lifschitz tails in terms of the  cumulative  distribution function $F$. 

\begin{proposition}\label{lLif} Retain the setting of Theorem~\ref{thm:NNu-iid-intro}.
Then there are constants $\mu_0,K_\ast,c_i$, depending only on  the dimension and the expectation of the random variable (but independent of $K$), such that  
\begin{equation}\label{eq:N-tail-intro-new}
c_1 \mu^{d/2}F(c_2 \mu)^{  c_3\mu^{-d/2}}\le \E N(\mu)\le  c_4\, \mu^{d/2}\, F(c_5\, \mu)^{\, c_6\, \mu^{-d/2}} \   \ {\rm for \ all \ } \mu\in(K_\ast/K^2,\mu_0).
\end{equation}
\end{proposition}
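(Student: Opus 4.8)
The plan is to obtain both inequalities in \eqref{eq:N-tail-intro-new} by inserting elementary i.i.d.\ large--deviation estimates into the Landscape Law. The guiding principle is that, in the disordered regime, the landscape box--counting function $N_u$ of \eqref{eq:Nudef-intro} is, up to dimensional factors, a count of the rare cubes of side $\asymp\mu^{-1/2}$ on which the random potential is atypically small, and such cubes are exactly what produces eigenvalues $\le\mu$.

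\emph{Upper bound.} By Theorem~\ref{thm:NNu-intro}, $\E N(\mu)\le\E N_u(C_1\mu)$, and by \eqref{eq:Nudef-intro}, $\E N_u(C_1\mu)=K^{-d}\sum_{Q\in\cP(s(C_1\mu))}\P\!\big(\max_{n\in Q}u_n\ge (C_1\mu)^{-1}\big)$. Since $s(C_1\mu)\asymp\mu^{-1/2}$, the partition has $\asymp K^{d}\mu^{d/2}$ cells, so the task reduces to the bound $\P\!\big(\max_{n\in Q}u_n\ge(C_1\mu)^{-1}\big)\le F(c_5\mu)^{c_6\mu^{-d/2}}$ for a single cube $Q$ of side $\asymp\mu^{-1/2}$. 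The deterministic ingredient — of the kind established in Sections~\ref{subsec:ub}--\ref{sec:non-scaling}, using the monotonicity $u\le u_\delta$, where $u_\delta$ is the landscape of $-\Delta+\delta\,\1_{\{v_n>\delta\}}$ — is that the event $u_{n_0}\ge(C_1\mu)^{-1}$ forces $v_n\le\delta$ (with $\delta\asymp\mu$ suitably chosen) on a definite dimensional fraction of the sites of some cube $B\ni n_0$ of side $\asymp\mu^{-1/2}$: indeed, were $\{v_n>\delta\}$ dense enough on every such cube, it would confine $u_\delta$ at scale $\delta^{-1/2}\asymp\mu^{-1/2}$ and force $u_{n_0}\lesssim 1/\delta<(C_1\mu)^{-1}$. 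A union bound over the $\mathrm{poly}(\mu^{-1})$ candidate cubes $B$, together with the binomial tail estimate $\P\big(\#\{n\in B: v_n\le\delta\}\ge(1-\eta)\#B\big)\le\big(C_\eta F(\delta)^{1-\eta}\big)^{\#B}$, then yields the claim, provided $F(c_5\mu)$ has dropped below a dimensional threshold; this is precisely where the restriction $\mu<\mu_0$ is used, and under the i.i.d.\ hypotheses this threshold is controlled by the mean of $P_0$. The polynomial prefactor is harmlessly absorbed by a small decrease of $c_6$.

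\emph{Lower bound.} Here a direct Dirichlet bracketing is enough. Partition $\Lambda$ into disjoint cubes of side $\ell=\lceil c_d\mu^{-1/2}\rceil$ and call a cube \emph{good} when $v_n\le\mu/2$ on all of it. On a good cube the bottom Dirichlet eigenvalue of $H$ is at most $\lambda_1^{\mathrm{Dir}}(-\Delta|_{\mathrm{cube}})+\mu/2\le C_d\ell^{-2}+\mu/2\le\mu$ once $c_d$ is fixed large (dimensional), so each good cube contributes at least one eigenvalue $\le\mu$. Hence $N(\mu)\ge K^{-d}\#\{\text{good cubes}\}$, and, the cubes being disjoint, taking expectations gives $\E N(\mu)\ge K^{-d}\lfloor K/\ell\rfloor^{d}\,F(\mu/2)^{\ell^{d}}\gtrsim\mu^{d/2}F(c_2\mu)^{c_3\mu^{-d/2}}$, valid as soon as $\ell\le K/2$, i.e.\ $\mu>K_\ast/K^2$; all constants are dimensional. (Alternatively, the same good-cube mechanism gives $\E N_u(\mu)\gtrsim\mu^{d/2}F(c\mu)^{c'\mu^{-d/2}}$, which, fed into \eqref{eq:NNu-intro-2}, also yields the bound: the super-polynomial decay of this tail makes the negative correction $C_0N_u(c_1\alpha^{d+4}\mu)$ negligible against $c_0\alpha^{d}N_u(c_1\alpha^{d+2}\mu)$ once $\alpha$ is chosen small — the ``exponential beats polynomial'' mechanism alluded to above.)

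\emph{Main obstacle.} The crux is the deterministic step in the upper bound: making precise, with the correct scaling ($v_n\lesssim\mu$ on a positive fraction of a cube of side $\asymp\mu^{-1/2}$), that a large value of the landscape forces the potential to be atypically small on a comparable cube, and then verifying that the resulting binomial estimate really produces the sharp exponent $F(c_5\mu)^{c_6\mu^{-d/2}}$ rather than, say, $F(c\,\mu\log\mu^{-1})^{c'\mu^{-d/2}}$. Pinning down the threshold $\mu_0$ — below which $F(c_5\mu)$ lies under the dimensional constant — and quantifying it via the mean of $P_0$ (taking care of the case where $P_0$ has an atom at $0$) is a further technical point; the bracketing, the binomial tail bounds, and the bookkeeping of the $\asymp K^{d}\mu^{d/2}$ cubes and of the admissible range of $\mu$ are routine.
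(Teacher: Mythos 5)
The lower bound is fine and in fact simpler than the paper's: your Dirichlet bracketing argument (disjoint ``good'' cubes of side $\ell\asymp\mu^{-1/2}$ with $v\le\mu/2$, each contributing one Dirichlet eigenvalue $\le\mu$) directly yields $\E N(\mu)\gtrsim\mu^{d/2}F(\mu/2)^{\ell^d}$, bypassing the paper's detour through $\E N_u$, the Landscape Law \eqref{eq:NNu-intro-2}, and the cancellation of the negative correction term in the proof of Theorem~\ref{thm:randomNNu}. (Your alternative second route for the lower bound, feeding $\E N_u\gtrsim\dots$ into \eqref{eq:NNu-intro-2}, does need the $N_u$ \emph{upper} bound to beat the correction term, so it is not really an independent route.)

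The upper bound, however, has a genuine gap at precisely the point you flag as ``the crux.'' Your proposed deterministic step — that $u_{n_0}\ge(C_1\mu)^{-1}$ forces $v_n\le\delta$ on a positive fraction of \emph{some cube $B\ni n_0$ of side $\asymp\mu^{-1/2}$}, argued by contraposition via ``if $\{v_n>\delta\}$ is dense at scale $\delta^{-1/2}$ everywhere, then $u_\delta\lesssim 1/\delta$'' — does not localize. The confinement heuristic, even if one proves the (nontrivial) pointwise bound on $u_\delta$, produces only the existence of a sparse cube \emph{somewhere in $\Lambda$}, not one containing $n_0$: the landscape value at $n_0$ is a nonlocal functional of the potential, and a high peak of $u$ at $n_0$ can be caused by a sparse region far away. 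That is why your union bound ``over the $\mathrm{poly}(\mu^{-1})$ candidate cubes $B$'' cannot be set up as stated. The paper addresses exactly this by the iterative ``climbing'' mechanism (Lemma~\ref{lem:climb} through Lemma~\ref{lem:Pr-M}): if the cube of side $r$ around $\xi$ is not sparse, the landscape does not stop growing but must jump to a slightly larger nearby value, forcing a chain of cubes at geometrically increasing scales $r_k=\lfloor\sqrt{1+\eps}\,r_{k-1}\rfloor$ that terminates only at the scale of the whole domain; the union bound is then over $O(\eps^{-d}\log K)$ events whose probabilities $F(C_P r_k^{-2})^{r_k^d/2}$ decay fast enough in $k$ to be summable. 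This multi-scale structure — rather than a single-scale Fefferman--Phong confinement — is what produces the exponent $\mu^{-d/2}$ and is the actual technical heart of the Proposition; it is not a detail that can be ``made precise'' within the scheme you propose.
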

To the best of our knowledge, this statement has never been formulated in this generality, even though perhaps it would not surprise the experts. The more traditional, weaker double log asymptotics are now considered classical (see \cite{Li, KM2, Si1, Ki}) and for certain classes of random potentials they have been improved in \cite{BiKo, Ko, KM1} and other works. Here, Proposition~\ref{lLif} does not carry any a priori assumptions on the underlying probability distribution, and is a by-product of the landscape  method.  

\subsection{The dual landscape and computational examples} 
Contrary to the continuous case, the spectrum of the discrete Schr\"odinger operator $H=-\Delta+V$ is a compact subset in $[0,4d+V_{\max}]$. The eigenvalue counting near the top of the spectrum for $\wt \mu$ close to $4d+V_{\max}$ can be converted into the counting near the bottom of the spectrum for $\mu=4d+V_{\max}-\wt \mu$ close to $0$.  Such a conversion  is obtained via  a dual model $\wt H=-\Delta+V_{\max}-V$, see \cite{LMF, WZ}. One defines the dual landscape function $\wt u$ as the solution to $(\wt H\, \wt u)_n=1$ and the box-counting function $N_{\wt u}$ using \eqref{eq:Nudef-intro}, leading to   
\begin{corollary}\label{cor:duallandscape}
Retain the definitions in Theorem \ref{thm:NNu-iid-intro}. Let $C_1$ be as in Theorem \ref{thm:NNu-intro}. Suppose $K\ge 3$ is even. There are positive constants $\wt c_5,\wt c_6$ depending on $d$, the expectation of the random variable, and $V_{\max}$, such that 
\begin{equation}\label{eq:NNu-top-intro}
1-\E N_{\wt u}(C_1\,\wt \mu) \le \E N(\mu)\le  1-\wt c_5\E N_{\wt u}(\wt c_6\,\wt \mu) \ \ {\textrm{for all}}\  \mu<4d+V_{\max}, 
\end{equation}
where $\wt \mu=4d+V_{\max}-\mu$ and $\wt u$ is the landscape function for $-\Delta+V_{\max}-V$. 
\end{corollary}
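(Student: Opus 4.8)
The plan is to realize the top--bottom duality of Section~\ref{subs-disorder} (cf.\ \cite{LMF,WZ}) as an explicit unitary equivalence and then feed the dual operator into Theorem~\ref{thm:NNu-iid-intro}. On $\cH=\ell^2(\Lambda)$ define the involution $U$ by $(U\varphi)_n=(-1)^{|n|_1}\varphi_n$; this is well defined precisely because $K$ is even, so that $n\mapsto(-1)^{|n|_1}$ descends to $\Lambda$, and it is unitary with $U^2=I$. Since any two neighbors on $\Lambda$ differ by $1$ in the $1$-norm, conjugation by $U$ reverses the sign of the hopping term, which gives $U(-\Delta)U=4d\,I-(-\Delta)$, while $UVU=V$ because $V$ acts diagonally. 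Hence $UHU=(4d+V_{\max})I-\wt H$, so $H$ and $(4d+V_{\max})I-\wt H$ are unitarily equivalent, and therefore $\sigma(H)=(4d+V_{\max})-\sigma(\wt H)$ with multiplicities. Consequently, for every realization of the potential,
\begin{equation*}
N(\mu)=K^{-d}\,\#\bigl\{\lambda\in\sigma(\wt H):\lambda\ge\wt\mu\bigr\}=1-K^{-d}\,\#\bigl\{\lambda\in\sigma(\wt H):\lambda<\wt\mu\bigr\},\qquad \wt\mu=4d+V_{\max}-\mu>0 .
\end{equation*}

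Next I would verify that $\wt H=-\Delta+(V_{\max}-V)$ is again an Anderson-type Hamiltonian in the sense of Theorem~\ref{thm:NNu-iid-intro}: the entries $V_{\max}-v_n$ are i.i.d.\ and non-negative, with common law the push-forward of $P_0$ under $x\mapsto V_{\max}-x$, so $\inf\supp$ of this law is $V_{\max}-\sup\supp P_0=0$, its $\sup\supp$ is $V_{\max}-\inf\supp P_0=V_{\max}$, and the expectation of its entries is $V_{\max}-\E v_n$. Whenever $V\not\equiv V_{\max}$ the operator $\wt H$ is positive definite on $\cH$ (any kernel element is constant, but $V_{\max}-V$ does not annihilate constants), so the dual landscape $\wt u=\wt H^{-1}\mathbf 1$ exists, is positive, and $N_{\wt u}$ is given by \eqref{eq:Nudef-intro} with $u$ replaced by $\wt u$; the degenerate events are absorbed exactly as the event $V\equiv 0$ is for $H$ itself. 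Writing $\wt N$ for the integrated density of states of $\wt H$, Theorem~\ref{thm:NNu-iid-intro} applied to $\wt H$ supplies a dimensional constant $C_1$ together with $c_5,c_6>0$ depending on $d$, $V_{\max}$, and $V_{\max}-\E v_n$, such that $c_5\,\E N_{\wt u}(c_6\,\nu)\le\E\wt N(\nu)\le\E N_{\wt u}(C_1\,\nu)$ for all $\nu>0$.

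Finally I would combine the two ingredients. For the lower bound on $N$, $\#\{\lambda\in\sigma(\wt H):\lambda<\wt\mu\}\le\#\{\lambda\le\wt\mu\}=K^d\,\wt N(\wt\mu)$, so $N(\mu)\ge 1-\wt N(\wt\mu)$ pointwise; taking expectations and using the upper bound of Theorem~\ref{thm:NNu-iid-intro} for $\wt H$ gives $\E N(\mu)\ge 1-\E N_{\wt u}(C_1\wt\mu)$. For the upper bound, $\#\{\lambda\in\sigma(\wt H):\lambda<\wt\mu\}\ge\#\{\lambda\le\tfrac12\wt\mu\}=K^d\,\wt N(\tfrac12\wt\mu)$ (valid since $\wt\mu>0$), so $N(\mu)\le 1-\wt N(\tfrac12\wt\mu)$ pointwise; taking expectations and using the lower bound of Theorem~\ref{thm:NNu-iid-intro} for $\wt H$ yields $\E N(\mu)\le 1-c_5\,\E N_{\wt u}(\tfrac12 c_6\wt\mu)$, which is the assertion with $\wt c_5=c_5$ and $\wt c_6=\tfrac12 c_6$. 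The only point that needs any care is the strict-versus-non-strict inequality in the eigenvalue count at the exceptional $\mu$ for which $\mu\in\sigma(H)$; this is exactly why the upper bound is routed through the auxiliary point $\tfrac12\wt\mu$ instead of $\wt\mu$, at the harmless cost of halving one constant. The genuine content is the explicit unitary equivalence $UHU=(4d+V_{\max})I-\wt H$ and the observation that $\wt H$ falls within the scope of Theorem~\ref{thm:NNu-iid-intro}; everything else is bookkeeping of the top--bottom spectral conversion.
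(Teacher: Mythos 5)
Your proof is correct and follows essentially the same path the paper takes: define the dual operator $\wt H=-\Delta+V_{\max}-V$ via the sign twist $\varphi_n\mapsto(-1)^{|n|_1}\varphi_n$ (the paper writes $(-1)^{s(n)}$ with $s(n)=\sum n_j$, which is the same sign on the lattice), convert eigenvalue counting via $\lambda\mapsto 4d+V_{\max}-\lambda$, verify that $V_{\max}-V$ is again an Anderson-type potential with the same $\supp$ endpoints, and feed $\wt H$ into Theorem~\ref{thm:NNu-iid-intro}. Where you do something slightly cleaner than the paper is in the strict-versus-nonstrict count: the paper defines $N^{-}(\wt\mu;\wt H)$ with a strict inequality and then asserts without proof that the theorems ``still hold'' for $N^{-}$; you avoid that appeal entirely by sandwiching $\wt N(\wt\mu/2)\le N^{-}(\wt\mu)\le\wt N(\wt\mu)$ and applying Theorem~\ref{thm:NNu-iid-intro} only to the honest $\wt N$, absorbing the factor of $1/2$ into $\wt c_6$. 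This is a self-contained way to route around the endpoint issue and is the only genuine difference from the paper's argument.
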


If one carefully tracks the values of the constants in \eqref{eq:NNu-And-intro} and \eqref{eq:NNu-top-intro} obtained in the proofs, they are of course far from optimal. However, the formulas emphasize the correct features of the spectrum and, as we have mentioned above, the numerical experiments actually yield even more satisfactory results than the formal estimates seem to warrant. In \cite{DMZDAWF} and accompanying numerical work still in preparation, we show that there are very stable constants $c_1,c_2$ such that a practical landscape law holds: $N(\mu)\approx c_1N_u(c_2\mu)$, see Figure \ref{fig:NNu}. 
\begin{figure}
    \centering 
    \includegraphics[width=0.48\textwidth]{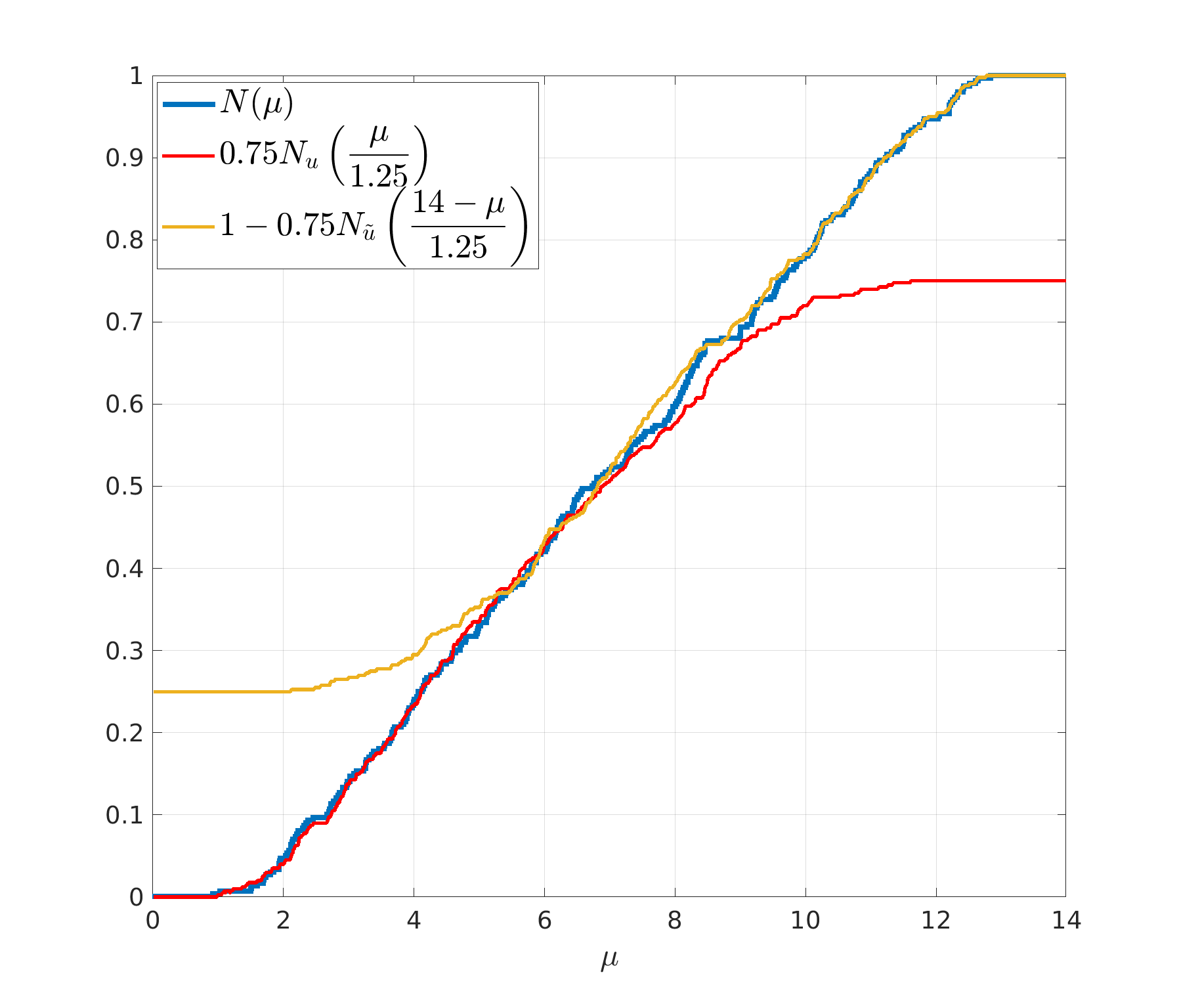}
     \includegraphics[width=0.48\textwidth]{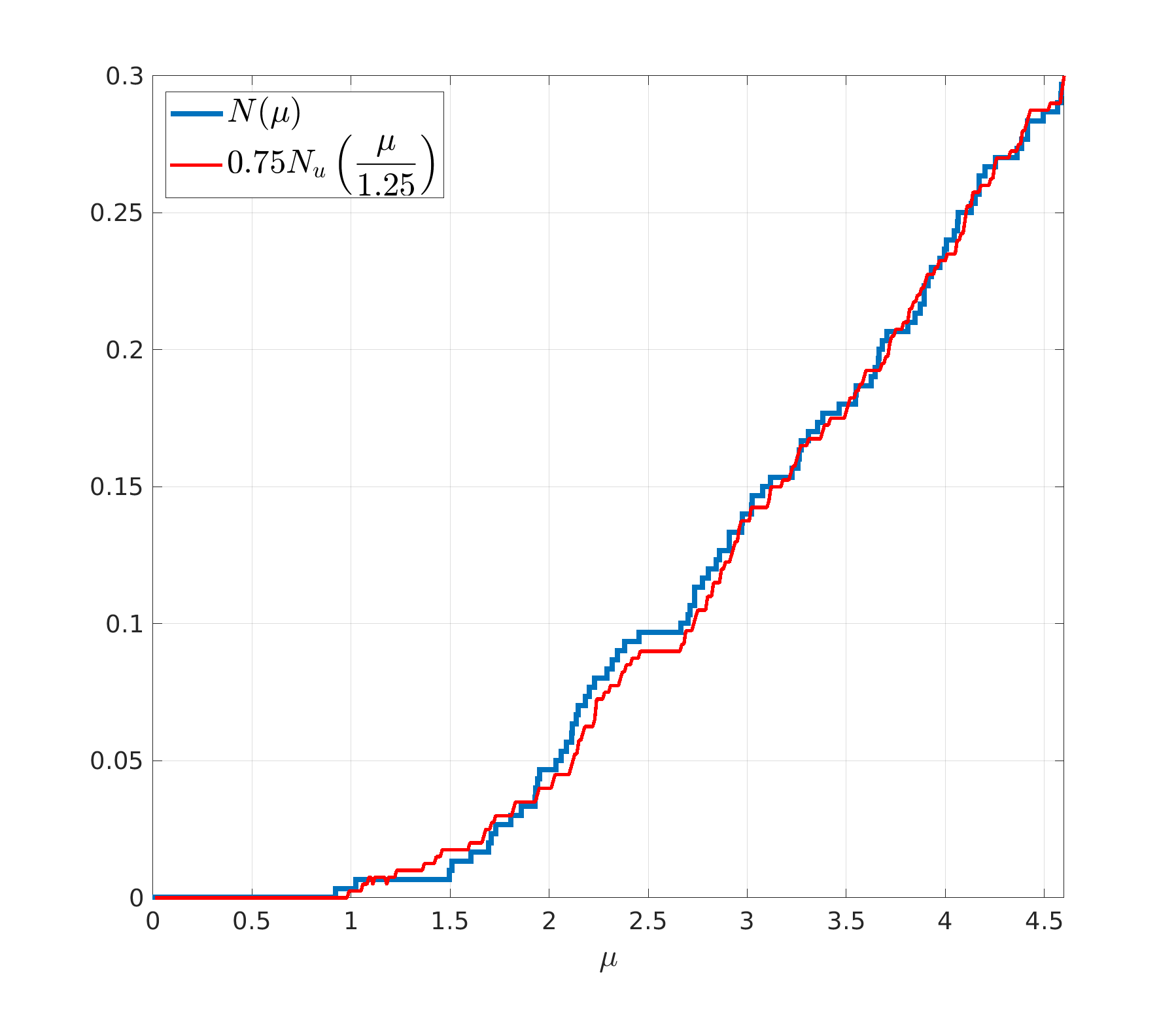}\caption{One-dimensional discrete Schr\"odinger operator on a periodic lattice $\Z/300\Z$, with a random potential $V=\{v_n\}$ uniformly distributed  in $[0,10]$. Comparison between the true eigenvalue counting function $N$, the scaled landscape box-counting function $N_u$, and the dual landscape box-counting function $N_{\wt u}$. The first plot shows the whole spectrum, while the second zooms in on the bottom spectrum. }
    \label{fig:NNu}
\end{figure}
We would like to point out that the landscape counting function, being a deterministic  rather than a probabilistic tool, even picks up a spectral gap around the energy $\mu \approx 2.5$ -- a feature which would not be feasible, for instance, via the Lifschitz tail estimates. These details would of course disappear in the limit of an infinite domain but they demonstrate a surprising precision of the Landscape Law compared to any other currently available method.

\vskip 0.08 in

The rest of the paper is organized as follows. We state preliminaries for tight-binding Hamiltonians and the discrete landscape theory in Section \ref{sec:pre}. In Section \ref{sec:det}, we study deterministic potentials and prove Theorem \ref{thm:NNu-intro}, \ref{thm:NNu-intro-2}, \ref{thm:NNu-intro-scaling}, and Corollary \ref{cor:periodic}. In Section \ref{sec:Anderson}, we concentrate on the Anderson model. We first prove the Lifshitz tail estimates for $N_u$  and finally conclude Theorem \ref{thm:NNu-iid-intro} in Section \ref{sec:LSlaw-iid}. Section \ref{sec:dual} is a discussion of the dual landscape theory. In the Appendix, we include some technical estimates for discrete harmonic functions and a well known probability result called  Chernoff-Hoeffding bound. The key properties of the landscape function strongly rely on the foundations of the theory of  elliptic PDEs. Many of these results require different techniques on $\Z^d$ compared to their continuous analogues. For instance, because of the lack of rotational symmetry and dilational invariance, many estimates for the Poisson kernel and the Green's function are not known on a lattice, and are technically difficult to prove. A substantial portion of the paper is devoted to the discrete analogues of these elliptic estimates, and we hope they will be of independent interest.

%%%%%%%%%%%%%%%%%%%%%%%%%%%%%%%%%%%%%%%%%%%%%%%%%%%%%%%%%%%%%%%%%%%%%%%%%%%%%%%%%%%%%%%%%%%%%%%%%%%%%%%%%
%%%%%%%%%%%%%%%%%%%%%%%%%%%%%%%%%%%%%%%%%%%%%%%%%%%%%%%%%%%%%%%%%%%%%%%%%%%%%%%%%%%%%%%%%%%%%%%%%%%%%%%%%Ack
\vskip 0.1 in

\noindent{Acknowledgments.}  Arnold  is  supported  by the NSF grant DMS-1719694 and Simons Foundation grant
	601937, DNA.  Filoche is supported by Simons Foundation grant  601944, MF. Mayboroda is supported by NSF DMS 1839077 and the Simons Collaborations in MPS 563916, SM.   Wang is supported by Simons Foundation grant 601937, DNA.   Zhang is supported in part by the NSF grants DMS1344235, DMS-1839077, and Simons Foundation grant 563916, SM.

%%%%%%%%%%%%%%%%%%%%%%%%%%%%%%%%%%%%%%%%%%%%%%%%%%%%%%%%%%%%%%%%%%%%%%%%%%%%%%%%%%%%%%%%%%%%%%%%%%%%%%%%%

\section{Preliminaries}\label{sec:pre}
In the tight-binding model, the Hilbert space is taken as the space of sequences
$
    \ell^2(\Z^d)=\left\{\{ \phi_i\}_{i\in\Z^d}\, |\, \sum_{i\in\Z^d}|\phi_i|^2<\infty\right\}
$
where we may think of $\phi=\{\phi_i \}_{i\in\Z^d}$ either as a function $\phi=\phi(i)$ on $\Z^d$ or as a sequence $\{\phi_i\}$ indexed by $i\in\Z^d$. {The $\Z^d$ lattice is equipped with the $1$-norm: }
\begin{equation}\label{eq:norm1}
    |n|_1:=\sum_{i=1}^d|n_i|,
\end{equation}
 which reflects the graph structure of $\Z^d$. We will also frequently need the infinity (maximum) norm 
 \begin{equation}\label{eq:norminfty}
    |n|_\infty:=\max_{1\le i\le d}|n_i|.
\end{equation}
 Two vertices $m=(m_1,\cdots, m_d), n=(n_1,\cdots, n_d)\in \Z^d$ are called the nearest neighbors  if $|m-n|_1=1$. We also say that   nearest neighbors $m,n$ are connected by an edge of the discrete graph $\Z^d$. We denote by $e_i=(0,\cdots, 0,1, 0,\cdots, 0),\,i=1,\cdots, d,$ the elements of the canonical basis of $\Z^d$. For $\phi=\{\phi_n\}_{n\in \Z^d}\in \R^{\Z^d}$,  its $i$-th directional  (forward) difference  $\nabla_i\phi:\, \Z^d \to \R$ is  defined as \begin{equation}\label{eq:nabla-i}
    \nabla_i\phi_n=\phi_{n+e_i}-\phi_{n},\ \  i=1,\cdots, d, 
\end{equation}
and its gradient $\nabla \phi: \Z^d \to\R^d$ is
\begin{equation*}
   \nabla \phi_n=\left(\nabla_1\phi_n,\nabla_2\phi_n,\cdots,\nabla_d\phi_n\right).
\end{equation*}
We also denote the dot product and the induced norm of the resulting vectors by
$(\nabla g \cdot \nabla f)(n)=\sum_{i=1}^d \nabla_i g_n\cdot \nabla_i f_n$, and $|\nabla f|(n):=  \sqrt{ (\nabla f \cdot \nabla f)(n)}.$  
 The discrete (graph) Laplacian $\Delta$ on $\Z^d$ is defined as usual, acting on $\phi=\{\phi_n\}_{n\in \Z^d}$, via
\begin{equation}\label{eq:Lap}
    (\Delta\phi)_n=\sum_{|m-n|_1=1}\left(\phi_m-\phi_n\right)=\sum_{1\le i\le d}\left(\phi_{n+e_i}+\phi_{n-e_i}\, -\, 2\phi_n\right) .
\end{equation}
For a real sequence $\{v_n\}_{n\in\Z^d}$ on $\Z^d$, the potential $V$ is a multiplication operator acting on $\phi\in\ell^2(\Z^d)$ as $(V\phi)_n=v_n\phi_n$. The operator $-\Delta+V$ is called the discrete Schr\"odinger operator on $\Z^d$. If one takes $v_n=v_n(\omega)$  as independent, identically distributed random variables (in some probability space),  the random operator $-\Delta+V(\omega)$ is usually referred to as the Anderson model. We refer readers to \cite{AW,Ki} for more details and a complete introduction to tight-binding Hamiltonians and the Anderson model. 

Throughout the rest of the paper, we consider the discrete Schr\"odinger operator $-\Delta+V$ restricted to a finite domain in $\Z^d$.  Let $\Lambda=(\Z/K\Z)^d\cong\{\bar 1,\bar 2,\cdots,\bar K\}^d$, where $K\in\N$ and  $\bar k$, $1\leq k\leq K,$ is the congruence class, modulo $K$. For simplicity, we often treat $\Lambda$ as a subset of $\Z^d$.  Slightly abusing the notation, we denote by $|\cdot|_1$ the induced 1-norm of $\Z^d$ on the congruence class $\Lambda$, where, for example, we consider two points $(1,n_2,\cdots,n_d)$ and $(K,n_2,\cdots,n_d)$ to be   nearest neighbors and to have distance one from each other in $\Lambda$.  From now on, we will concentrate on the finite dimensional subspace $\ell^2({\Lambda}) $ of $\ell^2(\Z^d)$. We frequently write $\cH:= \ell^2({\Lambda}) \cong \R^{K^d}$ for simplicity. The linear space $\cH$ is equipped with the usual inner product on $\R^{K^d}$, which is denoted by $\ipc{\cdot}{\cdot}=\ipc{\cdot}{\cdot}_{\cH}$.
It is easy to check that for $\phi \in \cH$,
\begin{equation}\label{eq:bdry-peri}
     \phi_{n}=\phi_{n+Ke_i},\ n\in \Lambda,\ i=1\cdots,d ,
\end{equation}
which specifies the periodicity of $\phi$.
 
We assume that  $v_n\ge0$ is real valued and non-constant, and that $\min_{\Lambda}v_n\ge0$.   We set $V_{\max}:=\max_{\Lambda}v_n>0$ and let $H=H_{\Lambda}$  be the restriction of $-\Delta+ V$ to $\cH$:
\begin{equation}\label{eq:opH}
    (H_\Lambda \phi)_n=-(\Delta \phi)_n+(V\phi)_n=-\, \sum_{|m-n|_1=1}\, \left(\phi_m-\phi_n\right)\, +\,  v_n \phi_n,\ \ n\in {\Lambda} .
\end{equation}

Similar to the continuous case, the discrete Hamiltonian can be written in its  Dirichlet form on the periodic lattice $\Lambda$:
\begin{equation*}
    \ipc{\phi}{H\phi}_{\cH}=\sum_{n\in\Lambda}\|\nabla \phi_n\|^2+\sum_{n\in\Lambda}v_n\phi_n^2,
\end{equation*}
where $\|\nabla \phi_n\|^2:=\sum_{i=1}^d|\nabla_i \phi_n|^2$. 

It is easy to check that all eigenvalues of $H$ in \eqref{eq:opH} are contained in $[0,4d+V_{\max}]$ for any finite $K$. For the  Anderson model $H_\infty=-\Delta+V(\omega)$ acting on the entire space $\ell^2(\Z^d)$, it is well known that the spectrum  $\sigma(H_\infty)$ is (almost surely) the non-random set $[0,4d]+{\rm supp}\,V\subset[0,4d+V_{\max}]$. 
 
A linear operator acting a finite dimensional space may be viewed as a matrix. For example,  $H_\Lambda=-\Delta+V$ acting on $\Lambda=\Z/K\Z$ in \eqref{eq:opH}  may  be identified with the sum of the two $K\times K$ matrices, 
\begin{equation}\label{eq:LapPeri}
    -\Delta  =\begin{pmatrix}
	2 & -1 &0  & \cdots &  -1 \\
	-1 &   2 &\ddots   &  \vdots \\
	0 & \ddots& \ddots&\ddots  & 0 \\
	\vdots  &\ddots & \ddots&  2   &-1 \\
	-1  &\cdots & 0&-1 & 2
\end{pmatrix}, \quad \ \ V=\begin{pmatrix}
	  v_1 & 0 &0  & \cdots &  0 \\
	0 &   v_2 & 0 &\ddots   &  \vdots \\
	0 & \ddots& \ddots&\ddots  & 0 \\
	\vdots  &\ddots & \ddots&  v_{K-1}   &0 \\
	0  &\cdots & 0&0 &   v_K
\end{pmatrix}.
\end{equation}

It is easy to verify that $H$ is invertible. 
Moreover, by the maximum principle (see Lemma \ref{lem:maxP-periodic}), all the matrix elements of its inverse are positive, $H^{-1}(i,j)>0$ for all $i,j\in \Lambda$. Therefore, there is a unique positive vector $u\in \ell^2(\Lambda)$ solving the equation $(Hu)_n=1,n\in\Lambda$. 
The equation will be referred to as the \emph{landscape equation} and the solution $u=\{u_n\}_{n\in {\Lambda}}$,  will be called the {\emph{landscape function}}. The function $u$ thus defined is the discrete analogue of the landscape function in \cite{FM} in the continuum setting. The discrete landscape function was first introduced in \cite{LMF}, for a one dimensional lattice with zero boundary conditions. It was studied on a higher dimensional lattice with periodic boundary conditions in \cite{WZ}. The following result can be found in \cite{WZ}.
\begin{theorem}[Theorem 2.10, Lemma 2.12 in \cite{WZ}]\label{thm:landscape}
Assume that $v_n \ge 0$ and is not identically zero.  
Let $u=\{u_{n}\}\in \ell^2({\Lambda}) $ be the  unique solution of  the landscape equation $(Hu)_n=1$. Then 
\begin{equation}\label{eq:u-lower}
    \min_{ {n\in {\Lambda}} }u_n\ge \frac{1}{\max_{n\in\Lambda}v_n} >0.
\end{equation}
\end{theorem}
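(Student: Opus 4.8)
The plan is to argue directly via the discrete minimum principle, using the constant function as a natural comparison barrier. Recall from the discussion preceding the statement that $H$ is invertible with $H^{-1}$ having strictly positive entries, so the landscape function $u=H^{-1}\1$ exists, is unique, and satisfies $u_n>0$ for every $n\in\Lambda$. Since $\Lambda$ is a finite set and $V$ is nonnegative but not identically zero, we also have $0<V_{\max}=\max_{n\in\Lambda}v_n<\infty$, so that the right-hand side of \eqref{eq:u-lower} is a well-defined positive number.

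First I would pick a vertex $n_0\in\Lambda$ at which $u$ attains its minimum, writing $m:=\min_{n\in\Lambda}u_n=u_{n_0}$; such a vertex exists because $\Lambda$ is finite. At $n_0$ every neighbouring value satisfies $u_k\ge m=u_{n_0}$, so the graph Laplacian is nonnegative there:
\[
(\Delta u)_{n_0}\;=\;\sum_{|k-n_0|_1=1}\bigl(u_k-u_{n_0}\bigr)\;\ge\;0.
\]
Plugging this into the landscape equation at $n_0$ yields
\[
1\;=\;(Hu)_{n_0}\;=\;-(\Delta u)_{n_0}+v_{n_0}u_{n_0}\;\le\;v_{n_0}\,u_{n_0}\;\le\;V_{\max}\,m,
\]
so that $m\ge 1/V_{\max}>0$, which is exactly \eqref{eq:u-lower}.

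There is essentially no serious obstacle here: the only inputs are the existence and positivity of $u$ (already established above from the invertibility and the maximum principle for $H$) together with the elementary fact that the graph Laplacian of a function is nonnegative at a minimizing vertex — which on the torus $\Lambda$ holds at \emph{every} such vertex, since $\Lambda$ has no boundary. Equivalently, one could run the comparison globally: the constant vector $w\equiv 1/V_{\max}$ satisfies $(Hw)_n=v_n/V_{\max}\le 1=(Hu)_n$ for all $n$, hence $H(u-w)\ge 0$ entrywise, and the positivity of $H^{-1}$ forces $u-w\ge 0$, i.e. $u_n\ge 1/V_{\max}$; this reproves the bound without reference to a minimizing vertex. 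The only point to keep straight is that $V_{\max}$ is finite and strictly positive, which is guaranteed by the standing hypotheses ($V\ge 0$, $V\not\equiv 0$, $\Lambda$ finite).
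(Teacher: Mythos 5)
Your proof is correct and follows the natural route the paper relies on (the result is imported from~\cite{WZ}, where it rests on the same discrete minimum/maximum principle machinery that appears here as Lemmas~\ref{lem:maxP} and~\ref{lem:maxP-periodic}). Both of your variants — evaluating $(Hu)$ at a minimizing vertex $n_0$ on the boundaryless torus to get $1\le v_{n_0}u_{n_0}\le V_{\max}\min u$, and the global comparison $H(u-V_{\max}^{-1}\1)\ge 0$ combined with positivity of $H^{-1}$ — are valid and essentially equivalent to the cited argument.
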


As shown in \cite{ADFJM-CPDE} for the continuous case and in \cite{WZ} for the discrete case, $1/u:=\{1/u_n\}_{n\in\Lambda}$ serves as an effective potential via the following landscape uncertainty principle: 
\begin{theorem}[Lemma 2.14 in \cite{WZ}]
For any $f\in \ell^2(\Lambda)$,
\begin{equation}\label{eq:uncert}
\ipc{f}{H  f}_\cH 
=\, \sum_{n\in {\Lambda}}\,\sum_{1\le i\le d}\, u_{n+e_i} u_n\,  \left(\nabla_i\frac{ f_{n}}{u_n}\right)^2\, 
\, +\,\sum_{n\in {\Lambda}}\, \frac{1}{u_n}f_n^2 
\,\ge\, \sum_{n\in {\Lambda}}\frac{1}{u_n}f_n^2 ,
\end{equation}
where $\nabla_i\frac{ f_{n}}{u_n}=f_{n+e_i}/u_{n+e_i}-f_{n}/u_{n}$.
\end{theorem}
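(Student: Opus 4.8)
The plan is to run the discrete version of the ground-state substitution (the Allegretto--Piepenbrink transform). Since $u_n>0$ for every $n\in\Lambda$ by Theorem~\ref{thm:landscape}, any $f\in\cH$ can be written as $f_n=u_n g_n$ with $g_n:=f_n/u_n$, so that $\nabla_i g_n=f_{n+e_i}/u_{n+e_i}-f_n/u_n$. Everything collapses onto one ``conjugation'' identity: splitting $u_m g_m-u_n g_n=u_m(g_m-g_n)+(u_m-u_n)g_n$ in the definition \eqref{eq:opH} of $H$ and invoking the landscape equation $(Hu)_n=1$, one obtains, for every $n\in\Lambda$,
\[
 (H(ug))_n = g_n\,(Hu)_n-\sum_{|m-n|_1=1}u_m(g_m-g_n) = g_n-\sum_{|m-n|_1=1}u_m(g_m-g_n).
\]

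First I would use this to write
\[
 \ipc{f}{Hf}_\cH=\sum_{n\in\Lambda}u_n g_n\,(H(ug))_n=\sum_{n\in\Lambda}u_n g_n^2-\sum_{\substack{n,m\in\Lambda\\ |m-n|_1=1}}u_n u_m\,g_n(g_m-g_n).
\]
The first term is exactly $\sum_{n\in\Lambda}u_n^{-1}f_n^2$, i.e. the second summand of \eqref{eq:uncert}. For the double sum over ordered neighbour pairs I would symmetrize: averaging the summand with its image under the exchange $n\leftrightarrow m$ turns $-g_n(g_m-g_n)$ into $\tfrac12(g_m-g_n)^2$, so that double sum equals $-\tfrac12\sum_{|m-n|_1=1}u_n u_m(g_m-g_n)^2$. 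Regrouping the edges by coordinate direction and shifting $n\mapsto n+e_i$ --- legitimate on the torus by the periodicity \eqref{eq:bdry-peri} --- converts this into $-\sum_{n\in\Lambda}\sum_{i=1}^{d}u_{n+e_i}u_n(\nabla_i g_n)^2$. Altogether $\ipc{f}{Hf}_\cH=\sum_n\sum_i u_{n+e_i}u_n(\nabla_i g_n)^2+\sum_n u_n^{-1}f_n^2$, which is \eqref{eq:uncert}; the asserted inequality is then immediate, since $u_{n+e_i}u_n>0$ makes the gradient sum nonnegative.

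I do not expect a genuine obstacle: the proof is an algebraic identity followed by discrete summation by parts. The two points that need care are (i) the substitution $g_n=f_n/u_n$, which relies on the strict positivity $\min_\Lambda u_n>0$ from Theorem~\ref{thm:landscape}; and (ii) the bookkeeping in the symmetrization and the direction-by-direction regrouping, where the absence of boundary terms, guaranteed by the periodicity \eqref{eq:bdry-peri} of $\Lambda$, is precisely what allows $\sum_{n}u_n u_{n-e_i}(g_{n-e_i}-g_n)^2$ to be reindexed as $\sum_{n}u_{n+e_i}u_n(\nabla_i g_n)^2$. If one prefers to stay entirely at the level of first differences, an alternative route is to expand $\ipc{f}{Hf}_\cH=\sum_n\|\nabla f_n\|^2+\sum_n v_n f_n^2$ directly, insert $\nabla_i f_n=u_{n+e_i}\nabla_i g_n+g_n\nabla_i u_n$, and use $v_n u_n=1+(\Delta u)_n$ together with discrete integration by parts to gather the cross terms; this reproduces \eqref{eq:uncert} as well.
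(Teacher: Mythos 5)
Your proof is correct and it is the standard ground-state (Allegretto--Piepenbrink / Doob $h$-transform) substitution argument that the cited reference [WZ] uses for this lemma; the paper itself only quotes the result from \cite{WZ} without reproducing the proof. The conjugation identity
\[
(H(ug))_n = g_n\,(Hu)_n - \sum_{|m-n|_1=1} u_m\,(g_m-g_n)
\]
is verified correctly (splitting $u_m g_m - u_n g_n = u_m(g_m-g_n) + g_n(u_m-u_n)$ recovers $(\Delta u)_n$ in the second piece, and $(Hu)_n=1$ then collapses it), the symmetrization over ordered neighbor pairs correctly produces $-\tfrac12\sum_{|m-n|_1=1}u_nu_m(g_m-g_n)^2$, and the factor $\tfrac12$ is absorbed exactly when you pass from ordered pairs to the once-counted edge parameterization $\sum_{n\in\Lambda}\sum_{i=1}^d$, where periodicity of the torus is indeed what eliminates boundary terms. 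The strict positivity $\min_\Lambda u_n>0$ from Theorem~\ref{thm:landscape} is invoked at the right place to justify $g_n=f_n/u_n$. Your alternative route via the Dirichlet form and the discrete product rule $\nabla_i f_n = u_{n+e_i}\nabla_i g_n + g_n\nabla_i u_n$ is equally valid and amounts to the same computation in different bookkeeping.
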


Let us introduce a few more notations. For $a,b\in\Z,a\le b$, we denote by $\llbracket a,b \rrbracket=\{a,a+1,\cdots,b\}$ consecutive integers from $a$ to $b$. We will frequently work with cubes in $\Z^d$, and  their images  in $\Lambda=(\Z/K\Z)^d$. 
For $r\in\N,Q=\llbracket 1,r \rrbracket^d$, we say that $Q$ is a cube in $\Z^d$  of side length $\ell(Q)=r$, which is the cardinality  of $Q$ projected in each direction. We denote by $|Q|=\cards{Q}$ the total cardinality  of $Q$  and call it the volume of $Q$ when  it is clear. For any $a\in \Z^d$,  $a+Q$ is the translation of $Q$ in $\Z^d$, respectively having the same side length and volume. For a cube $Q$, we denote by $3Q$ the cube concentric with $Q$ of side length $3\ell(Q)$ 
\begin{equation}\label{eq:3Q}
    3Q:=\bigcup_{\substack{1\le i \le d \\ k_i=0,\pm \ell(Q)}}\, \left(Q+k_1e_1+k_2e_2+\cdots +k_de_d\right), 
\end{equation}
see Figure \ref{fig:3Q}.
\begin{figure}
    \centering
    \includegraphics[width=0.3\textwidth]{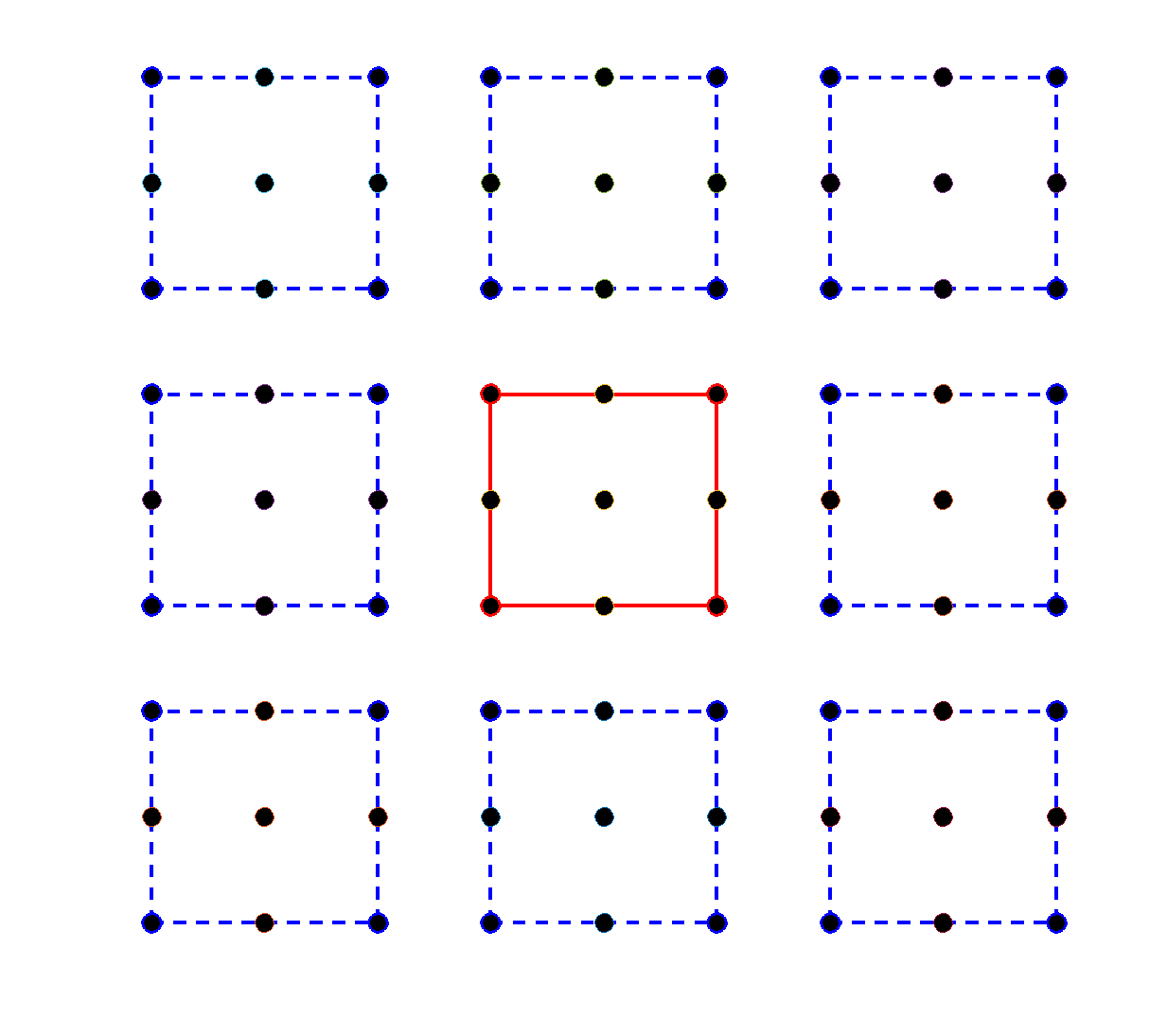}
    \caption{The center cube (in red) $Q=\{4,5,6\}^2$ has side length $3$. $Q$ is surrounded by its translations (in blue) $Q+ae_1+be_2,a,b=0,\pm3$. The union of all the small cubes is $3Q=\{1,\cdots,9\}^2$ of side length $9$.}
    \label{fig:3Q}
\end{figure}
Let $\partial Q$ be the (inner) boundary of $Q$:
\begin{equation}\label{eq:bdry-inner}
    \partial Q=\left \{\, n\in Q:\, n+e_i\not\in Q\ {\rm or}\  n-e_i\not\in Q\ {\rm for \ some}\, 1\le i\le d\, \right\},
\end{equation}
and let $\partial^\circ Q\subset \partial Q$ be the flat part of the boundary, removing all the   corners:
\begin{equation}\label{eq:bdry-no-corner}
\partial^\circ Q =\left\{n\in \partial Q:\,   n+e_i\not\in Q\ {\rm or}\  n-e_i\not\in Q\  {\textrm{for only one}}\ 1\le i \le d\right\}. 
\end{equation}

 For an   integer interval  $I=\llbracket a,a+r-1 \rrbracket$ of side length $r\ge 3$, we denote by $I/3:=\llbracket a+\cl{r/3},a+\cl{r/3}+\fl{r/3}-1 \rrbracket$ the middle third interval of $I$. For a cube $Q=I_1\times I_2\times \cdots \times I_d, I_i=\llbracket a_i,a_i+r-1 \rrbracket,i=1\cdots,d$, we denote by $Q/3$ the middle third cube of $Q$, defined as: 
\begin{equation}\label{eq:Q/3}
    Q/3:=(I_1/3)\times (I_2/3)\times \cdots \times (I_d/3).
\end{equation} 
It is easy to verify that $Q/3$ is the ``thin'' middle third part of $Q$ in the sense that $\ell(Q/3)=\fl{\ell(Q)/3}\le \ell(Q)/3$ and $3(Q/3)\subseteq Q$. The relation becomes $3(Q/3)= Q$ if $3 \mid \ell(Q)$.

%%%%%%%%%%%%%%%%%%%%%%%%%%%%%%%%%%%%%%%%%%%%%%%%%%%%%%%%%%%%%%%%%%%%%%%%%%%%%%%%%%%%%%%%%%%%%%%%%%%%%%%%%%
%%%%%%%%%%%%%%%%%%%%%%%%%%%%%%%%%%%%%%%%%%%%%%%%%%%%%%%%%%%%%%%%%%%%%%%%%%%%%%%%%%%%%%%%%%%%%%%%%%%%%%%%%%
%%%%%%%%%%%%%%%%%%%%%%%%%%%%%%%%%%%%%%%%%%%%%%%%%%%%%%%%%%%%%%%%%%%%%%%%%%%%%%%%%%%%%%%%%%%%%%%%%%%%%%%%%%

\section{Landscape law: the general case and the self-improvement under the scaling condition}\label{sec:det}
In this section, we study Theorems \ref{thm:NNu-intro}, \ref{thm:NNu-intro-2}, and \ref{thm:NNu-intro-scaling}. Let us recall some of the notations first. Let $\Lambda\cong \llbracket 1,K \rrbracket^d$ be the periodic domain of side length $K$.  Let $N(\mu)$ be the (finite volume) integrated density of states (IDS) of $H$ on $\Lambda$, as defined in \eqref{eq:Ndef-intro}. 

Let $u=\{u_n\}$ be the landscape function of $H$ defined in the introduction.  For $\mu>0$, let $s(\mu)=\cl{\mu^{-1/2}}$ and let $N_u(\mu)$ be the landscape box counting function, defined with respect to the partition $\cP=\cP(s(\mu);\Lambda)$ as in \eqref{eq:Nudef-intro}, see Figures \ref{fig:Nu1d}, \ref{fig:Nu2d} for examples on $\Z^1,\Z^2$.
\begin{figure}
    \centering
    \includegraphics[width=0.8\textwidth]{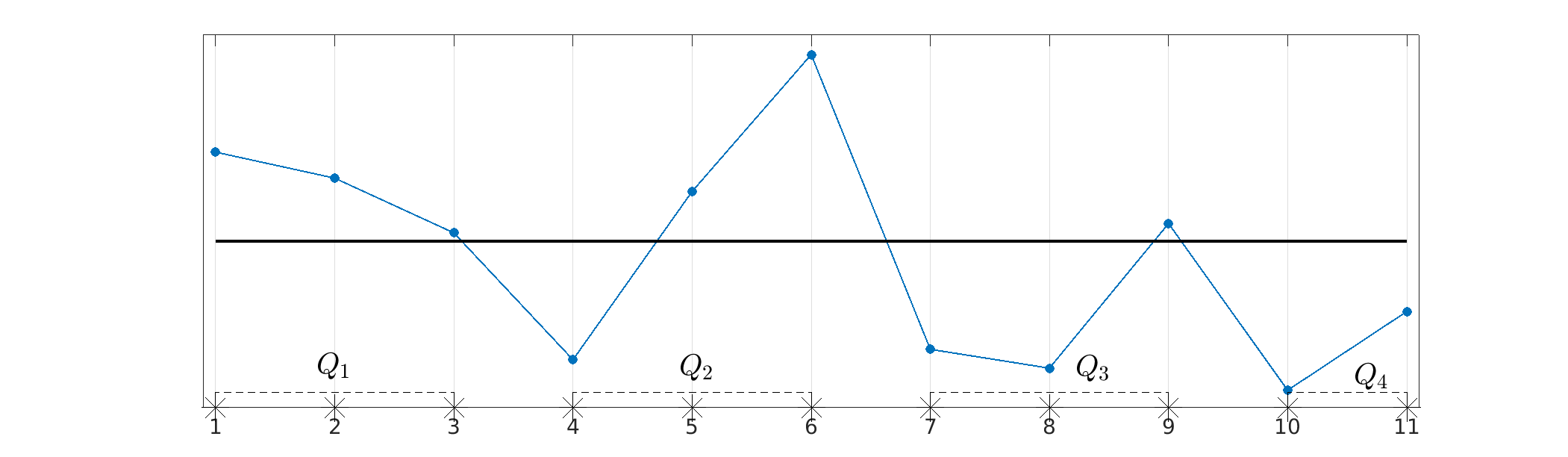}
    \caption{The effective potential $\{1/u_n\}_{n=1}^{11}$ is plotted in blue. The horizontal reference line  is $\mu=1/9$ (in black). The partition $\cP(3)=\{ Q_j\}_{j=1}^4$ contains four disjoint cubes. On $Q_2,Q_3,Q_4$, $\min (1/u_n)$ falls below $\mu$.}
    \label{fig:Nu1d}
\end{figure}
\begin{figure}
    \centering
    \includegraphics[width=0.6\textwidth]{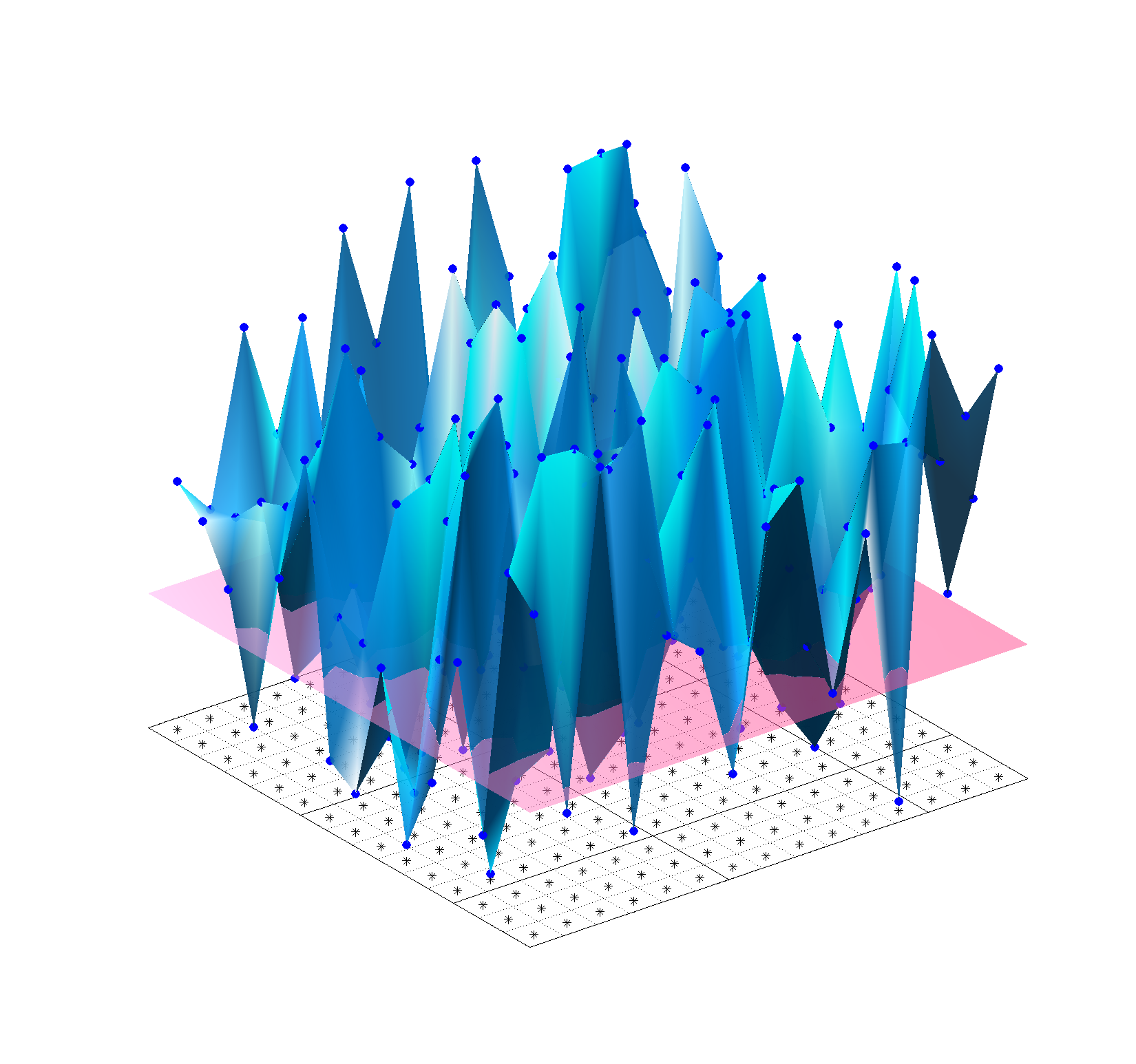}
    \caption{The effective potential (in blue)  
on a $\Z^2$ lattice. The reference energy  $\mu=1/36$ (in pink). The partition $\cP(6)$ contains $9$ disjoint boxes, four regular boxes have side length $6$, and five irregular boxes along the boundary.}
    \label{fig:Nu2d}
\end{figure}

\begin{remark}\label{rem:pat}
The landscape box counting function $N_u=N_u^{\cP}$ is defined with respect to  to the partition $\cP=\cP(r;\Lambda),r=\cl{\mu^{-1/2}}$. Consider any translation of $\cP$ by $a\in \llbracket 0,r-1 \rrbracket^d$, i.e., $\cP^a=\{a+Q,Q\in\cP\}$. Then  $\cP^a$ is also a partition of $\Lambda$ of size $r$ since $\Lambda$ is a periodic torus.  Each $a+Q\in \cP^a$ can be covered by finitely many cubes $Q'\in \cP$, and vice versa. The number of cubes from one partition needed to cover a cube from another partition is at most $3^d-1$.  Therefore, if we use $\cP^a$ to define a landscape box counting function $N_u^{\cP^a}$, the counting function will  differ by at most a factor of $3^{\pm d}$, i.e., for any $a\in \llbracket 0,r-1 \rrbracket^d$
\begin{equation*}
   3^{-d} N_u^{\cP^a}\le N_u^{\cP}\le 3^d N_u^{\cP^a}.
\end{equation*}
 Furthermore, if $r<r'$, then $\cP(r;\Lambda)$ is a finer partition than $\cP(r';\Lambda)$. Each $Q'\in \cP(r')$ can be covered by at most $(r'/r+2)^d$ cubes $Q\in \cP(r;\Lambda)$. Therefore, the number of $Q'\in \cP(r')$ such that $\min_{n\in Q'}\frac{1}{u_n}\,\le \mu$ will differ from the number of $Q\in \cP(r)$ such that $\min_{n\in Q}\frac{1}{u_n}\,\le \mu$ at most by a factor of $(r'/r+2)^d$. In other words, for $r<r'$, 
\begin{equation*}
 N_u^{\cP(r')}\le  N_u^{\cP(r)}\le \left ({r'}/{r}+2\right)^d\, N_u^{\cP(r')}.
\end{equation*} 
Based on the above discussion, we are allowed to estimate  $N_u(\mu)$  by either shifting the original partition $\cP\left(\cl{\mu^{-1/2}}\right)$ or tweaking the side length of the partition slightly. The change of the partition will  lead to  a different box-counting function, but the new counting function will differ from $N_u$ only by some multiplicative dimensional constants. This will be very useful in the proofs.
\end{remark}

%%%%%%%%%%%%%%%%%%%%%%%%%%%%%%%%%%%%%%%%%%%%%%%%%%%%%%%%%%%%%%%%%%%%%%%%%%%%%%%%%%%%%%%%%%%%%%%%%%%%%%%%%%
%%%%%%%%%%%%%%%%%%%%%%%%%%%%%%%%%%%%%%%%%%%%%%%%%%%%%%%%%%%%%%%%%%%%%%%%%%%%%%%%%%%%%%%%%%%%%%%%%%%%%%%%%%

\subsection{Upper bound. Proof of Theorem~\ref{thm:NNu-intro}}\label{subsec:ub}

Let $H=-\Delta+V$ be as in \eqref{eq:opH} acting on $\cH=\ell^2(\Lambda )\cong\R^{K^d}$. We denote by $\langle \, \cdot\, ,\,\cdot\,  \rangle$ the inner product on $\ell^2(\Lambda )$ induced by the usual one on $\R^{K^d}$. 

\vskip 0.08in 
\noindent {\bf Case I: $\mu<\frac{1}{4d}$}.   We will actually start with $\mu<1$, and then consider the rescaling $\frac{\mu}{4d}$ in the end.  
To get an upper bound for $N(\mu)$, it is enough to bound $\ipc{f}{Hf}$ from below on some subspace of $\cH$.  
For $r=\cl{\mu^{-1/2}}\le 2\mu^{-1/2}$ , let $\cP(r)=\cP(r;\Lambda)$ be the partition of side length $r$.  Let
\begin{equation*}
  {\mathcal F}:=\left\{Q \in \cP(r)\, :\ \ \min_{n\in Q }\frac{1}{u_n}\,\le \mu \right\} .
\end{equation*}
Let $S$ be the linear subspace of the vectors in $\cH$ whose average on each $Q \in \mathcal F$ is zero, i.e.,
\[S=\left\{f\in \cH\, :\,  \frac{1}{|Q |}\sum_{n\in Q }f_n=0,\ \ Q \in {\mathcal F} \right\}.\]
The subspace $S$ has $\cards{\mathcal F}$ many linear independent constraints since all $Q \in\cP$ are disjoint.  Therefore, $S$  has codimension $\cards{\mathcal F}$.

By the landscape uncertainty principle \eqref{eq:uncert},
\[\ipc{f}{Hf}=\sum_{n\in \Lambda }\, \Big(\|\nabla f_n\|^2+ v_n  f_n^2\Big)\, \ge\, \sum_{n\in \Lambda }\frac{1}{u_n}f_n^2. \]
We see that
\[\ipc{f}{Hf}\ge \sum_{n\in \Lambda }  \|\nabla f_n\|^2\ \  {\rm and}\ \ \ipc{f}{Hf}\ge\, \sum_{n\in \Lambda }\frac{1}{u_n}f_n^2, \]
which implies  
\[2\ipc{f}{Hf}\ge\sum_{n\in \Lambda }\, \Big(\|\nabla f_n\|^2+ \frac{1}{u_n}f_n^2\Big). \]
Therefore, for $f\in S$,  
\begin{equation}\label{eq:pf37}
    2\ipc{f}{Hf}\ge \sum_{Q \in \cP(r)}\sum_{n\in Q }\, \left(\|\nabla f_n\|^2+ \frac{1}{u_n}f_n^2 \right)\ge  \ \sum_{Q \in {\mathcal F}}\sum_{n\in Q }\, \|\nabla f_n\|^2\ +\ \sum_{Q \not\in {\mathcal F}}\sum_{n\in Q }\, \frac{1}{u_n}f_n^2.
\end{equation}
In the second sum, $1/u_n\ge \min 1/u_n>\mu$ since $Q \notin \cF$. Therefore, 
\begin{equation*}
\sum_{Q \not\in {\mathcal F}}\, \sum_{n\in Q }\, \frac{1}{u_n}\, f_n^2\, \ge \,  \mu\, \sum_{Q \not\in {\mathcal F}}\, \sum_{n\in Q }\, f_n^2   \ \  .
\end{equation*}
To bound the gradient term on the right-hand side  of \eqref{eq:pf37}, we need the discrete version of the Poincar\'e inequality (Lemma \ref{lem:PoincareZd} in Appendix \ref{sec:app}): for any cube $Q\in \cF $ of side length $\ell(Q )=r=\cl{\mu^{-1/2}}\le 2\mu^{-1/2}$, 
\begin{equation*}
   \sum_{n\in Q }\|\nabla f_n\|^2\ge \frac{2}{\ell(Q )^2  d}\, \sum_{n\in Q }\, (f_n- \bar f _{Q })^2\, \ge \,  \frac{2}{(2{\mu^{-1/2}) }^2\,  d}\, \sum_{n\in Q }f_n^2= \frac{\mu}{2d}\,  \sum_{n\in Q }f_n^2 . 
\end{equation*}
Notice that the last $Q\in \cP$ in each direction may not be a regular box of equal side length, the above estimate remains the same since the side length of the irregular box does not exceed $r=\cl{\mu^{-1/2}}$, see \eqref{eq:PoincareZd} in Lemma \ref{lem:PoincareZd}.

Putting these two parts together, one has for $f \in S$
\begin{align*}
    \ipc{f}{Hf}\ge\, \frac{1}{2}\, \sum_{Q \in \cP }\sum_{n\in Q }\, \left( \|\nabla f_n\|^2+ \frac{1}{u_n}f_n^2 \right)\, \ge\, &  \frac{1}{2}\, \sum_{Q \in {\mathcal F}} \frac{\mu}{2d} \, \sum_{n\in Q }f_n^2\ +\, \frac{1}{2}\, \mu \sum_{Q \not\in {\mathcal F}}\sum_{n\in Q}\, f_n^2  \\
 >\,  & \frac{\mu}{4d}\sum_{n\in \Lambda } f_n^2=\frac{\mu}{4d}\ipc{f}{f}. 
\end{align*}

Therefore,  using  the mini-max characterization of eigenvalues, the number of eigenvalues of $H$ below $\dfrac{\mu}{4d}$ is bounded from above by the codimension of the subspace $S$, which is equal to $\cards{ \mathcal F}$. Hence,
\begin{equation*}
    N\left(\frac{\mu}{4d}\right)\le \frac{\cards{\mathcal F}}{|\Lambda|}=N_u(\mu), \ \ {\rm for}\ \ \mu<1.
\end{equation*}
Equivalently, 
\begin{equation}\label{eq:N<Nu1}
    N\left(\mu \right)\le N_u(4d\mu), \ \ {\rm  for}\ \ \mu<\frac{1}{4d}. 
\end{equation}

\noindent {\bf Case II: $\mu\ge \frac{1}{4d}$.}   Similar to Case I, we will  work with $\mu\ge1$, and then consider the rescaling $\frac{\mu}{4d}$ in the end.   
The construction is similar to the previous case. Let 
\[{\mathcal F}=\left\{\, n\in \Lambda   :\ \ \frac{1}{u_n}\le \mu \right\}, \ \ S=\left\{\, f\in \cH\, :\, f_n=0\ \ {\rm if}\ \ n\in \cF\, \right\}.\]
 From the definition of $N_u$ in \eqref{eq:Nudef-intro} and the fact that $\cl{\mu^{-1/2}}=1$, we see that $N_u(\mu)= \cards{\mathcal F} /{|\Lambda|}$.

Due to \eqref{eq:uncert}, 
 \begin{equation*}
    \ipc{f}{Hf}\ge   \sum_{n\notin \cF}\, \frac{1}{u_n}\, f_n^2\,
    \ge  \sum_{n\notin \cF} \, \mu \,f_n^2 
    = \mu \, \sum_{n\in \Lambda}\, f_{n}^2 =\mu \ipc{f}{f}\ \  {\rm for \ all}\ f\in S, 
\end{equation*}
where we have used the fact that $f_n=0$ on $\cF$ for the first equality. 
Notice that in this case, we do not need the Poincar\'e inequality.

Therefore, for all $\mu\ge 1$, $
N\left({\mu}\right)\le N_u(\mu)$. 
Since $N(\mu)$ is non-decreasing, it implies that $N\left(\mu/(4d) \right)\le  N\left(\mu \right)\le N_u(\mu)$ for $\mu\ge 1$.  Equivalently, 
\begin{equation}\label{eq:N<Nu2}
    N\left(\mu \right)\le  N_u(4d\, \mu), \ \ {\rm for}\ \  \mu\ge \frac{1}{4d}.
\end{equation}

Combing \eqref{eq:N<Nu1} and \eqref{eq:N<Nu2}, we finish the proof for \eqref{eq:N<Nu-intro}. 
\ep
%%%%%%%%%%%%%%%%%%%%%%%%%%%%%%%%%%%%%%%%%%%%%%%%%%%%%%%%%%%%%%%%%%%%%%%%%%%%%%%%%%%%%%%%%%%%%%%%%%%%%%%%%%
%%%%%%%%%%%%%%%%%%%%%%%%%%%%%%%%%%%%%%%%%%%%%%%%%%%%%%%%%%%%%%%%%%%%%%%%%%%%%%%%%%%%%%%%%%%%%%%%%%%%%%%%%%

%%%%%%%%%%%%%%%%%%%%%%%%%%%%%%%%%%%%%%%%%%%%%%%%%%%%%%%%%%%%%%%%%%%%%%%%%%%%%%%%%%%%%%%%%%%%%%%%%%%%%%%%%%
%%%%%%%%%%%%%%%%%%%%%%%%%%%%%%%%%%%%%%%%%%%%%%%%%%%%%%%%%%%%%%%%%%%%%%%%%%%%%%%%%%%%%%%%%%%%%%%%%%%%%%%%%%

\subsection{General lower bound in the non-scaling case. Proof of Theorem~\ref{thm:NNu-intro-2}}\label{sec:non-scaling} Similarly to the upper bound, if one can bound $\ipc{f}{Hf}$ from above on a subspace of $\cH\cong\R^{K^d}$, then the eigenvalue counting function will be bounded from below by the dimension of this subspace.  

\begin{proof}
Let 
\[
    r= \cl{\left({c_H}/{32}\right)^{\frac{1}{4}}\mu^{-\frac{1}{2}}},
\]
where  $0<c_H<1$ is a dimensional constant given by the discrete Moser-Harnack inequality, see Lemma \ref{lem:MH-2}. 

Given $0<\alpha<({c_H/32})^{-1/4}/18$, let $R=\cl{\alpha^{-1}\mu^{-1/2}}\ge r$.  

\vskip 0.08 in
\noindent {\bf Case I:} $r\mid R \mid K$, i.e., $K=K_0R, \, R= R_0  r$ for some $K_0,R_0\in \N$. We will deal with the following three sub-cases for small, mild, and large $\mu$. 

{\bf Case I(a):} 
We first consider $\mu<\left({c_H}/{32}\right)^{\frac{1}{2}}$ and therefore $K\ge R\ge r\ge 2$.   
In this case, one has  $ \left({c_H/32}\right)^{\frac{1}{4}}\mu^{-\frac{1}{2}}\le r < 2 \left({c_H/32}\right)^{\frac{1}{4}}\mu^{-\frac{1}{2}}, \alpha^{-1}\mu^{-1/2}\le R< 2\alpha^{-1}\mu^{-1/2} $. Therefore,
 \begin{equation}\label{eq:R0}
\frac{1}{2}\, \alpha^{-1}\, ({32/c_H})^{1/4}  \le    R_0=\frac{R}{r}\le 2\alpha^{-1}\, ({32/c_H})^{1/4} . 
 \end{equation}

Consider the partition $\cP(R;\Lambda)$ of $\Lambda$ of side length $R$, and, for each $Q$ of side length $R=R_0r$, consider the finer partition $\cP(r;Q)$ of side length $r$. 
Clearly, the collection of all $q\in \cP(r;Q)$ for all $Q\in \cP(R;\Lambda)$ also forms a partition for $\Lambda $ of size $r$:
\begin{equation}\label{eq:pat-finer}
\cP(r;\Lambda)=\bigcup_{Q}\cP(r;Q)=\bigl\{q:\ q\in \cP(r;Q),\ Q\in \cP(R;\Lambda) \bigr\}. 
\end{equation}

Since $\alpha<({32/c_H})^{1/4}/18$, then $R/r=R_0\ge 9$ due to  \eqref{eq:R0}. 
For each $Q$, let $ \check q$ be a cube in $\cP(r;Q)$  such that ${\rm dist}_{\R^d}(\check q, c_Q)\le \sqrt d /2$, where $c_Q$ is the center of $Q$ in $\R^d$ and the distance is measured in $\R^d$.  We call such $ \check q$ a centric cube with $Q$. We see that a centric cube $\check q$ satisfies $3\check q\subset Q/3$, cf. \eqref{eq:3Q}, \eqref{eq:Q/3}, since $R_0\ge9$. The notion of centric cube   means  that we consider cubes in $\cP(r;Q)$ and $\cP(R;Q)$ as subsets of $\R^d$, and then pick $\check q$ to be a cube in $\cP(r;Q)$ closest to the center of $Q$ in $\R^d$. Note that the choice of a centric cube may not be unique. That would not effect the estimates below and the translation arguments in \eqref{eq:sumN012} due to Remark \ref{rem:pat}.

For any $0<\alpha<({32/c_H})^{1/4}/18$, let 
 \begin{equation}\label{eq:F'}
 {\mathcal F}'=\left\{Q\in \cP(R;\Lambda)\, :\ \ \min_{n\in \check q}\frac{1}{u_n}\,\le \,\mu 
 \quad {\rm and} \quad \min_{n\in  Q}\frac{1}{u_n}\,\ge \, \alpha^2\, \mu
 \right\}.
 \end{equation}
Given $Q\in \cP(R;\Lambda)$, let $Q/3\subset Q$ be the middle third of $Q$ as usual. Let $\chi^Q=\{\chi^Q_n\}_{n\in \Lambda}\in \cH$ be a discrete  cut-off function supported on $Q$ and such that
 \begin{equation}\label{eq:chiQ}
 \chi^Q_n =1  \ {\rm if}\ n\in Q /3,\quad  \chi^Q_n =0 \ {\rm if}\ n \notin Q,\quad  \chi^Q_n  \in (0,1)\,  {\rm  otherwise},
 \end{equation}  and 
 \begin{equation} \nonumber
 |\chi^Q_{n+e_i}-\chi^Q_{n}| \le \frac{3}{R}, \ {\rm if}\ n,n+e_i\in Q,\quad |\chi^Q_{n+e_i}-\chi^Q_{n}|
 =0, \ {\rm if}\ n \ {\rm or}\ n+e_i \notin Q, 
 \end{equation}
for $i=1,\cdots,d$. 
 Such a cut-off function is the discrete analogue of a smooth bump function in the continuous case. We include an explicit construction in Appendix \ref{sec:cutoff} for reader's convenience. 
 
Let $S'$ be the linear subspace of $\cH\cong\R^{K^d}$ which is spanned by the cut-offs of $u$ to each $Q $ in $\mathcal F '$. More precisely,  %let $\chi_n^Q$ be given as above. 
we define
\begin{equation*}
   S'= {\rm span}\big\{u^{Q }=\{u^{Q }_n\}_{n\in \Lambda}\in\cH:\ u^{Q }_n=u_n\, \chi^Q_n, \ Q \in{\mathcal F}' 
 \big\}.
\end{equation*}
The subspace $S'$ has dimension $\cards{\mathcal F'}$ since all $Q $ are disjoint and $u_n>0$. 

We aim to estimate ${\ipc{u^{Q }}{H\,u^{Q }}}/{\ipc{u^{Q }}{u^{Q }}}$ from above for each $u^{Q }$ in $S'$. 
First, by the landscape uncertainty principle \eqref{eq:uncert}, 
\begin{align}
    \ipc{u^{Q }}{H u^{Q }} 
    =&\sum_{n\in  \Lambda }\, \sum_{1\le i\le d} u_{n+e_i} u_n\left(\chi^Q_{n+e_i}-\chi^Q_{n}\right)^2
    \, +\, \sum_{n\in  Q }\,\frac{1}{u_n}\left(u_n\chi^Q_{n}\right)^2   \nonumber  \\
    \le &  \sum_{1\le i\le d}\sum_{n,n+e_i\in  Q }\, u_{n+e_i} u_n\left(\frac{3}{R}\right)^2 \, +\, \sum_{n\in  Q }\,u_n  \le  {9d}\, {R^{d-2}} \sup_{Q }u_n^2+R^d \sup_{Q }u_n. \label{eq:uHu-upp}
\end{align}
On the other hand, recall that $3\check q \subset Q /3$, so that
\begin{equation*}
    \ipc{u^{Q }}{u^{Q }} =\sum_{n\in  Q }\,\left(u_{n}\chi^Q_n\right)^2\ge \sum_{n\in  Q /3}\,u^2_{n}\ge \sum_{n\in 3\check q }\,u^2_{n}. 
\end{equation*} 
Since $-(\Delta u)_n=1-v_nu_n\le 1$, applying the discrete Moser-Harnack inequality in  Lemma \ref{lem:MH-2},  Appendix \ref{sec:Moser-Harnack}, to $u_n$ on the smaller cubes $\check q $ with $\ell(\check q )=r$, one has 
\begin{equation}\label{eq:uu-lower}
    \sum_{n\in 3\check q }\,u^2_{n}\ge  r^d\Bigl( c_H\sup_{\check q }u^2_n-r^4\Bigr). 
\end{equation}
By the definition of ${\mathcal F}'$ in \eqref{eq:F'}, one has 
\begin{equation}\label{eq:u-sup-alpha}
\sup_{n\in Q }u_n\le \alpha^{-2}\mu^{-1},\ \ \  \sup_{n\in \check q }u_n\ge \mu^{-1}.
\end{equation}
Notice that $r=\cl{\left({c_H/32}\right)^{\frac{1}{4}}\mu^{-\frac{1}{2}}} $ implies $\left({c_H/32}\right)^{\frac{1}{4}}\mu^{-\frac{1}{2}}\le r<2\left({c_H/32}\right)^{\frac{1}{4}}\mu^{-\frac{1}{2}}$, i.e.,
\begin{equation}\label{eq:r-18}
r^4\le \frac{1}{2}{c_H}\mu^{-2}, \ \ r^{-2}\le \sqrt{{32/c_H}}\,\mu .
\end{equation} Therefore, putting \eqref{eq:uHu-upp} and \eqref{eq:uu-lower} together and using $R=R_0r$, 
\begin{align}
    \frac{\ipc{u^{Q }}{H u^{Q }}}{\ipc{u^{Q }}{u^{Q }}}
    \le&\, \frac{{9d}\,{R^{d-2}}  \sup_{Q }u_n^2+R^d \sup_{Q }u_n }{r^d\left( c_H\sup_{\check q }u^2_n-r^4\right)} \le \, \frac{9d\,{R^{d-2}_0r^{d-2}} \alpha^{-4}\mu^{-2}+R^d_0\,r^d\,  \alpha^{-2}\mu^{-1}}{ r^d\left( c_H\mu^{-2}-{\frac{1}{2}{c_H}\mu^{-2}}\right)}  \nonumber\\
     \le & \,C\left(R^{d-2}_0\,   \alpha^{-4}+R^d_0\,  \alpha^{-2}\right)\mu 
    \le  \,C_2\, \alpha^{-d-2}\,  \mu ,    \label{eq:temp01}
\end{align}
where $C_2=C_2(d,c_H)$. In the last line, we used \eqref{eq:R0} with $R_0\lesssim \alpha^{-1}$.

Then, by orthogonality of the $u^{Q }$ for $Q \in{\mathcal F}'$, we get for 
\begin{equation}\label{eq:cN0}   \cN_0:=\card{{\textrm{eigenvalues}}\ \lambda\ {\rm of}\  H\ {\rm such \ that}\ \lambda\le C_2\alpha^{-d-2}\, \mu}
\end{equation}
the estimate
\begin{multline} \label{eq:N012}
   \cN_0
   \ge \, {\rm Card}(\cF') \\
   = {\rm Card}\left\{Q \in \cP(R;\Lambda)\, :\ \ \min_{n\in \check q }\frac{1}{u_n}\,\le \,\mu 
   \quad {\rm and} \quad \min_{n\in  Q }\frac{1}{u_n}\,\ge \, \alpha^2\, \mu
   \right\}   \\
\ge\, {\rm Card} \left\{ \ Q \in \cP(R;\Lambda)\, :\ \ \min_{n\in \check q }\frac{1}{u_n}\,\le \,\mu 
 \right\}\\ -{\rm Card}\left\{ \ Q \in \cP(R;\Lambda)\, : \min_{n\in  Q }\frac{1}{u_n}\,\le  \, \alpha^2\, \mu
 \right\}   
 :=\,\cN_1-\cN_2.
\end{multline}

Given an integer $j$, $|j|\le \fl{R_0/2}$, we consider a translation $T^{i,j}:\Lambda\to \Lambda$, by the vector $jre_i$, i.e., $T^{i,j}(n)=n+jre_i$ for any $n\in \Lambda$. For the partition $\cP(R;\Lambda)$, denote by $\cP^{i,j}(R;\Lambda)$ the partition translated by $T^{i,j}$. Recall that $\cP(r;\Lambda)$ is the finer partition of side length $r$, see \eqref{eq:pat-finer}. Denote by $\cP^{i,j}(r;\Lambda)$ the translation of $\cP(r;\Lambda)$, which again is a refinement for $\cP^{i,j}(R;\Lambda)$.  For any $\check q  \subset Q \in \cP(R;\Lambda )$, it is easy to check that 
$
Q \subset \bigcup _{i,j}T^{i,j}(\check q ).
$
 In other words, the collection of all $T^{i,j}(\check q )$ will cover the entire $Q $, provided enough tanslations of $\check q$ (at most $\fl{R_0/2}$ many).

 For each translated centric cube $T^{i,j}(\check q )$ in the corresponding $T^{i,j}(Q )\in \cP^{i,j}(R;\Lambda)$, we repeat the construction of $\cF'$ and $\cS'$ starting from \eqref{eq:F'}. By exactly the same argument as for \eqref{eq:N012},  
\begin{equation}\label{eq:N012-trans}
    \cN_0\ge  {\cN}^{i,j}_1- {\cN}^{i,j}_2,
\end{equation}
where $\cN_0$ is the same as in \eqref{eq:cN0} since the eigenvalue counting will be the same for all the translations, and 
\begin{align*}
  {\cN}^{i,j}_1=& \,  {\rm Card} \left\{ \ T^{i,j}(Q )\in \cP^{i,j}(R;\Lambda)\, :\ \ \min_{n\in T^{i,j}(\check q )}\frac{1}{u_n}\,\le \,\mu 
 \right\}, \\
 {\cN}^{i,j}_2=&\,{\rm Card}\left\{ \ T^{i,j}(Q )\in \cP^{i,j}(R;\Lambda)\, : \min_{n\in  T^{i,j}(Q )}\frac{1}{u_n}\,\le  \, \alpha^2\, \mu
 \right\}.
\end{align*}
Recall that $\check q $ has side length $r$ and is located \red{near} the center of each $Q $. We repeat the above process exactly $\sigma$ times so that $\bigcup_{j,i}\, T^{i,j}(\check q )\,=Q $, and therefore, $\bigcup_Q\bigcup_{j,i}\, T^{i,j}(\check q )=\Lambda$, which is exactly the fine partition $\cP(r;\Lambda)$ of the entire domain. One translated $T^{i,j}(\check q )$ corresponds to exactly one small cube $q$ in the original partition  $\cP(r;\Lambda)$. The number $\sigma$ of the translations we need can be bounded from above by $\sigma \le \left(2\fl{R_0/2}\right)^d\le  R_0^d\le \,  C\alpha^{-d}$.  Notice that for all $j,i$, we have $ {\cN}^{i,j}_2<\, C  \cN_2$ for some dimensional constant $C$  because of Remark \ref{rem:pat}. 
Then, summing up \eqref{eq:N012-trans} over all possible translations, 
\begin{align}
  \sigma  \cN_0 \ge & \, \sum_{i,j}{\cN}^{i,j}_1\ -\ \sigma  {\cN}^{i,j}_2 \label{eq:sumN012}\\
  = & \, \sum_{i,j} {\rm Card} \left\{ \ T^{i,j}(Q )\in \cP^{i,j}(R;\Lambda)\, :\ \ \min_{n\in T^{i,j}(\check q )}\frac{1}{u_n}\,\le \,\mu 
  \right\} -\ \sigma  {\cN}^{i,j}_2 \nonumber \\
   = & \, \sum_{j,i} {\rm Card} \left\{ \ T^{i,j}(\check q ) :\ \ \min_{n\in T^{i,j}(\check q )}\frac{1}{u_n}\,\le \,\mu 
  \right\} -\ \sigma  {\cN}^{i,j}_2 \nonumber  \\
  \ge & \,  {\rm Card} \left\{  q \in \cP(r;\Lambda)  :\ \ \min_{n\in  q }\frac{1}{u_n}\,\le \,\mu 
  \right\} -\ C\sigma  {\cN}_2  \nonumber .
\end{align}
Therefore,  the upper bound on the number of the translations $\sigma\le \, C \alpha^{-d}$  implies
\begin{equation}\label{eq:tmp144}
    C \alpha^{-d}  \cN_0 \ge   {\rm Card} \left\{  q \in \cP(r;\Lambda)  :\ \ \min_{n\in  q }\frac{1}{u_n}\,\le \,\mu 
\right\} -\wt C\alpha^{-d}   \cN_2. 
\end{equation}

Notice that the partition in the counting of $\cN_2$ is of side length $R=\cl{\alpha^{-1}\mu^{-1/2}}=\cl{(\alpha^{2}\mu)^{-1/2}}$, which is exactly the side length needed in the definition of $N_u(\alpha^2\mu)$.  On the other hand, the partition in the counting of $\cN_1$ is of side length $r=\cl{\left({c_H/32}\right)^{\frac{1}{4}}\mu^{-\frac{1}{2}}}$. The side length  needed in the definition of $N_u(\mu)$ should be $r'=\cl{\mu^{-1/2}}$, which is larger than $r$  used in $\cN_1$. But the two counting functions defined by $\cP(r)$ or $\cP(r')$ will only differ by a dimensional factor since $1\le r'/r\le 2({{32/c_H}})^{1/4}$, see Remark \ref{rem:pat}. Therefore,  
 \[ {\rm Card} \left\{ q \in \cP(r;\Lambda)\, :\ \ \min_{n\in  q }\frac{1}{u_n}\,\le \,\mu 
 \right\}\ge C^{-1}  {\rm Card} \left\{ q'\in \cP(r';\Lambda)\, :\ \ \min_{n\in  q'}\frac{1}{u_n}\,\le \,\mu 
 \right\}.\]
Then by \eqref{eq:tmp144}
\begin{align*}
   \alpha^{-d}\cN_0\ge
  c \, {\rm Card} \left\{ q'\in \cP(r';\Lambda)\, :\ \  \min_{n\in  q'}\frac{1}{u_n}\,\le  \,\mu 
 \right\} \\ 
 & \hspace{-3cm}-C  \alpha^{-d}   {\rm Card}\left\{ \ Q \in \cP(R;\Lambda)\, : \min_{n\in  Q }\frac{1}{u_n}\,\le  \, \alpha^2\, \mu
 \right\},
\end{align*}
which implies that 
\begin{equation}\label{eq:eq0}
    N(C_2\alpha^{-d-2}\mu)\ge c \alpha^dN_u(\mu)-C N_u(\alpha^2\mu), \ {\rm for \ all}\ 0<\mu<(c_H/32)^{1/2}, 
\end{equation}
provided that $0<\alpha<({32/c_H})^{1/4}/18$, $K\ge \cl{\alpha^{-1}\mu^{-1/2}}$.  

 We notice that the construction above also needs the entire domain to be large enough, i.e., $K\ge R\ge \alpha^{-1}\mu^{-1/2}$. 
The restrictions on $K$ can be removed easily. If $3r<K<\cl{\alpha^{-1}\mu^{-1/2}}$, then we can repeat the above construction by setting $R=K$ directly,  the proof for \eqref{eq:eq0} is exactly the same. If $K\le 3r\lesssim \mu^{-1/2}$ then there are at most $3^d$ boxes in the partition. Therefore, $N_u(\mu)\le 3^d/K^d$.  On the other hand, the argument for \eqref{eq:uHu-upp} can be used to show that the ground state eigenvalue $E_0$ of $H$ is bounded from above by $E_0\le C\mu$, with a dimensional constant $C$. Therefore, $N(C\mu)\ge 1/K^d \gtrsim N_u(\mu)$. Then \eqref{eq:eq0} holds trivially by picking $\alpha$ small, and the smallness only depends on the dimension.

{\bf Case I(b):}
Next we consider $\left({c_H}/{32}\right)^{\frac{1}{2}}\le \mu\le (3\alpha)^{-2}$. Note that the range of $\mu$ requires $\alpha<(32/c_H)^{1/4}/3$, which is fulfilled by the assumption of $\alpha$.  In this case, the side length $\ell(q)=r= \cl{\left({c_H}/{32}\right)^{\frac{1}{4}}\mu^{-\frac{1}{2}}}=1$ and $\ell(Q)=R=\cl{\alpha^{-1}\mu^{-1/2}}\ge 3$.  For each $Q\in \cP(R)$, we pick $\check q$ to be a centric cube with $Q$, then   construct $\cF',S'$ in the same way as  \eqref{eq:F'}. The upper bound \eqref{eq:uHu-upp} for $\ipc{u^Q}{H^Q}$ remains the same.  To bound $\ipc{u^Q}{u^Q}$ from below, one has trivially 
$
\ipc{u^Q}{u^Q}\ge \sum_{Q/3}u_n^2 \ge \sum_{\check q}u_n^2 \ge \mu^{-2}$ since $\check q \subset Q/3$.  The bounds on $\mu$ and the definition of $R$ imply 
\begin{equation*}
R\le 2\alpha^{-1}\mu^{-1/2}\le \, 2\left({c_H}/{32}\right)^{-\frac{1}{4}}\, \alpha^{-1}, \ {\rm and }\ R^{-2}\le \,  (\alpha^{-1}\mu^{-1/2})^{-2}= \alpha^{2}\mu.
\end{equation*}
Hence, 
\begin{align}
\frac{\ipc{u^Q}{H^Q}}{\ipc{u^Q}{u^Q}}\le \frac{{9d}{R^{d-2}}  \alpha^{-4}\mu^{-2}+R^d  \alpha^{-2}\mu^{-1} }{\mu^{-2}} 
\le & \, {9d}({R^{d}}\alpha^{-4})  R^{-2}+(R^d\alpha^{-2})  \mu   \nonumber \\
\le & \, {C}_2 \, \alpha^{-d-2}\, \mu, \label{eq:r=1R>3}
\end{align}
where the constant $ {C}_2$ only depends on $d$ and $c_H$. We obtain the bound as in \eqref{eq:temp01}, and therefore the same bound \eqref{eq:eq0} holds for all $0< \mu\le (3\alpha)^{-2}$. 

Repeating the proof of \eqref{eq:eq0} for $\wt \mu=c_1\alpha^{d+2}\mu \le (3\alpha)^{-2}$ with $c_1=C^{-1}_2$, we obtain
\begin{equation} \label{eq:mu<1}
    N( \mu)\ge c {\alpha^d}N_u(c_1 \alpha^{d+2} \mu)-C N_u(c_1 \alpha^{d+4} \mu), \ {\rm for \ all}\  \mu<\frac{1}{9}c_1\, \alpha^{-d-4}, 
\end{equation}
provided $0<\alpha<({32/c_H})^{1/4}/18$. \\

{\bf Case I(c):}
The remaining case is $ \mu\ge  (3\alpha)^{-2}$.  Since $\alpha<({32/c_H})^{1/4}/18$, then $\mu\ge  \left({c_H}/{32}\right)^{\frac{1}{2}}$.  In this case, the range of $\mu$ implies  $r= \cl{\left({c_H}/{32}\right)^{\frac{1}{4}}\mu^{-\frac{1}{2}}}=1$ and $R=\cl{\alpha^{-1}\mu^{-1/2}}< 3$. We construct $\cF',S'$ with cube $Q$ of side length $\wt R=9$ instead. Note that this will not change the counting for $\cN_0$ and $\cN_1$. The change will only result a different counting for $\cN_2$, which we denote by $\wt \cN_2$ the new counting using $\ell(Q)=\wt R=9$. Since $1\le \wt R/R\le 9$, one has $\wt \cN_2 \le \cN_2 \le 9^d \wt \cN_2$,  due to Remark \ref{rem:pat}.  
Similar to \eqref{eq:r=1R>3}, one has 
\begin{equation*}
\frac{\ipc{u^Q}{H^Q}}{\ipc{u^Q}{u^Q}}\le \, {9d}{9^{d-2}}\, \alpha^{-4}+9^d\alpha^{-2}  \mu \le \, ({9d}{9^{d-1}}+9^d)\, \alpha^{-2}\,  \mu,
\end{equation*}
since $\alpha^{-2}\le 9\, \mu$. Then we repeat the arguments for \eqref{eq:N012-trans}-\eqref{eq:eq0} using translations. Notice that the number $\sigma$ of translations needed in this case is bounded from above by a dimensional constant $\sigma\le C$. We obtain instead 
\begin{equation}\label{eq:eq00}
    N(C'_2\alpha^{-2}\mu)\ge \, c N_u(\mu)-C N_u(\alpha^2\mu)\ \ \ {\rm for \ all}\ \mu\ge (3\alpha)^{-2}. 
\end{equation}

Repeating the proof of \eqref{eq:eq00} for $\wt \mu=c'_1\alpha^{d+2}\mu \ge (3\alpha)^{-2} $ where $c'_1=(C'_2)^{-1}$,  we reach
\begin{equation}\label{eq:mu>1}
    N(\mu)\ge \, c N_u(c'_1 \alpha^{2}\mu)-C N_u(c'_1 \alpha^{4}\mu) \ \ \ {\rm for \ all}\  \mu>\frac{1}{9}c'_1\, \alpha^{-4}, 
\end{equation}
provided $0<\alpha<({32/c_H})^{1/4}/18$. 

Finally, we require further that $\alpha<(c_1/c_1')^{1/d}$ so that $c_\ast \alpha^{-4}<\frac{1}{9}c_1\, \alpha^{-d-4}$ where $c_\ast=\frac{1}{9}c'_1$. Therefore, the estimates \eqref{eq:mu<1} and \eqref{eq:mu>1} cover $\mu\le c_\ast \alpha^{-4}$ and $\mu>c_\ast\alpha^{-4}$ respectively. This completes the proof of Theorem \ref{thm:NNu-intro-2}. \\

\vskip 0.08in

{\noindent{\bf Case II:}}  either $r\not\mid R$ or $ R\not\mid K$, 
where 
$  r=\cl{\left({c_H}/{32}\right)^{1/4}  \mu^{-1/2}}$ and  $R=\cl{\alpha^{-1}\mu^{-1/2}}$ are the same in Case I. Without loss of generality we assume that 
\[
    K=(K_0-1)\,R+\wt R , \,0<\wt R <R, \ \ {\rm and}\ \  R=(R_0-1)\,r+\wt r, \,0<\wt r<r .
\]
The other two cases, where either $\wt R=R$ or $\wt r =r$, are similar. Recall the construction of the partition for $\cP(R;\Lambda)$ and $\cP(r;Q )$. In the last row and column of each direction we need to use a  rectangular box instead of a cube. We denote the regular cube of side length $R$ or $r$ still by $Q $ and $q $, and denote the remaining special rectangular boxes by $\wt Q $ and $\wt q $, whose side lengths are $\wt R$ and $\wt r$, respectively, in at least one direction, and write
\begin{equation}\label{eq:pat3}
    \cP(R;\Lambda)=\{Q \}\cup \{\wt Q \}, \ \ \cP(r;Q )=\{q \}\cup \{\wt q \}.
\end{equation}
Notice in this case, the union %\SM{over what?} 
$\wt{\cP}=\bigcup_{Q\in \cP(R;\Lambda)} \cP(r;Q ) $ is not the original partition $\cP(r;\Lambda)$, but it is a finer one. Therefore, the counting function defined through $\wt{\cP}$ will be bounded from below by the counting function defined through $\cP(r;\Lambda)$.

In this case, we define $\cF'$ only using the regular cubes $Q $ and ignoring all the $\wt Q $. Then by exact the same construction, we obtain \eqref{eq:N012}, i.e., $\cN_0\ge \cN_1-\cN_2$. Next, we need to translate the partition $\cP(R;\Lambda)$ and $\cP(r;Q )$ by vectors of length $r$ several steps  in each direction, so that the centric small cubes $\check q $ can cover the large cubes $Q $. In the previous case, we needed at most $j\sim \fl{R_0/2}$ steps in each direction. In Case II, we want to 
continue the translation up to $\wt j\sim \fl{2R_0}$ steps in each direction, where the total number of translations in all directions is at most $\wt {\sigma}\le (2  2R_0)^d \le  C  \alpha^{-d}$. By doing this, the translated centric cubes $T^{i,j}(\check q )$ will cover $3Q $. In particular, they will cover all the irregular boxes $\wt Q $ near the boundary of the domain.  Notice that for the regular cubes $Q $,  translations up to $\wt j\sim \fl{2R_0}$ steps will cause an overlap, which leads to an overestimate in the corresponding sum $\sum_{i,j} \cN^{i,j}_1$ as in \eqref{eq:sumN012}.  But since $\wt j \le 2R_0$ and the $T^{i,j}(\check q )$  are contained in $5Q $ for each $Q$, the over-counting will be at most $5^d$ times more. In conclusion, we can obtain 
\begin{align*}
 & \wt \sigma \, \card{{\textrm{eigenvalues}}\ \lambda\ {\rm of}\  H\ {\rm such \ that}\ \lambda\le C_2\alpha^{-d-2}\, \mu} \\ \quad &
  \ge     \sum_{j,i} {\rm Card} \left\{  T^{i,j}(\check q ) : \min_{n\in T^{i,j}(\check q )}\frac{1}{u_n}\,\le \,\mu 
  \right\}   - \wt \sigma \, {\rm Card}\left\{ \ Q \in \cP(R;\Lambda)\, : \min_{n\in  Q }\frac{1}{u_n}\,\le  \, \alpha^2\, \mu
 \right\} \\ \quad &
  \ge    \frac{1}{5^d}{\rm Card} \left\{  q\in \wt {\cP}  :  \min_{n\in  q}\frac{1}{u_n}\,\le \,\mu 
  \right\} -\ \wt \sigma \, {\rm Card}\left\{ \ Q\in \cP(R;\Lambda)\, : \min_{n\in  Q}\frac{1}{u_n}\,\le  \, \alpha^2\, \mu
 \right\} .
\end{align*}
Note that in the last line the collection of pertinent cubes $q$ or $Q$ already includes the irregular boxes $\wt q $ or $\wt Q $ respectively. Together with the fact that $\wt \cP$ is finer than $\cP(r;\Lambda)$, we obtain that 
\begin{align*}
 & \alpha^{-d}  \card{{\textrm{eigenvalues}}\ \lambda\ {\rm of}\  H\ {\rm such \ that}\ \lambda\le C_2\alpha^{-d-2}\, \mu}  \\
  & \quad  \ge  {5^{-d}}{\rm Card} \left\{  q\in \cP(r;\Lambda)  :   \min_{n\in  q}\frac{1}{u_n} \le  \mu 
  \right\} -  C \alpha^{-d}  {\rm Card}\left\{  Q\in \cP(R;\Lambda)  : \min_{n\in  Q}\frac{1}{u_n} \le  \alpha^2\, \mu
 \right\},  
\end{align*}
which implies
$  N(C_2\alpha^{-d-2}\mu)\ge c  \alpha^dN_u(\mu)-C  N_u(\alpha^2\mu)$.  This is the same estimate as we obtained in \eqref{eq:eq0}. Replacing $\mu$ by $C_2^{-1}\alpha^{d+2}\mu$ as in the remaining arguments in Case I, we can obtain \eqref{eq:mu<1} and \eqref{eq:mu>1} in a similar manner.  
\end{proof}

%%%%%%%%%%%%%%%%%%%%%%%%%%%%%%%%%%%%%%%%%%%%%%%%%%%%%%%%%%%%%%%%%%%%%%%%%%%%%%%%%%%%%%%%%%%%%%%%%%%%%%%
\subsection{Lower bound in the scaling case. Proof of Theorem~\ref{thm:NNu-intro-scaling}}\label{sec:scaling}
Let 
\begin{equation}\label{eq:C''}
R=\cl{\left(\frac{1}{2C''}\right)^{\frac{1}{4}}\mu^{-\frac{1}{2}}}, 
\end{equation}
where $C''\ge1$ is a dimensional constant that will be specified later. 

We need to consider two cases, corresponding to $R \mid K$ and $R \not \mid K$. Similar to  the arguments in the previous subsection, the latter can be be reduced to the former by a translation argument.  For simplicity, we will only deal with the case $K=K_0  R$ for some $K_0\in \N$. Also, we first assume that $\mu$ is small and $R\ge 3$, so that the cube of side length $R$ is large enough  to construct cut off functions as in \eqref{eq:chiQ}. Otherwise, when $\mu$ is large and $R$ is small, we start with $\wt R=9$ and tweak the dimensional constants in the end by  Remark \ref{rem:pat} as in the previous subsection. 

For $\left(\frac{1}{2C''}\right)^{\frac{1}{4}}\mu^{-\frac{1}{2}}\ge3$, i.e., $\mu<\frac{1}{9}\left(\frac{1}{2C''}\right)^{-\frac{1}{2}}$, one has $R\ge3$. We consider the partition $\cP(R;\Lambda)$ consisting cubes of side length $R$ as usual.
Let 
\begin{equation*}
{\mathcal F}''=\left\{Q \in \cP(R;\Lambda)\, :\  \min_{n\in  Q }\frac{1}{u_n}\,\le \, \mu
\right\}, \   \ 
S''={\rm span}\left\{u^{Q }\in \cH:\,  
u^{Q }_n=u_n  \chi^Q_n,\, Q \in{\mathcal F}'' \right\}  ,
\end{equation*}
where $\chi^Q=\{\chi^Q_n\}$ is the cut-off function  as in \eqref{eq:cutoff} on each $Q $. The dimension of $S''$ equals $\cards{\mathcal F''}$.

Our goal, once again, is to establish estimates similar to \eqref{eq:temp01}. 
The upper bound for $\ipc{u^{Q }}{H u^{Q }}$ remains the same as we obtained in \eqref{eq:uHu-upp}:
\begin{equation}
\ipc{u^{Q }}{H u^{Q }}
\le {9d}\,{R^{d-2}}  \sup_{Q }u_n^2+R^d \sup_{Q }u_n .\label{eq:doub-pf1}
\end{equation}

It remains to obtain the lower bound for $\ipc{u^{Q }}{ u^{Q }}$. First, the Moser-Harnack inequality \eqref{eq:MH-inhomo} implies that
\begin{equation*}
\sum_{n\in 3Q }\,u^2_{n}\ge  R^d\Bigl( c_H  \sup_{Q }u^2_n-R^4\Bigr).
\end{equation*}
Then we apply the scaling condition \eqref{eq:scaling} twice, to write
\begin{align*}
\sum_{n\in 3Q }\,u^2_{n}\le C_S \Bigl(\sum_{n\in Q }u_n^2+R^{d+4}\Bigr)\le&\, C_S \left(C_S \Bigl(\sum_{n\in Q /3}u_n^2+(R/3)^{d+4}\Bigr)+R^{d+4}\right)\\
\le&\, C_S ^2\sum_{n\in Q /3}u_n^2+C_S 'R^{d+4}.
\end{align*} 
Hence,
\begin{equation*}
\sum_{n\in Q /3}u_n^2\ge C_S ^{-2}c_H  R^d\Bigl(\sup_{Q }u^2_n-C''  R^4\Bigr),
\end{equation*}
where $C''$ depends on $d$, $c_H$ and $C_S$. By the choice of $R$ in \eqref{eq:C''},  
\begin{equation*}
\left(\frac{1}{2C''}\right)^{\frac{1}{4}}\mu^{-\frac{1}{2}}\le R\le 2\left(\frac{1}{2C''}\right)^{\frac{1}{4}}\mu^{-\frac{1}{2}},
\end{equation*}
and by definition for $Q \in {\mathcal F}''$, $\sup_{Q }u^2_n\ge \mu^{-2}$. Then we have 
\begin{equation*}
\frac{1}{2}\sup_{Q }u^2_n\ge \frac{1}{2}\mu^{-2}\ge \frac{1}{2}  (2C'')  R^4=   C''R^4.
\end{equation*}
Therefore, 
\begin{align*}
\ipc{u^{Q }}{ u^{Q }}\, =\, \sum_{n\in Q }\, (u_n\, \chi_n^{Q })^2\, \ge \, \sum_{n\in Q /3}\, u_n^2\, \ge\, &\, C_S ^{-2}\, c_H\,   \, R^d\,   \, \Bigl(\sup_{Q }u^2_n-C''R^4\Bigr)\\
\, \ge\, & \, C_S  ^{-2}\, c_H\,   \, R^d\,   \, \Bigl(\, \frac{1}{2}\, \sup_{Q }u^2_n \Bigr).
\end{align*}

Putting the upper and lower bounds together, we have that 
\begin{align*}
 \frac{\ipc{u^{Q }}{H u^{Q }}}{\ipc{u^{Q }}{u^{Q }}}
\, \le\, &\, \frac{\, {9d}\, {R^{d-2}}\,   \sup_{Q }\, u_n^2\, +\, R^d\,  \, \sup_{Q }u_n}{\frac{1}{2}\, R^d\, C_S  ^{-2}\, c_H\, \sup_{Q }u^2_n}\\
\, =\, &\, C'_1\, R^{-2}\, +\, C'_2\, \min_{Q}\frac{1}{u_n}\le \, C'_1\,(4\sqrt{2C''}\, \mu\, )\, +C'_2\, \mu\, := \, C_3\, \mu\, ,
\end{align*}
where $C_3=\, 4\, C'_1\,\sqrt{\, 2C''}\, +C'_2$ depends only on $c_H,C_S $ and the dimension $d$. Therefore,
\begin{equation*}
\card{ {\textrm{eigenvalues}}\ \lambda\ {\rm of}\  H\ {\rm such \ that}\ \lambda\le C_3\mu}\ge \cards{{\mathcal F}''}.
\end{equation*}

 Notice that in the definition of ${\mathcal F}''$, the side length of the cube $R$ is smaller than the side length $\cl{\mu^{-1/2}}$ required in the definition of $N_u(\mu)$. The box counting using $\cP(R;\Lambda)$ can be bounded from below by the box counting using $\cP(\cl{\mu^{-1/2}};\Lambda)$, due to Remark \ref{rem:pat}. The above estimates also require $3Q\subset \Lambda$, i.e., $K\gtrsim \mu^{-1/2}$. The restriction on $K$ can be removed exactly in the same way as for the non-scaling case, by multiplying counting functions by a dimensional constant $\wt c$. Therefore, we obtain $N(C_3\mu)\ge \, \wt cN_u(\mu) $ for all $0<\mu<\frac{1}{9}\left(\frac{1}{2C''}\right)^{-\frac{1}{2}}:=b$. 
Equivalently, one has that   
\begin{equation}\label{eq:mu<C3b}
N(\mu)\ge \, \wt c N_u(C^{-1}_3\mu),\ {\rm for \ all}\ 0<\mu<\, C_3\, b. 
\end{equation}

Next, we consider $\mu\ge b$ and $R=\cl{\frac{1}{2C''})^{1/4}\mu^{-1/2}}\le 3$. In this case, we construct $\cF''$ and $S''$ using cubes of side length $\ell(Q)=\wt R=9$ and tweak the dimensional constants in the counting in the end by Remark \ref{rem:pat}. The upper bound for $\ipc{u^{Q }}{H u^{Q }}$ remains the same as in \eqref{eq:doub-pf1} 
\begin{equation}
\ipc{u^{Q }}{H u^{Q }}\le \, a\, ( \sup_{Q }u_n^2+\sup_{Q }u_n) \le \, a(\mu^{-2}+\mu^{-1}),
\end{equation}
for some dimensional constant $a>0$.  For the lower bound on $\ipc{u^Q}{u^Q}$, we apply the Harnack inequality \eqref{eq:Harnack} to $u_n$ on $Q$. We obtain 
\begin{equation}
\ipc{u^{Q }}{ u^{Q }}\ge \, \sum_{Q/3}u_n^2\ge \, \inf_{Q}u_n^2 \ge \,  c_V\,\sup_Q\, u_n^2\ge \, c_V\, \mu^{-2},
\end{equation}
for some constant $c_V$ depending on $V_{\max}$. Therefore, 
\begin{equation}
\frac{\ipc{u^{Q }}{H u^{Q }}}{\ipc{u^{Q }}{ u^{Q }}} \le \frac{a}{c_V}(1+\mu) \le \frac{a}{c_V}\, (b^{-1}+1)\mu :=c_V'\, \mu,
\end{equation}
for some constant $c'_V$ depending on $d$, $V_{\max}$ and $C''$. Then 
\begin{equation*}
N(c_V'\, \mu)\ge \, \wt c N_u(\mu),\ {\rm for \ all}\ \mu\ge b. 
\end{equation*}
Equivalently, one has  
\begin{equation}\label{eq:mu>b}
N(\mu)\ge  \, \wt c N_u(c'^{-1}_V\mu),\ {\rm for \ all}\ \mu\ge \,  c_V'\, b. 
\end{equation}
Clearly, in \eqref{eq:mu<C3b}, we can make $C_3\ge c_V'$ so that the estimates \eqref{eq:mu<C3b} and \eqref{eq:mu>b} cover all $\mu>0$.   This completes the proof of Theorem \ref{thm:NNu-intro-scaling}. 

\ep

%%%%%%%%%%%%%%%%%%%%%%%%%%%%%%%%%%%%%%%%%%%%%%%%%%%%%%%%%%%%%%%%%%%%%%%%%%%%%%%%%%%%%

\subsection{Lower bound for the periodic potential}\label{sec:peri}
In this part, we prove  Corollary \ref{cor:periodic} for a $\Z^d$ periodic potential $V=\{v_n\}$.  It is enough to show that the landscape function $u$ associated with the periodic potential satisfies the scaling condition \eqref{eq:scaling}. Informally, this is quite obvious. Indeed, at small scales (below $p_{\max}=\max_d p_d$) we simply use the Harnack inequality. The emerging constant is roughly of the order of $V_{\max}^{\,C p_{\max}}$ then. At large scales we simply use periodicity to reduce to small scales. Here are the details.

Suppose $V=\{v_n\}$ is $\Z^d$ periodic with period $\vec p=(p_1,\cdots, p_d)$. Let $\Gamma=\llbracket 1,p_1\rrbracket\times \cdots \times \llbracket 1,p_d\rrbracket \subset  \Z^d$ be the fundamental cell of $V$. Notice that the condition $p_i\mid K,i=1\cdots,d$,  guarantees  that $\Lambda=(\Z/K\Z)^d$ contains finitely many copies of $\Gamma$.    Let $H_{\Gamma}$ be the restriction of $H$ on $\Gamma$ with the periodic boundary conditions, and let $ u^\Gamma=\{ u^\Gamma_n\}_{n\in \Gamma}$ be the landscape function for $H_\Gamma$, i.e., $(H_\Gamma   u^\Gamma)_n=1,n\in \Gamma$. By the uniqueness of the landscape function (Theorem \ref{thm:landscape}),  $u=\{u_n\}_{n\in \Lambda}$  will be the periodic extension of  $ u^\Gamma=\big\{ u^\Gamma_n\big\}_{n\in \Gamma}$ to the entire domain $\Lambda$.  

For $s\in \N$ and $a\in \Lambda$, let $Q(s)=a+\llbracket 1,s\rrbracket^d$ be cube in $\Lambda$ of side length $s$. Consider $\Gamma+\, \vec p\, \Z^d$, a collection of disjoint translations (copies) of the fundamental cell  by $\vec p\, \Z^d$. 
Suppose $s>p^{\max}:=\max\{p_1,\cdots,p_d\}$. It is easy to verify that the maximal number of copies of $\Gamma$ in the collection $\Gamma+\, \vec p\, \Z^d$ which lie inside $Q(s)$ (we will call their union $T_1$) and the minimal number of copies of $\Gamma$ in the collection $\Gamma+\, \vec p\, \Z^d$ which cover $3Q(s)$ (we will call their union $T_2$) differs by a dimensional multiplicative constant. 
Therefore, denoting the number of copies of $\Gamma$ in $T_1$ by $t$ and using the periodicity of $u=\{u_n\}$ with respect to all translations of $\Gamma$, one has, 
\[
\sum_{Q(s)}\, u_n^2\, \ge \, \sum_{T_1}\, u_n^2\, =\, t\, \sum_{\Gamma}\, \big( u^\Gamma_n\big)^2, \quad \mbox{and} \quad \sum_{3Q(s)}\, u_n^2\, \le \,  \sum_{T_2}\, u_n^2\, \le \,  Ct\, \sum_{\Gamma}\, \big( u^\Gamma_n\big)^2,\]
which shows the scaling condition \eqref{eq:scaling} is true for relatively large cubes.

If $s<p^{\max}$, we simply observe that the landscape function $u$ satisfies 
$
    -(\Delta u)_n+v_nu_n\ge 0,$ and 
$
    -(\Delta u)_n\le 1\, \  \ n\in \Lambda.
$
A combination of the Moser-Harnack inequality \eqref{eq:MH-inhomo} and the Harnack inequality \eqref{eq:Harnack} implies that for some constant $C$ depending on $d$ and $V_{\max}$, one has 
\begin{equation*}
    \sup_{3Q(s)}\, u_n^2 \, \le \,  C^{s} \inf_{Q(s)}u_n^2 \,  \le \, C^{s}   \inf_{Q(s)/3}u_n^2 \, \le\,    C^{s} \sup_{Q(s)/3}\, u_n^2 \le \, C^{s} \Bigl(c_{H}^{-1} (s/3)^{-d}\, \sum_{Q(s)}u_n^2\, +\, c_{H}^{-1}(s/3)^4\Bigr).
\end{equation*}
Since $C^s\le C^{p^{\max}}$, one has  
$
\sum_{3Q(s)}u_n^2\le (3s)^d \sup_{3Q(s)} u_n^2\le \wt C\left(\sum_{Q(s)}u_n^2+s^{d+4}\right), 
$
where the constant $\wt C$ only depends on the dimension, $V_{\max}$ and $p^{\max}$. 

Therefore, for all cubes $Q(s)$, $\{u_n^2\}_{n\in \Lambda}$ satisfies the scaling condition \eqref{eq:scaling}. The estimates for $N$ in the periodic case then follow directly from Theorem \ref{thm:NNu-intro-scaling}. \ep

%%%%%%%%%%%%%%%%%%%%%%%%%%%%%%%%%%%%%%%%%%%%%%%%%%%%%%%%%%%
%%%%%%%%%%%%%%%%%%%%%%%%%%%%%%%%%%%%%%%%%%%%%%%%%%%%%%%%%%%%%%%%%%%%%%%%%%%%%%%%%%%%%%%%%%%%%%%%%%%%
%%%%%%%%%%%%%%%%%%%%%%%%%%%%%%%%%%%%%%%%%%%%%%%%%%%%%%%%%%%%%%%%%%%%%%%%%%%%%%%%%%%%%%%%%%%%%%%%%%%%

\section{Landscape law for the Anderson model}\label{sec:Anderson} There must be changes propagating from changes in the previous section, at treatment of small scales. I do not bother about it for now.  In the present section we will concentrate on  the Anderson model. To this end, we consider $V=\{v_n\}_{n\in \Lambda}$ with the values given by independent, identically distributed (i.i.d.) random variables, with common probability measure $P_0$ on $\R$, subject to the conditions stated in the beginning of Section~\ref{subs-disorder}. In particular, denoting by $F(\delta)=P_0(v_n\le \delta)$ the common cumulative distribution function of $v_n$, we have  $F(\delta)=0$ if $\delta<0$, $F(\delta)=1$ if $\delta\ge V_{\max}$, and there is a $\delta_\ast>0$, such that 
\begin{equation}\label{eq:ass-F}
    0<F(\delta)\le F(\delta_\ast):=F_\ast<1  \ \ {\rm for\ all}\ \ 0<\delta\le \delta_\ast.
\end{equation}
We note that $\delta_\ast$ can be picked to be less than $\min(1,V_{\max}/2)$ since $\inf {\rm supp}P_0=0$. Hence $F_\ast V_{\max}/2\le\E(v_n)\le F_\ast V_{\max}/2+V_{\max}$. Some constants used in the proof of this section receive their dependence on $\E(v_n)$ through $\delta_\ast$ and $F_\ast$. 
\subsection{Estimates for $N_u$ in the Anderson model}
We will study the following tail estimates for $N_u$ first. 
\begin{theorem}\label{thm:Nu-iid}
Let $V=\{v_n\}_{n\in \Lambda}$ be an Anderson-type potential as above. Then there are dimensional constants $c_3,c_4,\gamma_1,K_\ast>0$ such that 
\begin{equation}
\Ev{N_u(\mu)}\, \ge \, c_4\, \mu^{d/2}\, F(c_3\mu)^{\, \gamma_1\,  \mu^{-d/2}}  \ \ {\rm for \ all}\ K_\ast/K^2\le \mu\le 1. \label{eq:Nu>F}
\end{equation}
Furthermore, there are constants $C_3,C_4,\gamma_2$, and $\mu_\ast>0$ depending on $d,\delta_\ast,F_\ast$ only, such that 
\begin{equation}
  \Ev{N_u(\mu)}\,  \le\,  C_4 \, \mu^{d/2}\, F(C_3 \mu)^{\, \gamma_2\, \mu^{-d/2}}  \ \ {\rm for \ all}\ \mu<\mu_\ast. \label{eq:Nu<F}
\end{equation}
\end{theorem}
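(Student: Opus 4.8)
The plan is to read the definition \eqref{eq:Nudef-intro} as saying that a cube $Q\in\cP(s(\mu))$, $s(\mu)=\cl{\mu^{-1/2}}$, is counted in $N_u(\mu)$ exactly when $\max_{n\in Q}u_n\ge\mu^{-1}$. Both estimates of Theorem~\ref{thm:Nu-iid} are then Lifschitz-type statements for the \emph{landscape}, governed by the heuristic that ``$\max_Q u\ge\mu^{-1}$'' is, up to dimensional constants, equivalent to the potential $V$ being anomalously small --- of size $O(\mu)$ --- on a region of $\asymp\mu^{-d/2}$ sites near $Q$. By the i.i.d.\ hypothesis such an event has probability $\asymp F(c\mu)^{c'\mu^{-d/2}}$; since $\cP(s(\mu))$ has $\asymp K^d\mu^{d/2}$ cubes, summing the per-cube probabilities (only linearity of expectation is used, so overlaps of the relevant local regions are harmless) produces the prefactor $\mu^{d/2}$. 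The two tools throughout are the monotonicity of $u$ in $V$, which follows from the maximum principle (cf.\ Lemma~\ref{lem:maxP-periodic}), and the discrete Moser--Harnack / torsion-function estimates of the Appendix.

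For the lower bound \eqref{eq:Nu>F}, I would, for each $Q\in\cP(s(\mu))$, enlarge $Q$ to a concentric cube $Q^\sharp$ of side $\cl{(c_d\mu)^{-1/2}}$, where the small dimensional constant $c_d$ is chosen so that the Dirichlet torsion function of $-\Delta$ on a cube of that side exceeds $2\mu^{-1}$ at its centre. On the event $\{v_n\le c_3\mu\ \text{for every}\ n\in Q^\sharp\}$, monotonicity in $V$ reduces the Dirichlet landscape $w$ of $-\Delta+V$ on $Q^\sharp$ to that of the constant potential $c_3\mu$; since $c_3\mu$ times the torsion function is a dimensional multiple of $c_3$, taking $c_3$ small gives $-\Delta w\ge\tfrac12$, hence $w\ge\tfrac12(\text{torsion function})\ge\mu^{-1}$ at the centre of $Q^\sharp\subset Q$, and the comparison $u\ge w$ (again Lemma~\ref{lem:maxP-periodic}) shows $Q$ is counted. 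This event has probability $F(c_3\mu)^{|Q^\sharp|}\ge F(c_3\mu)^{\gamma_1\mu^{-d/2}}$ since $|Q^\sharp|\le\gamma_1\mu^{-d/2}$ and $F\le1$. Summing over the $\ge 2^{-d}K^d\mu^{d/2}$ cubes of $\cP(s(\mu))$ and normalizing gives \eqref{eq:Nu>F}; the restriction $\mu\ge K_\ast/K^2$ is exactly what makes $Q^\sharp$ (and the partition) fit inside the torus of side $K$.

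For the upper bound \eqref{eq:Nu<F}, the engine is the deterministic claim that ``$\max_{n\in Q}u_n\ge\mu^{-1}$'' forces $\{n:v_n\le C_3\mu\}$ to contain a cube $B$ of side $\asymp\mu^{-1/2}$ lying within distance $\asymp\mu^{-1/2}$ of $Q$, for a suitable dimensional $C_3$. I would prove this by the discrete analogue of a Lifschitz obstacle argument: if instead every sub-cube of side $\ell_0:=\cl{\eps_d\mu^{-1/2}}$ near $Q$ met $A:=\{n:v_n>C_3\mu\}$, then, with $C_3\asymp\eps_d^{-2}$ chosen so that $C_3\mu\,\ell_0^2\gtrsim1$, one constructs an explicit discrete super-solution $\phi$ with $0\le\phi\le\mu^{-1}/C_3$ and $(-\Delta+V)\phi\ge1$ near $Q$, capping $u$ below $\mu^{-1}$ there --- a contradiction. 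Granting this, a union bound over the $\lesssim\mu^{-d/2}$ admissible positions of $B$ (a dyadic sum over the scale of $B$ leaves this count still $\lesssim\mu^{-d/2}$) gives $\P(Q\ \text{counted})\le C_d\,\mu^{-d/2}\,F(C_3\mu)^{\gamma_2\mu^{-d/2}}$ for a dimensional $\gamma_2>0$. Since $C_3\mu<\delta_\ast$ once $\mu<\mu_\ast$, we have $F(C_3\mu)\le F_\ast<1$, so the polynomial factor is absorbed into $F(C_3\mu)^{-\eta\mu^{-d/2}}$ as soon as $\tfrac d2|\log\mu|\le\eta\mu^{-d/2}|\log F_\ast|$, i.e.\ for all $\mu$ below a threshold depending only on $d$ and $F_\ast$ --- this is precisely how $\mu_\ast$ acquires its $F_\ast$-dependence. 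Thus $\P(Q\ \text{counted})\le F(C_3\mu)^{(\gamma_2-\eta)\mu^{-d/2}}$, and summing over the $\lesssim K^d\mu^{d/2}$ cubes of $\cP(s(\mu))$ and normalizing yields \eqref{eq:Nu<F}. The degenerate regime $s(\mu)>K$, i.e.\ $\mu\lesssim K^{-2}$, is disposed of separately, using that then $\cP(s(\mu))=\{\Lambda\}$ and that the crude deterministic bound $\|u\|_\infty\le C(K,V_{\max})$ forces $N_u(\mu)\equiv0$ for small enough $\mu$.

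The step I expect to be the main obstacle is this deterministic implication behind the upper bound: although $u$ restricted to $Q$ depends on $V$ over the whole torus, the event $\{\max_Q u\ge\mu^{-1}\}$ must be charged to $V$ near $Q$ \emph{without} any a priori control on the size of $u$ --- no doubling is available in the disordered regime, so one cannot simply invoke a monotone comparison. Making this precise requires constructing the discrete obstacle super-solution so that it is genuinely a super-solution up to the boundary of the region it governs, and ruling out the scenario in which the walk defining $u$ ``escapes'' to a far-away clean region; this is handled by letting the obstacle net extend over a slightly larger neighbourhood together with a dyadic sum over scales, which accounts for the minor book-keeping in the union bound. All of this leans on the discrete Moser--Harnack and torsion-function estimates of the Appendix, and reconciling the several partition scales $\cl{\mu^{-1/2}}$, $\ell_0$ and $\ell(Q^\sharp)$ costs only dimensional constants by Remark~\ref{rem:pat}.
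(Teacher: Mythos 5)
Your lower-bound argument for \eqref{eq:Nu>F} is correct but follows a different route from the paper's. The paper tests the landscape uncertainty principle \eqref{eq:uncert} with a discrete cut-off supported on a dilated cross-region $\cB_{2m}(Q)$ and reads off $\min_{\cB_m}\tfrac1{u_n}\le\tfrac12\mu+2^d\max_{\cB_{2m}}v_n$ before converting to a probability; you instead use monotonicity of $u$ in $V$ together with a maximum-principle comparison between the Dirichlet landscape on an enlarged cube $Q^\sharp$ and $\tfrac12$ the Dirichlet torsion function of $-\Delta$. Both produce $F(c\mu)^{\gamma\mu^{-d/2}}$ and cost comparable effort; either is acceptable.

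The upper bound \eqref{eq:Nu<F} has a genuine gap, and it sits exactly where you flag the ``main obstacle.'' Your deterministic claim --- that $\max_{n\in Q}u_n\ge\mu^{-1}$ forces $\{n:v_n\le C_3\mu\}$ to contain a cube of side $\approx\mu^{-1/2}$ near $Q$, for a dimensional $C_3$ --- is false, already in $d=1$. Take $V$ to be $\ell_0$-periodic with $v_n=2C_3\mu$ at one point per interval of length $\ell_0=\cl{\eps\mu^{-1/2}}$ and $v_n=0$ otherwise (with $\eps$ small and $C_3$ of order $\eps^{-2}$ as in your set-up). Every $\ell_0$-interval then meets $\{v_n>C_3\mu\}$, so no cube of side $\ge\mu^{-1/2}$ lies in $\{v_n\le C_3\mu\}$. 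Yet solving the landscape equation on one period gives $u_0=\ell_0/(2C_3\mu)$, hence $\max u\ge\eps^3\mu^{-3/2}/(2)\cdot C_3^{-1}\eps^{-3}\cdot\eps^3$; more transparently $\max u\ge\eps\mu^{-1/2}/(2C_3\mu)\gg\mu^{-1}$ as $\mu\to0$. The obstacle super-solution you envisage cannot exist in this configuration. The structural reason is that ``one high-potential site per $\ell_0$-cube'' corresponds to an obstacle \emph{density} that vanishes as $\mu\to0$, far too dilute to cap $u$; the relevant hypothesis, which the paper imposes in Lemma~\ref{lem:climb}, is a fixed \emph{fraction} $\lambda$ of sites with $v_j\ge C_Pr^{-2}$.

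Even with the correct density condition, a single-scale implication ``density high near $Q$ $\Rightarrow$ $u$ small at $Q$'' is not what the paper proves, and I do not see how to prove it. Lemma~\ref{lem:climb} is strictly weaker and multi-scale: if $u_\xi\ge Mr^2$ and the $\lambda$-density condition holds at scale $r$, then $u$ is strictly larger at a nearby point at scale $\fl{\sqrt{1+\eps}\,r}$. Lemma~\ref{lem:Pr} iterates this across the geometric sequence of scales $r_k$ all the way to $K$ (with a periodic extension of $\Lambda$ for the last steps), concluding that a large landscape value forces the density condition to fail at some intermediate scale, and the Chernoff bound (Lemma~\ref{lem:Prr}) then controls the probability of such a failure scale-by-scale. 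That scale iteration --- together with the Poisson-kernel and Green's-function asymptotics (Lemmas~\ref{lem:arAr}--\ref{lem:GRr}) needed to make the climbing step work --- is the core of the upper-bound proof, not the ``minor book-keeping'' your dyadic sum over positions would suggest. Your proposal replaces the multi-scale mechanism with a one-scale obstacle comparison that the counterexample above shows cannot work.

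Minor point: the degenerate regime $\mu\lesssim K^{-2}$ cannot be dispatched via $\|u\|_\infty\le C(K,V_{\max})$ while keeping constants uniform in $K$ and $V_{\max}$, as the theorem requires; the paper avoids this by restricting the lower bound to $\mu\ge K_\ast/K^2$ and treating small $\mu$ through the probability estimates themselves.
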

\begin{remark}
All the constants are independent of $V_{\max}$. 
\end{remark}
After we establish these tail estimates for $N_u$, we will combine them with the deterministic result Theorem \ref{thm:NNu-intro},\ref{thm:NNu-intro-2} to prove Theorem \ref{thm:NNu-iid-intro}, and \eqref{eq:N-tail-intro-new}. 

%%%%%%%%%%%%%%%%%%%%%%%%%%%%%%%%%%%%%%%%%%%%%%%%%%%%%%%%%%%%%%%%%%%%%%%%%%%%%%%%%%%%%%%%%%%%%%%%%%%%%
\bp 
{\noindent{\bf The proof of  \eqref{eq:Nu>F}.}} For $0<\mu\le 1$, let   $
    r=\cl{ \mu^{-\frac{1}{2}}  }. 
$
Let $\cP(r;\Lambda)$ be  partition of size $r$ as usual.
It is enough to assume that $K= K_0 r$ for some $K_0\in \N$, otherwise the counting can always be bounded from below by ignoring the irregular boxes in the last rows/columns of $\cP(r;\Lambda)$.  For all cubes $Q\in \cP=\cP(r;\Lambda)$, let $\zeta_Q=1$ if  $\min_{n\in Q}\frac{1}{u_n}\,\le \mu$ and $\zeta_Q=0$ otherwise. Direct computation shows that
\begin{align}
  \Ev{N_u(\mu)}=&  \frac{1}{K^d}\,\E\left(\, \card{Q\in \cP(r)\, :\ \ \min_{n\in Q}\frac{1}{u_n}\,\le \mu}\, \right)\nonumber \\
  =&\frac{1}{K^d}\,\Ev{\sum_{Q\in \cP(r)}\zeta_Q } 
  =\frac{1}{K^d}\,\sum_{Q\in \cP(r)}\Ev{\zeta_Q }
=  \frac{1}{K^d_0r^d}\sum_{Q\in \cP(r)}\Prr{ \min_{n\in Q}\frac{1}{u_n}\,\le \mu } .\label{eq:min-max}
\end{align}

%%%%%%%%%%%%%%%%%%%%%%%%%%%%%%%%%%%%%%%%%%%%%%%%%%%%%%%%%%%%%%%%%%%%%%%%%%%%%%%%%%%%%%%%%%%%%%%%%%%%%%%%%%%%%%%%%%%%%%%
For each $Q\in \cP$, consider translations of $Q$ by the vectors $k  r e_i$ for all $1\le i\le d$ directions and  $|k|\le m$. Here $m$ is some large integer that will be specified later. Let $\cB_m=\cB_m  (Q)$ be the union of these translated cubes,
\begin{equation*}
\cB_m =\bigcup_{|k|\le m,1\le i \le d}\left(Q+kre_i\right) .
\end{equation*}

Similarly to \eqref{eq:cutoff}, one can construct a discrete cut-off function $\chi=\{\chi_n\}\in \cH\cong \R^{K^d}$, supported on $\cB_{2m}$, and satisfying 
\begin{align*}
    &\chi_n=1, \ \ n\in \cB_{m},\ \ \ \ {\rm and}  \ \ \chi_n=0, \ \ n \notin \cB_{2m},   \\
    & 0\le \chi_n\le 1,\,  n\in \cB_{2m}\backslash \cB_{m}, \\
    &|\nabla_i\chi_n|=|\chi_{n+e_i}-\chi_n|=0, \,  n\notin \cB_{2m},  \\
    &|\nabla_i\chi_n|=|\chi_{n+e_i}-\chi_n|<\frac{1}{m\, \ell(Q)}<\frac{1}{m}\, \mu^{\frac{1}{2}}, \ \  n\in \cB_{2m} , \ 1\le i \le d .
\end{align*}
By the landscape uncertainty principle \eqref{eq:uncert}, one has
\begin{equation*}
  \min_{\cB_{m}}\frac{1}{u_n}
   \le \, \frac{1}{|\cB_{m}|}\sum_{\cB_{2m}}\|\nabla\chi_n\|^2+ \frac{1}{|\cB_{m}|}\sum_{\cB_{2m}}v_n  \chi_n^2 
     \le \, \frac{2^dd}{m^2}\, \mu\, +\, 2^d\, \max_{\cB_{2m}}\, v_n
    \le \, \frac{1}{2}\, \mu\, +\, 2^d\, \max_{\cB_{2m}}v_n,
\end{equation*}
provided $m^2\ge 2^{d+1}d$. 
Therefore, for all $Q\in\cP\bigl(\cl{\mu^{-1/2}};\Lambda\bigr)$,
\begin{equation*}
    \P\left\{ \min_{n\in \cB_{m}  }\frac{1}{u_n}\,\le  \mu \right\} 
   \ge  \P\left\{\max_{n\in \cB_{2m}  }v_n\le \frac{1}{2^{d+1}}\, \mu \right\}
    =  \big(F\left(c\mu\right)\big)^{|\cB_{2m}  |}
    \ge  \big(F\left(c\mu\right)\big)^{\, C\, \mu^{-d/2}}, 
\end{equation*}
where $c={2^{-d-1}}$ and $C=(4m+1)^d$. On the other hand, notice that all the translations $Q+kre_i$ still belong to $\cP(r)$. Then  we can rewrite $\cB_m=\cB_{m}  (Q)$ as
$
\cB_{m}  (Q)=\bigcup_{Q'\in \cP(r)\cap \cB_{m}  (Q) }Q'\, , 
$
which implies that for all $Q\in \cP(r)$
\begin{equation*}
  \sum_{Q'\in \cP(r)\cap \cB_{m}  (Q)} \P\Bigl\{ \min_{n\in Q'}\, \frac{1}{u_n}\,\le  \mu \Bigr\}
  \ge \P\Bigl\{ \min_{n\in \cB_{m}  (Q)}\, \frac{1}{u_n}\,\le  \mu \Bigr\} 
  \ge \, \bigl(F\left(c\mu\right)\bigr)^{\, C\, \mu^{-d/2}}.
\end{equation*}
Summing the  left-hand side of the above inequality over  all $Q\in \cP(r)$, one has
\begin{equation*}
\sum_{Q\in \cP(r)} \sum_{Q'\in \cP(r)\cap \cB_{m}  (Q)} \P\left\{ \min_{n\in Q'}\frac{1}{u_n}\,\le  \mu \right\}
=(2m+1)^d\sum_{Q\in \cP(r)} \P\left\{ \min_{n\in Q}\frac{1}{u_n}\,\le  \mu \right\} .
\end{equation*}

Combining this together with \eqref{eq:min-max}, one has 
\begin{align*}
\Ev{N_u(\mu)}
\ge&\, \frac{(2m+1)^{-d}}{K^d_0\, r^d}\,  \sum_{Q\in \cP(r)}\, \left( \sum_{Q'\in \cP(r)\cap \cB_{m}  (Q)} \P\left\{ \min_{n\in Q'}\frac{1}{u_n}\,\le  \mu \right\}\right)\\
\ge& \,
\frac{(2m+1)^{-d}}{K^d_0\, r^d}\, \sum_{Q\in \cP(r)} \, \big(F\left(c \mu\right)\big)^{\, C\, \mu^{-d/2}}  
\ge \, (2m+1)^{-d}\, 2^{-d}\, \mu^{\frac{d}{2}} \,  \big(F\left(c \mu\right)\big)^{\,C\,\mu^{-d/2}}. 
\end{align*}

Note that we also need to impose the condition on the size of domain so that $\cB_{2m}\subset \Lambda$, i.e., $K\ge (2m+1)r =C'\mu^{-1/2}$. 

%%%%%%%%%%%%%%%%%%%%%%%%%%%%%%%%%%%%%%%%%%%%%%%%%%%%%%%%%%%%%%%%%%%%%%%%%%%%%%%%%%%%%%%%%%%%%%%%%%%%%
%%%%%%%%%%%%%%%%%%%%%%%%%%%%%%%%%%%%%%%%%%%%%%%%%%%%%%%%%%%%%%%%%%%%%%%%%%%%%%%%%%%%%%%%%%%%%%%%%%%%%

\vskip 0.08 in \noindent {\bf{The proof of \eqref{eq:Nu<F}.}} Using \eqref{eq:min-max}, it is enough to bound $\P\bigl\{ \min_{n\in Q}\frac{1}{u_n}\,\le  \mu \bigr\} $ from above since 
\begin{equation}\label{eq:min-max-2}
    \Ev{N_u(\mu)}\le \, 
 \frac{1}{\cl{\mu^{-1/2}}^d}\, \max_{ Q\in \cP(\cl{\mu^{-1/2}})}\, \P\left\{ \min_{n\in Q}\, \frac{1}{u_n}\,\le \mu \  \right\}.
\end{equation}

This will be the most delicate part. We need several technical lemmas concerning the growth of the landscape function. Some of these estimates may have independent interest in the landscape theory. 

For any $r\ge 3$, let $B\subset \Lambda$ be cube of side length $\ell(B)=r$ and let $\check B =B/3$ be the middle third cube  as defined in \eqref{eq:Q/3}. We are going to show that there is a suitable $M$ (large, and only depending on the  expectation of the random variable), such that for any $\mu$ (small enough, depending on $\E(v_n)$), and any cube $B$ of side length $r=\cl{(4M\mu)^{-1/2}}$
\begin{equation}\label{eq:key}
    \P\Bigl\{ \min_{n\in \check B }\frac{1}{u_n}\,\le  \mu \Bigr\}\le \,  A_1  M^{d/2} F(A_2 M\mu)^{\, (M\mu)^{-d/2}/2}, 
\end{equation}
for some suitable constants $A_1,A_2$ (depending only on the dimension and $\E(v_n)$, and independent of $\mu$).

We start from the following deterministic statement. The lemma states that the landscape function $u$ is forced to grow at a certain rate if $V$ is reasonably non-degenerate. 
\begin{lemma}\label{lem:climb} 
Let $u=\{u_n\}$ be the landscape function  given by Theorem \ref{thm:landscape}. Let $B\subset \Lambda$ be a cube of side length $r:=\ell(B)\ge 3$, and let $\check B=B/3$ be the middle third as usual.   For any $0<\lambda<1$, there is $\eps_0(d,\lambda)>0$ such that for all $0<\eps<\eps_0$, there are constants   $C_P=C_P(\eps,\lambda,d)>0$,  $M=M(\eps,\lambda,d)>0$ and $r_\ast=r_\ast(\eps,\lambda,d)>0$, such that the following statement holds. If $B$ satisfies conditions
\begin{description}
\item[(i)] 
\begin{equation}\label{eq:card-J}
    \card{j\in B: v_j\ge C_Pr^{-2}}\ge \lambda |B|, 
\end{equation}
\item[(ii)] there is a $\xi \in \check B $ such that 
\begin{equation}\label{eq:Mr2}
    u_{\xi} \ge Mr^2,
\end{equation}
\end{description}
then for all $r\ge r_\ast$, there is a $\xi'\in \Lambda$ such that $|\xi'-\xi|_{\infty}\le \fl{\sqrt{1+\eps}r}$ and 
\begin{equation}\label{eq:MR2}
    u_{\xi'} \ge (1+\eps)\,  u_{\xi}\ge  M\fl{\sqrt{1+\eps}r}^2.
\end{equation}
\end{lemma}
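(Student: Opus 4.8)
\emph{Proof strategy.} The plan is to produce the point $\xi'$ directly, without arguing by contradiction, using a single barrier function and the discrete maximum principle. Since $-(\Delta u)_n=1-v_nu_n\le 1$ and $v_n\ge 0$, the function $w_n:=u_n+\frac{1}{2d}\sum_{i=1}^{d}(n_i-\xi_i)^2$ (lifted to $\Z^d$ on any cube around $\xi$ which does not wrap around $\Lambda$; if $\Lambda$ is too small relative to $r$ one argues as in the reductions used in the previous sections) satisfies $(\Delta w)_n=(\Delta u)_n+1=v_nu_n\ge 0$, i.e.\ $w$ is discretely subharmonic, with $(\Delta w)_n$ \emph{large} exactly where the potential is large and $u$ has not collapsed. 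Writing $G^D$ and $\omega^D_\xi$ for the Dirichlet Green's function and the harmonic measure of $-\Delta$ on a cube $D$ with $\xi$ in its interior, the only input used on $u$ will be the quantitative maximum principle
\begin{equation*}
\max_{\partial D}w\;\ge\;\sum_{y\in\partial D}\omega^D_\xi(y)\,w_y\;=\;w_\xi+\sum_{n\in D\setminus\partial D}G^D(\xi,n)\,v_nu_n\;=\;u_\xi+\sum_{n\in D\setminus\partial D}G^D(\xi,n)\,v_nu_n .
\end{equation*}
Since $w_\eta\le u_\eta+\frac12 r^2$ for $\eta\in\partial D$ whenever $D$ has side $O(r)$, it remains only to show that the sum on the right produces a gain of at least $\eps u_\xi+\frac12 r^2$ over a cube $D$ of side $\le\fl{\sqrt{1+\eps}\,r}=:L$ centered (or nearly centered) at $\xi$; note $L\ge r$ and $L^2\le(1+\eps)r^2$.

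Set $J:=\{j\in B:\,v_j\ge C_Pr^{-2}\}$, so $|J|\ge\lambda|B|$ by condition (i), and split $J=J_{\mathrm{hi}}\cup J_{\mathrm{lo}}$ according to whether $u_n\ge\frac12u_\xi$ or $u_n<\frac12u_\xi$; at least one of the two has cardinality $\ge\frac{\lambda}{2}|B|$. \emph{Case A ($|J_{\mathrm{hi}}|\ge\frac{\lambda}{2}|B|$).} Apply the displayed inequality on $D=B$. On $J_{\mathrm{hi}}$ one has $v_nu_n\ge\frac12C_Pr^{-2}u_\xi$, so $\max_{\partial B}w\ge u_\xi\bigl(1+\frac{C_P}{2}r^{-2}\sum_{n\in J_{\mathrm{hi}}\setminus\partial B}G^B(\xi,n)\bigr)$. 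Here one needs a Green-mass estimate: for $\xi\in B/3$ and any $S\subset B$ with $|S|\ge\frac{\lambda}{4}|B|$ one has $\sum_{n\in S}G^B(\xi,n)\ge c_0(\lambda,d)\,r^2$, which follows from the interior lower bounds for $G^B$ in the Appendix together with the fact that a fixed fraction of $S$ lies in the bulk of $B$ (and $|J_{\mathrm{hi}}\setminus\partial B|\ge\frac{\lambda}{4}|B|$ once $r\ge r_\ast$). Thus $\max_{\partial B}w\ge u_\xi(1+\frac12C_Pc_0)$, and taking $\eta\in\partial B$ realizing the maximum gives $u_\eta\ge u_\xi(1+\frac12C_Pc_0)-\frac12r^2\ge u_\xi(1+\frac12C_Pc_0-\frac{1}{2M})$, with $|\eta-\xi|_\infty\le r-1<L$. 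Fixing $C_P=C_P(\eps,\lambda,d)$ so that $\frac12C_Pc_0\ge\eps+1$, any $M\ge1$ yields $u_\eta\ge(1+\eps)u_\xi\ge(1+\eps)Mr^2\ge ML^2$, so $\xi'=\eta$ works.

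\emph{Case B ($|J_{\mathrm{lo}}|\ge\frac{\lambda}{2}|B|$).} Now $u$ has dipped below $\frac12u_\xi$ on a large portion of $B$, and subharmonicity of $w$ must be compensated by a large value nearby. Since $B\subset Q_r:=\xi+\llbracket-r,r\rrbracket^d$, a pigeonhole over the $\ell^\infty$-spheres $A_t=\{n:\,|n-\xi|_\infty=t\}$ with $t_0(d)\le t\le r$ (discarding $Q_{t_0}$, whose size is negligible once $r\ge r_\ast(\lambda,d)$) produces some $t^\ast\in[t_0,r]$ with $|J_{\mathrm{lo}}\cap A_{t^\ast}|\ge\theta'(\lambda,d)\,|A_{t^\ast}|$. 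On $D=Q_{t^\ast}$ (center $\xi$) the displayed inequality, using $w_y<\frac12u_\xi+\frac12r^2$ on $J_{\mathrm{lo}}\cap A_{t^\ast}$ and $w_y\le\max_{A_{t^\ast}}w$ elsewhere, and a Poisson-kernel lower bound $p:=\omega^{Q_{t^\ast}}_\xi(J_{\mathrm{lo}}\cap A_{t^\ast})\ge c_2(\lambda,d)>0$, gives $\max_{A_{t^\ast}}w\ge\bigl(u_\xi(1-\frac p2)-\frac p2r^2\bigr)/(1-p)$; taking $\eta\in A_{t^\ast}$ realizing the maximum and subtracting $\frac{1}{2d}\sum_i(\eta_i-\xi_i)^2\le\frac12r^2$, an elementary rearrangement yields $u_\eta\ge(1+\eps)u_\xi$ provided $\bigl(p(\frac12+\eps)-\eps\bigr)u_\xi\ge\frac12r^2$; with $\eps<\eps_0(\lambda,d):=\min\{1,\frac14c_2\}$ this coefficient is $\ge\frac14c_2$, so $u_\xi\ge Mr^2$ with $M\ge M(\lambda,d):=\max\{1,2/c_2\}$ suffices. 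Since $t^\ast\le r<L$, $\xi'=\eta$ works.

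\emph{Main obstacle.} The delicate point is the discrete Poisson-kernel input in Case B: harmonic measure on a lattice cube, viewed from its center, is \emph{not} comparable to the uniform measure on the boundary --- it concentrates on the faces and degenerates near the corners and at small scales --- so the assertion that a boundary subset of fixed relative cardinality has harmonic measure bounded below \emph{uniformly in the side length of the cube} is a genuinely lattice-specific estimate, with no dilation or rotation invariance to fall back on, and is precisely the type of statement the Appendix on discrete harmonic functions is meant to supply; the companion Green-mass estimate in Case A is comparatively routine. A secondary, purely bookkeeping, matter is to fix the constants in a consistent order: first $\lambda\mapsto c_0,c_2,\theta',t_0$, then $\eps_0$; then, for each $\eps<\eps_0$, the constants $C_P$ and $M$; and finally $r_\ast$.
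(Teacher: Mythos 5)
Your proposal is correct in spirit and takes a genuinely different route from the paper, so a comparison is in order.

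Both arguments split $J=\{j\in B:v_j\ge C_Pr^{-2}\}$ into a part where $u$ has collapsed and a part where it has not, and both exploit the discrete Green's identity; but the details differ. The paper works throughout on cubes $Q(\rho;\xi)$ centered at $\xi$, defines a Poisson-weighted solid average $A_r$ of $u$, and splits $J$ by $u_j\lessgtr\frac12 A_r$. Its "collapse'' case argues by the weight estimate $\sum_{J_0}p_n\gtrsim\lambda r^d$ (Lemma~\ref{lem:Poisson-1}) that a compensating large value must exist inside $Q(r;\xi)$; its "no-collapse'' case compares surface averages $a_r,a_R$ at two scales, which requires the two-scale comparison $G_R-G_r$, two applications of the annular maximum principle, and the estimate $G_R\asymp r^{2-d}$ on $\partial Q(r)$. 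You instead split by $u_j\lessgtr\frac12 u_\xi$ and phrase both cases as a single Green's identity for the explicit subharmonic barrier $w_n=u_n+\frac{1}{2d}\sum(n_i-\xi_i)^2$, namely $\max_{\partial D}w\ge w_\xi+\sum G^D(\xi,n)v_nu_n$. Your Case A (no collapse) then reduces to a single Green-mass lower bound $\sum_{n\in S}G^D(\xi,n)\gtrsim r^2$ for $|S|\gtrsim\lambda r^d$, which is noticeably lighter than the paper's Case II. Your Case B replaces the paper's $p_n$-weighted solid-ball estimate with a pigeonhole over $\ell^\infty$-spheres and a harmonic-measure lower bound for the exceptional subset. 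Both implementations ultimately consume the same Appendix ingredients (Lemmas~\ref{lem:Poisson-1}, \ref{lem:poisson}, \ref{lem:GRr}), and you correctly single out the sphere-wise Poisson-kernel lower bound, uniform in the radius, as the genuinely lattice-specific input.

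Two small points worth tightening. First, in Case~A you take $D=B$, but Lemma~\ref{lem:GRr} (and the supporting interior estimates) are proved for cubes \emph{centered} at the pole. Since $\xi\in B/3$, you have $B\subset Q(r;\xi)$, so it is cleaner to run the identity on $D=Q(\fl{\sqrt{1+\eps}\,r};\xi)$: then Lemma~\ref{lem:GRr} gives $G^D(\xi,n)\ge c_1r^{2-d}$ for all $n\in Q(r;\xi)\supset J_{\mathrm{hi}}$ directly, the correction $\frac{1}{2d}\sum(\eta_i-\xi_i)^2\le\frac{1+\eps}{2}r^2$ on $\partial D$ only marginally changes the constants, and the exit point automatically satisfies $|\xi'-\xi|_\infty\le\fl{\sqrt{1+\eps}\,r}$. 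Second, your algebra in Case~B requires $p<1$; this is automatic since $p=1$ would force $u_\xi\le\frac12u_\xi+\frac12r^2$, contradicting $u_\xi\ge Mr^2$ once $M>1$, but it deserves a line. With those adjustments the argument is a valid, somewhat leaner alternative to the paper's proof.
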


\begin{remark}\label{rem:climb}
This lemma holds  for any $\lambda$, and works for any cube $B$ of side length $r$. The choice of $C_P$, $\eps$ and $M$ only depend on $\lambda$,  and is independent of the choice of $B$, neither on its size nor the position.  
\end{remark}
\begin{remark}\label{rem:climb2}
Note that this Lemma is a completely deterministic result. It has nothing to do with the randomness (structure of $v_n$). It can be applied to any $V$ and $u$ as long as $(Hu)_n=1$ locally on the lattice (containing the cube $B$ and its neighborhood). In our proof, Lemma \ref{lem:climb} will lead to some important probability estimates.  The small parameter $\lambda$ will be picked at the very end when we are about to prove \eqref{eq:key}. 
\end{remark}

We need some technical preparations for the average of $u_n$. We will frequently write $u(n)=u_n$ to make the notations of the sub-index  easier to read. For $\xi=(\xi_1,\cdots,\xi_d)\in \Lambda$, and $r \in \Z_{\ge 0}$, we denote by $Q(r;\xi)$ the box centered at $\xi$ of side length $2r+1$ :
\begin{equation*}
  Q(r;\xi)=\left\{m=(m_1,\cdots,m_d)\in\Z^d:\, |m-\xi|_\infty\le r\right\}.
\end{equation*}
We will omit the center $\xi$ (fixed) and write $Q(r)=Q(r;\xi)$ when  it is clear. We denote by $\partial Q(r)\subset Q(r)$ the inner boundary of $Q(r)$ as defined in \eqref{eq:bdry-inner},
 and by $\partial^\circ Q(r)\subset \partial Q(r)$ the boundary removing the ``corners'' as defined in \eqref{eq:bdry-no-corner}. 
For $r=0$, we have the ``degenerate cube'' $Q(0;\xi)\,=\,\partial Q(0;\xi)\, =\, \{\xi\}$.
Notice that $Q(r)=\cup_{\rho=0}^{r}\partial Q(\rho)$. 
Let $a_r$ be the average of $u_n$ on $\partial Q(r) $ with respect to the (discrete) Poisson kernel (see the definition and properties of $P_r$ in \eqref{eq:Pdef} in Appendix \ref{sec:poisson}):
\begin{equation}\label{eq:ar}
    a_r=\sum_{n\in \partial Q(r;\xi)}P_r(\xi,n)\,u_n,\quad  r\ge 1, \quad  a_0=u_\xi,\,
\end{equation}
in other words, a harmonic function with data $u$ on the boundary.
Let $A_r$ be the corresponding weighted average of $u_n$ on $Q(r)$:
\begin{equation}\label{eq:Ar}
    A_r\,=\,\frac{1}{|Q(r)|}\,\sum_{\rho=0 }^{r}\,|\partial Q(\rho)|\,a_\rho\,=\, \frac{1}{|Q(r) |}\,\sum_{n\in Q(r) }\, p_n\, u_n,
\end{equation}
where for any $n\in  Q(r)$, and  $\rho=|n-\xi|_\infty$
\begin{equation}\label{eq:pn}
    p_n=|\partial Q(\rho)|\, P_\rho(\xi,n). 
\end{equation}
By the properties of the discrete Poisson kernel, one has
\begin{equation}\label{eq:sumpnpn}
    \sum_{n\in \partial Q(r;\xi)}P_r(\xi,n)=1 \Longrightarrow   \sum_{n\in  Q(r;\xi)}p_n=|Q(r)|.
\end{equation}

The first two estimates are lower bounds on $a_r$ and $A_r$. 
\begin{lemma}\label{lem:arAr}
There is dimensional constant $C$, such that for any $\xi \in \Lambda$ and $r\ge 1$ 
\begin{align}
   &a_r\ge u_\xi-r^2,\label{eq:ar-lower}\\
&A_r\ge u_\xi-C r^2. \label{eq:Ar-lower}
\end{align}

\end{lemma}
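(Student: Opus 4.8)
The plan is to compare the landscape function with an explicit quadratic barrier, using that $u$ is discrete subharmonic up to a bounded defect (the discrete avatar of ``a function with bounded Laplacian deviates from harmonic by $O(r^2)$ at scale $r$''). Since $v_n\ge 0$ and $u_n>0$ by Theorem~\ref{thm:landscape}, the landscape equation $(Hu)_n=1$ rewrites as $(\Delta u)_n=v_nu_n-1\ge -1$. Fixing $\xi$ and writing $q(n):=\sum_{i=1}^d(n_i-\xi_i)^2$ in the natural cube coordinates on $Q(r;\xi)$, a one-line computation gives $(\Delta q)(n)\equiv 2d$, so $\psi_n:=u_n+\frac{1}{2d}q(n)$ satisfies $(\Delta\psi)_n=(\Delta u)_n+1=v_nu_n\ge 0$; that is, $\psi$ is subharmonic on $Q(r;\xi)$.

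To prove \eqref{eq:ar-lower} I would apply the discrete sub-mean-value inequality on the box $Q(r;\xi)$ --- equivalently, compare $\psi$ with its harmonic replacement and invoke the maximum principle, the value of that replacement at the center $\xi$ being $\sum_{n\in\partial Q(r;\xi)}P_r(\xi,n)\,\psi_n$ by the definition of the Poisson kernel in \eqref{eq:Pdef}, Appendix~\ref{sec:poisson}. Since $q(\xi)=0$ one has $\psi_\xi=u_\xi$, while on $\partial Q(r;\xi)$ every coordinate satisfies $|n_i-\xi_i|\le|n-\xi|_\infty=r$, so $q(n)\le dr^2$ there; combined with $\sum_{n\in\partial Q(r;\xi)}P_r(\xi,n)=1$ from \eqref{eq:sumpnpn} this gives
\[
u_\xi=\psi_\xi\;\le\;\sum_{n\in\partial Q(r;\xi)}P_r(\xi,n)\,u_n\;+\;\frac1{2d}\sum_{n\in\partial Q(r;\xi)}P_r(\xi,n)\,q(n)\;\le\;a_r+\tfrac12 r^2,
\]
which is \eqref{eq:ar-lower} (even with the slightly better constant $\tfrac12$). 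For \eqref{eq:Ar-lower} I would just insert this into the definition \eqref{eq:Ar}--\eqref{eq:pn} of $A_r$: since $Q(r;\xi)=\bigsqcup_{\rho=0}^{r}\partial Q(\rho)$ gives $\sum_{\rho=0}^{r}|\partial Q(\rho)|=|Q(r)|$, and $a_\rho\ge u_\xi-\rho^2\ge u_\xi-r^2$ for $1\le\rho\le r$ while $a_0=u_\xi$, one gets $A_r=|Q(r)|^{-1}\sum_{\rho=0}^{r}|\partial Q(\rho)|\,a_\rho\ge u_\xi-r^2$, i.e.\ \eqref{eq:Ar-lower} already with $C=1$. (Keeping the $\rho^2$ rather than bounding it by $r^2$ and summing would give a dimensional $C<1$, but this refinement is not needed.)

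I do not expect a genuine obstacle: the whole content is the barrier $\frac{1}{2d}q$ together with the discrete maximum principle / Poisson representation, both supplied by the Appendix. The only points needing a line of care are the degenerate case $r=0$ (where the convention $a_0:=u_\xi$ makes both estimates trivial), checking that the quoted Poisson-kernel identities do apply to the $\infty$-norm boxes $Q(r;\xi)$, and reading $q$ on the torus $\Lambda$ --- handled as usual by working in a chart where $Q(r;\xi)$ is a genuine cube in $\Z^d$ (possible when $2r+1\le K$, the only regime in which the lemma is used), so that the ambient Laplacian agrees with the cube Laplacian on the interior.
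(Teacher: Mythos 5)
Your proof is correct and is essentially the paper's argument repackaged: both rest on the quadratic barrier $q(n)/(2d)$ with $\Delta q\equiv 2d$ and the discrete maximum principle. The paper runs it through the auxiliary functions $u'$ (free Dirichlet landscape), $u''$ (the explicit quadratic), and $w$ (harmonic extension of $u|_{\partial Q}$), then shows $w+u'-u\ge 0$; you instead fold everything into the single subharmonic function $\psi=u+q/(2d)$ and invoke the Poisson representation $\psi_\xi\le\sum_n P_r(\xi,n)\psi_n$. The same comparison principle underlies both, and your version, besides being a bit tighter (giving $a_r\ge u_\xi-\tfrac12 r^2$ and $A_r\ge u_\xi-r^2$ with no adjustable $C$), is also more economical; the passage from $a_\rho$ to $A_r$ by crude bounding $\rho^2\le r^2$ and the identity $\sum_\rho|\partial Q(\rho)|=|Q(r)|$ mirrors the paper's summation and is fine.
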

\begin{proof}
Let $\tilde u$ be the landscape function for the free Laplacian on $Q(r)$ with zero Dirichlet boundary condition:
\begin{align*}
    \begin{cases}
    -(\Delta u')_n=1,\, n\in Q(r-1), \\
    (u')_n=0, \,n\in \partial Q(r).
    \end{cases}
\end{align*}
Let $u''_n=\frac{1}{2}r^2-\frac{1}{2d}\sum_{i=1}^d(n_i-\xi_i)^2$. Direct computation shows that 
\begin{align*}
    \begin{cases}
    -(\Delta u'')_n=1, n\in Q(r-1), \\
    (u'')_n\ge 0, n\in \partial Q(r).
    \end{cases}
\end{align*}
By the maximum principle (Lemma \ref{lem:maxP}), one has for all $n\in Q(r)$, ${u'}_n\le u''_n \le r^2$. Let $w_n$ be the harmonic function on $Q(r)$ with the boundary data equal to $u_n$, i.e.,
\begin{align*}
    \begin{cases}
    -(\Delta w)_n=0,\, n\in Q(r-1), \\
    (w)_n=u_n,\, n\in \partial Q(r).
    \end{cases}
\end{align*}
Then by the Poisson integral formula \eqref{eq:IBP}, $w_\xi=\sum_{n\in \partial Q(r;\xi)}P_r(\xi,n)u_n=a_r$. On the other hand,
\begin{align*}
    \begin{cases}
    -(\Delta (w+u'-u))_n=v_nu_n\ge 0,\, n\in Q(r-1), \\
    (w+u'-u)_n=0,\, n\in \partial Q(r).
    \end{cases}
\end{align*}
Therefore, $(w+u'-u)_n\ge 0$ for all $n\in Q(r)$. In particular, $
    u_\xi\le w_\xi+u'_\xi \le a_r +r^2,$ proving \eqref{eq:ar-lower}, and a similar statement is true for $1\le \rho \le r-1$, so that
\begin{equation*}
  |\partial Q(\rho)|  u_\xi \le |\partial Q(\rho)|\sum_{n\in \partial Q(\rho) }P_\rho(\xi,n)u_n + |\partial Q(\rho)| \,\rho^2 \le \sum_{n\in \partial Q(\rho) }p_nu_n + C \rho^{d+1}.
\end{equation*}
Summing over $1\le \rho \le r-1$, one has 
\begin{equation*}
| Q(r)| u_\xi= \sum_{\rho=1}^{r-1} |\partial Q(\rho)| u_\xi 
  \le \sum_{\rho=1}^{r-1}\sum_{n\in \partial Q(\rho) }p_nu_n + C \sum_{\rho=1}^{r-1}\rho^{d+1}\le  \sum_ {n\in Q(r;\xi)}p_nu_n + C  r^{d+2},
\end{equation*}
which implies $
  u_\xi 
  \le \,  A_r + C\, r^{2},$ as desired.
\end{proof}

\begin{lemma}\label{lem:Poisson-1} For any $d\geq 1$ and $0<\eta<1/4$ there is a constant $C_1=C_1(d)>0$ (independent of $\eta$) and a constant $C_2=C_2(\eta,d)>0$ such that for any cube  $Q(r)=Q(r;\xi)$ of side length $r\ge \, C_1/\eta$, there is  a subset $Q^\eta(r)\subset Q(r) $ such that  $|Q(r)\backslash Q^\eta(r)|\le \, C_1\, \eta \, r^d$ and 
\begin{equation}\label{eq:pnC2}
   p_n\ge \, C_2 \ {\rm for\ } n\in  Q^\eta(r).
\end{equation}
\end{lemma}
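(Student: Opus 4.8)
The plan is to let $Q^\eta(r)$ consist of the center $\xi$ together with those $n\in Q(r;\xi)$ for which, setting $\rho:=|n-\xi|_\infty\ge 1$, the point $n$ lies on the flat part $\partial^\circ Q(\rho;\xi)$ of the shell and at $\ell^\infty$-distance at least $\eta\rho$ from the $(d-2)$-dimensional skeleton of $\partial Q(\rho;\xi)$ (the set of points belonging to two or more facets of the cube $Q(\rho;\xi)$). On such a point the bound for $p_n=|\partial Q(\rho)|\,P_\rho(\xi,n)$ in \eqref{eq:pn} follows from two ingredients. The first is the elementary two-sided estimate
\[
 c(d)\,\rho^{d-1}\ \le\ |\partial Q(\rho)|=(2\rho+1)^d-(2\rho-1)^d\ \le\ C(d)\,\rho^{d-1},\qquad \rho\ge 1 .
\]
The second is a lower bound for the discrete Poisson kernel of the form $P_\rho(\xi,n)\ge c(\eta,d)\,\rho^{-(d-1)}$, valid precisely when $n$ is a flat boundary point of $\partial Q(\rho)$ at $\ell^\infty$-distance $\ge\eta\rho$ from the $(d-2)$-skeleton. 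Multiplying the two, $p_n\ge C_2(\eta,d)>0$; for the center $p_\xi=|\partial Q(0)|\,P_0(\xi,\xi)=1$, so after shrinking $C_2$ to be at most $1$ we obtain $p_n\ge C_2$ throughout $Q^\eta(r)$, which is \eqref{eq:pnC2}.

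The Poisson kernel lower bound is where the real work lies, and I expect it to be the main obstacle. In the continuum it is immediate by scaling: the Poisson kernel of the unit cube with pole at the center is a fixed positive continuous function on each open facet, hence bounded below by a positive constant $c(\eta)$ on the compact part of that facet at distance $\ge\eta$ from its relative boundary. In the discrete setting one instead invokes the discrete Poisson kernel estimates of Appendix~\ref{sec:poisson}; concretely, the bound amounts to a discrete boundary-Harnack (equivalently, local central limit) comparison for the exit distribution of the simple random walk from $Q(\rho;\xi)$, made uniform over all shells of side length at least $C_1(d)/\eta$. It is precisely the degeneration of this constant as $\eta\to0$ that forces the $\eta$-dependence of $C_2$.

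It remains to estimate the complement. For fixed $\rho\ge 1$ the points of $\partial Q(\rho)$ excluded from $Q^\eta(r)$ are exactly those within $\ell^\infty$-distance $\max(1,\eta\rho)$ of the $(d-2)$-skeleton (this automatically contains every non-flat point). The skeleton is a union of $O_d(1)$ faces of dimension $d-2$ and side $\le 2\rho+1$, and a tube of radius $t$ about it inside the $(d-1)$-dimensional shell contains at most $C(d)\,(t+1)\,\rho^{d-2}$ lattice points; with $t=\eta\rho$ this is at most $C(d)\bigl(\eta\rho^{d-1}+\rho^{d-2}\bigr)$. (When $d=1$ there is no skeleton and $p_n\equiv 1$, so one may take $Q^\eta(r)=Q(r)$; assume henceforth $d\ge 2$.) Summing over $0\le\rho\le r$,
\[
 |Q(r)\setminus Q^\eta(r)|\ \le\ \sum_{\rho=1}^{r}C(d)\bigl(\eta\rho^{d-1}+\rho^{d-2}\bigr)\ \le\ C(d)\bigl(\eta r^{d}+r^{d-1}\bigr).
\]
Finally, choose the dimensional constant $C_1=C_1(d)$ large enough that $r\ge C_1/\eta$ forces $r\ge 1/\eta$ (hence $r^{d-1}\le\eta r^{d}$) and also $2C(d)\le C_1$; then the right-hand side is at most $C_1\,\eta\,r^{d}$, which is the claimed volume bound. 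Since $C_1$ depends only on $d$ while $C_2$ depends on $d$ and $\eta$, this completes the argument.
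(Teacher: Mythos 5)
Your proof follows essentially the same route as the paper's: decompose $Q(r)$ into shells $\partial Q(\rho)$, invoke the discrete Poisson kernel lower bound of Lemma~\ref{lem:poisson} away from the edges and corners to get $p_n=|\partial Q(\rho)|\,P_\rho(\xi,n)\gtrsim 1$, and control the exceptional set by counting a tube around the $(d-2)$-skeleton, finally using $r\ge C_1/\eta$ to absorb the $\sum\rho^{d-2}\lesssim r^{d-1}$ term into $\eta r^d$.

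One point to tighten: Lemma~\ref{lem:poisson} only gives $P_\rho(\xi,n)\ge c(\eta,d)\rho^{1-d}$ on the good set for $\rho\ge\rho_0(\eta,d)$, so the finitely many shells with $\rho<\rho_0$ must be treated separately. The paper does this by observing that for each such $\rho$ the kernel $P_\rho(\xi,\cdot)$ is strictly positive on $\partial^\circ Q(\rho)$, and since there are only finitely many $\rho<\rho_0(\eta,d)$ the resulting lower bound on $p_n$ is a constant depending only on $\eta$ and $d$. Your phrasing ``made uniform over all shells of side length at least $C_1(d)/\eta$'' conflates the cutoff $\rho_0(\eta,d)$ (where the kernel asymptotics kick in) with the hypothesis $r\ge C_1/\eta$ (which only serves the volume bound); these are unrelated, and the small shells $\rho<\rho_0$ still fall inside $Q(r)$ and need the separate, elementary, positivity argument.
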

\begin{remark}
The lemma is true in all dimensions. However, for $d=1$, we actually do not need to remove any portion of the cube (an interval in $\Z$) to obtain \eqref{eq:pnC2} since the 1-d Poisson kernel $P_r$ is rather trivial (constantly $1/2$), and so is $p_n$. 
\end{remark}
\begin{proof}
Write $Q(r)=\cup_{\rho=0}^{r}\partial Q(\rho)$. The estimate follows from the lower bound of $P_\rho(\xi,n)$ on each $\partial Q(\rho)$ as long as $n$ is away from the edges (and the corners). Given $0<\eta<1$,  according to Lemma \ref{lem:poisson}, there exist  $c(\rho,\eta)$ and $\rho_0(\eta,d)$ such that for all $\rho\ge\rho_0$, $P_\rho(\xi,n)\ge c\rho^{1-d}$ on $\partial Q(\rho)$ except for $C\eta \rho^{d-1}$ many $n\in \partial Q(\rho)$, where $C$ only depends on the dimension. Therefore, $ p_n=|\partial Q(\rho)|P_\rho(\xi,n)\ge \wt c>0$ on $\bigcup_{\rho=\rho_0}^r\partial Q(\rho) $ except for $\sum_{\rho=\rho_0}^rC\eta\rho^{d-1}\le C\eta r^d$ many $n$. For $0\le \rho <\rho_0$, we then have $P_\rho(\xi,n)>c(\rho,d)$ except for those $n$ on the edges and corners, whose total cardinality is at most $C\rho^{d-2}$. This again implies $p_n\ge \min_{0\le \rho <\rho_0}c(\rho,d)$ for all $n \in \bigcup_{0\le \rho <\rho_0}\partial Q(\rho)$ except for $\sum_{\rho=0}^{\rho_0}\rho^{d-2} \lesssim r^{d-1} \lesssim \eta r^d$ many $n$. Therefore, $p_n\ge C_2$ for some constant $C_2$ only depending on $\eta$ and $d$, and the cardinality of the exceptional set of  $n\in Q(r)$ violating this, is at most $C_1 \eta r^d$. 
\end{proof}

With these two technical lemmas, we are ready to the 
\begin{proof}[Proof of Lemma \ref{lem:climb}]
Let $B$ and $\xi \in \check B$ be given as in Lemma \ref{lem:climb}, where $\ell(B)=r$. Clearly, $B\subset Q(r;\xi)$. Denote by $J$ the set in condition \eqref{eq:card-J}, that is, $
    J=\{j\in B: v_j\ge C_Pr^{-2}\}, $ 
where $C_P$ will be be picked later. Fix $0<\lambda<1$, we assume that $|J|\ge  \lambda |B|=\lambda r^d$. 
Let $A_r,p_n$ be defined in \eqref{eq:Ar}. Let $
    J_0=\{j\in J: u_j<\, \frac{1}{2}\, A_r\, \}$ 
and let
\begin{equation}\label{eq:SJ0}
    S(J_0 )=\sum_{n\in J_0}p_n,\ \ S(J^C_0)=\sum_{n\in Q(r)\backslash J_0}p_n=|Q(r)|-S(J_0)\, ,
\end{equation}
where the last equality follows from \eqref{eq:sumpnpn}.

Now we are ready to look for $\xi'$ satisfying \eqref{eq:MR2} in the following two cases:

\vskip 0.08 in 
\noindent {\bf{Case I:} }  $
    |J_0|\ge \frac{1}{2}\lambda |B|=\frac{1}{2}\lambda r^d.$ 
    
Let $C_1$ be the constant from Lemma \ref{lem:Poisson-1}, then pick $\eta=\min(\frac{\lambda}{4C_1},1/4)$ and let $C_2=C_2(d,\eta)$ be also the constant from Lemma \ref{lem:Poisson-1}. Then 
\begin{equation*}
    |J_0|=\sum_{n\in J_0\cap Q^\eta(r)}1+\sum_{n\in J_0\backslash Q^\eta(r)}1 
  \le \, C_2^{-1}\sum_{n\in J_0\cap Q^\eta(r)}p_n+C_1\eta r^d \le \,   C_2^{-1}\, \sum_{n\in J_0}p_n+\frac{1}{4}\lambda r^d.
\end{equation*}
Therefore, 
\begin{equation*} 
    S(J_0)=\sum_{n\in J_0}p_n\ge  \, C_2\frac{1}{4}\lambda\, r^d:=c_3(\lambda)\, r^d.
\end{equation*}

 Direct computation shows that 
\begin{equation*}
    |Q(r)|\,A_r=\sum_{n\in Q(r)}p_nu_n= \sum_{J_0}p_nu_n+\sum_{Q(r)\backslash J_0}p_nu_n 
    \le  \frac{1}{2} \,A_r\,  S(J_0)+\sum_{Q(r)\backslash Q_0}p_nu_n.
\end{equation*}
By the definition of $S(J_0^C)$ and \eqref{eq:SJ0},  this implies that 
\begin{equation*}
  \frac{1}{S(J_0^C)}  \sum_{Q(r)\backslash J_0}p_nu_n
  \ge \frac{|Q(r)|-\frac{1}{2} S(J_0)}{|Q(r)|-S(J_0)}A_r\ge  \Bigl(1+\frac{S(J_0)}{2|Q(r)|}\Bigr)A_r 
  \ge \bigl(1+c_4(d,\lambda)\bigr)\, A_r .
\end{equation*}
Therefore, there is one point $\xi' \in Q(r)\backslash J_0$ such that 
$u_{\xi'} \ge \left(1+c_4\right)\, A_r.$ By \eqref{eq:Ar-lower} and \eqref{eq:Mr2}, 
\begin{equation*}
    u_{\xi'} \ge  \left(1+c_4\right)\, (u_{\xi}-C_dr^2) 
    \ge   \left(1+c_4\right)\, \left(1-\frac{C_d}{M}\right)u_{\xi}  
    \ge   \left(1+\frac{c_4}{2}\right)\, u_{\xi},
\end{equation*}
provided $
    M>\frac{2}{c_4}(C_d+1):=M_0(d,\lambda) .
$
Therefore, \eqref{eq:MR2} holds  for $\eps<c_4/2:=\eps_0(d,\lambda)$. 

\vskip 0.08 in 
\noindent {\bf{Case II:} } $
    |J_0| \le \frac{1}{2}\lambda r^d
.$ 

Take $R=\fl{ \sqrt {1+\eps}\, r } $ for some small $\eps$, which gives $R^2-r^2\le \eps r^2$. Let $a_r,a_R$ and $G_r,G_R$ be the surface average and the Green's function on $Q(r)=Q(r;\xi),Q(R)=Q(R;\xi)$ respectively, as defined in \eqref{eq:ar},\eqref{eq:Gdef}. 
Applying the discrete Green's identity (integration by parts formula) \eqref{eq:IBP} on $Q(r)$ and $Q(R)$, one has 
\begin{equation*}
    u(\xi)= a_R-\sum_{m\in  Q(R-1)}G_{R}(\xi,m)\Delta u(m)=a_r-\sum_{m\in  Q(r-1)}G_{r}(\xi,m)\Delta u(m).
\end{equation*}
Then 
\begin{align*}
    a_R-a_r
=&\sum_{m\in  Q(R-1)\backslash Q(r-1)}G_{R}(\xi,m)\Delta u(m)+\sum_{m\in  Q(r-1)}\, \big(G_{R}(\xi,m)-G_{r}(\xi,m)\big)\, \Delta u(m)  \\
\ge& -\sum_{m\in  Q(R-1)\backslash Q(r-1)}G_{R}(\xi,m)-\sum_{m\in  Q(r-1)}\, \big(G_{R}(\xi,m)-G_{r}(\xi,m) \big)\\ 
&  \qquad \qquad \qquad \qquad \qquad  +\sum_{m\in  Q(r-1)}\, \big(G_{R}(\xi,m)-G_{r}(\xi,m)\big)v_mu_m.
\end{align*}
Notice that  $\Delta G_R(\xi,\cdot)=0$  in $Q(R-1;\xi)\backslash \{\xi\}$. By the maximum principle, Lemma \ref{lem:maxP-harmonic}, one has 
\begin{equation*}
    \max_{m\in Q(R-1)\backslash Q(r-1)}G_{R}(\xi,m) \le \max_{m\in \partial Q(r)} G_{R}(\xi,m),
\end{equation*}
where we used $G_R(\xi,m)=0$ for $m\in \partial Q(R)$ and the discussion for an annular region after Lemma \ref{lem:maxP-harmonic}. 

On the other hand, $\Delta \big(G_R(\xi,\cdot)-G_r(\xi,\cdot)\big)=0$ in $Q(r-1)$ and $G_R(\xi,n)-G_r(\xi,n)=G_R(\xi,n)$ for $n\in \partial Q(r)$. Then using once again the maximum principle, Lemma \ref{lem:maxP-harmonic}, for any $m\in Q(r-1)$,
\begin{equation*}
 \min_{m'\in \partial Q(r)} G_{R}(\xi,m')   \le\, G_{R}(\xi,m)-G_{r}(\xi,m) \le \max_{m'\in \partial Q(r)} G_{R}(\xi,m').
\end{equation*}
By the choice of $r$ and $R$, $\partial Q(r)$ is away from both $\partial Q(R)$ and the pole $\xi$. Then Lemma \ref{lem:GRr} implies that for $r$ large enough (depending only on $\eps$) and all $m\in \partial Q(r)$, 
\begin{equation*}
C_1r^{2-d}\le     G_{R}(\xi,m')\le C_2r^{2-d}, 
\end{equation*}
where $C_1,C_2$ only depend on $d$ and $\eps$. 
Therefore,
\begin{align*}
   a_R-a_r\ge & -C_3(R^d-r^d)r^{2-d}-C_4r^d r^{2-d}+r^{2-d}\sum_{m\in  Q(r-1)}\, v_mu_m  \\
   \ge & -C_5(d,\eps)r^2+r^{2-d}\sum_{m\in  J\backslash J_0}\, v_mu_m \\
   \ge &  -C_5(d,\eps)r^2+C_1r^{2-d}\,C_Pr^{-2}\frac{1}{2}A_r\big(|J|-|J_0|\big)\ge   -C_5(d,\eps)r^2+2\eps A_r,
\end{align*}
provided  that $
    C_P\gtrsim \eps /(C_1\lambda).$

Finally, by the lower bounds for $a_r,A_r$ in Lemma \ref{lem:arAr}  and the condition \eqref{eq:Mr2} on $u_\xi$, we have 
\begin{align*}
  a_R \ge  &\, u_\xi-r^2 -C_5(d,\eps)r^2+2\eps(u_\xi-Cr^2) 
   \ge   (1+2\eps)u_\xi-(1+C_5+C_6)r^2\\
   \ge &\, (1+2\eps)u_\xi-(1+C_5+C_6)\,\frac{1}{M}\,u_\xi\ge  (1+\eps)u_\xi,
\end{align*}
provided that $
    M\ge ({1+C_5+C_6})/{\eps}:=M_0(d,\eps,\lambda).$ 
Recall that $a_R=\sum_{\partial Q(R)}P_R(\xi,n)u_n$ and $\sum_{\partial Q(R)}P_R(\xi,n)=1$, therefore, there is $\xi' \in \partial Q(R;\xi)$ such that 
$
    u_{\xi'}\ge (1+\eps)u_{\xi},
$
which completes the proof of Lemma \ref{lem:climb} in the second case.
\end{proof}
Notice that  Lemma \ref{lem:climb} is deterministic and requires no randomness of $v_n$. A direct consequence is the following  estimate on the probability that $u_\xi$ grows.  

\begin{lemma}\label{lem:Pr} 
Let $V(\omega)=\{v_n\}_{n\in \Lambda }$ be the Anderson-type   potential as in Theorem \ref{thm:Nu-iid}. 
Fix $0<\lambda<1$, and retain $\eps<\eps_0,C_P,M, r_\ast$ from Lemma  \ref{lem:climb}. For any cube $Q\subseteq \Lambda$ of side length $\ell(Q)$, define the event
\begin{equation}\label{eq:OmegaI}
    \Omega(Q):=\Big\{\omega:\,   \cards{j\in Q : v_j\ge C_P\,\ell(Q)^{-2}}\le  \lambda |Q| \Big\}.
\end{equation}

Assume that $r_\ast\ge 15/\eps$, otherwise, just reset $r_\ast$ to be $\max\{15/\eps,r_\ast\}$.
For any $r_0\ge r_{\ast}$, set 
\begin{equation}\label{eq:rk}
    r_{k+1}=\fl{\sqrt{1+\eps}\, r_k}, \ \ k=0,1,\cdots,k_{max},
\end{equation}
where $k_{\rm max}$ is the largest integer $k$ such that $r_k < K$. Let $\Omega_k=\Omega\left(\llbracket1,r_k\rrbracket ^d\right), k=0,1,\cdots,k_{max}$, and  $\Omega_\infty=\Omega \left(\llbracket1,K\rrbracket ^d\right)$.

Then for any cube $B\subset \Lambda$ of side length $\ell(B)=r_0$, and  $\check B=B/3 $ as in \eqref{eq:Q/3}, 
\begin{equation}\label{eq:Pr-key}
    \P\Bigl\{\,\omega:\,  \max_{\xi\in\check B } u_{\xi} \ge Mr_0^2 \, \Bigr\} \, \le \Prr{ \Omega_\infty}+C\eps^{-d}\sum_{k=0}^{k_{\rm max}}\Prr{\Omega_k},
\end{equation}
for some dimensional constant $C$. 
\end{lemma}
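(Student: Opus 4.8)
The plan is to iterate the deterministic ``climbing'' estimate of Lemma~\ref{lem:climb} along the geometric chain of scales $r_0<r_1<\cdots$ from \eqref{eq:rk}, to keep track of how far the point carrying a large value of $u$ can drift, and then to control everything by a union bound over translates, using the i.i.d.\ structure of the potential. First I would record the elementary scale bookkeeping: since $r_0=\ell(B)\ge r_\ast\ge 15/\eps$, the map $r\mapsto\fl{\sqrt{1+\eps}\,r}$ is strictly increasing on $[r_0,\infty)$ and obeys $\fl{\sqrt{1+\eps}\,r}\ge(1+\tfrac{\eps}{4})r$, so that $r_k$ grows geometrically, $k_{\max}<\infty$, and the partial sums satisfy $\sum_{j=0}^{k}r_j\le C\eps^{-1}r_k$ with $C$ dimensional. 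Hence, if the iteration of Lemma~\ref{lem:climb} is started from any $\xi_0\in\check B$ and produces points $\xi_{k+1}$ with $|\xi_{k+1}-\xi_k|_\infty\le r_{k+1}$, then every $\xi_k$ stays within $|\xi_k-\xi_0|_\infty\le C\eps^{-1}r_k$ of $\check B$.

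Next I would fix, for each $k\le k_{\max}$, a family $\mathcal C_k$ of cubes of side length $r_k$ with $\cards{\mathcal C_k}\le C\eps^{-d}$ whose middle thirds cover the $C\eps^{-1}r_k$-neighbourhood of $\check B$; this is possible because a middle third of such a cube contains of order $r_k^d$ lattice points while that neighbourhood contains of order $(\eps^{-1}r_k)^d$. The core claim is then the event inclusion
\[
\Bigl\{\,\omega:\ \max_{\xi\in\check B}u_\xi\ge Mr_0^2\,\Bigr\}\ \subseteq\ \Omega_\infty\ \cup\ \bigcup_{k=0}^{k_{\max}}\ \bigcup_{Q\in\mathcal C_k}\Omega(Q),
\]
with $\Omega(\cdot)$ as in \eqref{eq:OmegaI}. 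To prove it one argues by contraposition: if $\omega$ avoids every event on the right, pick $\xi_0\in\check B$ with $u_{\xi_0}\ge Mr_0^2$ and then $B_0\in\mathcal C_0$ with $\xi_0\in\check B_0$ (which exists since the middle thirds of $\mathcal C_0$ cover a neighbourhood of $\check B$); since $\omega\notin\Omega(B_0)$, hypotheses (i)--(ii) of Lemma~\ref{lem:climb} hold and $r_0\ge r_\ast$, so the lemma yields $\xi_1$ with $u_{\xi_1}\ge(1+\eps)u_{\xi_0}\ge Mr_1^2$ and $|\xi_1-\xi_0|_\infty\le r_1$. By the first paragraph $\xi_1$ lies in the $C\eps^{-1}r_1$-neighbourhood of $\check B$, hence in $\check B_1$ for some $B_1\in\mathcal C_1$, and $\omega\notin\Omega(B_1)$ lets the iteration continue. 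Running it through scales $0,\dots,k_{\max}$ produces a point with $u\ge Mr_{k_{\max}+1}^2\ge MK^2$. On the other hand, on the complement of $\Omega_\infty$ a fixed fraction $\lambda$ of the sites of $\Lambda$ carry $v_j\ge C_PK^{-2}$; combining the averaged bounds of Lemmas~\ref{lem:arAr}--\ref{lem:Poisson-1}, the energy identity $\ipc{Hu}{u}=\sum_n u_n$, and this non-degeneracy, applied on a box of side roughly $K$ centred at a point where $u$ is maximal, yields an a priori bound $\max_n u_n\le C(d,\eps,\lambda)K^2$. Choosing $M>C(d,\eps,\lambda)$ at the outset (which only strengthens hypothesis (ii) of Lemma~\ref{lem:climb}) makes the two statements incompatible, so one of the events on the right must occur. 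Taking probabilities, translation invariance of the i.i.d.\ potential gives $\Prr{\Omega(Q)}=\Prr{\Omega_k}$ for every cube $Q$ of side length $r_k$, and the union bound over $\bigcup_{k\le k_{\max}}\mathcal C_k$ together with $\cards{\mathcal C_k}\le C\eps^{-d}$ then yields \eqref{eq:Pr-key}.

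The main obstacle is the geometric bookkeeping that keeps the climbing trajectory inside the region covered by the $\mathcal C_k$'s: it is essential that the cumulative displacement be bounded by $C\eps^{-1}r_k$ at scale $k$ (and not merely by $C\eps^{-1}r_{k_{\max}}$), so that only $O(\eps^{-d})$ cubes are needed per scale rather than a number growing with $K$ — this is precisely what the hypothesis $r_\ast\ge 15/\eps$ secures. A second delicate point is closing the iteration at the top scale: one needs the a priori upper bound on $\max_n u_n$ valid off $\Omega_\infty$, and one must check that the $M$ handed down by Lemma~\ref{lem:climb} can be taken large enough to contradict it, which is legitimate since enlarging $M$ only tightens condition (ii). The remaining ingredients — the translation invariance and the elementary covering count — are routine.
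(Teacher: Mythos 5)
Your iteration and covering scheme for the intermediate scales is essentially the one in the paper: starting from $\xi_0\in\check B$ with $u_{\xi_0}\ge Mr_0^2$, apply Lemma~\ref{lem:climb} at each scale $r_k$, control the cumulative drift via $\sum_{j\le k}r_j\lesssim\eps^{-1}r_k$, cover the relevant neighbourhood of $\check B$ by $O(\eps^{-d})$ cubes of side $r_k$, and take a union bound using translation invariance of the i.i.d.\ potential. This part matches the paper's proof up to cosmetic differences.

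Where you genuinely diverge — and where there is a gap — is the termination at the top scale. You claim that off $\Omega_\infty$ one can combine Lemmas~\ref{lem:arAr}--\ref{lem:Poisson-1} with the energy identity on ``a box of side roughly $K$ centred at a point where $u$ is maximal'' to obtain an a priori bound $\max_n u_n\le C(d,\eps,\lambda)K^2$, and then win by enlarging $M$. As written this does not close. The Poisson-kernel and Green's-function machinery underlying Lemmas~\ref{lem:arAr}, \ref{lem:Poisson-1} and \ref{lem:climb} is built for genuine Dirichlet boxes $Q(r;\xi)\subset\Z^d$, whereas $\Lambda$ is a torus of side $K$: a ``box of side roughly $K$'' centred at the maximiser wraps around, so there is no Dirichlet Poisson kernel to average against and the inclusion $3\check q\subset Q$ needed in Lemma~\ref{lem:climb} loses its meaning. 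The paper handles precisely this obstruction by unrolling the torus: it periodically extends $V$ and $u$ to $p\Lambda$ with $p=\fl{3\sqrt{1+\eps}}+1$, views $\Lambda$ as the middle third of $3\Lambda\subset p\Lambda$, and applies Lemma~\ref{lem:climb} once more to show that if $\omega\notin\Omega_\infty$ then the extended $u$ would exceed its own maximum — a contradiction that requires no separate a priori bound. To make your route rigorous you would either have to reproduce this periodic extension or supply a self-contained proof of $\max_n u_n\le CK^2$ off $\Omega_\infty$; the identity $\ipc{Hu}{u}=\sum_n u_n$ (equivalently $\sum_n v_nu_n=K^d$) only controls $u$ averaged over the large set $\{v_j\ge C_PK^{-2}\}$, not $\max u$, and the naive Harnack estimate of Lemma~\ref{lem:Harnack} gives an exponential, not polynomial, constant at scale $K$. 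Your observation that $M$ may be enlarged without harm is correct in itself — the paper also multiplies $M$ by a dimensional factor in a footnote — but it does not substitute for the missing bound.
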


\begin{proof} The idea is to repeatedly use Lemma \ref{lem:climb} to construct a sequence of growing cubes and stop when the final cube  exceeds the size of the entire domain. 

We start with $B_0=B$ of side length $\ell(B_0)=r_0$. Assume that $\max_{\check B _0} u({\xi}) \ge Mr_0^2$. We pick some $\xi_0  \in \check B _0$ such that $u(\xi_0) \geq Mr_0^2$. Then \eqref{eq:Mr2} is satisfied. Suppose furthermore that $\cE_0:=\Omega_0$ fails. Then \eqref{eq:card-J} holds  for $B_0$. All in all,  Lemma \ref{lem:climb}  gives a point 
$\xi_1$, such that 
$
  \xi_1\in  {\cC_{r_1}}:=\{n: \dist(n,\check B _0)\le r_1  \}, 
$
where $\dist$ is measured in $|\cdot|_\infty$ for $\Z^d$ lattice points 
and  
\[u(\xi_1) \geq (1+ \varepsilon) u(\xi_0)\ge Mr_1^2.\]
Additionally, we can require $r_0\ge 15/\eps$ which implies that $r_1>(1+\eps/3)r_0$.   Clearly, 
\[
    \cards{{\cC_{r_1}}}\le
    (2r_0+2r_1)^d\le ({12}/{\eps})^dr_1^d.\]
Therefore, ${\cC_{r_1}}$ can be covered by at most $n_\ast=\fl{36^d\eps^{-d}}+1$ disjoint cubes in $\Z^d$ of side length $\fl{r_1/3}$, namely, $\check B _1^{(1)},\check B _1^{(2)}, \cdots, \check B _1^{(n_\ast)}$.  Recall the definition of the middle third set in \eqref{eq:Q/3}.  Now extend each $\check B _1^{(j)}$ to a cube
 $B_1^{(j)}$ such that $B_1^{(j)}$ has side length $r_1$ and contains each $\check B _1^{(j)}$ as a middle third part for $j=1,2,\cdots,n_\ast$.   Note that these $B_1^{(j)}$ are not disjoint. But the overlap does not effect our estimate on the probability of the events from above. 
 
In order for the induction driven by Lemma \ref{lem:climb} to continue, we need to exclude the event that \eqref{eq:card-J} fails for all $B_1^{(j)}$.  To this end, we define
$
    \cE_1=\bigcup_{j=1}^{n_\ast}\Omega(B_1^{(j)}) .
$
Assume that $\cE_1$ fails, which implies that \eqref{eq:card-J} holds  for all $B_1^{(j)}$. Let $B_1=B^{(j)}_1$ be the one that contains $\xi_1$. Now for $\xi_1\in \check B _1^{(j)}\subsetneqq B_1^{(j)}$,  Lemma \ref{lem:climb}  gives
$\xi_2$ such that
\begin{equation}\label{eq:xi12}
    |\xi_2-\xi_1|\le \fl{\sqrt{1+\eps}\, r_1}=r_2, \qquad u(\xi_2) \geq (1+ \varepsilon) u(\xi_1)\ge Mr_2^2. 
\end{equation} 

Repeat the construction for $r_2>(1+\eps/3)r_1$ and $\xi_2\in 
    {\cC_{r_2}}:=\{n: \dist(n,\check B _0)\le r_1+r_2  \}$,
    \begin{equation*}
         \cards{{\cC_{r_2}}}
       \le (2r_0+2r_1+2r_2)^d \le  ({12}/{\eps})^d r_2^d. 
    \end{equation*}
Therefore,  ${\cC_{r_2}}$ can  be covered by at most $n_\ast=\fl{36^d\eps^{-d}}+1$ disjoint cubes of side length $\fl{r_2/3}$, $\check B _2^{(1)},\check B _2^{(2)}, \cdots, \check B _2^{(n_\ast)}$. Extend $\check B _2^{(j)}$ to $ B_2^{(j)}$ in the same way, and define
$
    \cE_2=\bigcup_{j=1}^{n_\ast}\Omega(B_2^{(j)}).
$
We assume that $\cE_2$ fails, and find $\xi_3$ by Lemma \ref{lem:climb}. Inductively,  at step $k$,  we assume that all the previous events, $\cE_1,\cE_2,\cdots,\cE_{k-1}$ fail, and  obtain $\xi_{k}, r_{k}$ satisfying \eqref{eq:MR2} and then define $B_{r_{k}},\cC_{r_k}$.  The same estimates as for $r_2,{\cC_{r_2}}$  hold  for all $k$, and hence,
\begin{equation}
  (1+\eps/3)^k r_{0} \le\cdots \le (1+\eps/3)r_{k-1} \le r_k\le \sqrt{1+\eps}\, r_{k-1} \le \cdots \le ({1+\eps})^{k/2}\, r_{0}\label{eq:Bkrk}
  \end{equation}
  and
\begin{equation}\cards{{\cC_{r_k}}}\le (2r_0+2r_1+\cdots 2r_k)^d \le 2r_k\  \frac{1}{1-(1+ \varepsilon/3)^{-1}}<(12/\eps)^d r_k^d.\label{eq:card-Bk}
  \end{equation}
  Then $\check B ^{(j)}_{k}$,  $B^{(j)}_{k}$ are defined in the same way as we did in the previous steps. Because of \eqref{eq:card-Bk}, for all $k$, we need $n_\ast=\fl{36^d\eps^{-d}}+1$ many $\check B ^{(j)}_{k}$ to cover ${\cC_{r_k}}$. Then we define the event 
$
    \cE_k=\bigcup_{j=1}^{n_\ast}\Omega(B_k^{(j)}).
$
Since $v_n$ are i.i.d., the  probability of each $\Omega(B_k^{(j)})$ is translation invariant, and only depends on the size of the cube $B_k^{(j)}$. In particular, $
    \P \bigl\{\Omega(B_k^{(j)})\bigr\}=\Prr{\Omega_k} . 
$
Therefore, for some dimensional constant $C$,
\begin{equation}\label{eq:PrEk}
    \Prr{\cE_k}\le \, n_\ast \Prr{\Omega_k}\le \, C \eps^{-d} \,\Prr{\Omega_k},
\end{equation}
provided that $\eps<1$. 

We will continue the construction until we reach the $k_{\rm max}$-th step and obtain $\xi_{k_{\rm max}}, r_{k_{\rm max}}$, $B_{r_{k_{\max}}}$, $\check B ^{(j)}_{k_{\max}}$,  $B^{(j)}_{k_{\max}}$ and  $\cE_{k_{max}}$.  

We need to apply Lemma \ref{lem:climb} two more times for the final step. However, according to our choice of  $k_{\max}$, assuming that $\cE_{k_{\max}}$ fails will already result a cube of side length $\fl{\sqrt{1+\eps}\, r_{k_{\max}}}\ge K$ which may exceed the maximal size of the entire domain $\Lambda$. To alleviate this issue, we need to enlarge the domain at this point for the last two steps, by making several copies of $\Lambda$\footnote{We can do this from the very beginning of the construction, but it will not make any difference until we reach the size of $K$.}.  We need 
$p^d$ many copies where $p:=\fl{3\sqrt{1+\eps}}+1$.

Let $\widetilde K=p K$ so that 
$
    \widetilde K>\fl{\sqrt{1+\eps}\,( 3K) }
$
and denote 
\[p\Lambda:=\llbracket1,\wt K\rrbracket^d=\bigcup_{j\in (\Z\,{\rm mod}\,p\Z)^d  }\left(\Lambda+j\,K\right).\]
We extend the potential $V=\{v_n\}_{n\in \Lambda }$ periodically to $ V'=\{ v'_n\}_{n\in p\Lambda}$, where 
$  v'_{m}= v_n,\ \ n\in \Lambda,$ and  $m=n\ {\rm mod} (p\Z)^d.$
Now we consider the landscape equation on $p\Lambda$ for $-\Delta+ V'$ with periodic boundary conditions. The enlarged system has a unique solution $ u'$ by Theorem \ref{thm:landscape}, which is a periodic extension of the original $u$ to $p\Lambda$. Now we can return to the construction at the $k_{\max}$-th step. 

Assume that the event $\cE_{k_{\max}}$ fails. Then \eqref{eq:card-J} holds for all possible $B_{k_{\max}}\subset \llbracket1,\wt K\rrbracket^d$ that may contain $\xi_{k_{\max}}$. We have
$
    u'_{\xi_{k_{\max}}}=u_{\xi_{k_{\max}}} \ge M r^2_{\xi_{k_{\max}}}.
$
Applying Lemma \ref{lem:climb} to $u^{\Gamma}$ on $B_{k_{\max}}\subset \llbracket1,  K\rrbracket^d$, we  obtain   $\wt \xi \in \llbracket1$,  $K\rrbracket^d$ such that 
\begin{equation}\label{eq:xi-wt}
    u'_{\wt \xi} \ge (1+\eps) u_{\xi_{k_{\max}}} \ge M \fl{\sqrt{1+\eps}r_{\xi_{k_{\max}}}}^2\ge  M K^2\, ,
\end{equation}
where the last inequality follows from the definition of $k_{\max}$. Now let $\xi_{\infty}\in \llbracket1,  K\rrbracket^d$ be a point where $u_k$ attains its maximum. Clearly, $\wt u_k$ also attains its maximum at $\xi_{\infty}$, 
\begin{equation}\label{eq:M-9}
    u_{\xi_{\infty}}=\max_{ k \in \Lambda}u_k=  \max_{ k \in p\Lambda} u'_k=  u'_{\xi_{\infty}}\, .
\end{equation}
Together with \eqref{eq:xi-wt}, we have 
$
    u_{\xi_{\infty}}= u'_{\xi_{\infty}}\ge  u'_{\wt \xi}\ge MK^2 . 
$
 Now consider $\check B _{\infty}=\Lambda$, which is the middle third of $B_{\infty}=3\Lambda\subset p\Lambda$. Let 
\begin{align*}
    \cE_{\infty}
    :=&\left\{\,   \cards{j\in 3\Lambda:  v'_j\ge C_P(3K)^{-2}}\le  \lambda |3\Lambda|\, \right\} \\
    =& \left\{\,   \cards{j\in \Lambda:  v_j\ge \frac{C_P}{9}\, K^{-2}}\le  \lambda K^d\, \right\} .
\end{align*}
Since $\cards{j\in \Lambda:  v_j\ge \frac{C_P}{9}\, K^{-2}} \ge \cards{j\in \Lambda:  v_j\ge C_P\, K^{-2}}$, one has
\begin{align}\label{eq:PrEinf}
    \cE_{\infty} \subseteq \left\{\,   \cards{j\in \Lambda:  v_j\ge C_P\, K^{-2}}\le  \lambda K^d\, \right\}=\Omega_\infty\, .
\end{align}
Now if $\cE_{\infty}$ fails, apply Lemma \ref{lem:climb} one last time to $ u'$ on $\check B _{\infty}\subsetneq B_{\infty}$\footnote{One also needs to take $M$ nine times larger, so that $M/9\ge M_0$ in \eqref{eq:M-9} to meet the requirement in Lemma \ref{lem:climb}.}. Then \eqref{eq:MR2} implies that there is a $\xi' \in \llbracket1,  \wt K\rrbracket^d$ such that $ u'_{\xi'}\ge (1+\eps) u'_{\xi_\infty}> u'_{\xi_\infty}$. This is a contradiction. Recall that it happens when we start with $u_{\xi_0}\ge Mr_0^2$ and assume that all $\cE_j$ fail. Therefore, at least one $\cE_j$ must be true. In other words, 
\begin{equation*}
    \Bigl\{\,  \max_{\check B _0} u_{\xi} \ge Mr_0^2 \, \Bigr\} \subset \cE_\infty\bigcup \Bigl( \bigcup_{j=1}^{k_{\max}} \cE_j \Bigr)
 \Longrightarrow    \P \Bigl\{  \max_{\check B _0} u_{\xi} \ge Mr_0^2 \Bigr\} \le \Prr{\cE_\infty }+\sum_{j=1}^{k_{\max}} \Prr{\cE_j}.
\end{equation*}
Together with \eqref{eq:PrEk} and \eqref{eq:PrEinf}, this 
completes the proof. 
\end{proof}

The next lemma allows us to estimate the probability of each term on the  right hand side of \eqref{eq:Pr-key}. 
\begin{lemma}\label{lem:Prr}
Let $V=\{v_j\}_{j\in \Lambda}$ be the Anderson potential as in Theorem \ref{thm:Nu-iid}. Let $F$ and $\delta_\ast$ be as in \eqref{eq:ass-F}.   For any $B\subset \Lambda$ and $0<\lambda<1$, if $\delta> 0$ is such that $1-\lambda-F(\delta) >0$, then
\begin{equation}\label{eq:pr1}
    \Pr{ \cards{j\in B : v_j\ge \delta }\le  \lambda |B| }
   \le  e^{-D(1-\lambda \| F)\,\, |B|}, 
\end{equation}
where 
\begin{equation}\label{eq:KLdiv}
    D(x\|y)=x\log  \frac{x}{y}+(1-x)\log  \frac{1-x}{1-y}
\end{equation}
is the Kullback–Leibler divergence between Bernoulli distributed random variables with parameters $x$ and $y$ respectively.

As a consequence, for any $r\in \N$,
\begin{equation}\label{eq:pr2}
    \Pr{ \cards{j\in \llbracket 1, r \rrbracket ^d : v_j\ge \delta}\le  \lambda r^d } \, \le \, \left(C(\lambda)F(\delta)^{1-\lambda}\right)^{r^d},
\end{equation}
where $C(\lambda)=(1-\lambda)^{-1}\lambda^{-\lambda}$.

Furthermore, there is a $\lambda_\ast>0$,  which only depends
on $F(\delta_\ast),$
such that for all $0<\delta\le \delta_\ast,0<\lambda \le \lambda_\ast$ and any $r\in \N$, one has 
\begin{equation}\label{eq:pr3}
    \Pr{ \cards{j\in \llbracket 1, r \rrbracket ^d : v_j\ge \delta}\le  \lambda r^d } \, \le \, \left(F(\delta)\right)^{{r^d}/{2}}.
\end{equation}

\end{lemma}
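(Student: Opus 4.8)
The plan is to reduce all three estimates to a single lower-tail Chernoff--Hoeffding bound for i.i.d.\ Bernoulli variables and then to manipulate the resulting rate function. For $\delta>0$ set $X_j:=\mathbf 1\{v_j\ge\delta\}$, $j\in B$; these are i.i.d.\ Bernoulli with parameter $p:=\P(v_j\ge\delta)$, and $p\ge 1-F(\delta)$ since $\P(v_j<\delta)\le\P(v_j\le\delta)=F(\delta)$. Write $S=\sum_{j\in B}X_j=\cards{j\in B:v_j\ge\delta}$. Under the hypothesis $1-\lambda-F(\delta)>0$ we have $\lambda<1-F(\delta)\le p$, so the relative-entropy Chernoff--Hoeffding bound recalled in the Appendix applies and gives
\[
\P(S\le\lambda|B|)\le e^{-|B|\,D(\lambda\,\|\,p)}.
\]

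Next I would record the two elementary facts about the Kullback--Leibler divergence that convert this into \eqref{eq:pr1}: the reflection identity $D(x\|y)=D(1-x\|1-y)$, and the monotonicity $\partial_y D(x\|y)=(y-x)/\big(y(1-y)\big)<0$ for $0<y<x$, so that $y\mapsto D(x\|y)$ is decreasing on $(0,x)$. Applying the reflection identity with $x=\lambda$ and then the monotonicity (valid since $1-p\le F(\delta)<1-\lambda$) yields $D(\lambda\|p)=D(1-\lambda\|1-p)\ge D(1-\lambda\|F(\delta))$, which is exactly \eqref{eq:pr1} in the notation of the lemma (here $F$ stands for $F(\delta)$, and $F(\delta)>0$ for $\delta>0$ because $\inf\supp P_0=0$, so the divergence is finite and positive).

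For \eqref{eq:pr2} I would just expand the exponential. With $B=\llbracket1,r\rrbracket^d$ and $q:=F(\delta)$,
\[
e^{-D(1-\lambda\|q)}=\frac{q^{1-\lambda}(1-q)^{\lambda}}{(1-\lambda)^{1-\lambda}\lambda^{\lambda}}\le\frac{q^{1-\lambda}}{(1-\lambda)\,\lambda^{\lambda}}=C(\lambda)\,F(\delta)^{1-\lambda},
\]
using $(1-q)^{\lambda}\le1$ and $(1-\lambda)^{1-\lambda}\ge 1-\lambda$ (equivalently $(1-\lambda)^{\lambda}\le1$). Raising to the power $|B|=r^d$ gives \eqref{eq:pr2}.

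Finally, \eqref{eq:pr3} follows from \eqref{eq:pr2} once $\lambda_\ast$ is chosen small enough, depending only on $F_\ast=F(\delta_\ast)$, that $C(\lambda)\,F_\ast^{1/2-\lambda}\le1$ for $0<\lambda\le\lambda_\ast$, together with the further requirements $\lambda_\ast<\tfrac12$ and $\lambda_\ast<1-F_\ast$; such a $\lambda_\ast$ exists because $C(\lambda)\to1$ and $F_\ast^{1/2-\lambda}\to F_\ast^{1/2}<1$ as $\lambda\to0^+$. Indeed, for $0<\delta\le\delta_\ast$ one has $0<F(\delta)\le F_\ast<1$, hence
\[
C(\lambda)\,F(\delta)^{1-\lambda}=\big(C(\lambda)\,F(\delta)^{1/2-\lambda}\big)\,F(\delta)^{1/2}\le\big(C(\lambda)\,F_\ast^{1/2-\lambda}\big)\,F(\delta)^{1/2}\le F(\delta)^{1/2},
\]
and \eqref{eq:pr3} follows. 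The only point requiring care — and it is hardly an obstacle — is the bookkeeping of the constraints on $\lambda_\ast$ so that the Chernoff hypothesis $\lambda<p$, the divergence hypothesis $1-\lambda-F(\delta)>0$, and $\lambda<\tfrac12$ all hold simultaneously while $\lambda_\ast$ remains a function of $F_\ast$ alone; the estimates themselves are a routine consequence of Chernoff's inequality, so there is no genuinely hard step.
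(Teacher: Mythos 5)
Your proof is correct and takes essentially the same route as the paper: reduce to Chernoff--Hoeffding for i.i.d.\ Bernoulli indicators, then massage the rate function $D(\cdot\|\cdot)$. The only difference is cosmetic: you work with $X_j=\mathbf 1\{v_j\ge\delta\}$ and the lower tail, so your Bernoulli parameter is $p=\P(v_j\ge\delta)$, which need not equal $1-F(\delta)$; you then need the monotonicity of $y\mapsto D(1-\lambda\|y)$ to replace $1-p$ by $F(\delta)$. The paper instead sets $\zeta_j=\mathbf 1\{v_j\le\delta\}$, whose parameter is exactly $F(\delta)$, applies the upper-tail Appendix bound directly, and translates via the set containment $\{v_j\ge\delta\}\supset\{v_j>\delta\}$ with $\cards{j:v_j>\delta}=|B|-\sum\zeta_j$. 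Both routes are legitimate; yours costs one extra lemma about KL-monotonicity, the paper's costs one line of event containment. Your treatment of \eqref{eq:pr2} and \eqref{eq:pr3} matches the paper's, and the factorization $C(\lambda)F(\delta)^{1-\lambda}=\bigl(C(\lambda)F(\delta)^{1/2-\lambda}\bigr)F(\delta)^{1/2}$ you use for \eqref{eq:pr3} is arguably a little cleaner than the paper's computation with $\log$'s. One small point to keep honest: the Appendix only states the upper-tail Chernoff bound, so your lower-tail inequality $\P(S\le\lambda|B|)\le e^{-|B|D(\lambda\|p)}$ should be derived by applying it to $1-X_j$ and then invoking the reflection identity $D(1-\lambda\|1-p)=D(\lambda\|p)$, which you do state.
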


\begin{remark}\label{rem:lambda-ast}
    The $\lambda_\ast$ can be taken as of order $(1-F(\delta_\ast))^2$, see \eqref{eq:lambda-ast}. 
\end{remark}

\begin{proof}
Let $\zeta_j$ be the characteristic function for the event $v_j\leq \delta$, i.e., $ \zeta_j=1$ for $v_j\leq \delta$ and $ \zeta_j=0$ otherwise. 
Since $\{v_j\}$ are i.i.d. random variables, all $\zeta_j$ are i.i.d. Bernoulli random variables, taking values in $\{0,1 \}$, with common expectation $\Ev{\zeta_j}=\Prr{v_j\leq \delta}=F(\delta)$. Let  $S_B:=\sum_{j\in B } \zeta_j. $ 
By the Chernoff–Hoeffding Theorem, \cite{Ho} (see Lemma \ref{lem:Chernoff}),
\begin{equation}\label{eq:SI}
    \Pr{ S_B\ge (1-\lambda)|B|} \le e^{-D(1-\lambda \| F)\, |B|},
\end{equation}
where $F=F(\delta)$ and $
    D(x\|y)$ is as in \eqref{eq:KLdiv}. 
Then \eqref{eq:pr1} follows directly from \eqref{eq:SI} since \[|B|-\cards{j\in B : v_j> \delta }=\cards{j\in B : v_j\le \delta } =\sum_{j\in B } \zeta_j=S_B. \]

Examining the  the Kullback–Leibler divergence with parameter $1-\lambda$ and $F$, one has
\begin{align*}
    D(1-\lambda \| F)= (1-\lambda)\log  \frac{1-\lambda}{F}+\lambda \log  \frac{\lambda}{1-F}  
    \ge & \log \left((1-\lambda)^{1-\lambda}\lambda^{\lambda}\right)\, -\, \log  F^{1-\lambda} \\
    \ge & \log \left((1-\lambda)\lambda^{\lambda}\right)\, -\, \log  F^{1-\lambda}, 
\end{align*}
where we used $1-F <1$  and $(1-\lambda)^{1-\lambda}\ge 1-\lambda$. 
Therefore, 
\begin{equation} \label{eq:temp4}
    \Pr{ \cards{j\in [1,r]^d\cap\Z^d: v_j\ge \delta}\le  \lambda r^d }  \le  e^{-D(1-\lambda \| F)\, r^d}
     \le \left((1-\lambda)^{-1}\lambda^{-\lambda} F^{1-\lambda} \right)^{r^d},
\end{equation}
which yields \eqref{eq:pr2}. 

Let $q=1-F(\delta_\ast)\in (0,1)$. For $0< \delta \le \delta_\ast$,
\begin{equation} \label{eq:delta-ast}
    F(\delta)\le F(\delta_\ast)=1-q<1\ \Longrightarrow \  \log F(\delta)\le \log (1-q)<0.
\end{equation}
On the other hand, it is easy to check that 
\begin{equation}\label{eq:F0}
    \lim_{\lambda\to 0^+} \frac{\log \left((1-\lambda)\lambda^{\lambda} \right)}{1/2-\lambda } =0.
\end{equation}
Then there is a $\lambda_\ast \le 1/2$  such that  for all $0<\lambda\le \lambda_\ast$,  
\begin{equation}
     \frac{\log \left((1-\lambda)\lambda^{\lambda} \right)}{1/2-\lambda } >  \log (1-q) \ge  \log F(\delta) 
     \Longrightarrow  (1-\lambda)^{-1}\lambda^{-\lambda}  <  \left( F(\delta) \right)^{ \lambda-1/2}.  \label{eq:temp5}
\end{equation}
Combined with \eqref{eq:temp4}, this gives
\[\Pr{ \cards{j\in \llbracket 1, r \rrbracket ^d : v_j\ge \delta}\le  \lambda r^d } \, \le \, \left( F(\delta) \right)^{r^d/2}, \]
which completes the proof of Lemma \ref{lem:Prr}. 

One can be more specific regarding the exact value of $\lambda_\ast$.  For $0<\delta\le \delta_\ast$, 
    $\log F(\delta) \le \log(1-q)< -\min(q,\frac{1}{2})$. 
    If $\lambda<1/4$, then
 $
        \log(1-\lambda)>-2\lambda>-\sqrt{\lambda}$, and  $
        \lambda\,\log\lambda >-\sqrt{\lambda}.$
Let 
    \begin{equation}\label{eq:lambda-ast}
        \lambda_\ast= \left(\frac{1}{8}\min(q,1/2)\right)^2.
    \end{equation}
    Then for $\lambda<\lambda_\ast\le 1/4$, 
    \begin{equation*}
       0> \frac{\log \left((1-\lambda)\lambda^{\lambda} \right)}{1/2-\lambda }\ge  4\left(\log ((1-\lambda)+\lambda\log \lambda\right) 
       \ge  -8\sqrt \lambda 
       \ge  - \min(q,\frac{1}{2})\ge \log F(\delta), 
    \end{equation*}
    which gives \eqref{eq:temp5} similarly to the argument above. 
\end{proof}

Combining Lemma \ref{lem:Pr} and Lemma \ref{lem:Prr} leads to
\begin{lemma}\label{lem:Pr-M}
Let $\delta_\ast$ and $\lambda_\ast$ be  as in Lemma \ref{lem:Prr}. Fix $\lambda\le \lambda_\ast$, and take $\eps<\eps_0(\lambda,d)$ and  $C_P,M,r_{\ast}$ as in Lemma \ref{lem:Pr}. Then for any cube $B\subsetneq \Lambda$ of side length $\ell(B)=r\ge r_\ast$, and its middle third part $\check B $, one has  
\begin{equation}\label{eq:Pr-key1}
    \P\Bigl\{\,  \max_{\xi\in\check B } u_{\xi} \ge Mr^2 \, \Bigr\} \, \le \frac{C\eps^{-d}}{1-F(\delta_\ast)} \left( F(C_Pr^{-2}) \right)^{r^d/2}
\end{equation}
for some dimensional constant $C>0$. 
\end{lemma}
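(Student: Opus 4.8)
The plan is to feed the output of Lemma~\ref{lem:Pr}, which reduces the event $\{\max_{\check B}u\ge Mr^2\}$ to a union of ``low potential density'' events over a geometrically growing family of cubes, into the tail bound \eqref{eq:pr3} of Lemma~\ref{lem:Prr}, and then to sum a doubly exponentially decaying series. Fix $\lambda\le\lambda_\ast$, $\eps<\eps_0(\lambda,d)$ and $C_P,M,r_\ast$ as in Lemma~\ref{lem:Pr}; after enlarging $r_\ast$ if necessary we may also assume $C_P r_\ast^{-2}\le\delta_\ast$ and $\eps\,r_\ast^d\ge 6$, which only strengthens the hypotheses and so is harmless. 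Given a cube $B\subsetneq\Lambda$ with $\ell(B)=r\ge r_\ast$, apply Lemma~\ref{lem:Pr} with $r_0=r$:
\[
\P\Bigl\{\max_{\xi\in\check B}u_\xi\ge Mr^2\Bigr\}\le\Prr{\Omega_\infty}+C\eps^{-d}\sum_{k=0}^{k_{\max}}\Prr{\Omega_k},
\]
with $\Omega_k,\Omega_\infty,r_k$ as in that lemma (so $r_{k+1}=\fl{\sqrt{1+\eps}\,r_k}$, $r_0=r$, and $C$ a dimensional constant).

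Set $x_k:=\bigl(F(C_P r_k^{-2})\bigr)^{r_k^d/2}$. Since $0<C_P r_k^{-2}\le C_P r^{-2}\le\delta_\ast$ and $\lambda\le\lambda_\ast$, estimate \eqref{eq:pr3} of Lemma~\ref{lem:Prr} gives $\Prr{\Omega_k}\le x_k$, and likewise $\Prr{\Omega_\infty}\le\bigl(F(C_P K^{-2})\bigr)^{K^d/2}\le x_0$, the last step because $K>r=r_0$, $F$ is nondecreasing, and the base lies in $(0,1)$. Next I would establish $x_k\le x_0^{(1+\eps/3)^k}$ by induction: as already observed in the proof of Lemma~\ref{lem:Pr}, $r_k\ge r_0\ge 15/\eps$ forces $r_{k+1}>(1+\eps/3)r_k$, hence $r_{k+1}^d\ge(1+\eps/3)r_k^d$; combining this with the monotonicity of $F$ and $F(C_P r_k^{-2})\in(0,1)$ yields $x_{k+1}\le x_k^{1+\eps/3}$, from which the claim follows.

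It remains to sum. Using Bernoulli's inequality $(1+\eps/3)^k\ge 1+k\eps/3$ and then a geometric series,
\[
\sum_{k=0}^{k_{\max}}x_k\le\sum_{k\ge0}x_0^{1+k\eps/3}=\frac{x_0}{1-x_0^{\eps/3}}.
\]
Since $F(C_P r^{-2})\le F(\delta_\ast)<1$ and $r^d\ge r_\ast^d\ge 6/\eps$, we get $x_0\le F(\delta_\ast)^{r^d/2}\le F(\delta_\ast)^{3/\eps}$, so $x_0^{\eps/3}\le F(\delta_\ast)$ and $1-x_0^{\eps/3}\ge 1-F(\delta_\ast)$. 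Plugging everything back in,
\[
\P\Bigl\{\max_{\xi\in\check B}u_\xi\ge Mr^2\Bigr\}\le x_0+\frac{C\eps^{-d}}{1-F(\delta_\ast)}\,x_0\le\frac{C'\eps^{-d}}{1-F(\delta_\ast)}\bigl(F(C_P r^{-2})\bigr)^{r^d/2},
\]
with $C'=C+1$ a dimensional constant (using $\eps^{-d}\ge 1$ and $1-F(\delta_\ast)<1$), which is precisely \eqref{eq:Pr-key1}.

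The two invocations of the earlier lemmas are routine; the delicate point is the summation over the scales $r_0<r_1<\cdots<r_{k_{\max}}$, where one must check that the decay $x_k\le x_0^{(1+\eps/3)^k}$ is fast enough that the whole (possibly long) sum, multiplied by the combinatorial factor $C\eps^{-d}$ coming from Lemma~\ref{lem:Pr}, is still controlled by the single term $x_0$ up to the harmless factor $(1-F(\delta_\ast))^{-1}$ — the auxiliary requirement $\eps\,r_\ast^d\ge 6$ is exactly what makes that factor come out cleanly rather than with an extra power of $\eps^{-1}$.
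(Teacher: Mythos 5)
Your proof is correct and follows essentially the same route as the paper's: apply Lemma~\ref{lem:Pr} to reduce the event to a sum of probabilities $\Prr{\Omega_k}$ over the growing scales $r_k$, bound each with \eqref{eq:pr3} from Lemma~\ref{lem:Prr}, then exploit the geometric growth $r_{k+1}>(1+\eps/3)r_k$ to sum a convergent series dominated by the $k=0$ term. The only difference is bookkeeping: the paper lowers $r_k^d\ge(1+kd\eps/3)r_0^d$ further to $r_0^d+2k$ (valid since $r_0\ge15/\eps$), making the geometric ratio simply $F(\delta_0)\le F(\delta_\ast)$, whereas you keep the ratio as $x_0^{\eps/3}$ and then separately check $x_0^{\eps/3}\le F(\delta_\ast)$ — an extra step that buys nothing here, though both are fine. (Also note your auxiliary requirements $\eps\,r_\ast^d\ge 6$ and $C_P r_\ast^{-2}\le\delta_\ast$ already follow from $r_\ast\ge15/\eps$ and the enlargement already made in the paper, so no genuine strengthening of the hypotheses is needed.)
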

\begin{remark}
The exponent $r^d/2$ can be made arbitrarily close to $r^d$, by taking $\lambda$ smaller, however, it will also result a large factor $1/\lambda$ in front of $F$. 
\end{remark}

\begin{proof}
Let $r_0=r$ and define the sequence $r_k$ as in Lemma \ref{lem:Pr} and $r_\infty:=K$. Let $\delta_k=C_P r_k^{-2},k=0,\cdots, k_{\max}$, and $\delta_\infty=C_PK^{-2}$. By the construction of $r_k$ and \eqref{eq:Bkrk}, one has 
\begin{equation*}
    r^d_k\ge (1+\eps/3)^{dk} r^d_0\ge (1+kd\eps/3)r^d_0\ge r^d_0+2k, \ \ k=0,\cdots,k_{\max}.
\end{equation*}
For $k=0,\cdots,k_{\max}$ and $k=\infty$, one has $\delta_k \le \delta_0=C_Pr_0^{-2} \le \delta_\ast$
provided that $r_0\ge \sqrt{C_P/\delta_\ast}$. 
 Notice that in the proof of Lemma \ref{lem:Prr}, by the choice of $\delta_\ast$ in \eqref{eq:ass-F}, one has $F(\delta_0)<F(\delta_\ast):=1-q$. Therefore, 
 \begin{equation*}
   F(\delta_k)\le 1-q<1, \ \ k=0,\cdots,k_{\max}, \ {\rm and}\  k=\infty,
\end{equation*}
 since the distribution $F$ is non-decreasing. Now apply Lemma \ref{lem:Prr} to all $r_k$. Combining  \eqref{eq:pr3} with \eqref{eq:Pr-key}, one has 
 \begin{multline*}
     \P\left\{\,  \max_{\xi\in\check B } u_{\xi} \ge Mr_0^2 \, \right\} \, \le
     F(\delta_\infty)^{K^d/2}+C\eps^{-d}\sum_{k=0}^{k_{\max}}F(\delta_k)^{r_k^d /2} \\
     \le  F(\delta_0)^{r^d_0/2}+C\eps^{-d}\sum_{k=0}^{k_{\max}}F(\delta_0)^{(r_0^d+2k)/2}
      \le  F(\delta_0)^{r_0^d/2}+C\eps^{-d}F(\delta_0)^{r_0^d/2}  \frac{1}{1-F(\delta_0)}\\
      \le   F(\delta_0)^{r_0^d/2}\left(1+\frac{C\eps^{-d}}{q}\right):=C(d,\eps,\delta_\ast) F(\delta_0)^{r_0^d/2},
 \end{multline*}
 which is the desired bound. 
\end{proof}

Now we are ready to complete:
\begin{proof}[Proof of \eqref{eq:Nu<F} in Theorem \ref{thm:Nu-iid}]
Let $\delta_\ast$ and $\lambda_\ast$ be given by Lemma \ref{lem:Pr-M}. Fix $\lambda\le \lambda_\ast$, take $\eps<\eps_0(\lambda,d)$ and  $C_P,M,r_\ast$ as in Lemma \ref{lem:Prr}.

For any $\mu \le 1/(4M)$, let $r=\cl{(4M\mu)^{-1/2}  }$ so that $\mu^{-1}/4<Mr^2\le \mu^{-1}$. To apply Lemma \ref{lem:Pr-M}, one also needs to ensure that $r\ge r_\ast$, which requires $\mu$ to be taken in the range $\mu\le \mu_\ast= 1/(Mr_\ast^2)$. 

Now for any cube $B$ of side length $r=\cl{(4M\mu)^{-1/2}  }$ and its middle third part $\check B$, Lemma \ref{lem:Pr-M} implies that
\begin{equation*}
\P\Bigl\{\,  \max_{\xi\in\check B } u_{\xi} \ge \mu^{-1} \, \Bigr\}  \le     \P\Bigl\{\,  \max_{\xi\in\check B } u_{\xi} \ge Mr^2 \, \Bigr\} \, \le \, C \cdot\left( F(C_Pr^{-2}) \right)^{r^d/2}, 
\end{equation*}
where $C>0$ in given by \eqref{eq:Pr-key1}  depending on $d,\eps$ and $F(\delta_\ast)$. Then
\begin{equation}
\P\Bigl\{\,  \min_{n\in\check B } \frac{1}{u_{n}} \le \mu \, \Bigr\}  \le \, C  \left( F(C_Pr^{-2}) \right)^{r^d/2}
\le  \, C  \big( F(4C_PM\,\mu) \big)^{(M\mu)^{-d/2}/2}. \label{eq:Pr-J}
\end{equation}

Notice that $\ell(\check B)\ge r/6\ge (4M\mu)^{-1/2}/6$. 
Recall that the cubes used in the definition of $N_u$   have side length $\cl{\mu^{-1/2}}$. Any $Q\in \cP\bigl(\cl{\mu^{-1/2}};\Lambda\bigr)$  can be covered by at most $\left(\frac{\cl{\mu^{-1/2}}}{(4M\mu)^{-1/2}/6}\right)^d+1\le \,  C'  M^{d/2}$ disjoint cubes of side length $\ell(\check B)=\cl{\cl{(4M\mu)^{-1/2}}/3}$ for some $C'$ which only depends on $M,d$. Notice also that the estimate \eqref{eq:Pr-J} is independent of the position of $\check B $, and can be applied to all cubes $\check B $ of the same size. Therefore, 
\begin{equation*}
  \P\left\{ \min_{n\in Q}\frac{1}{u_n}\,\le  \mu \right\} \le \, C' M^{d/2}\, \P\left\{\,  \min_{n\in\check B } \frac{1}{u_{n}} \le \mu \, \right\}  \le \,  C'' \big( F(C_3\,\mu) \big)^{\gamma_2\mu^{-d/2}}
\end{equation*}
for any $Q$.
Together with \eqref{eq:min-max-2}, we obtain the desired upper bound 
\begin{equation*}
     \Ev{N_u(\mu)}\le 
 \frac{1}{\cl{\mu^{-1/2}}^d}\max_{Q\in\cP(\cl{\mu^{-1/2}})}\P\left\{ \min_{n\in Q}\frac{1}{u_n}\,\le \mu \right\}
\le \, C_4\mu^{d/2} \big( F(C_3\,\mu) \big)^{\gamma_2\mu^{-d/2}} 
\end{equation*}
for all $\mu\le \mu_\ast$. The constants $C_3,C_4,\gamma_2>0$ only depend on $d,M$ and $C_P$, which eventually only depend on $d$ and $F_\ast=F(\delta_\ast)$. 
\end{proof}

%%%%%%%%%%%%%%%%%%%%%%%%%%%%%%%%%%%%%%%%%%%%%%%%%%%%%%%%%%%%%%%%%%%%%%%%%%%%%%%%%%%%%%%%%%%%%%%%%%%%%
\subsection{Lifschitz tails for the integrated density of states}\label{sec:LSlaw-iid}
Putting together the general upper/lower bounds in Theorem \ref{thm:NNu-intro},\ref{thm:NNu-intro-2} for the deterministic case, and the Lifshitz tails in Theorem \ref{thm:Nu-iid} for the Anderson model, we have 
\begin{theorem}\label{thm:randomNNu}
Let $C_1$ be as in Theorem \ref{thm:NNu-intro} and $\delta_\ast$ be as in Theorem \ref{thm:Nu-iid}. Then there are  constants $c_5,c_6>0$ depending  on $d$, $\delta_\ast$ and $V_{\max}$ such that 
\begin{equation} \label{eq:NNu-And}
   c_5\, \E N_u(\, c_6\,\mu) \, \le \,  \E N(\mu)\, \le\,  \E N_u(C_1\,\mu), \ {\rm for\ all}\ \mu>0.
\end{equation}
If furthermore $\mu_\ast,K_\ast$ are as in Theorem \ref{thm:Nu-iid}, depending only on $d$ and $\delta_\ast$, and  $\mu<\mu_\ast$, then the estimate \eqref{eq:NNu-And} holds with constants $c_5,c_6$ which are independent of $V_{\max}$. 

If, in addition,  $K_\ast/K^2<\mu<\mu_\ast/C_1$, then there are  constants $\bar C_1,\bar c_1, \bar C_2,\bar c_1, \bar \gamma_1,  \bar \gamma_2$  depending only on $\delta_\ast$ such that 
\begin{equation}\label{eq:N-tail}
 \bar c_2 \mu^{d/2}F(\bar c_1 \mu)^{\bar \gamma_1\mu_2^{-d/2}} \,  \le \, \E N(\mu)\, \le \, \bar C_2\, \mu^{d/2}F(\bar C_1\, \mu)^{\bar \gamma_2\mu^{-d/2}}. 
\end{equation}

\end{theorem}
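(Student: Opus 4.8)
The plan is to derive everything from the deterministic estimates of Theorems~\ref{thm:NNu-intro} and \ref{thm:NNu-intro-2} combined with the two-sided tail bounds for $N_u$ in Theorem~\ref{thm:Nu-iid}. The upper bound in \eqref{eq:NNu-And} is immediate, since Theorem~\ref{thm:NNu-intro} gives $N(\mu)\le N_u(C_1\mu)$ pointwise in $\omega$ and one only takes expectations; and once the lower bound in \eqref{eq:NNu-And} is available, both inequalities in \eqref{eq:N-tail} follow by inserting $c_6\mu$ into \eqref{eq:Nu>F} and $C_1\mu$ into \eqref{eq:Nu<F}, which are legitimate because the hypothesis $K_\ast/K^2<\mu<\mu_\ast/C_1$ (after possibly enlarging $K_\ast$ by the fixed factor $c_6^{-1}$) places those energies in the admissible ranges. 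Hence everything reduces to the lower bound $\E N(\mu)\ge c_5\,\E N_u(c_6\mu)$.

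For the lower bound I would first freeze a small parameter $\alpha=\alpha_1$, depending only on $d$ and $\delta_\ast$ through the constants of Theorem~\ref{thm:Nu-iid}, and take expectations in \eqref{eq:NNu-intro-2}, getting
\[
\E N(\mu)\ \ge\ c_0\alpha_1^{d}\,\E N_u\!\bigl(c_1\alpha_1^{d+2}\mu\bigr)\ -\ C_0\,\E N_u\!\bigl(c_1\alpha_1^{d+4}\mu\bigr)
\]
for $\mu\le c_\ast\alpha_1^{-4}$. Set $\nu=c_1\alpha_1^{d+2}\mu$, so that the correction term is evaluated at the $\alpha_1^{2}$-times smaller energy $\alpha_1^{2}\nu$. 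Bounding the first term below by \eqref{eq:Nu>F} and the second above by \eqref{eq:Nu<F}, and using $(\alpha_1^{2}\nu)^{d/2}=\alpha_1^{d}\nu^{d/2}$, one sees that \emph{both} Lifschitz factors come multiplied by the same prefactor $\alpha_1^{d}\nu^{d/2}$, so the absorption reduces to the scalar inequality
\[
C_0C_4\,F\!\bigl(C_3\alpha_1^{2}\nu\bigr)^{\gamma_2\alpha_1^{-d}\nu^{-d/2}}\ \le\ \tfrac12\,c_0c_4\,F\!\bigl(c_3\nu\bigr)^{\gamma_1\nu^{-d/2}}.
\]
On the range of $\mu$ that matters, $\nu$ stays bounded above by a constant depending only on $d,\delta_\ast$, so $\nu^{-d/2}$ is bounded below; and provided $C_3\alpha_1^{2}\le c_3$ and $c_3\nu\le\delta_\ast$, monotonicity of $F$ gives $F(C_3\alpha_1^{2}\nu)\le F(c_3\nu)\le F(\delta_\ast)=1-q<1$, so the left-hand side is at most $C_0C_4\,(1-q)^{\,(\gamma_2\alpha_1^{-d}-\gamma_1)\,\nu^{-d/2}}$, which tends to $0$ as $\alpha_1\to0$ uniformly in $\mu$. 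Taking $\alpha_1$ small enough makes this bounded by $\tfrac12 c_0c_4\,F(c_3\nu)^{\gamma_1\nu^{-d/2}}$, yielding $\E N(\mu)\ge\tfrac12 c_0\alpha_1^{d}\,\E N_u(c_1\alpha_1^{d+2}\mu)$, i.e. the desired bound with $c_5=\tfrac12 c_0\alpha_1^{d}$ and $c_6=c_1\alpha_1^{d+2}$, both depending only on $d$ and $\delta_\ast$.

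This argument is valid for $\mu$ in an interval $(K_\ast/(c_6K^2),\,\mu_\ast']$, where $\mu_\ast'$ is the minimum of $c_\ast\alpha_1^{-4}$, of $(c_1\alpha_1^{d+2})^{-1}$, of $\delta_\ast/(c_1c_3\alpha_1^{d+2})$, and of $\mu_\ast/(c_1\alpha_1^{d+4})$ — a constant comparable to $\mu_\ast$ — which yields the second assertion of the theorem (after replacing $\mu_\ast$ by $\min(\mu_\ast,\mu_\ast')$). For $\mu$ below $K_\ast/(c_6K^2)$ the partition underlying $N_u(c_6\mu)$ collapses to at most $3^{d}$ boxes, so $N_u(c_6\mu)\le 3^{d}K^{-d}$; and if this count is positive, then cutting off $u$ to a box around a site where $u_n\ge(c_6\mu)^{-1}$ (as in the ground-state bound inside the proof of Theorem~\ref{thm:NNu-intro-2}) forces the ground state of $H$ below $C\mu$, so $N(C\mu)\ge K^{-d}\ge 3^{-d}N_u(c_6\mu)$ — exactly the reduction used in the deterministic case. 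For $\mu\ge\mu_\ast'$ one can simply use monotonicity: $\E N(\mu)\ge\E N(\mu_\ast')$, and the lower bound already proved at $\mu=\mu_\ast'$ together with \eqref{eq:Nu>F} shows $\E N(\mu_\ast')\ge c(d,\delta_\ast)>0$ for all $K$ with $K_\ast/K^2\le c_6\mu_\ast'$, whence $\E N(\mu)\ge c_5\E N_u(c_6\mu)$ after shrinking $c_5$ (the remaining finitely many small $K$ again via the $3^{d}$-box argument); if one prefers, \eqref{eq:intro-mu>1} with $\alpha$ chosen just above $(c_\ast/\mu)^{1/4}$ gives the same conclusion with constants depending also on $V_{\max}$, via $u_n\ge1/V_{\max}$ from Theorem~\ref{thm:landscape}. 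The genuinely delicate step is the absorption of the second paragraph: one needs the exponent gain $\alpha_1^{-d}$ in the correction's Lifschitz factor to beat, uniformly in $\mu$, the discrepancy between $F$ at the slightly different arguments $c_3\nu$ and $C_3\alpha_1^2\nu$, and this is precisely why Theorem~\ref{thm:Nu-iid} must furnish matching lower \emph{and} upper tail bounds for $N_u$.
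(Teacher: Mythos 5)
Your core argument is exactly the paper's. The heart of both is the absorption step: take expectations in \eqref{eq:NNu-intro-2} at a fixed small $\alpha$, plug the lower Lifschitz tail \eqref{eq:Nu>F} into the main term and the upper tail \eqref{eq:Nu<F} into the correction, observe that both terms carry the same prefactor $\alpha^d\nu^{d/2}$ so the comparison reduces to the Lifschitz exponents, and then use monotonicity of $F$ together with $F(\cdot)\le F_\ast<1$ and the gain $\gamma_2\alpha^{-d}-\gamma_1\to\infty$ to make the correction half the main term for $\alpha$ small. That is precisely the paper's computation, modulo bookkeeping: the paper first lowers $F(c_3\mu_2)$ to $F(C_3\mu_4)$ and expresses the surviving half in terms of $\E N_u(\mu_4)$, producing $c_6=c_1\alpha^{d+4}$ rather than your $c_6=c_1\alpha^{d+2}$, which is cosmetic. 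The derivation of \eqref{eq:N-tail} from \eqref{eq:NNu-And} plus Theorem~\ref{thm:Nu-iid} is also the same.

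Where your organization differs, and where I would push back, is the treatment of the $\mu$-range. The paper simply shrinks $\alpha$ further (the constraint $\alpha\le\alpha_{00}\sim (c_\ast/(4d+V_{\max}))^{1/4}$, plus the other $V_{\max}$-dependent $\alpha_i$) so that the single absorption estimate already covers every $\mu<4d+V_{\max}$, while $\mu\ge 4d+V_{\max}$ is trivial since $N(\mu)=1$; the $V_{\max}$-independence when $\mu<\mu_\ast$ then falls out because those $\alpha_i$ only entered through the cap $\mu<4d+V_{\max}$. You instead prove the estimate first on a $\mu$-window depending only on $(d,\delta_\ast)$ and then try to extend by monotonicity. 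For $\mu\ge\mu_\ast'$ your bound $\E N(\mu_\ast')\ge c$ comes from \eqref{eq:Nu>F} at the point $c_6\mu_\ast'$, which involves $F(c_3 c_6\mu_\ast')$; this can be arbitrarily small for distributions sharing the same $\delta_\ast$ and $F_\ast$, so the resulting $c_5$ is not controlled by the parameters the statement names. Your alternative via \eqref{eq:intro-mu>1} with $\alpha\approx(c_\ast/\mu)^{1/4}$ does not obviously close this either, since it pins the correction term at a fixed energy and it is not clear how the main term dominates. The fix is to do as the paper does and let $\alpha$ depend on $V_{\max}$ from the start for the first assertion, so no extension by monotonicity is needed; the rest of your argument then goes through unchanged. (Small aside: when $\mu<K_\ast/(c_6K^2)$ the partition for $N_u(c_6\mu)$ has at most a \emph{dimensional} number of boxes, but that number is of order $K_\ast^{d/2}$ rather than literally $3^d$; the argument survives, only the constant changes.)
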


\begin{proof}
The upper bound in \eqref{eq:NNu-And} is the average of the upper bound in Theorem \ref{thm:NNu-intro}. We only need to study the lower bound with the help of Theorem \ref{thm:NNu-intro-2} and Theorem \ref{thm:Nu-iid}. 
Let $c_\ast,c_0,c_1,C_1$ and $\alpha<\alpha_0<1$, be as in Theorem \ref{thm:NNu-intro-2}. If $\mu\ge 4d+V_{\max}$ then $N(\mu)=1$ and the  left-hand side of  \eqref{eq:NNu-And} holds trivially.  Fix $\mu<4d+V_{\max}$, let us denote $\mu_2=c_1\alpha ^{d+2}\mu$ and $\mu_4=\alpha^2\mu_2=c_1\alpha^{d+4}\mu$. 
 Assume further that $\alpha\le \alpha_{00}:=\big(c_\ast/(4d+V_{\max})\big)^{-1/4}$. Then  \eqref{eq:NNu-intro-2} in  Theorem \ref{thm:NNu-intro-2} implies that  
\begin{equation}
     \E N(\mu)\ge   c_0\alpha^d\E N_u(\mu_2)-C_0\E N_u(\mu_4). \label{eq:temp10}
\end{equation}

Next, let $c_3,c_4,C_3,C_4,\gamma_1,\gamma_2>0$ and $\mu_\ast$ be given by Theorem \ref{thm:Nu-iid}. Then by \eqref{eq:Nu>F} and \eqref{eq:Nu<F}, if $\mu_2\le 1$ and $\mu_4\le \mu_\ast$, one has
\begin{equation}\label{eq:F4}
 \E{N_u(\mu_2)}\ge c_4 \mu_2^{d/2}F(c_3\mu_2)^{\gamma_1\mu_2^{-d/2}},\ \ {\rm and} \ \  \Ev{N_u(\mu_4)} \le \,  C_4\mu_4^{d/2}F(C_3 \mu_4)^{\gamma_2 \mu_4^{-d/2}}.
\end{equation}

Therefore, 
\begin{align}
     \E N(\mu)     \ge & \,c_4\alpha^d \mu_2^{d/2}F(c_3 \mu_2)^{\gamma_1\mu_2^{-d/2}} - C_4\mu_4^{d/2}F(C_3 \mu_4)^{\gamma_2   \mu_4^{-d/2}}\nonumber\\
     =& \, c_4{\mu_4^{d/2}} \left(\,  F(c_3\mu_2)^{\gamma_1\mu_2^{-d/2}} -C_4 F(C_3 \alpha^2 \mu_2)^{\gamma_2 \mu_4^{-d/2}}    \, \right) \label{eq:temp6}
\end{align} 
for $\mu_4\le \mu_\ast,\mu_2\le 1$ and $\mu<4d+V_{\max}$. This requires $\alpha\le \min\{\alpha_1, \alpha_2\}$, where 
\begin{equation*}
     \alpha_1:=\big(c_1(4d+V_{\max})\big)^{-1/(d+2)},\ \ {\rm and}\  \alpha_2:=\mu_\ast^{1/(d+4)} \big(c_1(4d+V_{\max})\big)^{-1/(d+4)}.
\end{equation*}

Let $\delta_\ast$ be as in \eqref{eq:ass-F}. If we assume, in addition, that $\alpha$ is smaller than both $\alpha_3$ and $\alpha_4$, 
\begin{equation*}
 \alpha_3:=\sqrt{c_3/C_3}, \ \ {\rm and}\  \alpha_4:=\delta_\ast^{1/(d+2)}  \big(c_3  c_1 (4d+V_{\max})\big)^{-1/(d+2)},
\end{equation*}
then for all $\mu<4d+V_{\max}$, one has $C_3\mu_4< c_2\mu_2\le \delta_\ast$.
Therefore, $0< F_4 \le F_2 \le F_\ast<1$, where 
$
F_4=F(C_3  \mu_4)=F(C_3\alpha^2 \mu_2)$, $F_2= F(c_3\mu_2)$, and $F_\ast=F(\delta_\ast).$
The difference term in \eqref{eq:temp6} is then bounded from below by
\[
    F_2^{\gamma_1\mu_2^{-d/2}}-C_4  F_4^{\gamma_2\,  \mu_4^{-d/2}}\ge  F_4^{\gamma_1\mu_2^{-d/2}}-C_4 F_4^{\gamma_2\,  \mu_4^{-d/2}}.
\]
We want to pick $\alpha$ small enough (independent of $\mu$) so that, 
\begin{equation}
F_4^{\gamma_1\mu_2^{-d/2}}-C_4 F_4^{\gamma_2\,  \mu_4^{-d/2}}\ \ge \ \frac{1}{2}F_4^{\gamma_1\mu_2^{-d/2}}, \label{eq:temp7} 
  \end{equation}
  that is,   
\begin{equation}  
(2C_4)^{-1} \ge \ F_4^{\gamma_2\,  \mu_4^{-d/2}-\gamma_1\mu_2^{-d/2}}=F_4^{\mu_2^{-d/2}  (\gamma_2\, \alpha^{-d} -\gamma_1\, )}.\label{eq:temp8} 
\end{equation}

Notice that $\mu_2<1$, hence, $\mu_2^{-d/2}>1$ and $\mu_2^{-d/2}(\gamma_2\alpha^{-d}-\gamma_1)>\gamma_2\alpha^{-d}-\gamma_1>0$ provided that $\alpha<(\gamma_2/\gamma_1)^{1/d}:=\alpha_5$. Then the fact that $F_4\le F_\ast<1 $ implies that 
\begin{equation*}
   F_4^{\mu_2^{-d/2}(\gamma_2\alpha^{-d}-\gamma_1)} \le F_\ast^{\mu_2^{-d/2}(\gamma_2\alpha^{-d}-\gamma_1)} \le F_\ast^{\gamma_2\alpha^{-d}-\gamma_1}.
\end{equation*}
Solving  $1\ge (2C_4)^{-1} \ge F_\ast^{\gamma_2\alpha^{-d}-\gamma_1}$ for $\alpha$, we observe that 
\begin{equation*}
  \gamma_2\alpha^{-d}-\gamma_1\ge \frac{\log(2C_4)}{\log F_\ast} \Longleftrightarrow \alpha\le \left(\gamma_2^{-1}\frac{\log(2C_4)}{\log F_\ast}+\gamma_2^{-1}\gamma_1  \right)^{-1/d}:=\alpha_6
\end{equation*}
would yield \eqref{eq:temp8}. 

Putting everything together, set
\begin{equation*}
\alpha=\alpha_\ast:=\min\{\alpha_0,\alpha_{00},\alpha_1,\alpha_2,\cdots,\alpha_6\}.
\end{equation*}
Then, for all $\mu<4d+V_{\max}$, 
\eqref{eq:temp6}, \eqref{eq:temp7} and \eqref{eq:F4} imply that 
\begin{equation*}
   \E N(\mu)
   \ge   \frac{1}{2} \,c_4{\mu_4^{d/2}} F_4^{\gamma_1\mu_2^{-d/2}}
   \ge  \frac{1}{2} \,c_4   C_4^{-1} \E N_u(\mu_4)
  =\frac{1}{2}c_4 C_4^{-1}\, \E N_u(c_1\alpha_\ast^{d+4} \mu) =:c_5\,\E N_u(c_6\ \mu)\, ,
\end{equation*}
which completes the proof for the  first inequality in \eqref{eq:NNu-And}.

It is also easy to verify that if we are only interested in small $\mu$, then all the $\alpha_i$ can be picked independently of $V_{\max}$. Therefore, the final constants $c_5,c_6$ are also independent of $V_{\max}$. In particular, let $\mu_\ast$ be as in Theorem \ref{thm:Nu-iid}. Then for all $c_\ast'\mu<C_1\mu<\mu_\ast$
\begin{equation*}
     \E N(\mu) \le  \E N_u(C_1\mu) \le \, C_4 (C_1\mu)^{d/2}F(C_3 C_1\mu)^{\gamma_2   (C_1\mu)^{-d/2}}=:\bar C_2 \mu^{d/2}F(\bar C_1\mu)^{\bar \gamma_2  \mu ^{-d/2}}
\end{equation*}
and 
\begin{equation*}
     \E N(\mu) \ge   c_5\E N_u(C_6\mu) \ge c_5C_4 (c_6\mu)^{d/2}F(c_3 c_6\mu)^{\gamma_1   (C_6\mu)^{-d/2}}=:\bar c_2 \mu^{d/2}F(\bar c_1\mu)^{\bar \gamma_2  \mu ^{-d/2}}
\end{equation*}
where the constants $\bar c_1,\bar c_2,\bar C_1,\bar C_2,\bar \gamma_1,\bar \gamma_2$ only depend  on $d$ and $\mu_\ast$, and  are independent of $V_{\max}$. 
\end{proof}

\subsection{Dual landscape and the top edge of the spectrum. }\label{sec:dual}
Let $H=-\Delta+V$ be as in \eqref{eq:opH-intro} acting on $\cH=\ell^2(\Lambda), \Lambda=(\Z/K\Z)^d$. In this part, we will briefly discuss the so-called dual landscape and see how it is applied to the eigenvalue-counting for high energy modes. We refer readers to Section 2.4 in \cite{WZ} for more details.  For $\varphi\in \cH=\ell^2({\Lambda}) $, we define a dual vector $\wt \varphi$ 
\begin{equation}\label{eq:dualU}
    \wt{\varphi}_n=(-1)^{s(n)}\varphi_n, \ n\in {\Lambda},
\end{equation}
where $s(n)=\sum_{j=1}^d n_j$ for $n=(n_1,n_2,\cdots,n_d)\in \Z^d$.   We assume, in addition, that $K$ is an even number so that $\wt \varphi _n=\wt \varphi _{n+Ke_i},n\in\Lambda,i=1\cdots,d$.  
Now suppose  $(\mu,\varphi)$ is an  eigenpair of $H=-\Delta+V$ in $\cH=\ell^2({\Lambda})$. A direct computation shows that 
\begin{equation}\label{eq:dual-eq}
    (-\Delta +V_{\max}-V)\,\wt{\varphi}=\wt \mu\, \wt{\varphi},
\end{equation}
 where  $V_{\max}-V=\{V_{\max}-v_n\}_{n\in\Lambda}$ is a non-negative potential  and 
 \begin{equation}\label{eq:dual-ev}
     \wt{\mu}=4d+V_{\max}-\mu\, .
 \end{equation}

In other words,   $(\mu,\varphi)$ is an eigenpair of $H $ if and only if $(\wt \mu,\wt{\varphi})$ is an eigenpair of a dual operator $\wt{H} :=-\Delta +V_{\max}-V$. This dual operator $\wt H$ is the same type of discrete Schr\"odinger operator as $H$, only with a different potential (and also taking values in $[0,V_{\max}]$). We can define a dual landscape function $\wt u$ satisfying $(\wt H \wt u)_n=1$, and a dual box-counting function $N_{\wt u}(\mu;\wt H)$  as in \eqref{eq:Nudef-intro} for $\wt H$. 
  
  It is easy to check that the cardinality of the eigenvalues of $H$ which are smaller than or equal to $\mu$ is the difference of the volume of $\Lambda$ and the cardinality of the eigenvalues of $\wt H$ which are smaller than $\wt \mu$.  Therefore, 
  \begin{equation}\label{eq:dualN}
      N(\mu;H)=1- N^{-}\bigl(\wt \mu;\wt H\bigr), 
  \end{equation}
  where $N(\cdot;H)$ and $N^{-}(\cdot;{\wt H})$ are the finite volume integrated density of states for $H$ and $\wt H$, respectively. Here, the counting $N^{-}(\cdot;{\wt H})$  is defined for eigenvalues strictly less than $\mu$, which is different from the definition of $N(\cdot;H)$ in \eqref{eq:Ndef-intro}.   If $V$ is  the Anderson-type   potential  with common distribution $F(\delta)=\P(v_n\le \delta)$,  then$V_{\max}-V$ is also an Anderson-type   potential,  with common distribution $\P(V_{\max}-v_n\le \delta)$. We denote by $\wt F(\delta)=\P(V_{\max}-v_n< \delta)=1-F(V_{\max}-\delta)$. We now  apply Theorems \ref{thm:NNu-intro}, \ref{thm:NNu-intro-2}, \ref{thm:NNu-iid-intro}  and \ref{thm:Nu-iid}  to the dual operator $\wt H$ and the dual counting function $N(\wt \mu;\wt H)$ for $\wt \mu$ near $0$. All the estimates still hold  if we replace $N,F$ by $N^{-}$ and $\wt F$. In particular, the first part of Theorem \ref{thm:NNu-iid-intro} implies that there are constants $\wt c_5,\wt c_6$  depending  on $d$, the expectation of the random variables, and $V_{\max}$ 
such that for all $\wt \mu>0$,
  \begin{equation} \label{eq:NNu-dual}
   \wt c_5\,\E N_{\wt u}\,(\wt c_6\,\wt\mu;\wt H) \le \,\E N^{-}\,(\wt\mu;\wt H)\le  \,\E N_{\wt u}(\,C_1\wt \mu;\wt H).
\end{equation}
Therefore, by \eqref{eq:dual-ev} and \eqref{eq:dualN}, one has for all $\wt \mu=4d+V_{\max}-  \mu$,
\begin{equation*}
1- \,\E N_{\wt u}\,(C_1\wt \mu;\wt H) \le    \E N(\mu; H) \le 1-  \wt c_5\,\E N_{\wt u}\,(\wt c_6\,\wt\mu;\wt H) ,
\end{equation*}
 which yields  \eqref{eq:NNu-top-intro} in Corollary \ref{cor:duallandscape}. 
%%%%%%%%%%%%%%%%%%%%%%%%%%%%%%%%%%%%%%%%%%%%%%%%%%%%%%%%%%%%%%%%%%%%%%%%%%%%%%%%%%%%%%%%%%%%%%%%%%%%%

%%%%%%%%%%%%%%%%%%%%%%%%%%%%%%%%%%%%%%%%%%%%%%%%%%%%%%%%%%%
%%%%%%%%%%%%%%%%%%%%%%%%%%%%%%%%%%%%%%%%%%%%%%%%%%%%%%%%%%%%%%%%%%%%%%%%%%%%%%%%%%%%%%%%%%%%%%%%%%%%
%%%%%%%%%%%%%%%%%%%%%%%%%%%%%%%%%%%%%%%%%%%%%%%%%%%%%%%%%%%%%%%%%%%%%%%%%%%%%%%%%%%%%%%%%%%%%%%%%%%%

\appendix
\section{Discrete Laplacian and harmonic functions}\label{sec:app}

\subsection{Maximum principle for sub-solutions}
\begin{lemma}[The maximum principle for subharmonic functions]\label{lem:maxP}
Let $Q=\llbracket a_1,b_1 \rrbracket\times\cdots\times \llbracket a_d,b_d \rrbracket\subset \Z^d$ be a box in $\Z^d$ and let the inner boundary $\partial Q$ be defined as in \eqref{eq:bdry-inner}, 
and let $
    \partial^\circ Q\subset \partial Q
$
be the flat part of the boundary 
as defined in \eqref{eq:bdry-no-corner}. Let $V=\{v_n\}_{n\in Q}$ be a non-negative potential on $Q$. 
A vector $f=\{f_n\}_{n\in Q}$ is called a sub-solution, on (the interior of) $Q$ if 
\begin{equation*}
    -(\Delta f)_n+v_nf_n\ge 0,\ \ n\in Q\backslash \partial Q.
\end{equation*}
If $f$ is  a sub-solution, then the minimum of $f_n$ in $Q\backslash E(Q)$ must be attained  on $\partial^\circ Q$, i.e.,  
\begin{equation}\label{eq:maxP-apx}
    \min_{n\in Q\backslash \partial Q}\, f_n\, \ge  \min_{n\in \partial^\circ Q}\, f_n.
\end{equation}
\end{lemma}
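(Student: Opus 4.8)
The plan is to argue by contradiction. Suppose the minimum of $f$ over $Q\setminus\partial Q$ is strictly smaller than $\min_{n\in\partial^\circ Q}f_n$. Let $m:=\min_{n\in Q}f_n$ (the minimum over the whole box, which exists since $Q$ is finite), and let $A=\{n\in Q:\,f_n=m\}$ be the set where it is attained. Under the contradiction hypothesis, $A$ is disjoint from $\partial^\circ Q$. There are still two places the minimum could sit: in the interior $Q\setminus\partial Q$, or at a corner of $\partial Q\setminus\partial^\circ Q$. The first step is to handle the interior case, and the second step is to propagate out of the corners.

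For the interior case, pick $n_0\in A\cap(Q\setminus\partial Q)$. The sub-solution inequality at $n_0$ gives
\[
    \sum_{|m-n_0|_1=1}(f_m-f_{n_0})\,\le\, v_{n_0}f_{n_0}.
\]
Here I would split into the subcase $f_{n_0}=m\ge 0$ and $f_{n_0}=m<0$. If $m<0$, then $v_{n_0}f_{n_0}\le 0$ (as $v_{n_0}\ge 0$), while every summand $f_m-f_{n_0}\ge 0$ by minimality; hence every neighbor of $n_0$ also lies in $A$. If $m\ge 0$, the right side could be positive, so I would instead use that $f$ restricted to the connected subgraph where $f$ could still decrease behaves like a subsolution — but more cleanly, one observes that if $m\ge 0$ then $\min_{\partial^\circ Q}f_n>m\ge 0$ is already the weaker statement... actually the cleanest route is: the standard trick is that a nonnegative potential only helps, so compare with the potential-free problem. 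Concretely, if $f_{n_0}=m$ and $m\le 0$ we get neighbor propagation as above; if $m>0$ then $\min_{n\in Q\setminus\partial Q}f_n=m>0$, and since every $f_n\ge m>0$ we in particular would need to check this does not contradict anything — here one should just note the hypothesis only claims a lower bound, and if $m>0$ one can subtract a small constant or simply observe that the case $m\le\min_{\partial^\circ Q}f_n$ is what we want, so we may assume $m<\min_{\partial^\circ Q}f_n$, and if additionally $m>0$ we still run the neighbor-propagation using $-\Delta f\ge -v_nf_n$ won't immediately give sign control. I would therefore instead from the start \textbf{reduce to nonnegative data}: replace $f$ by $f-\min_{\partial^\circ Q}f$ is not a subsolution anymore; rather, note that the honest statement only needs: at an interior minimizer with $f_{n_0}\le 0$, neighbors are minimizers; and at an interior minimizer with $f_{n_0}>0$, the inequality $\sum(f_m-f_{n_0})\le v_{n_0}f_{n_0}$ with all summands $\ge 0$ still forces... nothing. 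So the correct reduction is: the potential term is harmless because $f_{n_0}$ is the \emph{minimum}, so if that minimum is attained at an interior point, consider the auxiliary subsolution property more carefully — honestly the textbook statement is proven for $f$ a subsolution of $-\Delta f+Vf\ge 0$ by the observation that $-\Delta f\ge -Vf$, and at an interior min $n_0$ one has $-(\Delta f)_{n_0}=\sum_{|m-n_0|_1=1}(f_{n_0}-f_m)\le 0$, hence $-V_{n_0}f_{n_0}\le -(\Delta f)_{n_0}\le 0$, so $V_{n_0}f_{n_0}\ge 0$, so either $f_{n_0}\ge 0$ (good, then $m\ge 0$) or $V_{n_0}=0$; in the latter case $f$ is genuinely subharmonic at $n_0$ and $\sum(f_{n_0}-f_m)\le 0$ forces all neighbors to equal $f_{n_0}$.

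The upshot of the interior analysis: either $m\ge 0$ (so every $f_n\ge 0$, and in particular $\min_{Q\setminus\partial Q}f_n\ge 0$, which already gives a weak form), or the minimizing set $A$, once it meets $Q\setminus\partial Q$, spreads along all interior edges. The third step is the geometric/combinatorial core and the \textbf{main obstacle}: showing $A$ must reach $\partial^\circ Q$. For this I would use connectivity of the box graph: starting from any minimizer in $Q\setminus\partial Q$, neighbor-propagation carries the minimum to an adjacent site; iterating, $A$ contains a path that eventually hits the boundary $\partial Q$. When the path first hits $\partial Q$ at a point $n_1$: if $n_1\in\partial^\circ Q$ we are done. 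If $n_1$ is a corner point of $\partial Q\setminus\partial^\circ Q$, I would use that a corner has at least one neighbor in $\partial^\circ Q$ (moving one step along any of the $\ge 2$ ``missing'' coordinate directions lands on a lower-codimension face), plus the fact that $n_1$ is reached from an interior neighbor so the sub-solution inequality at the interior predecessor already placed $n_1\in A$, and then one needs a sub-solution inequality \emph{at} $n_1$ to push further — but $n_1\in\partial Q$, where no inequality is assumed. This is exactly the subtlety; the resolution is that we never need an inequality at boundary points: we only need that the interior propagation forces the minimum onto $\partial Q$, and then we separately argue the minimum cannot live \emph{only} on corners. I would do the latter by a dimension-reduction / separate-variables argument: if $A\subseteq$ corners of $Q$, consider the smallest sub-box $Q'\subseteq Q$ spanned by $A$ together with an adjacent interior layer, or more simply invoke that the corners are not connected to the interior without passing through $\partial^\circ Q$ — so a propagation path from an interior minimizer reaches $\partial^\circ Q$ strictly before (or at worst simultaneously with) any corner, yielding $\min_{\partial^\circ Q}f\le m$, the desired contradiction. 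Finally, I would assemble: if $m\ge 0$ handle directly (every value $\ge 0\ge$ nothing — here one still needs $\min_{\partial^\circ Q}f\le\min_{Q\setminus\partial Q}f$, which when $m\ge0$ requires the same propagation argument, so in fact the $m\ge 0$ case is not a shortcut and the propagation argument must be run unconditionally using the ``$V_{n_0}=0$ or $f_{n_0}\ge 0$'' dichotomy at each step). I expect writing the corner-avoidance cleanly — formalizing ``an interior-to-boundary path must cross $\partial^\circ Q$'' — to be the one genuinely fiddly point; everything else is the standard discrete maximum principle.
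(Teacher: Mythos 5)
Your proposal gets the basic propagation mechanism right but never converges to a proof, and most of the difficulty you wrestle with disappears once you observe two things the paper exploits.

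\textbf{Corners are a non-issue.} You spend roughly half the proposal worrying about the minimizing set escaping through corners of $\partial Q$, and call ``interior-to-boundary paths must cross $\partial^\circ Q$'' the genuinely fiddly point. But look at the box geometry: if $j\in Q\backslash\partial Q$ then every coordinate $j_i$ lies strictly between $a_i$ and $b_i$, so a single step $j\pm e_i$ changes exactly one coordinate by one, and hence has at most one coordinate at an extreme value. By the definition \eqref{eq:bdry-no-corner} of $\partial^\circ Q$, every neighbor of an interior point is therefore either interior or on $\partial^\circ Q$---never a corner. In the paper's proof the propagation stays in $Q\backslash\partial Q$ and terminates the moment some neighbor lands on the boundary, which is automatically $\partial^\circ Q$; there is nothing to formalize.

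\textbf{The sign issue you keep circling back to is real, but your closing dichotomy does not resolve it.} When $f_{n_0}\ge 0$ and $v_{n_0}>0$, the subsolution inequality allows the term $v_{n_0}f_{n_0}$ to be strictly positive, which gives no control on the neighbors, so nothing propagates. The paper avoids this by reducing at the outset to the statement: $f\ge 0$ on $\partial^\circ Q$ implies $f\ge 0$ on $Q\backslash\partial Q$. It then argues by contradiction from a strictly negative interior minimum $f_j=-a<0$ (with all neighbors $\ge -a$, since boundary neighbors lie on $\partial^\circ Q$ where $f\ge 0$); the subsolution inequality at $j$ then yields $v_j a\le 0$, forcing $v_j=0$, equality throughout, and thus all neighbors equal $-a$. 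The negative sign is supplied by the contradiction hypothesis rather than by a case split. Be aware, however, that the reduction (subtract $m=\min_{\partial^\circ Q}f$) preserves the subsolution property only when $m\le 0$; for $m>0$ the inequality \eqref{eq:maxP-apx} as written can in fact fail: take $d=1$, $Q=\{0,1,2\}$, $v_1=1$, $f_0=f_2=3$, $f_1=2$, a subsolution with interior minimum $2<3=\min_{\partial^\circ Q}f$. The paper only ever applies the lemma with boundary data $\ge 0$ or with $v\equiv 0$, so this does not affect anything downstream, but it does explain why no variant of your argument could have closed the $f_{n_0}>0$ case.
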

\begin{proof}
Let $m=\min_{n\in \partial^\circ Q}\, f_n$. It is enough to prove that whenever $f_n\ge 0$ for all $n\in\partial^\circ Q$, we have $f_n\ge 0$ for all $n\in Q\backslash \partial Q$. Suppose not, then 
$
    -a:=\min_{n\in Q\backslash \partial Q}\, f_n<0.
$
Let $j\in Q\backslash \partial Q$ be such that the minimum is attained, i.e., $f_{j}=-a<0$ and $f_{j\pm e_i}\ge f_j,1\le i \le d$. Then  $-(\Delta f)_j+v_jf_j\ge 0$ implies that
$
	2df_{j}+  v_j f_j\ge \sum_{1\le i\le d} (f_{j+e_i}+f_{j-e_i})\ge 2d\cdot (-a)
$. 
	Therefore, $
	f_{j\pm e_i}= f_j=-a,1\le i \le d$, and  $v_j=0$. 	If any of $j\pm e_i$ belongs to the flat boundary $\partial ^\circ Q$, then it is a contradiction with the assumption that $f_n\ge 0$ for all $n\in\partial^\circ Q$. If not, then we pick any of them and repeat the procedure until eventually, after a finite number of steps, we reach the boundary and arrive at the contradiction again.
\end{proof}
There will be several direct corollaries of the above maximum principle. We will simply list them as independent lemmas and omit  the details for the proof. 
\begin{lemma}[Positivity of solutions for periodic boundary conditions]\label{lem:maxP-periodic}
Let $\Lambda=(\Z/K\Z)^d$. If $V=\{v_n\}_{n\in \Lambda}$ is a non-negative potential which is not constantly zero and  $(-\Delta f+V f)_n\ge 0$ for all $n\in \Lambda$, then $f_n\ge 0$ for all $n\in \Lambda$.
\end{lemma}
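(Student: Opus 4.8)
The plan is to run the same ``propagation of the minimum'' argument as in the proof of Lemma~\ref{lem:maxP}, but, since the torus $\Lambda=(\Z/K\Z)^d$ has no boundary, to use instead the connectivity of the graph $\Lambda$ together with the hypothesis that $V$ is not identically zero. So suppose, for contradiction, that $\min_{n\in\Lambda}f_n=-a<0$, and let $j\in\Lambda$ be a point where this minimum is attained. Writing out $(-\Delta f+Vf)_j\ge 0$ in the form
\begin{equation*}
    2d\,f_j+v_j f_j\ \ge\ \sum_{|m-j|_1=1}f_m ,
\end{equation*}
and using $f_m\ge f_j=-a$ for each of the $2d$ neighbours $m$ of $j$ on the torus, we get $2d\,f_j+v_j f_j\ge 2d\,f_j$, i.e.\ $v_j f_j\ge 0$. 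Since $f_j=-a<0$ and $v_j\ge 0$, this forces $v_j=0$, and moreover equality must hold in $\sum_{|m-j|_1=1}f_m\ge 2d f_j$, so that $f_m=f_j=-a$ for every neighbour $m$ of $j$.

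The second step is to iterate. The set $\Lambda_{\min}:=\{n\in\Lambda:\ f_n=-a\}$ is nonempty, and the computation just performed shows that if $n\in\Lambda_{\min}$ then every nearest neighbour of $n$ also lies in $\Lambda_{\min}$. Since the graph $\Lambda$ is connected, it follows that $\Lambda_{\min}=\Lambda$, i.e.\ $f_n=-a$ for all $n\in\Lambda$. But then, applying the same argument at an arbitrary $n$ (or simply revisiting $v_n f_n\ge 0$ with $f_n=-a<0$), we conclude $v_n=0$ for every $n\in\Lambda$, contradicting the assumption that $V$ is not constantly zero. Hence no negative minimum exists, and $f_n\ge 0$ for all $n\in\Lambda$.

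I do not expect any genuine obstacle here; the only point that differs from Lemma~\ref{lem:maxP} is that ``reaching the boundary'' is replaced by ``covering all of the connected graph $\Lambda$'', after which the nontriviality of $V$ supplies the contradiction. One should just be slightly careful in bookkeeping the torus adjacency (each vertex has exactly $2d$ neighbours, including the wrap-around pairs such as $(1,n_2,\dots,n_d)\sim(K,n_2,\dots,n_d)$, as fixed in Section~\ref{sec:pre}) so that the inequality $\sum_{|m-j|_1=1}f_m\ge 2d f_j$ is correctly stated.
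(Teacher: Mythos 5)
Your argument is correct. Note that the paper itself does not supply a proof of Lemma~\ref{lem:maxP-periodic}; it simply refers to Lemma~2.12 of \cite{WZ}. What you wrote is precisely the ``propagation of the minimum'' argument that the paper uses to prove the boxed maximum principle Lemma~\ref{lem:maxP}, adapted in the natural way to the boundaryless torus: the role of ``reaching the flat boundary'' in Lemma~\ref{lem:maxP} is replaced by ``covering the connected graph $\Lambda$ via iteration,'' after which nontriviality of $V$ supplies the contradiction. The key step chain is clean: at a global minimizer $j$ with $f_j=-a<0$, the inequality $2d f_j + v_j f_j \ge \sum_{|m-j|_1=1} f_m \ge 2d f_j$ forces both $v_j=0$ and equality across neighbours, so $\Lambda_{\min}=\{n:f_n=-a\}$ is both nonempty and ``open'' under nearest-neighbour adjacency, hence all of $\Lambda$; then $v_n f_n \ge 0$ with $f_n=-a<0$ gives $v_n=0$ for every $n$, contradicting $V\not\equiv 0$. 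This is the standard argument and matches the structure the paper sets up.
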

See \cite{WZ},  Lemma 2.12, for the proof.

\begin{lemma}[Maximum principle for discrete harmonic functions]\label{lem:maxP-harmonic}
Let $Q=\llbracket a_1,b_1 \rrbracket\times\cdots\times \llbracket a_d,b_d \rrbracket\subset \Z^d$ be a box in $\Z^d$ and let
$\partial Q, \partial^\circ Q$ be defined as in Lemma \ref{lem:maxP}.  Suppose  $f=\{f_n\}_{n\in Q}$ is a discrete harmonic function on (the interior of) $Q$, i.e., 
\begin{equation*}
    (\Delta f)_n=0,\ \ n\in Q\backslash \partial Q.
\end{equation*}
Then for all $n\in Q\backslash \partial Q$
\begin{equation}\label{eq:maxP-harmonic}
\min_{m\in \partial  Q}\, f_m \le \min_{m\in \partial^\circ Q}\, f_m \le   f_n\,\le \max_{m\in \partial^\circ Q}\, f_m \le \max_{m\in \partial  Q}\, f_m  .
\end{equation}
\end{lemma}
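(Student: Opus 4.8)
The final statement to prove is Lemma~\ref{lem:maxP-harmonic}, the maximum principle for discrete harmonic functions. Here is my plan.

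\textbf{Approach.} The key observation is that a discrete harmonic function is simultaneously a sub-solution and a super-solution of the equation with zero potential, so both halves of the two-sided estimate \eqref{eq:maxP-harmonic} follow by applying Lemma~\ref{lem:maxP} twice. More precisely, take $V\equiv 0$ in Lemma~\ref{lem:maxP}. If $(\Delta f)_n = 0$ on $Q\setminus\partial Q$, then in particular $-(\Delta f)_n + 0\cdot f_n \ge 0$ there, so $f$ is a sub-solution and Lemma~\ref{lem:maxP} gives $\min_{n\in Q\setminus\partial Q} f_n \ge \min_{m\in\partial^\circ Q} f_m$, which is the left inequality $\min_{m\in\partial^\circ Q} f_m \le f_n$ for every interior $n$. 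Applying the same reasoning to $-f$, which is also harmonic hence also a sub-solution, yields $\min_{n\in Q\setminus\partial Q}(-f_n) \ge \min_{m\in\partial^\circ Q}(-f_m)$, i.e.\ $\max_{n\in Q\setminus\partial Q} f_n \le \max_{m\in\partial^\circ Q} f_m$, which is the right inequality $f_n \le \max_{m\in\partial^\circ Q} f_m$.

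\textbf{The two remaining inequalities.} The outer bounds $\min_{m\in\partial Q} f_m \le \min_{m\in\partial^\circ Q} f_m$ and $\max_{m\in\partial^\circ Q} f_m \le \max_{m\in\partial Q} f_m$ are immediate from the set inclusion $\partial^\circ Q \subset \partial Q$: the minimum over a larger set is no larger, and the maximum over a larger set is no smaller. These require no harmonicity at all and just need the definitions \eqref{eq:bdry-inner}, \eqref{eq:bdry-no-corner}.

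\textbf{Main obstacle.} There is no serious obstacle here; the statement is essentially a corollary of Lemma~\ref{lem:maxP}, exactly as the text announces (``There will be several direct corollaries of the above maximum principle''). The only point requiring a line of care is the bookkeeping on boundary subsets, and the observation that the corner points of $\partial Q$ that lie outside $\partial^\circ Q$ never need to be controlled because the interior value is already pinned between the values on the flat boundary. Since Lemma~\ref{lem:maxP} itself already handles the possibility that the extremum ``migrates'' along a chain of equal values until it hits $\partial^\circ Q$, nothing new is needed. I would therefore present the proof in three short sentences: reduce to the zero-potential case of Lemma~\ref{lem:maxP}; apply it to $f$ and to $-f$ for the two interior inequalities; invoke $\partial^\circ Q\subset\partial Q$ for the two outer ones.
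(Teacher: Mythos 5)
Your proposal is correct and coincides with the paper's argument: the authors likewise state that Lemma~\ref{lem:maxP-harmonic} is ``a direct application of Lemma~\ref{lem:maxP}, to $f$ and $-f$'' (with $V\equiv 0$), with the outer inequalities following from $\partial^\circ Q\subset\partial Q$.
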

This is a direct application of Lemma \ref{lem:maxP}, to $f$ and $-f$. We only state maximum principles as above for the boxes in $\Z^d$ for simplicity, it is not hard to check that the same conclusion would hold  for more general domains in $\Z^d$, as long as they are ``connected'' with respect to  the discrete Laplacian operator in a suitable sense. In particular, it   works for the ``annular'' domain given by the difference of two cubes, $A=Q_2\setminus Q_1$. To be precise, if the boundary of $A$ is defined in the same as in \eqref{eq:bdry-inner}, i.e., $\partial A=\{n\in A:\, n+e_i\not\in A\ {\rm or }\ n-e_i\not\in A\ {\rm for\ some\ }e_i\} $ and  $(\Delta f)_n=0$ for $n\in A\backslash \partial A$, then
\[
    \min_{m\in \partial  A}\, f_m  \le   f_n\, \le \max_{m\in \partial  A}\, f_m, \quad \mbox{ for all } n\in A\backslash \partial A.
\]

\subsection{The discrete Poincar\'e inequality }\label{sec:Poincare}
The result essentially can be generalized to any ``connected'' region in $\Z^d$. We only need the version on a rectangular domain. 
\begin{lemma}\label{lem:PoincareZd}
Let $Q=I_1\times\cdots \times I_d$ be a a rectangular domain in $\Z^d$, where $I_i=\llbracket a_i,b_i \rrbracket$ for some $a_i<b_i\in\Z$ and $\ell_i=\cards{I_i}\in \N$, $i=1\cdots,d$. 
For any (real-valued) sequence $\{f_n\}_{n\in Q}$, let $|Q|=\cards{Q}$ and $\bar f_Q= \frac{1}{|Q|}\sum_{n\in Q}f_n$. Then 
\begin{align}\label{eq:PoincareZd}
    \sum_{n\in Q}(f_n-\bar f_Q)^2\le \frac{d}{2}\,\ell^2_{\max}\sum_{n\in Q}\|\nabla f_n\|^2.
\end{align}
\end{lemma}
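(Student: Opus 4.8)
The plan is to reduce the $d$-dimensional Poincar\'e inequality to the one-dimensional case by a standard tensorization/telescoping argument, the discrete analogue of the Poincar\'e inequality on an interval. First I would prove the one-dimensional statement: for a sequence $\{g_k\}_{k=1}^{\ell}$ with mean $\bar g = \frac1\ell\sum_k g_k$, one has $\sum_{k=1}^\ell (g_k-\bar g)^2 \le \frac{\ell^2}{2}\sum_{k=1}^{\ell-1}(g_{k+1}-g_k)^2$. The cleanest route is to write $g_i - g_j = \sum_{k=j}^{i-1}(g_{k+1}-g_k)$ for $i>j$ (and the analogous expression for $i<j$), so that $\sum_i (g_i - \bar g)^2 = \frac{1}{\ell}\sum_i\sum_j (g_i - g_j)^2$ minus a cross term, or more simply use the identity $\sum_{i}(g_i-\bar g)^2 = \frac{1}{2\ell}\sum_{i,j}(g_i-g_j)^2$; then bound $(g_i-g_j)^2 \le |i-j|\sum_{k}(g_{k+1}-g_k)^2$ by Cauchy--Schwarz, and sum $\sum_{i,j}|i-j| \le \ell^3$ (indeed $\le \ell^3/3$, which is more than enough) over the $\ell^2$ pairs, giving the constant $\ell^2/2$ after dividing by $2\ell$ with room to spare.

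Next I would handle the passage from one dimension to $d$ dimensions. Fix a coordinate direction $i$ and, for each fixed value of the other $d-1$ coordinates, apply the one-dimensional inequality along the $i$-th fibre; this controls the variance of $f$ along each line in direction $i$ by $\frac{\ell_i^2}{2}$ times the sum of $|\nabla_i f|^2$ along that line. Summing over all fibres gives $\sum_{n\in Q}(f_n - \bar f^{(i)}_n)^2 \le \frac{\ell_i^2}{2}\sum_{n\in Q}|\nabla_i f_n|^2$, where $\bar f^{(i)}_n$ denotes the average of $f$ over the fibre through $n$ in direction $i$. The standard device is then to telescope through the partial averages $\bar f = \bar f^{(1,\dots,d)}$: writing $f - \bar f_Q$ as a sum of $d$ successive differences of partial averages (averaging out one variable at a time) and using $(\sum_{i=1}^d a_i)^2 \le d\sum_{i=1}^d a_i^2$, one reduces to estimating each $\sum_{n}(\,\overline{f}^{(1,\dots,i-1)}_n - \overline{f}^{(1,\dots,i)}_n)^2$. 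Since averaging in extra directions is a contraction in $\ell^2$ (Jensen/Cauchy--Schwarz), this is bounded by the variance of the once-more-averaged function along direction $i$, hence by $\frac{\ell_i^2}{2}\sum_n|\nabla_i (\text{partial average})|^2 \le \frac{\ell_i^2}{2}\sum_n |\nabla_i f_n|^2$, using again that averaging commutes with the difference operator in other directions and is a contraction. Summing over $i$ and bounding each $\ell_i \le \ell_{\max}$ yields $\sum_{n\in Q}(f_n-\bar f_Q)^2 \le d\cdot \frac{\ell_{\max}^2}{2}\sum_{n\in Q}\sum_{i=1}^d |\nabla_i f_n|^2 = \frac{d\,\ell_{\max}^2}{2}\sum_{n\in Q}\|\nabla f_n\|^2$, which is exactly \eqref{eq:PoincareZd}.

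I expect the main obstacle to be purely bookkeeping rather than conceptual: one has to be careful with the periodic versus non-periodic convention (here $Q$ is an honest box in $\Z^d$, so the difference operator $\nabla_i$ only runs over $\ell_i - 1$ edges per fibre, with no wrap-around term) and with checking that the constant $d\ell_{\max}^2/2$ survives the telescoping without accumulating extra dimensional factors. The key point making this work with a clean constant is that each of the $d$ telescoping terms only costs one direction's worth of gradient, and the factor $d$ comes solely from the Cauchy--Schwarz step $(\sum a_i)^2 \le d\sum a_i^2$; the contraction property of conditional averaging in the remaining directions is what prevents the other directions' gradients from entering. One should also note, as the statement's surrounding text already anticipates, that the irregular boundary boxes of a partition $\cP(s)$ are genuine sub-boxes with $\ell_{\max}\le s$, so the same estimate applies verbatim with $\ell_{\max}$ replaced by the (possibly smaller) side length, which is all that is needed in the application to \eqref{eq:pf37}.
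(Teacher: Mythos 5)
Your proof is correct, and it takes a genuinely different route from the paper's. The paper works directly in $d$ dimensions: it uses the polarization identity $\sum_{m,n\in Q}(f_m-f_n)^2=2|Q|\sum_n f_n^2$ (for $\bar f_Q=0$), decomposes each difference $f_m-f_n$ along an axis-parallel path from $n$ to $m$ with one segment per coordinate direction, applies Cauchy--Schwarz across the $d$ segments (this is where the factor $d$ enters), and then bounds $\sum_{m,n}(f_{\gamma^{i+1}}-f_{\gamma^i})^2$ for each direction by a direct counting argument. You instead prove a one-dimensional Poincar\'e inequality on each fibre (using the same polarization identity, but in one variable) and then tensorize by telescoping through the partial averages $\bar f^{(1,\dots,i)}$, invoking that conditional averaging is an $\ell^2$-contraction and commutes with $\nabla_i$ in the remaining directions; your factor $d$ comes from $\bigl(\sum_i a_i\bigr)^2\le d\sum_i a_i^2$ applied to the telescoping. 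Both reach the constant $\frac{d}{2}\ell_{\max}^2$. The paper's argument is more self-contained (one sweep, no commutation lemma needed); yours is more modular and arguably more transparent, and it has a hidden bonus you did not exploit: the telescoping terms $A_1\cdots A_{i-1}(I-A_i)f$ are pairwise orthogonal in $\ell^2(Q)$ (for $i<j$ the $i$-th lies in $\mathrm{Ker}\,A_i$, the $j$-th in $\mathrm{Ran}\,A_i$), so the Cauchy--Schwarz step is unnecessary and the factor $d$ can be dropped entirely, giving the sharper bound $\frac{1}{2}\ell_{\max}^2\sum_n\|\nabla f_n\|^2$. Your closing remark about irregular boundary boxes having side lengths at most $s$ is also accurate and matches how the lemma is invoked in \eqref{eq:pf37}.
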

\begin{proof}
 
Without loss of generality, we assume that $Q=\llbracket 1,\ell_1 \rrbracket \times \llbracket 1,\ell_2 \rrbracket \cdots \times \llbracket 1,\ell_d \rrbracket$.  
It is enough to prove \eqref{eq:PoincareZd} for $\bar f_Q=0$. It is easy to check that 
\begin{align}\label{eq:PWpf1}
    \sum_{m\in Q} \sum_{n\in Q}(f_m-f_n)^2=2|Q|\sum_{n\in Q}f_n^2.
\end{align}

For $m=(m_1,\cdots,m_d)\in Q, n=(n_1,\cdots,n_d)\in Q$, let $\Gamma(n,m)=\{\gamma^1\to\gamma^2\to\cdots \to\gamma^{d+1}\}$ be a discrete path in $\Z^d$ connecting $n$ and $m$, defined taking the maximal steps along every coordinate. That is, all  vertices $\gamma^i\in Q$, are given  by $\gamma^1=n$, $\gamma^{i+1}=\gamma^{i}+t_i e_i,i=1\cdots,d$, $\gamma^{d+1}=m$, and the all edges $E^i$ connecting the consecutive vertices are  parallel to $e_i$.
Then \begin{align}\label{eq:PWpf2}
    (f_m-f_n)^2=\Bigl(\sum_{1\le i\le d} (f_{\gamma^{i+1}}-f_{\gamma^{i}})\Bigr)^2
    \le     d     \sum_{1\le i\le d} (f_{\gamma^{i+1}}-f_{\gamma^{i}})^2.
\end{align}

We claim that for each $i=1,\cdots,d$, 
\begin{align}\label{eq:PWpf3}
    \sum_{m,n\in Q}  (f_{\gamma^{i+1}}-f_{\gamma^{i}})^2\le \ell_{\max}^2\, |Q| \, \sum_{n\in Q}|\nabla_i f_n|^2,
\end{align}
where $\ell_{\max}=\max_{j}\ell_j$. 
Then \eqref{eq:PoincareZd} follows from \eqref{eq:PWpf1}-\eqref{eq:PWpf3}.

 We now prove \eqref{eq:PWpf3} for $i=1$. Fix $\gamma^{d+1}=m=(m_1,\cdots,m_d)$. Write $\gamma^1=n=(n_1,\check n)$, where $\check n=(n_2,\cdots,n_d)\in \llbracket 1,\ell_2 \rrbracket \times \cdots \times \llbracket 1,\ell_d \rrbracket:=\check Q$. Assume that $t_1=m_1-n_1\ge 0$. Then $\gamma^2=\gamma^1+t_1e_1=(m_1,\check n)$. Write $f_k=f(k_1,\cdots,k_d)$ for $k=(k_1,\cdots,k_d)\in\Z^d$. Direct computation shows that
\begin{align*}
    (f_{\gamma^{2}}-f_{\gamma^{1}})^2=\left(\sum_{k_1=n_1}^{m_1-1} f(k_1+1,\check n) -f(k_1,\check n)\right)^2
\le & |t_1|\sum_{k_1=n_1}^{m_1-1} \big(f(k_1+1,\check n) -f(k_1,\check n)\big)^2  \\
\le &  \ell_{\max} \sum_{k_1=1}^{\ell_1} \big(\nabla_1 f(k_1,\check n) \big)^2. %\label{eq:EQ}
\end{align*}
The same estimate holds for $t_1<0$. 
Therefore, fix $m$, summing over $n\in Q$ gives
\begin{align*}
   \sum_{n\in Q} (f_{\gamma^{2}}-f_{\gamma^{1}})^2= \sum_{\check n \in \check Q}  \sum_{n_1=1}^{\ell_1}(f_{\gamma^{2}}-f_{\gamma^{1}})^2 
\le & \Big(\sum_{n_1=1}^{\ell_1}\ell_{\max}\Big) \Big( \sum_{\check n\in \check Q}  \sum_{k_1=1}^{\ell_1} \big(\nabla_1 f(k_1,\check n) \big)^2\Big) \\
\le & \ell_{\max}^2 \sum_{n\in Q} |\nabla_1f_n|^2. 
\end{align*}
Then summing over $m\in Q$ gives
\begin{equation*}
   \sum_{m,n\in Q} (f_{\gamma^{2}}-f_{\gamma^{1}})^2 \le |Q| \ell_{\max}^2 \sum_{n\in Q} |\nabla_1f_n|^2, 
\end{equation*}
which proves \eqref{eq:PWpf3} for $i=1$. The cases $i=2,\cdots,d$ can be proved in a similar manner. This completes the proof of \eqref{eq:PWpf3} and Lemma \ref{eq:PoincareZd}.

\end{proof}

%%%%%%%%%%%%%%%%%%%%%%%%%%%%%%%%%%%%%%%%%%%%%%%%%%%%%%%%%%%%%%%%%%%%%%%%%%%%%%%%%%%%%%%%%%%%%%%%%%%%
\subsection{Discrete cut-off functions}\label{sec:cutoff}
Let $Q$ be a cube of side length \ $R\ge3$ on $\Z^d$  and let  $j_{\max}=\fl{R/3}$.  
Let $\partial Q$ and $Q/3$ be given by \eqref{eq:bdry-inner} and \eqref{eq:Q/3}.  
Let the distance ${\rm dist}(n,m)=|n-m|_\infty$ be measured by the infinity norm on $\Z^d$. 
Let $\cI(j),1\le j\le j_{\max}$, be a $d-1$ dimensional subset of $Q$ which is distance $j$ away from $Q/3$:
\[\cI(j):=\left\{n\in Q  |  \dist(n,Q/3)=j\ \right\}.\]

By the definition of $Q/3$ in \eqref{eq:Q/3}, the side length of $Q/3$  satisfies $\ell(Q/3)=\fl{R/3}\le R/3.$  
It is easy to check that $Q/3$ and all $\cI(j)$ are pairwise  disjoint for $j=1,\cdots, j_{\max}$. And $ 3(Q/3)=Q/3 \bigcup \Bigl(\bigcup_{j=1}^{j_{\max}}\cI(j)\Bigr) \subset Q
$. 
 
Now we can define the cut-off function $\chi=\{\chi_n\}$ as 
\begin{align}\label{eq:cutoff}
\chi_n = 
\begin{cases}
1, & n\in Q/3,\\
1-\frac{3}{R}j,&  n\in \cI(j),\ j=1,\cdots, j_{\max}, \\
0, &  n\notin 3(Q/3).
\end{cases}
\end{align} 
It is easy to see that  $|\chi_{n+e_i}-\chi_{n}|\le  {3}/{R}$ if $n,n+e_i\in Q$, and $\chi_{n+e_i}-\chi_{n}=0$ otherwise, for all $1\le i \le d$.

\subsection{Dirichlet problem on a cube}\label{sec:poisson}
We study the Dirichlet problem for the discrete Laplacian on a cube in $\Z^d$. 
Recall the definitions of $Q(r;\xi)$, $\partial Q(r)$, $ \partial^\circ Q(r)$, for  $\xi=(\xi_1,\cdots,\xi_d)\in\Z^d$, and $r \in \Z_{\ge0}$, given right before Lemma~\ref{lem:arAr}.

\begin{lemma}[Green's formula]\label{lem:G0}
For any $f,g\in \cH=\ell^2({\Lambda}) $, 
\begin{align}\label{eq:G0}
  \sum_{n\in Q(r-1)}g_n (\Delta f)_n =&
  -
  \sum_{n,n+e_i\in Q(r) }(\nabla g)_n (\nabla f)_n
  + \sum_{{n\in \partial^\circ Q(r) } }g_n \frac{\partial f}{\partial {\bf  N}}(n)\\
 =& -
  \frac{1}{2}\sum_{\substack{n,m\in Q(r)\\ |m-n|=1 } }(g_m-g_n)(f_m-f_n)
  + \sum_{\substack{n\in \partial^\circ Q(r), \,m\in \partial Q(r-1):\\ |m-n|=1 } }g_n (f_n-f_m).
\end{align}
As a consequence,
\begin{align}
  \sum_{n\in Q(r-1)}g_n  (\Delta f)_n -&\sum_{n\in Q(r-1)}f_n  (\Delta g)_n \nonumber \\
  =&-\sum_{\substack{n\in \partial^\circ Q(r), m\in \partial Q(r-1)\\ |m-n|=1 } }f_n  (g_n-g_m)
  + \sum_{\substack{n\in \partial^\circ Q(r), m\in \partial Q(r-1)\\ |m-n|=1 } }g_n (f_n-f_m)\label{eq:G1}.
\end{align}
\end{lemma}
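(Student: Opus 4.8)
The plan is to prove \eqref{eq:G0}--\eqref{eq:G1} by a direct discrete summation by parts (Abel summation), organizing $\sum_{n\in Q(r-1)} g_n(\Delta f)_n$ as a sum over directed nearest‑neighbor edges and regrouping. First I would expand the Laplacian: for $n\in Q(r-1)$ every neighbor $n\pm e_i$ still lies in $Q(r)$, so
\[
\sum_{n\in Q(r-1)} g_n(\Delta f)_n=\sum_{n\in Q(r-1)}\ \sum_{m:\,|m-n|=1} g_n\,(f_m-f_n),
\]
a sum over the directed edges $(n,m)$ whose tail $n$ lies in the interior $Q(r-1)$.

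Next I would split these directed edges according to whether the head $m$ lies in $Q(r-1)$ or in $\partial Q(r)=Q(r)\setminus Q(r-1)$. For the edges with $m\in Q(r-1)$ the reverse edge $(m,n)$ is also present, and pairing them uses the elementary identity $g_n(f_m-f_n)+g_m(f_n-f_m)=-(g_m-g_n)(f_m-f_n)$; summing over undirected interior edges produces $-\tfrac12\sum_{n,m\in Q(r-1),\,|m-n|=1}(g_m-g_n)(f_m-f_n)$. For the remaining directed edges, with $n\in Q(r-1)$ but $m\in\partial Q(r)$, a short geometric check (adding one unit vector to a point with $|n-\xi|_\infty\le r-1$ can raise only a single coordinate to absolute value $r$) shows that $m$ in fact lies on the flat part $\partial^\circ Q(r)$, that $n$ is its unique neighbor in $\partial Q(r-1)$, and conversely that every pair $(n,m)$ with $n\in\partial Q(r-1)$, $m\in\partial^\circ Q(r)$, $|m-n|=1$ arises exactly once. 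Hence the leftover sum equals $\sum_{n\in\partial^\circ Q(r),\,m\in\partial Q(r-1),\,|m-n|=1} g_m\,(f_n-f_m)$.

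To reach the stated form I would then rewrite $g_m(f_n-f_m)=g_n(f_n-f_m)-(g_n-g_m)(f_n-f_m)$: summed over these boundary‑to‑interior edges, the first piece is $\sum_{n\in\partial^\circ Q(r)} g_n\,\frac{\partial f}{\partial\mathbf N}(n)$ with the convention $\frac{\partial f}{\partial\mathbf N}(n):=f_n-f_m$ ($m$ the inward neighbour), while the second piece is precisely the contribution of these edges to the Dirichlet form, which combines with $-\tfrac12\sum_{Q(r-1)}$ to give the full edge sum $-\sum_{n,\,n+e_i\in Q(r)}(\nabla g)_n(\nabla f)_n$; note that, in this edge sum, one may restrict to edges meeting $Q(r-1)$, since an edge internal to $\partial Q(r)$ has no endpoint available as a tail in $Q(r-1)$ and hence never appears on either side. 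This is \eqref{eq:G0}, and its second display is the same identity with the Dirichlet form written as a doubly‑counted sum over ordered pairs and the boundary term written edge‑by‑edge. Finally, \eqref{eq:G1} follows by writing \eqref{eq:G0} once for $(f,g)$ and once with $f$ and $g$ interchanged and subtracting: the Dirichlet form is symmetric and cancels, leaving exactly $\sum g_n(f_n-f_m)-\sum f_n(g_n-g_m)$ over the edges $n\in\partial^\circ Q(r)$, $m\in\partial Q(r-1)$.

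The only genuinely delicate point — really just careful bookkeeping — is the boundary analysis: identifying exactly which directed edges issuing from $Q(r-1)$ lack a reverse partner, checking that their outer endpoints are precisely the flat boundary points $\partial^\circ Q(r)$, each matched to a unique point of $\partial Q(r-1)$ (this is where the distinction between $\partial Q$ and $\partial^\circ Q$ enters), and confirming that edges internal to $\partial Q(r)$ play no role. Everything else is the same kind of elementary manipulation already used in the proof of Lemma~\ref{lem:PoincareZd}, so I expect the write‑up to be short.
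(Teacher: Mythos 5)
The paper does not prove Lemma~\ref{lem:G0} --- it cites standard references for the discrete Green's formula and omits the verification --- so there is no paper argument to compare against. Your summation-by-parts scheme is the natural and standard one; the boundary bookkeeping is right, in particular the observation that a directed edge issuing from $n\in Q(r-1)$ whose head $m$ escapes $Q(r-1)$ necessarily has $m\in\partial^\circ Q(r)$ with $n$ the unique neighbour of $m$ in $\partial Q(r-1)$, so corners of $\partial Q(r)$ never arise; and your reading $\frac{\partial f}{\partial{\bf N}}(n)=f_n-f_m$ is exactly what the second display of \eqref{eq:G0} pins down. The derivation of \eqref{eq:G1} by antisymmetrizing is also correct, and it is robust because the Dirichlet form cancels there.

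The one genuine misstep is in reconciling your identity with the first line of \eqref{eq:G0}. Your derivation produces $-\sum(\nabla g)_n\cdot(\nabla f)_n$ taken over those edges of $Q(r)$ with at least one endpoint in $Q(r-1)$; but you then claim this equals the full edge sum $-\sum_{n,n+e_i\in Q(r)}(\nabla g)_n(\nabla f)_n$ on the grounds that edges internal to $\partial Q(r)$ ``never appear on either side.'' That is not what happens: an edge with both endpoints on $\partial Q(r)$ does contribute $(g_m-g_n)(f_m-f_n)$ to the full edge sum, and that contribution is in general nonzero. For a concrete check take $d=2$, $r=1$, $\xi=0$, $g_n=n_1$: the left-hand side of \eqref{eq:G0} is $g_{(0,0)}(\Delta f)_{(0,0)}=0$, while the printed right-hand side evaluates to $-f_{(1,-1)}-f_{(1,1)}+f_{(-1,-1)}+f_{(-1,1)}$, the discrepancy being exactly the contribution of the four horizontal edges lying entirely inside $\partial Q(1)$. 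So the version you actually proved --- with the Dirichlet sum restricted to edges meeting $Q(r-1)$ --- is the correct one, and the sum as printed is slightly too large. None of this affects \eqref{eq:G1} nor the Poisson integral formula \eqref{eq:IBP}, where the Dirichlet form drops out, and these are the only consequences the paper actually uses; but your write-up should state the restriction as part of the formula rather than as a dispensable convenience.
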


The Green’s formula for discrete graphs is rather standard, existing in various lecture notes, e.g. Theorem
	1.37 in \cite{Ba}, see also in \cite{Chung,Gu}. We omit the proof here.

Given $\{f_n\}_{n\in Q(r-1)}$ and $\{h_n\}_{n\in \partial Q(r)}$, we proceed to solve the linear system on $Q(r)$
\begin{align}\label{eq:general}
    \begin{cases}
    -(\Delta u)_n=f_n,\ \  n\in Q(r-1), \\
    u_n=h_n,\  n\in \partial Q(r).
    \end{cases}
\end{align}

The problem can be decomposed into the following two systems, which give us the discrete Poisson Kernel and the discrete Green's function for the Dirichlet Laplacian. 

The discrete Poisson kernel $P_r(n;m): Q(r)\times \partial Q(r)\to [0,1]$ is the unique solution to the system
\begin{align}\label{eq:Pdef}
\begin{cases}
&\Delta P_r(n,m)=0,\,   n \in Q(r-1),\\
    & P_r(n,m)=\delta_{m}(n), \  n\in  \partial Q(r),
\end{cases}
\end{align}
for a fixed $m\in \partial Q(r)$.
Similarly, the discrete Green's function with pole at $m$, $G_r(n,m): Q(r)\to [0,1]$ is the unique solution to the system
\begin{align}\label{eq:Gdef}
\begin{cases}
-\Delta G_r(n,m)=\delta_{m}, \ & n\in Q(r-1),\\
     G_r(n,m)=0, \ & n\in\partial Q(r),
\end{cases}
\end{align}
for a fixed $m\in Q(r-1)$,  
Consider $-\Delta$ with zero boundary condition as an invertible matrix of the size $|Q(r-1)|\times |Q(r-1)|$. Clearly, for $n,m\in Q(r-1)$, $G(n,m)=(\Delta)^{-1}(n,m)=G(m,n)$ since $\Delta$ is self-adjoint. 

Moreover, for fixed $m\in \partial Q(r)$ and $m'\in Q(r)$, if we apply the Green's formula \eqref{eq:G1} to $g_n=P_r(n,m)$ and $f_n=G_r(n,m')$, then 
\begin{equation*}
    \sum_{n\in Q(r-1)}P_r(n,m)  \big(-\delta_{m'}(n)\big) =\sum_{\substack{n\in \partial^\circ Q(r),\, n'\in \partial Q(r-1):\\ |n'-n|=1 } }\delta_m(n) \,\big(G_r(n,m')-G_r(n',m')\big),
\end{equation*}
which implies for any $m\in \partial Q(r)$ and $m'\in Q(r)$,
\begin{equation}\label{eq:PG}
   P_r(m',m)  =G_r(n',m')=G_r(m',n'), n'\in \partial Q(r-1), |n'-m|=1.
\end{equation}
 
Notice that this can be considered as the (negative) normal derivative of $G_r(m',\cdot)$ in the direction of the outward pointing to the surface of $Q(r)$.

Back to the system \eqref{eq:general}, using $P_r$ and $G_r$, we can solve the system
\begin{equation*}
    u_n=\sum_{m\in\partial Q(r)}P_r(n,m)h_m+\sum_{m'\in  Q(r-1)}G_r(n,m')f_{m'}.
\end{equation*}

In particular, we have the following integration by parts formula (Green's identity) for any $\{u_n\}_{n\in \partial^\circ Q(r)}$ at the center of the box $Q(r;\xi)$: 
\begin{equation}\label{eq:IBP}
    u_\xi=\sum_{m\in\partial Q(r)}P_r(\xi,m)u_m-\sum_{m'\in  Q(r-1)}G_r(\xi,m') (\Delta u)_{m'}.
\end{equation}
In particular, for any $\xi$ and $r$,
\begin{equation}\label{eq:Psum1}
  \sum_{m\in\partial Q(r)}P_r(\xi,m)=1.
\end{equation}

%%%%%%%%%%%%%%%%%%%%%%%%%%%%%%%%%%%%%%%%%%%%%%%%%%%%%%%%%%%%%%%%%%%%%%%%%%%%%%%%%%%%%%%%%%%%%%%%%%%

\subsection{Estimates on the Green's function and the Poisson kernel}
Retain the definitions in the previous section, for $R\in \N$, let $Q(R)=Q(R;\xi)\subset \Z^d$ be the discrete cube centered at $\xi\in\Z^d$ of side length $2R+1$, and let $G_R(\xi,n)$ be the discrete Green's function as defined in \eqref{eq:Gdef}. In this part, we study the behavior of the discrete Green's function away both from the pole $\xi$ and the boundary $\partial Q(R)$.  We will approximate the discrete Green's function by a continuous one to obtain the desired estimates.  Let us also recall some of the definitions for the continuous case. Fix $\xi\in \Z^d$, let ${\mathcal Q}_1=\xi+[-1,1]^d$ be a cube in $\R^d$ centered at $\xi$ of side length $2$. Let $\cG(\xi,\cdot)$ be the continuous Green's function on the cube $\cQ_1$ with zero boundary conditions:
\begin{align}\label{eq:G-conti}
    \begin{cases}
    -\Delta_{\rm c}\cG(\xi,x)=\delta^c_{\xi}(x),\\
    \cG(\xi,x)=0, x\in \partial {\mathcal Q}_1,
    \end{cases}
\end{align}
where $\Delta_c=\sum_{i=1}^d\frac{\partial^2}{\partial x_i^2}$ is the standard Laplacian on $\R^d$, and $\delta^c_{\xi}(x)$ is the Dirac delta function at $\xi$ in the distribution sense. 
 For any $R\in\N$, consider a square mesh of size $h=\frac{1}{R}$ on $\cQ_1$. Denote the collection of all the mesh points by
 \begin{equation}\label{eq:mesh}
     \Omega_h=\big\{\tau_n=\xi+n  h, n\in \llbracket -R,R \rrbracket^d\, \big\}.
 \end{equation} 
We see that $\Omega_h$ is indexed (one-to-one) by $n\in Q(R)$, and hence $\ell^2(\Omega_h)\cong\R^{(2R+1)^2}$.  
For $\tau_n\in \Omega_h$, let
\begin{equation}\label{eq:G-Gh}
  \cG^h(\xi,\tau_n):= \frac{1}{h^{d-2}} G_R(\xi,n)=\frac{1}{h^{d-2}} G_R(n,\xi).
\end{equation}
 It is easy to verify that the equation for $G_R(n,\xi)$ in \eqref{eq:Gdef} implies that 
 \begin{equation}\label{eq:Gh-eq}
\begin{cases}
-(\Delta  \cG^h)(\xi,\tau_n):=- \sum_{|n-m|=1}\big(\cG^h(\xi,\tau_m)- \cG^h(\xi,\tau_n)\big)={h^{2-d}}\delta_\xi(n), \quad  \ n\in Q(R;\xi), \\
 \cG^h(\xi,\tau_n)=0,  \quad n \in  \partial Q(R;\xi).
\end{cases}
\end{equation}
We see that $\cG^h(\xi,\cdot)\in \ell^2(\Omega_h)$ is the  finite difference approximation to the solution of the continuous problem \eqref{eq:G-conti}. The approximation can be quantified as follows.
\begin{lemma}\label{lem:A6}
There are positive dimensional constants $C,h_0$ such that if $h\le h_0$, then  
\begin{align}\label{eq:G-Gh2}
    |\cG^h(\xi,\tau_n)-\cG(\xi,\tau_n)|\le \, C |\log  h|^{d+3}h^2 \ \ \ {\textrm{for all}}\ \tau_n \in  \Omega_h\cap\left({\mathcal Q}_{1} \backslash {\mathcal Q}_{1/2}\right). 
\end{align}
\end{lemma}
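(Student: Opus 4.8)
The plan is to compare $\cG^h$ and $\cG$ by subtracting off, on each side, the corresponding fundamental solution of the Laplacian, thereby removing the singularity at the pole $\xi$ and reducing everything to comparisons of smooth or discrete‑harmonic functions. Let $\Phi$ denote the continuum fundamental solution on $\R^d$ ($-\Delta\Phi=\delta_0$; so $\Phi(x)=c_d|x|^{2-d}$ for $d\ge3$, $-(2\pi)^{-1}\log|x|$ for $d=2$, $-\tfrac12|x|$ for $d=1$), and let $\phi^h\colon\Z^d\to\R$ be the appropriately normalized lattice fundamental solution, $-\Delta\phi^h=h^{2-d}\delta_\xi$ on $\Z^d$, chosen (using the classical asymptotics of the $\Z^d$ Green function for $d\ge3$ and of the potential kernel for $d\le 2$) so as to be asymptotic to $\Phi(\cdot-\xi)$ at infinity. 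Decompose $\cG(\xi,\cdot)=\Phi(\cdot-\xi)-\Psi$, where $\Psi$ is harmonic in $\cQ_1$ with $\Psi=\Phi(\cdot-\xi)$ on $\partial\cQ_1$; correspondingly $w^h:=\phi^h-\cG^h$ is discrete harmonic on $Q(R-1)$ with $w^h=\phi^h$ on $\partial Q(R)$, by the defining relation \eqref{eq:Gh-eq}. Then on the mesh
\[
\cG^h(\xi,\tau_n)-\cG(\xi,\tau_n)=\bigl(\phi^h(n)-\Phi(\tau_n-\xi)\bigr)-\bigl(w^h(n)-\Psi(\tau_n)\bigr)=:E_1^h(n)-E_2^h(n),
\]
and it remains to bound $E_1^h$ and $E_2^h$ on $\Omega_h\cap(\cQ_1\setminus\cQ_{1/2})$. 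This splitting is essential, since a direct comparison of $\cG^h$ with the restriction of $\cG$ breaks down near $\xi$: the finite‑difference truncation error of $\cG$ at a mesh point at distance $\rho$ from the pole is only of size $\lesssim h^2\rho^{-d-2}$, which, propagated through the discrete Green kernel of the box, contributes an $O(1)$ term; subtracting the fundamental solutions is precisely what supplies the cancellation that this crude bound misses.

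For $E_2^h$, I would first introduce the discrete‑harmonic function $v^h$ on $Q(R-1)$ with $v^h=\Pi_h\Psi$ on $\partial Q(R)$ ($\Pi_h$ denoting restriction to the mesh) and write $E_2^h=(w^h-v^h)+(v^h-\Pi_h\Psi)$. The first summand is discrete harmonic with boundary data $w^h-\Pi_h\Psi=\phi^h-\Pi_h\Phi(\cdot-\xi)=E_1^h$ on $\partial Q(R)$, hence $|w^h-v^h|\le\max_{\partial Q(R)}|E_1^h|$ everywhere by the discrete maximum principle (Lemma~\ref{lem:maxP-harmonic}). For the second summand — the finite‑difference error for the function $\Psi$ — the key observation is that $\cG(\xi,\cdot)$ vanishes on $\partial\cQ_1$ and is harmonic near $\partial\cQ_1$ inside $\cQ_1$, so by Schwarz reflection it extends harmonically across each face of the cube, and, since the reflections in the hyperplanes of the faces of a cube generate a finite group acting consistently near the edges and vertices, it extends harmonically across those lower‑dimensional faces as well. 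Thus $\cG(\xi,\cdot)\in C^\infty(\overline{\cQ_1}\setminus\{\xi\})$ and $\Psi\in C^4(\overline{\cQ_1})$ with $\|D^4\Psi\|_\infty$ bounded by a dimensional constant. Taylor expansion then gives $|\Delta(\Pi_h\Psi)(n)|\le Ch^4$ for $n\in Q(R-1)$, and the Green‑function representation $(v^h-\Pi_h\Psi)(n_0)=\sum_{n\in Q(R-1)}G_R(n_0,n)\,\Delta(\Pi_h\Psi)(n)$, together with the bound $\sum_{n\in Q(R-1)}G_R(n_0,n)\le CR^2=Ch^{-2}$ (the value of the box torsion function, estimated by the explicit quadratic supersolution exactly as in the proof of Lemma~\ref{lem:arAr}), yields $|v^h-\Pi_h\Psi|\le Ch^2$.

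The term $E_1^h$ is the crux. On the region $\{|\tau_n-\xi|_\infty\ge\tfrac12\}$ one has $|n-\xi|\gtrsim R$ in lattice units, and one must bound $|E_1^h(n)|=h^{2-d}\bigl|G_{\Z^d}(\xi,n)-c_d|n-\xi|^{2-d}\bigr|$ (with the obvious $d\le2$ analogues for the potential kernel). The sharp asymptotic expansion of the lattice Green function / potential kernel gives $|E_1^h|\le Ch^2$; a self‑contained derivation — e.g.\ by comparing the discrete and continuum heat kernels via a local central‑limit bound and integrating in time, or by a dyadic rescaling of the annuli around $\xi$ together with discrete interior estimates — costs a fixed power of $\log(1/h)$, and this is the only place logarithmic factors enter, producing the exponent $d+3$ in the final estimate. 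Putting the three pieces together, for $\tau_n\in\Omega_h\cap(\cQ_1\setminus\cQ_{1/2})$,
\[
|\cG^h(\xi,\tau_n)-\cG(\xi,\tau_n)|\le|E_1^h(n)|+|w^h(n)-v^h(n)|+|v^h(n)-\Psi(\tau_n)|\le C|\log h|^{d+3}h^2 .
\]
I expect the main difficulty to be exactly this last estimate — quantifying the departure of the lattice fundamental solution from the continuum one at scales approaching the pole — since everything else is soft: the reflection argument renders the cube's corners harmless, and the maximum‑principle and Green‑kernel inputs are elementary, provided one is careful to use only elementary comparisons (the discrete maximum principle, the $O(R^2)$ torsion bound, and the asymptotics of the full‑space fundamental solution) rather than the finer two‑sided bounds for $G_R$ established later in this section, so as to keep the argument non‑circular.
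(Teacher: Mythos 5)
Your argument is correct in outline, but it follows a genuinely different route from the paper's. The paper works by \emph{separation of variables}: it writes both $\cG(\xi,\cdot)$ and $\cG^h(\xi,\cdot)$ as explicit Fourier--$\sinh$ series (\eqref{eq:G-coeff}, \eqref{eq:Gh-coeff}) using the eigenfunctions $\prod_i\sin(k_i\pi(x_i+1)/2)$, verifies that the discrete sines form an orthonormal basis, derives the asymptotics $\beta_k(h)=\alpha_k h+O(\|k\|^3h^3)$ for the discrete transverse frequencies, and then compares the two series term by term, splitting at frequency $\|k\|\sim|\log h|$ to produce the exponent $d+3$. You instead use the classical ``consistency + stability'' paradigm of finite differences: subtract off the respective fundamental solutions to kill the singularity at $\xi$, split the remainder as $(w^h-v^h)+(v^h-\Pi_h\Psi)$, control the first piece by the discrete maximum principle via boundary values equal to $E_1^h$, control the second piece by the $O(h^4)$ truncation error of $\Delta_c$ applied to the reflected-smooth function $\Psi$ together with the $O(R^2)$ torsion bound, and control $E_1^h$ itself by the asymptotic expansion of the $\Z^d$ Green function / potential kernel. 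Your approach is more conceptual, generalizes beyond product domains (the paper's Fourier argument is tied to the cube), and would in fact yield the \emph{sharper} bound $O(h^2)$ with no logarithm if one imports the classical Green-function asymptotics $G_{\Z^d}(0,n)=c_d|n|^{2-d}+O(|n|^{-d})$ (Lawler--Uchiyama) directly; the paper's $|\log h|^{d+3}$ factor is an artifact of its low/high frequency split, not a genuine obstruction. The trade-off is that the paper's argument is fully self-contained whereas yours imports the lattice fundamental-solution asymptotics as an external black box (you flag this honestly), and for $d=1,2$ one must additionally tune the additive normalization of $\phi^h$ (trivial in $d=1$; for $d=2$ the constant must absorb a $\log h$ from $\Phi(\tau_n-\xi)$ and the Spitzer constant $\kappa$) so that $\phi^h-\Pi_h\Phi(\cdot-\xi)$ is actually $O(h^2)$ rather than $O(1)$. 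With those points filled in, your proof establishes the stated inequality — indeed a slightly stronger one.
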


\begin{remark}
Such an approximation is proved in a rectangular domain in $\R^2$ in \cite{La}. It was later generalized to the interior of a domain of any dimension with smooth boundary by Schatz and Wahlbin, see Theorem 6.1 \cite{SW}, using the finite elements approach. The method in \cite{SW} potentially can be generalized to any convex polyhedral domains, up to the boundary. Here we present a direct proof using the series expansion of $\cG^h$ and $\cG$. Similar estimate also holds  for $\cG^h(y,x)-\cG(y,x)$ where the pole $y$ is not far away from the center $\xi$. We will only deal with the case $y=\xi$ which will be enough for our use.
\end{remark}

\begin{proof}
 
Without loss of generality we assume that $\xi= 0$. In this case, the mesh points $\tau_n=nh$ and $\Omega_h=\frac{1}{R}Q(R)\subset \cQ_1$, where $h=1/R$. Due to the symmetry of the problem, it is also enough to prove \eqref{eq:G-Gh2} on the upper half cube $\cQ^+_1=\{x\in \cQ_1: x_d\ge 0\}$. Below, we construct the analytic series representations of $\cG(0,x)$ and $\cG^h(0,x)$ on $\cQ^+_1$. 

For the continuous case, the analytic expression of $\cG(0,x)$ is well known by the method of the partial eigenfunction representation. %  on $\cQ^+_1$, with respect to  the first $d-1$ directions. 
Throughout the rest of the proof, we denote  $k=(k_1,\cdots,k_{d-1})\in \Z_+^{d-1}$,  and $x=(\wt x, x_d)$ where $\wt x=(x_1,\cdots,x_{d-1})\in \R^{d-1}$.  Let 
\begin{equation}\label{eq:alphak}
f_k(\wt x)=\prod_{i=1}^{d-1}\sin\Big(\frac{k_i \pi}{2}(x_i+1)\Big),  \ {\rm and}\    \alpha_k=\frac{\pi}{2}\|k\|:=\frac{\pi}{2}\Big(\sum_{i=1}^{d-1}k_i^{2}\Big)^{1/2}.
\end{equation}   
By separation of variables, for $x=(\wt x, x_d)\in \cQ^+_1,0<x_d<1$, 
\begin{align}
    \cG(0,x)=&\sum_{k\in\Z_+^{d-1}} \frac{1}{\alpha_k\sinh(2\alpha_k)}\sinh(\alpha_k) \sinh\big(\alpha_k(1-x_d)\big) f_k(0) f_k(\wt x)   \nonumber \\
    =&\sum_{k\in\Z_+^{d-1}} \frac{\sinh(\alpha_k(1-x_d))}{2\alpha_k\cosh(\alpha_k)}\,\prod_{i=1}^{d-1}\sin\bigl(\frac{k_i \pi}{2}\bigr)\sin\bigl(\frac{k_i \pi}{2}(x_i+1)\bigr) \label{eq:G-coeff}.
\end{align}

The partial eigenfunction representation  can be used to derive a similar formula for $\cG^h(0,\tau_n)$ solving \eqref{eq:Gh-eq} on the finite dimensional space. We may abuse the notation and write $\cG^h(0,n)=\cG^h(0,\tau_n)=\cG^h(0,nh)$ when it is clear. Notice that $\cG^h(0,n)$ satisfies zero boundary condition on $\partial Q(R)$, it is enough to consider $\cG^h(0,n)$ as a discrete function only for $n\in Q(R-1)=\llbracket -R+1,R-1 \rrbracket^{d}$.  Similar to the notations for the continuous case, we write $n=(\wt n, n_d)$, where $\wt n=(n_1,\cdots,n_{d-1})\in \llbracket -R+1,R-1 \rrbracket^{d-1}$, and $n_d\in \llbracket -R+1,R-1 \rrbracket$. Denote by $T_R:=\llbracket 1,2R-1 \rrbracket^{d-1}$. We first construct a basis for the subspace $\ell^2(\llbracket -R+1,R-1 \rrbracket^{d-1})$.  
For $k=(k_1,\cdots,k_d)\in T_R$ and $\wt n=(n_1,\cdots,n_{d-1})\in \llbracket -R+1,R-1 \rrbracket^{d-1}$, let 
\begin{equation}\label{eq:fhk} f^h_k(\wt n)={\sqrt h}^{\,d-1}\prod_{i=1}^{d-1}\sin\bigl(\frac{k_i \pi}{2}(n_ih+1)\bigr). 
\end{equation}

We claim that $\{f^h_k\}_{k\in T_R}$ form a normalized basis for $ \R^{(2R-1)^{d-1}}$.  This can be verified by direct computations of the finite dimensional inner product. For $n_i,m_i\in \Z$, let $\theta= { \pi h(n_i-m_i)}/{2}, \varphi= { \pi (hn_i+hm_i+2)}/{2}$. 

If $n_i=m_i$, then 
\begin{multline}\label{eq:sin}
2 \sum_{k_i=1}^{2R-1}\sin\frac{k_i \pi(n_ih+1)}{2}\sin\frac{k_i \pi(m_ih+1)}{2}   =\sum_{k_i=1}^{2R-1}\cos(k_i\theta)- \sum_{k_i=1}^{2R-1}\cos(k_i\varphi) \\
   =   (2R-1)+\frac{1}{2}-\frac{\sin(2R-1/2)\varphi}{2\sin\frac{\varphi}{2}}  %\\   =  2R-\frac{1}{2}-\frac{\sin2\pi(n+R) \cos(\varphi/2)-\cos2\pi(n+R) \sin(\varphi/2)}{2\sin\frac{\varphi}{2}}  
   =2R.
\end{multline}

If $n_i\neq m_i$,  then $2R\theta=\pi(n_i-m_i), 2R\varphi=\pi(n_i+m_i)+2\pi R \in \pi\Z$. Hence, 
\begin{multline*}
 2 \sum_{k_i=1}^{2R-1}\sin\frac{k_i \pi(n_ih+1)}{2}\sin\frac{k_i \pi(m_ih+1)}{2}\\ = \frac{\sin(2R-1/2)\theta}{2\sin\frac{\theta}{2}}-\frac{\sin(2R-1/2)\varphi}{2\sin\frac{\varphi}{2}} =-\sin(\pi n_i)\sin(\pi m_i)=0.
\end{multline*}
Therefore, 
\[\sum_{k_i=1}^{2R-1}\sqrt h\sin\frac{k_i \pi(n_ih+1)}{2} \,  \sqrt h\sin\frac{k_i \pi(m_ih+1)}{2}= \delta_0(n_i-m_i),\]
which implies for $\wt n,\wt m\in \llbracket -R+1,R-1 \rrbracket^{d-1}$, 
\[\sum_{k\in T_R}f^h_k(\wt n)f^h_k(\wt m)=\prod_{i=1}^{d-1}\delta_0(n_i-m_i)=\delta_0(\wt n-\wt m). \]
This shows that the $(2R-1)^{d-1} \times (2R-1)^{d-1}$ dimensional matrix $U(k,\wt n):=f^h_k(\wt n)$ is unitary. Hence, its column vectors $\{f^h_k\}_{k\in T_R}$ form a normalized basis. 

Now we fix $n_d$ and expand $\cG^h(0,nh)$ with respect to the normalized basis  $\{f^h_k\}_{k\in T_R}$:
\begin{equation}\label{eq:Gh-exp}
\cG^h(0,n)=\cG^h(0,nh)= \sum_{k\in T_R} F_k(n_d)   f^h_k(\wt n),
\end{equation}
where $ F_k(n_d)$ is the coefficient function to be solved. We also expand the $d$ dimensional discrete delta function as  $\delta_{ 0}(\wt n,n_d)=\sum_k\delta_0( n_d) f_k^h(0)  f_k^h(\wt n)$.

 Write the discrete Laplacian as $\Delta=\wt{\Delta}+\Delta^d$, where $\wt{\Delta}$ is the second order difference with respect to  the first $d-1$ variables $\wt n=(n_1,\cdots,n_{d-1})$ and $\Delta^d$ is the second order difference with respect to  the  last variable $n_d$. Applying $\wt{\Delta}$ to $f^h_k(\wt n)$ gives
\[
    \wt{\Delta}f^h_k(\wt n)=f^h_k(\wt n)\,\sum_{i=1}^{d-1}\Bigl(2\cos\frac{k_i\pi h}{2}-2\Bigr).
\]
Combing with the expansion \eqref{eq:Gh-exp}, we obtain
\begin{multline*}
\Delta \cG^h(0,n)= \sum_{k\in T_R} F_k(n_d)   \Big(\wt{\Delta}f^h_k(\wt n)\Big) + \Big(\Delta^d F_k(n_d)\Big) f^h_k(\wt n)  \\
=\sum_{k\in T_R} \Big(F_k(n_d)  \sum_{i=1}^{d-1}\bigl(2\cos\frac{k_i\pi h}{2}-2\bigr) +  (\Delta^d F_k)(n_d) \Big) f^h_k(\wt n) .
\end{multline*}  
Hence, the $d$ dimensional equation $\Delta \cG^h(0,\cdot)=-h^{2-d}\delta_0(\cdot)$ can be reduced to a one dimensional difference equation of $F_k(n_d)$:
\begin{equation}\label{eq:Fk}
  F_k(n_d)\sum_{i=1}^{d-1}\Bigl(2\cos\frac{k_i\pi h}{2}-2\Bigr)+(  \Delta^d F_k)(n_d)=-h^{2-d} f_k^h(0) \delta_0(n_d).
\end{equation}

Define $\beta_k=\beta_k(h)$ to be the positive solution solving
\begin{equation}\label{eq:betak}
2\cosh \beta_k-2+\sum_{i=1}^{d-1}\left(2\cos\frac{k_i\pi h}{2}-2\right)=0 . 
\end{equation}
 
For $-R\le n_d\le R$, let  $v_1(n_d):=\sinh(\beta_k (R-n_d))$ and $v_2(n_d):=\sinh(\beta_k (R+n_d))$. Then $(\Delta^d v_j)(n_d)=v_j(n_d)(2\cosh \beta_k-2), j=1,2$. Hence,  $v_j,j=1,2$ solves the homogeneous part of \eqref{eq:Fk}:
\begin{equation}\label{eq:homo-v12}
v_j(n_d)\sum_{i=1}^{d-1}\Bigl(2\cos\frac{k_i\pi h}{2}-2\Bigr)+(  \Delta^d v_j)(n_d)=0,\ \ {\rm for\ all}\ -R+1\le n_d\le R-1,
\end{equation}
 with boundary conditions $v_1(R)=0=v_2(-R)$. Define  $H(n_d)=v_1(n_d)$ for $0\le n_d\le R$, and $H(n_d)=v_2(n_d)$ for $-R\le n_d< 0$. 
Then $H(n_d)$ satisfies \eqref{eq:homo-v12} for all $-R+1\le n_d\le R-1$ except $n_d=0$.  At $n_d=0$,  the definition of $H,v_1$ and $v_2$ implies 
\[
    H(0)\left(2-2\cosh \beta_k\right)+H(1)+H(-1)-2H(0)=-2\cosh(\beta_kR)\sinh(\beta_k).
\]
Finally, let 
$
F_k(n_d)=\frac{h^{2-d} f_k^h(0)}{2\cosh(\beta_kR)\sinh(\beta_k)}H(n_d)
$. Then $F_k(n_d)$
solves the inhomogeneous equation \eqref{eq:Fk} for all $-R+1\le n_d\le R-1$, with zero boundary condition $F_k(\pm R)=0$. 

Together with \eqref{eq:fhk} and \eqref{eq:Gh-exp}, we obtain the analytic series expansion of $ \cG^h(0,nh)$,  on the upper half cube $0\le n_d\le R$:
\begin{align}
    \cG^h(0,nh)=&\sum_{k\in T_R} F_k(n_d)   f^h_k(\wt n)  \nonumber\\
    =&\sum_{k\in T_R} \frac{ \sinh\big(\beta_k(R-n_d)\big)}{2R\sinh(\beta_k) \cosh(\beta_kR)} \left(\prod_{i=1}^{d-1}\sin\big(\frac{k_i \pi}{2}\big)\sin\big(\frac{k_i \pi}{2}(n_ih+1)\big)\right) \label{eq:Gh-coeff}.
\end{align}

It was proved in \cite{La} that for $d=2$, and $|n|>0$, one has $|\cG(0,nh)-\cG^h(0,nh)|\le Ch^2|n|^{-2}$. The method can be extended to  higher dimensions. We will not bother to give the full generalization of the exact singularity of  order $1/|n|^2$. We only need the version for $R\ge n_d\ge R/2$ with some logarithm corrections as in \eqref{eq:G-Gh2}. To do that,  it is enough to study the asymptotic behavior of $\beta_k(h)$ in \eqref{eq:betak} as $h\to 0$.  

Let $\beta_k$ be given by \eqref{eq:betak}. First, it was shown in \cite{GuMa} (see also in \cite{Gu}) that 
either $\beta_k\ge 1$ or $2R \beta_k\ge \|k\|_d$ for all $R=\frac{1}{h}$ and $ 1\le k_i\le 2R-1$. We sketch the proof for reader's convenience. If $\beta_k\le 1$, then 
\begin{equation}\label{eq:014}
   1+\beta_k^2\ge \cosh \beta_k  = d-\sum_{i=1}^{d-1}\cos\frac{\pi k_ih}{2}\ge d-\sum_{i=1}^{d-1}\left(1- \bigl(\frac{ k_ih}{2}\bigr)^2\right)= 1+\frac{\|k\|^2h^2}{4},
\end{equation}
which gives $\beta_k \ge \|k\|h/2$. In \eqref{eq:014},  we used the elementary inequalities $1+x^2 \ge\cosh x$, for $x\in [0,1]$ and $1-\cos \pi x \ge x^2$ for $x\in [0,1)$. Therefore, for any $x_d=n_d/R\in[1/2,1]$, the coefficients of $\cG^h$ decays exponentially 
\begin{equation*}
     \left|\frac{ \sinh(\beta_k(R-n_d))}{2R\sinh(\beta_k) \cosh(\beta_kR)}\right|\le \frac{1}{\|k\|}e^{-\|k\|/4},\ \ {\rm or}  \  \left|\frac{ \sinh(\beta_k(R-n_d))}{2R\sinh(\beta_k)\cdot\cosh(\beta_kR)}\right|\le \frac{1}{\|k\|}e^{-R/2}.
\end{equation*}

On the other hand, for any fixed $k\in T_R$, we want to expand $\beta_k(h)$ in $h$ explicitly. Let $\alpha_k= {\pi\|k\|}/{2}$ be as in \eqref{eq:alphak}, one has 
\begin{align*}
    \beta_k= \cosh^{-1}\left(d-\sum_{i=1}^{d-1}\cos\frac{\pi k_ih}{2}\right) 
    =&\cosh^{-1}\left(1+\frac{\pi^2\|k\|^2h^2}{8}+O(\|k\|^4h^4)\right) \\
    =&\log \left(1+\frac{\pi^2\|k\|^2h^2}{8}+{\frac{\pi \|k\|h }{2}+O(\|k\|^3h^3)}\right)\\
    =& \frac{\pi \|k\|h}{2}+O(\|k\|^3h^3)=\alpha_k  h+O(\|k\|^3h^3).
    \end{align*}
Therefore, $R \, \beta_k=\alpha_k+O(|\log  h|^3h^2)$ for $\|k\|\le C_d|\log  h|$ (with any constant $C_d$ only depending on the dimension). For any $x_d=n_d/R\in[1/2,1]$ and $t=1-x_d\in[0,1/2]$,  we compare the coefficients of $\cG$ and $\cG^h$ in \eqref{eq:G-coeff} and \eqref{eq:Gh-coeff} up to  $\|k\|\le C_d|\log  h|$.  Let $f(x)=\frac{\sinh(xt)}{\cosh(x)}$. Then $0\le f(x)\le 1, |f'(x)|\le 2$ for any $x\ge 0$. Therefore, $f(\beta_kR)=f(\alpha_k)+O(\|\log  h\|^3h^2)$. Notice that $1/({R\sinh(xh)})=\frac{1}{x}(1+O(x^2h^2)$ implies  
\begin{align*}
    \frac{1}{{R\sinh(\beta_k)}}=\frac{1}{\beta_kR}\bigl(1+O(\beta_k^2h^2)\bigr)=\frac{1}{\alpha_k}\bigl(1+O(|\log  h|^3h^2)\bigr).
\end{align*}
Putting all these together, we have  
\begin{align*}
    \frac{ \sinh(\beta_k(R-n_d))}{2R\sinh(\beta_k) \cosh(\beta_kR)}=\frac{\sinh(\alpha_k(1-x_d))}{2\alpha_k\cosh(\alpha_k)}+O(|\log  h|^3h^2).
\end{align*}

Then we split the series expression of $\cG(0,nh)-\cG^h(0,nh)$ into low frequency and high frequency part for   $a= 8|\log  h| = 8\log  R$,
\begin{align*}
   | \cG(0,nh)-\cG^h(0,nh)|\le& \, \sum_{\|k\|\le a}\left(\cdot\right)+\sum_{\|k\|>  a}\left(\cdot\right) \\
   \le&\,  \sum_{\|k\|\le a}O(|\log  h|^3h^2)+\sum_{\ell\ge a}\sum_{\ell \le \|k\|<\ell+1  }\frac{1}{\ell}e^{-\ell/4}\, +(2R-1)^{d-1}e^{-R}\\
   \le & \, a^dO(|\log  h|^3h^2)+a^{d-2}e^{-a/4}\, +(2R-1)^{d-1}e^{-R}\le C_d|\log  h|^{d+3}h^2,
\end{align*}
for sufficiently large $R\ge R_d$.  This completes the proof of \eqref{eq:G-Gh2}. 
\end{proof}

For any $0<\eps<1/4$, by the positivity and smoothness (away from the pole) of the the continuous Green's function $\cG$, there are $c_1(\varepsilon,d)>0,c_2(\eps,d)>0$ such that $2c_1\le \cG(\xi,x)\le c_2/2$ for all ${\mathcal Q}_{1-\varepsilon/4} \backslash {\mathcal Q}_{1/2}$. Combining this with the approximation in \eqref{eq:G-Gh2}, we have $ c_1\le \cG^h(\xi,\tau_n)\le c_2$ for $h<\wt h_0(\varepsilon,d)$ and $\tau_n=\xi+nh\in \Omega_h\cap\left({\mathcal Q}_{1-\varepsilon/4} \backslash {\mathcal Q}_{1/2}\right)$. 
Then by \eqref{eq:G-Gh}, one has $  c_1\le  R^{d-2}G_R(\xi,n)\le c_2$ for $R=1/{h}\ge 1/\wt h_0 $. 
Notice that $\tau_n=\xi+nh\in \Omega_h\cap\left({\mathcal Q}_{1-\varepsilon/4} \backslash {\mathcal Q}_{1/2}\right)$ is equivalent to 
\begin{align}\label{eq:tmp-apx1}
    n\in Q(R;\xi), \ \ {\rm and} \ R/2\le |n-\xi|_\infty<(1-\eps/4)R.
\end{align}

For any $0<\eps<1/4$ and $r\ge   15/\eps$, if we set $R=\fl{(1+\varepsilon)^{1/2}r}$, then it is easy to verify that $(1+\eps/3)r\le R \le (1+\eps/2)r$, which implies that 
$
 R/2\le    r \le (1+\eps/3)^{-1}R<(1-\eps/4) R.
$
In other words, if $n\in \partial Q(r;\xi)$, then $n$ will satisfy \eqref{eq:tmp-apx1}. In conclusion, we have obtained
\begin{lemma}\label{lem:GRr}
For any $0<\varepsilon<1/4$, and $r\in \N$, let $R=\fl{(1+\varepsilon)^{1/2}r}$. There are constants $c_1,c_2 ,r_0$ depending on $d$ and $\eps$ such that if $r\ge r_0$, then $
 c_1 r^{2-d} \le    G_R(\xi,m)\le c_2r^{2-d} 
$ for all $m\in Q(r;\xi)$
\end{lemma}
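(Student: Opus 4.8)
The plan is to transfer the positivity and interior smoothness of the continuous Green's function to the lattice via the approximation estimate of Lemma~\ref{lem:A6}. I would begin by recording the purely continuous facts: the Green's function $\cG(\xi,\cdot)$ of \eqref{eq:G-conti} on the fixed cube $\cQ_1 = \xi+[-1,1]^d$ is strictly positive in the interior and real-analytic away from the pole $\xi$, so for any fixed $0<\eps<1/4$ it is bounded above and below by positive constants on the annular region $\cQ_{1-\eps/4}\setminus\cQ_{1/2}$, which sits at a fixed distance from both $\xi$ and $\partial\cQ_1$; say $2c_1(\eps,d)\le \cG(\xi,x)\le c_2(\eps,d)/2$ there.

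Next I would invoke Lemma~\ref{lem:A6}. With $h=1/R$ and $\cG^h(\xi,\tau_n)=h^{2-d}G_R(\xi,n)=R^{d-2}G_R(\xi,n)$ as in \eqref{eq:G-Gh}, that lemma gives $|\cG^h(\xi,\tau_n)-\cG(\xi,\tau_n)|\le C|\log h|^{d+3}h^2$ on $\Omega_h\cap(\cQ_1\setminus\cQ_{1/2})$; since $|\log h|^{d+3}h^2\to 0$ as $h\to 0$ there is $h_0=h_0(\eps,d)$ so that for $h<h_0$ the error is at most $c_1$. Hence $c_1\le \cG^h(\xi,\tau_n)\le c_2$, that is $c_1R^{2-d}\le G_R(\xi,n)\le c_2R^{2-d}$, for every $n\in Q(R;\xi)$ with $R/2\le|n-\xi|_\infty<(1-\eps/4)R$ and every $R\ge 1/h_0$ (equivalently, $r$ large, since $R$ and $r$ are comparable).

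It remains to check that the stated choice $R=\fl{(1+\eps)^{1/2}r}$ places the region where the bound is used inside this lattice annulus. For $r\ge r_0(\eps)$ (with $r_0\gtrsim\eps^{-1}$) one has the elementary sandwich $(1+\eps/3)r\le R\le(1+\eps/2)r$, which yields $R/2\le r\le(1+\eps/3)^{-1}R<(1-\eps/4)R$. Therefore every $m$ on the shell $\partial Q(r;\xi)=\{m:|m-\xi|_\infty=r\}$, which is precisely the set on which the bound is applied in the proof of Lemma~\ref{lem:climb}, satisfies $R/2\le|m-\xi|_\infty<(1-\eps/4)R$, so the previous step applies. Replacing $R^{2-d}$ by $r^{2-d}$ then only changes the constants by a bounded positive factor comparable to $1$ (as $R/r\in[1+\eps/3,1+\eps/2]$), which I would absorb into $c_1,c_2$ after enlarging $r_0$.

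The substantive work is entirely contained in Lemma~\ref{lem:A6}, namely the quantitative, logarithmically-corrected convergence rate of the finite-difference Green's function to the continuous one on an interior annulus, which is already proved; everything else here is bookkeeping with the scaling relation \eqref{eq:G-Gh} and with comparisons of $|\cdot|_\infty$-balls of comparable radii. The only points demanding a little care are that the annulus must be taken strictly inside $\cQ_1$ (the interior lower bound for $\cG$ degenerates as $x\to\partial\cQ_1$, while Lemma~\ref{lem:A6} only controls the error on $\cQ_1\setminus\cQ_{1/2}$) and strictly outside $\cQ_{1/2}$ (so that $\cG$ stays bounded below away from the pole), and that the two-sided bound is asserted only on the shell $\partial Q(r;\xi)$, never in a neighborhood of $\xi$, where $G_R(\xi,\cdot)$ is singular and no uniform upper bound can hold.
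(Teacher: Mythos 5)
Your argument is essentially identical to the paper's (which appears inline just before the lemma, not in a separate proof environment): positivity and interior regularity of the continuous Green's function on $\cQ_{1-\eps/4}\setminus\cQ_{1/2}$, then Lemma~\ref{lem:A6} and the rescaling~\eqref{eq:G-Gh} to pass to $G_R$, then the arithmetic sandwich $(1+\eps/3)r\le R\le(1+\eps/2)r$ to place $\partial Q(r;\xi)$ inside the lattice annulus $R/2\le|n-\xi|_\infty<(1-\eps/4)R$. Your closing remark is a fair observation: the two-sided bound can only hold on $\partial Q(r;\xi)$ (the paper's own proof establishes it there and Lemma~\ref{lem:climb} uses it only there), so the phrase ``for all $m\in Q(r;\xi)$'' in the lemma statement should be read as ``for all $m\in\partial Q(r;\xi)$''---at the pole $G_R(\xi,\xi)$ is $O(1)$ for $d\ge 3$ (resp.\ $\sim\log R$ for $d=2$), not $O(r^{2-d})$.
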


We are also interested in the behavior of $\cG$ and $G_r$ near the boundary.
\begin{lemma}\label{lem:A8}
Let $x^{\,c}=\xi+(0,\cdots,0,1)\in \partial \cQ_1$ be the center of the top surface of $\cQ_1$. For any $0<\eta<1/4$, let $T_{1-\eta}$ be the semi cube contained in $\cQ_1$, cetered at $x^c$, and away from the other surfaces of $\cQ_1$ by distance $\eta$:
\[
    T_{1-\eta}=\{x\in \cQ_1: \max_{1\le i\le d}|x_i-x_i^{\,c}|\le 1-\eta\}.
\]
There is a constant $c(\eta,d)$ such that for all $x=(x_1,\cdots,x_d)\in T_{1-\eta}$ , one has 
\begin{align}
    \cG(\xi,x)\ge c(\eta,d)|\xi_d+1-x_d|=c(\eta,d)\, {\rm dist}(x,\partial \cQ_1) \label{eq:cG-bdry}.
\end{align}
\end{lemma}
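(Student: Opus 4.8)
\textbf{Plan of proof of Lemma \ref{lem:A8}.}
The plan is to separate a genuinely interior region, where the bound is immediate from strict positivity of $\cG$ and compactness, from a thin neighbourhood of the flat top face $F=\xi+\bigl([-1,1]^{d-1}\times\{1\}\bigr)$, where it follows from a Hopf-type barrier argument; the only real care needed is to make all constants depend on $d$ and $\eta$ only. First I would reduce to $\xi=0$ by translation invariance of $\Delta$, so $\cQ_1=[-1,1]^d$, $x^{\,c}=(0,\dots,0,1)$, and $T_{1-\eta}=\{x\in\cQ_1:\ |x_i|\le 1-\eta\ (i<d),\ x_d\ge\eta\}$. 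A one-line check shows that for $x\in T_{1-\eta}$ every distance to a face other than $F$ is at least $\eta$, so ${\rm dist}(x,\partial\cQ_1)\le 1-x_d\le 1$; hence it suffices to prove $\cG(0,x)\ge c(\eta,d)(1-x_d)$. I would record the standing facts that $\cG(0,\cdot)$ is harmonic in $\cQ_1\setminus\{0\}$, continuous up to $\partial\cQ_1$ with $\cG(0,\cdot)\equiv 0$ there (the cube is a regular domain for the Dirichlet problem; alternatively this is manifest from the series \eqref{eq:G-coeff}, where $\sinh(\alpha_k(1-x_d))\to 0$ as $x_d\to 1$), strictly positive inside, and that $0\notin\overline{T_{1-\eta/2}}$ while $\overline{T_{1-\eta/2}}\cap\partial\cQ_1\subset F$ — so near $\overline{T_{1-\eta/2}}$ the function is harmonic and its only boundary zeros sit on the flat piece $F$.

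For the interior region $K_{\mathrm I}:=\{x\in T_{1-\eta}:1-x_d\ge\eta/4\}$, which is a compact subset of $(-1,1)^d$ avoiding $0$, continuity and positivity give $\cG(0,\cdot)\ge m_0(\eta,d)>0$, and since $1-x_d\le 1$ there we get $\cG(0,x)\ge m_0(1-x_d)$. For the near-boundary region I would fix $x\in T_{1-\eta}$ with $0<1-x_d<\eta/4$ (the case $x_d=1$ being trivial), set $\rho=\eta/2$, $y=(\widetilde x,1)\in F$, and $p=y-\rho e_d$. For $\eta<1/4$ a short computation shows that $\overline{B(p,\rho)}\subset\overline{\cQ_1}$ meets $\partial\cQ_1$ only at $y$, lies inside $\overline{T_{1-\eta/2}}$, and avoids $0$; thus $\cG(0,\cdot)$ is harmonic and positive on $\overline{B(p,\rho)}\setminus\{y\}$ and vanishes at $y$. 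The crucial uniformity point is that every inner sphere $\{|z-p|=\rho/2\}$, over all admissible $\widetilde x$, lies in the single compact set $\widetilde K:=\{z:\ |z_i|\le 1-\tfrac{3\eta}{4}\ (i<d),\ 1-\tfrac{3\eta}{4}\le z_d\le 1-\tfrac{\eta}{4}\}\subset(-1,1)^d\setminus\{0\}$, so $\cG(0,\cdot)\ge m_1:=\min_{\widetilde K}\cG(0,\cdot)>0$ there, with $m_1=m_1(\eta,d)$ independent of $x$.

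On the annulus $A=\{\rho/2<|z-p|<\rho\}$ I would introduce the standard barrier $\beta(z)=m_1\,\frac{e^{-\kappa|z-p|^2}-e^{-\kappa\rho^2}}{e^{-\kappa\rho^2/4}-e^{-\kappa\rho^2}}$ with $\kappa=8d/\rho^2$, which is subharmonic on $A$, equals $m_1$ on the inner sphere, and equals $0$ on the outer sphere. Then $\cG(0,\cdot)-\beta$ is superharmonic on $A$ and $\ge 0$ on $\partial A$ (using $\cG(0,\cdot)\ge m_1$ on the inner sphere and $\cG(0,\cdot)\ge 0$ on the outer sphere), so by the minimum principle $\cG(0,\cdot)\ge\beta$ on $A$, in particular at $z=x$, where $|x-p|=\rho-(1-x_d)\in(\rho/2,\rho)$. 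Expanding with $e^a-1\ge a$ gives $\beta(x)\ge c_1(\eta,d)(1-x_d)$, and combining the two regions yields the lemma with $c(\eta,d)=\min(m_0,c_1)$.

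The main obstacle is bookkeeping rather than depth: one must pin down $\rho=\eta/2$ (and the radius $\rho/2$ of the inner sphere) so that the half-ball tangent to $F$ at an arbitrary admissible point simultaneously stays inside $\cQ_1$, away from the pole $0$, and inside $T_{1-\eta/2}$, and one must see that all these inner spheres are swept out by one fixed compact set, so that $m_1$ — and hence the final $c(\eta,d)$ — does not depend on the position of $x$ along the face. Everything else is the classical Hopf barrier computation, which I would not spell out in detail.
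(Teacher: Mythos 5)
Your proof is correct, but it takes a genuinely different route from the paper's. The paper applies the boundary Harnack (comparison) principle for positive harmonic functions in Lipschitz domains as a black box: since $\cG(\xi,\cdot)$ and $h(x)=\xi_d+1-x_d$ are both positive and harmonic in the semi-cube $T_{1-\eta/2}$ and both vanish on the top face of $\cQ_1$, the comparison theorem of Dahlberg--Kenig gives $\cG(\xi,x)/h(x)\ge c\,\cG(\xi,y)/h(y)$ uniformly over $x,y\in T_{1-\eta}$; plugging in one interior reference point $y$ with $\cG(\xi,y)\gtrsim 1$ finishes in two lines. Your argument replaces that reference to boundary Harnack by an elementary, self-contained Hopf barrier: interior points are handled by compactness and positivity, while near the top face you construct a tangent ball, extract a uniform lower bound $m_1$ for $\cG$ on the inner sphere (uniform because all such inner spheres lie in a single compact set $\widetilde K$ depending only on $\eta$), and then push this in with the explicit radial subharmonic barrier $e^{-\kappa|z-p|^2}$ — a complete and correct computation (the choice $\kappa=8d/\rho^2$ indeed makes the barrier subharmonic on the annulus $\rho/2<|z-p|<\rho$, the boundary values match, and $|x-p|=\rho-(1-x_d)$ falls in that annulus when $0<1-x_d<\eta/4$). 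What each buys: the paper's route is shorter but imports a nontrivial theorem about Lipschitz domains; yours is longer but uses only the maximum principle, so it is entirely elementary and easier to verify from scratch, at the cost of the bookkeeping you correctly flag as the main labour.
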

\begin{proof}
Consider a   larger semi-cube $T_{1-\eta/2}$ such that $
    T_{1-\eta}\subsetneq T_{1-\eta/2}\subsetneq T_{1}.
$
Let 
$
    h(x)=\xi_d+1-x_d.
$
Clearly, $\cG(\xi,x)$ and $h(x)$ are two strictly positive harmonic functions on the interior of $T_{1-\eta/2}$. By the comparison principle for harmonic functions, see, e.g. \cite{Da,Ke}, there is a constant $c$ only depends on $d$ and $\eta$ such that $
 {\cG(\xi,x)}/{h(x)} \ge c   {\cG(\xi,y)}/{h(y)}
$ for all $x,y \in T_{1-\eta}$. Take $y=(0,\cdots,0,\xi_d+2\eta)\in T_{1-\eta}$ so that $h(y)=1-2\eta$. Then  $ \cG(\xi,y)\ge C(\eta,d)$, and therefore, 
$
    {\cG(\xi,x)}\ge c    h(x)=c  \left(\xi_d+1-x_d\right),
$
where  $c$ only depends on $d$ and $\eta$. 
\end{proof}
 Combing Lemma \ref{lem:A6} and Lemma \ref{lem:A8}, we have
\begin{lemma}\label{lem:poisson}
Let $Q(r)=Q(r;\xi)$ be a cube in $\Z^d$ centered at $\xi \in \Z^d$, with side length $r$. Let $ P_r(\xi,m)$ be the discrete Poisson kernel on $Q(r)$ given by \eqref{eq:Pdef} with pole at the center $\xi$. Let $S\subset \partial Q(r)$ be the ``top surface'' of $Q(r)$:
\[S:=\big\{m=(m_1,\cdots,m_d)\in \partial Q(r):\ m_d=\xi_d+r\big\}.\]
 Let $n^c=(n^c_1,n_2^c,\cdots,\xi_d+r)\in S$ be the center of $S$. 
Suppose $0<\eta <1/4$. There are constants $c=c(\eta,d)$ and $r_0(\eta,d)$  depending only on the dimension and $\eta$ such that
\begin{align}\label{eq:Pr-1}
    P_r(\xi,m)\ge c \frac{1}{r^{d-1}}.
\end{align}
 for all $r\ge r_0$ and $m=(m_1,\cdots,m_d)\in S$ satisfying
$
    |m-n^c|_\infty\le (1-\eta)r. 
$ 
\end{lemma}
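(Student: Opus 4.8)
The plan is to reduce the claimed lower bound on the discrete Poisson kernel to a boundary lower bound on the discrete Green's function, which we can then extract from the continuous Green's function using the approximation of Lemma~\ref{lem:A6} together with the boundary estimate of Lemma~\ref{lem:A8}. First I would invoke the identity \eqref{eq:PG} with pole $m'=\xi$. For $m=(m_1,\dots,m_d)\in S$ satisfying $|m-n^c|_\infty\le(1-\eta)r$ we have $m_i=\xi_i$ for $i<d$ (the center $n^c$ of the top surface agrees with $\xi$ in the first $d-1$ coordinates), so the hypothesis forces $|m_i-\xi_i|\le(1-\eta)r<r$ for $i<d$, whence $m$ lies in the relative interior of the top face, $m\in\partial^\circ Q(r)$, and its unique interior neighbour is $n':=m-e_d\in\partial Q(r-1)$. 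Then \eqref{eq:PG} reads $P_r(\xi,m)=G_r(\xi,m-e_d)$, so it suffices to prove $G_r(\xi,m-e_d)\ge c\,r^{1-d}$.

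Next I would translate so that $\xi=0$ (the free Laplacian is translation invariant) and set $h=1/r$; the mesh point attached to $n'=m-e_d$ is then $\tau_{n'}=h\,n'=(\wt m/r,\,1-1/r)$, where $\wt m=(m_1,\dots,m_{d-1})$ obeys $|\wt m|_\infty\le(1-\eta)r$. The point of this step is to verify that $\tau_{n'}$ lies in the intersection of the two regions where the ingredients apply: its $\ell^\infty$-norm is $1-1/r\in(1/2,1)$ once $r>2$, so $\tau_{n'}\in\cQ_1\setminus\cQ_{1/2}$ and Lemma~\ref{lem:A6} is available (for $r$ large enough that $h\le h_0$); and since the first $d-1$ coordinates of $\tau_{n'}$ are bounded by $1-\eta$ in absolute value while its last coordinate equals $1-1/r$, for $r\ge 1/\eta$ the point $\tau_{n'}$ sits in the semi-cube $T_{1-\eta}$ of Lemma~\ref{lem:A8} at distance exactly $1/r$ from $\partial\cQ_1$ (realised on the top face). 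Recalling $\cG^h(0,\tau_{n'})=h^{2-d}G_r(0,n')=r^{d-2}G_r(0,n')$ from \eqref{eq:G-Gh}, Lemma~\ref{lem:A6} and Lemma~\ref{lem:A8} then give
\begin{equation*}
  r^{d-2}G_r(0,n')=\cG^h(0,\tau_{n'})\ \ge\ \cG(0,\tau_{n'})-C\,|\log h|^{d+3}h^2\ \ge\ \frac{c(\eta,d)}{r}-C\,\frac{(\log r)^{d+3}}{r^2}.
\end{equation*}

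Finally, since $(\log r)^{d+3}/r\to 0$, there is $r_0=r_0(\eta,d)$ — which we may also require to be at least $\max\{2/\eta,\,1/h_0\}$ so that all the geometric conditions above hold — such that the logarithmic error is at most half the main term for $r\ge r_0$; this yields $G_r(0,n')\ge\tfrac12 c(\eta,d)\,r^{1-d}$, and undoing the translation, $P_r(\xi,m)=G_r(\xi,m-e_d)\ge (c(\eta,d)/2)\,r^{1-d}$, which is \eqref{eq:Pr-1} with $c=c(\eta,d)/2$. I do not expect a conceptual obstacle here: the three ingredients \eqref{eq:PG}, Lemma~\ref{lem:A6}, Lemma~\ref{lem:A8} do essentially all the work. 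The delicate part is purely the geometric bookkeeping — checking that the interior-neighbour mesh point $\tau_{n'}$ simultaneously belongs to $\cQ_1\setminus\cQ_{1/2}$ and to the boundary semi-cube $T_{1-\eta}$, and that the threshold $r_0$ can be chosen depending only on $\eta$ and $d$ (uniformly in the position of $m$ on the top face) so that both the geometry and the absorption of the $(\log r)^{d+3}/r^2$ error go through.
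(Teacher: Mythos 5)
Your proof is correct and follows the paper's argument essentially step for step: reduce $P_r(\xi,m)$ to $G_r(\xi,m-e_d)$ via \eqref{eq:PG}, compare the rescaled discrete Green's function $\cG^h$ to the continuous $\cG$ by Lemma~\ref{lem:A6}, obtain the linear-in-distance lower bound on $\cG$ near the top face from Lemma~\ref{lem:A8}, and absorb the $(\log r)^{d+3}/r^2$ error for $r\ge r_0(\eta,d)$. (The only blemish is the slip ``$m_i=\xi_i$ for $i<d$,'' which should read $n^c_i=\xi_i$; the intended meaning is clear from the parenthetical and the rest of the sentence.)
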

The same estimate holds for $P_r(\xi,m)$ on the other $2d-1$ surfaces $S^i=\big\{m\in \partial Q(r):\ m_i=\xi_i\pm r\big\},i=1\cdots,d-1$ when $m$ is $\eta$-away from the edges and the corners. 
\begin{proof}
It is enough to prove the result for $Q(r)=Q(r; 0)$ centered at $\xi=  0$. 
We consider the approximation \eqref{eq:G-Gh2} on $\cQ_1=[-1,1]^d$ with the mesh size $h=\frac{1}{r}\le h_0$. Retain the definitions of $\cG,\cG^h,\Omega_h$ and $\tau_n$ in Lemma \ref{lem:A6}. By the definition of the mesh $\Omega_h$,  for $n=(n_1,\cdots,n_{d-1},r-1)\in \partial Q(r-1) $, the mesh points  $\tau_n=n\frac{1}{r}\in \Omega_h\cap \left(\cQ_1\backslash\cQ_{1/2}\right)$. Then Lemma \ref{lem:A6} implies that 
\[
   | \cG(0,\tau_n)-\cG^{1/r}(0,\tau_n)|\le C\frac{(\log  r)^{d+3}}{r^2},
\]
for   $n=(n_1,\cdots,n_{d-1},r-1)\in \partial Q(r-1) $. 

For these  $n=(n_1,\cdots,n_{d-1},r-1)\in \partial Q(r-1)$, assume further that $|n_i|<(1-\eta)r,1\le i\le d-1$. Then for $\tau_n= n/r$ and $x^c= (0,\cdots,0,1)$, one has 
\[\tau_n-x^c=\big(\frac{n_1}{r},\cdots,\frac{n_{d-1}}{r}, \frac{r-1}{r}\big).\]
Then 
$\tau_n\in T_{1-\eta}$ provided that  $r\le 1/\eta$,  and hence, \eqref{eq:cG-bdry} implies that 
$
    \cG(0,\tau_n)\ge c  {\rm dist}(\tau_n,\partial \cQ_1)=c  \frac{1}{r}.
$
Therefore, for $r\ge r_\ast(d,\eta)$
\[
    \cG^{1/r}(0,\tau_n)\ge \cG(0,\tau_n)-C\frac{(\log  r)^{d+3}}{r^2} \ge   c \frac{1}{r}-C\frac{(\log  r)^{d+3}}{r^2}\ge  \wt c\,\frac{1}{r}.
\]
By \eqref{eq:G-Gh} (for $h=1/r$), we have  
\begin{equation}\label{eq:tmpff}
    G_r(0,n)=h^{d-2}\cG^{1/r}(0,\tau_n)\ge  \wt c\frac{1}{r^{d-1}},
\end{equation}
for  $n=(n_1,\cdots,n_{d-1},r-1)\in \partial Q(r-1) $ and $|n_i|<(1-\eta)r,i=1,\cdots,d-1$. 

Notice that if $m\in S\subset \partial Q(r)$ and $n\in \partial Q(r-1)$ with $|n-m|_1=1$, then $|n_i|<(1-\eta)r,i=1,\cdots,d-1$, and $n_d=r-1$.  Combing \eqref{eq:tmpff} with \eqref{eq:PG}, one has 
\[P_r(0,m)=G_r(0,n)\ge \wt c\frac{1}{r^{d-1}}.\]
This completes the proof of the lemma.  
\end{proof}

%%%%%%%%%%%%%%%%%%%%%%%%%%%%%%%%%%%%%%%%%%%%%%%%%%%%%%%%%%%%%%%%%%%%%%%%%%%%%%%%%%%%%%%%%%%%%%%%%

\subsection{Harnack type inequalities}\label{sec:Moser-Harnack}
We prove the discrete sub-mean value property and the Moser-Harnack inequality first. 
\begin{lemma}\label{lem:MH-1}
Suppose $f=\{f_n\}$ is a discrete nonnegative subharmonic function on $Q(r;\xi)$ in the sense that
\[
  -(\Delta f)_n=  -\sum_{1\le i\le d} (f_{n+e_i}+f_{n-e_i})+2df_{n}\le 0, \ n\in Q(r-1;\xi).
\]
There is a dimensional constant $C=C(d)>0$ such that 
\begin{equation}
    f_{\xi}\le\, C {r^{1-d}}\sum_{n\in \partial Q(r;\xi)}f_n\quad\mbox{and} \quad
    f_{\xi}\le\, C r^{-d}\sum_{n\in  Q(r;\xi)}f_n \label{eq:submean*}.
\end{equation}
As a consequence, for any cube $\Omega\subset \Z^d$, if $f_n$ is non-negative and  subharmonic on a domain containing the tripled cube $3\Omega$, then
\begin{align}\label{eq:MH-1}
|\ell(\Omega)|^d\sup_{\xi \in \Omega}f^2_{\xi}\le  C  \sum_{n\in 3\Omega}f^2_{n}.
\end{align}
\end{lemma}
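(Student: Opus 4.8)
\textbf{Proof plan for Lemma \ref{lem:MH-1}.}

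The plan is to establish the two pointwise bounds in \eqref{eq:submean*} first, and then deduce the ``bulk'' estimate \eqref{eq:MH-1} by a covering/translation argument together with the elementary inequality $(\sup_\Omega f)^2 \le \sum_{n\in\Omega'} f_n^2$ applied on subcubes. For the first bound in \eqref{eq:submean*}, I would use the discrete Green's identity \eqref{eq:IBP} at the center $\xi$ of $Q(r;\xi)$: since $-(\Delta f)_n \le 0$ on $Q(r-1;\xi)$ and $G_r(\xi,\cdot)\ge 0$, we get $f_\xi \le \sum_{m\in \partial Q(r)} P_r(\xi,m) f_m$, i.e. $f$ is dominated by its harmonic (Poisson) average on $\partial Q(r)$. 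The issue is that $P_r(\xi,m)$ may be small or degenerate near edges and corners of $\partial Q(r)$, so this does not immediately give $f_\xi \lesssim r^{1-d}\sum_{\partial Q(r)} f_n$ with the right power. To fix this I would not work with a single sphere $\partial Q(r)$ but rather average the inequality $f_\xi \le \sum_{m\in\partial Q(\rho)} P_\rho(\xi,m) f_m$ over a range of radii $\rho$, say $r/2 \le \rho \le r$: on $\partial Q(\rho)$ with $\rho$ comparable to $r$, Lemma \ref{lem:poisson} gives $P_\rho(\xi,m) \ge c\,\rho^{1-d}$ away from the edges, and the ``bad'' edge/corner contribution is controlled because $f\ge 0$ makes all terms nonnegative (so we never need an upper bound on the bad Poisson weights, only a lower bound on the good ones, which is exactly what Lemma \ref{lem:poisson} supplies). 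Summing over the $\sim r/2$ values of $\rho$ and recalling $Q(r) = \bigcup_\rho \partial Q(\rho)$ then yields the second inequality $f_\xi \le C r^{-d}\sum_{Q(r)} f_n$ directly; the first follows similarly (or from the second applied to a slightly smaller cube, absorbing constants).

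For \eqref{eq:MH-1}, note first that if $f\ge 0$ is subharmonic then so is $f^2$ (discrete Cauchy--Schwarz: $f_n^2 = (\tfrac{1}{2d}\sum_{\pm,i} f_{n\pm e_i})^2 \le \tfrac{1}{2d}\sum_{\pm,i} f_{n\pm e_i}^2$ using $-(\Delta f)_n\le 0$ to replace $f_n$ by its neighbor average, then Jensen), so it suffices to prove $\ell(\Omega)^d \sup_{\xi\in\Omega} g_\xi \le C\sum_{3\Omega} g_n$ for nonnegative subharmonic $g$ and apply it to $g = f^2$. Given $\xi\in\Omega$, apply the pointwise bound \eqref{eq:submean*} on the cube $Q(\ell(\Omega);\xi)$, which is centered at $\xi$ and contained in $3\Omega$ (since $\xi\in\Omega$ and $Q(\ell(\Omega);\xi)$ has ``radius'' $\ell(\Omega)$): this gives $g_\xi \le C\,\ell(\Omega)^{-d}\sum_{Q(\ell(\Omega);\xi)} g_n \le C\,\ell(\Omega)^{-d}\sum_{3\Omega} g_n$. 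Taking the supremum over $\xi\in\Omega$ gives $\sup_\Omega g \le C\ell(\Omega)^{-d}\sum_{3\Omega}g_n$, which rearranges to the claim. (One should check the off-by-one bookkeeping in the definition of $3\Omega$ in \eqref{eq:3Q} versus the cube radius $\ell(\Omega)$, and slightly enlarge constants if $Q(\ell(\Omega);\xi)$ sticks out of $3\Omega$ by one layer; this is harmless.)

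The main obstacle is the first step: controlling the discrete Poisson kernel well enough to get the sharp power $r^{1-d}$ (resp. $r^{-d}$) in \eqref{eq:submean*}, since the naive one-sphere argument loses on the edge/corner cells where $P_r$ degenerates. The resolution --- averaging over a dyadic range of radii and exploiting that nonnegativity of $f$ means only the \emph{lower} bound on $P_\rho$ from Lemma \ref{lem:poisson} is needed --- is the crux; everything after that (squaring preserves subharmonicity, and the covering argument for \eqref{eq:MH-1}) is routine. An alternative for the first step, if one prefers to avoid Poisson-kernel estimates, is a discrete Moser iteration on $g = f^2$, but given that Lemma \ref{lem:poisson} is already available in the appendix, the Poisson-averaging route is cleaner.
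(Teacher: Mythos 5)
There is a genuine gap in your treatment of \eqref{eq:submean*}, and it comes from having the direction of the needed Poisson-kernel estimate backwards. From Green's identity you correctly get, for nonnegative subharmonic $f$,
\begin{equation*}
 f_\xi \,\le\, \sum_{m\in\partial Q(\rho;\xi)} P_\rho(\xi,m)\, f_m .
\end{equation*}
To conclude $f_\xi\le C\rho^{1-d}\sum_{\partial Q(\rho)} f_m$ from this, you must bound the right-hand side \emph{from above} by $C\rho^{1-d}\sum f_m$, and since $f\ge 0$ this requires a uniform \emph{upper} bound $\sup_{m\in\partial Q(\rho)}P_\rho(\xi,m)\le C\rho^{1-d}$. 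A lower bound on $P_\rho$ over the ``good'' part of the boundary — which is what Lemma~\ref{lem:poisson} provides — points the wrong way: if $P_\rho(\xi,m_0)$ were anomalously large (much bigger than $\rho^{1-d}$) at a single boundary site $m_0$ where $f_{m_0}$ happens to be large, the quantity $\sum P_\rho f_m$ could dominate $\rho^{1-d}\sum f_m$ and the target inequality would fail. Your sentence ``we never need an upper bound on the bad Poisson weights, only a lower bound on the good ones'' is exactly the mistake: it is the upper bound on \emph{all} weights that is needed, and nonnegativity of $f$ does not excuse you from it. Averaging over $\rho\in[r/2,r]$ does not repair this either: after summing you still face terms $P_\rho(\xi,m)f_m$ whose weights are unbounded by anything you have asserted, and each boundary site $m$ belongs to exactly one sphere $\partial Q(|m-\xi|_\infty)$, so no genuine cancellation or compensation occurs.

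What the paper actually does is invoke the nontrivial uniform estimate $P_r(\xi,m)\le C\,r^{1-d}$ for \emph{all} $m\in\partial Q(r;\xi)$ (cited from Gu and Gu--Malinnikova). With that, the surface sub-mean value inequality is immediate for every $\rho\le r$, and summing $\rho^{d-1}\cdot$(surface inequality) over $1\le\rho\le r$ gives the volume bound. The averaging over radii is used in the paper only for the passage from the surface estimate to the volume estimate, not as a substitute for the upper bound on $P_\rho$. Your worry about degeneration at edges and corners is also misdirected: the harmonic measure of a corner as seen from the center is \emph{small}, not large, so there is no danger there; the real content of the cited bound is that $P_r$ is never much bigger than $r^{1-d}$ anywhere on the boundary.

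The second half of your proposal is fine and a touch more elegant than the paper's. Your observation that $f^2$ is subharmonic when $f\ge 0$ is subharmonic (via Jensen applied to the neighbor average) lets you apply the volume sub-mean value property directly to $g=f^2$, which is an equally valid alternative to the paper's one-line Cauchy--Schwarz step ($r^d f_\xi^2\le C\sum_{Q(r)}f_n^2$ from the unsquared bound). The covering/containment argument $Q(\ell(\Omega);\xi)\subset 3\Omega$ for $\xi\in\Omega$ is also correct. But this part cannot be salvaged without first repairing the proof of \eqref{eq:submean*} by importing the uniform upper bound on the discrete Poisson kernel.
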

\begin{proof}
Fix $\xi\in \Z^d$, let $\wt{f}$ be the discrete harmonic function on $Q(r;\xi)$ which coincides $f$ on $\partial Q(r;\xi)$, i.e., 
$
	(\Delta \wt f)=0, n\in Q(r-1;\xi)$, and $
	\wt{f}_n=f_n,   n\in \partial Q(r;\xi).
$  By the integration by parts formula \eqref{eq:IBP},  $
    \wt{f}_{\xi}=\sum_{m\in \partial Q(r)}P_r(\xi,m){f}_{m}, 
$
where $P_r(n,m)$ is the discrete Poisson kernel on $Q(r)$ from \eqref{eq:Pdef}. It was showed in \cite{Gu,GuMa} that there is a dimensional constant $C>0$ such that $
    P_r(\xi,m)\le C r^{1-d}$ for all $m\in \partial Q(r;\xi)$.

Let $w_n=\wt f_n-f_n$. Clearly, $-(\Delta w)_n=(\Delta f)_n\ge 0$ for all $n\in Q(r-1;\xi)$ and $w_n=0$ for $n \in \partial Q(r;\xi)$. By the maximum principle Lemma \ref{lem:maxP}, one has $w_n=\wt f_n-f_n\ge 0$ for $n\in Q(r-1;\xi)$. In particular, for all $0\le r'\le r$, 
\begin{equation}\label{eq:submean-surf-pf}
   f_{\xi}\le \wt f_{\xi} 
    =\sum_{m\in \partial Q(r';\xi)}P_{r'}(\xi,m){f}_{m}
    \le C r'^{1-d}\sum_{m\in \partial Q(r';\xi)}f_{m}
\end{equation}
since $f_n\ge 0$. We multiply \eqref{eq:submean-surf-pf} by $(r')^{d-1}$, and then 
sum for all $1\le r' \le r$ to obtain
\[
    f_{\xi}\sum_{r'=1}^r(r')^{d-1}   \le  C \sum_{r'=1}^r\sum_{n\in \partial Q(r';\xi)}f_{n} \le C\sum_{n\in Q(r;\xi)}f_{n}.
\]
Therefore, 
\begin{equation}\label{eq:submean-vol-pf} 
f_{\xi}\le \wt C r^{-d}\sum_{n\in Q(r;\xi)}f_{n},
\end{equation}
which proves \eqref{eq:submean*}. 

Furthermore, \eqref{eq:submean-vol-pf} and H\"older inequality imply that for any $\xi$ and $r$,
\begin{equation}\label{eq:pfA90}
r^df^2_{\xi}\le  \wt C \sum_{n\in Q(r;\xi)}f^2_{n},
\end{equation}
which easily yields \eqref{eq:MH-1}. \end{proof}

As a direct consequence, we have 
\begin{lemma}[Moser-Harnack inequality for sub-solutions]\label{lem:MH-2}
Let  $\Omega\subset \Z^d$ be a cube of side length $\ell(\Omega)$, and let $3\Omega$ be the tripled cube  \eqref{eq:3Q}. Suppose $g_n$ is a non-negative sub-solution to an inhomogeneous equation on a domain containing $3\Omega$, so that
$
-(\Delta g)_n\le 1$ and $g_n\ge 0$ for $ n\in 3\Omega.
$
Then
 there is a dimensional constant $c_H>0$ such that  
\begin{equation}\label{eq:MH-inhomo}
\sum_{n\in 3\Omega}g_n^2\ge |\ell(\Omega)|^d\, \Bigl(c_H \sup_\Omega g^2_n-|\ell(\Omega)|^4\Bigr).
\end{equation}
\end{lemma}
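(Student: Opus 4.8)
The plan is to reduce the inhomogeneous Moser--Harnack estimate \eqref{eq:MH-inhomo} to the homogeneous sub-mean value property \eqref{eq:MH-1} from Lemma \ref{lem:MH-1} by a standard trick: subtract off an explicit quadratic that kills the right-hand side of the equation. Concretely, fix the center $\xi$ of $\Omega$ (or, more precisely, argue for an arbitrary point $\xi\in\Omega$ and take the supremum at the end) and set
\[
    w_n := g_n + \frac{1}{2d}\sum_{i=1}^d (n_i-\xi_i)^2 .
\]
A direct computation (the same one used in the proof of Lemma \ref{lem:arAr}) gives $-(\Delta w)_n = -(\Delta g)_n - 1 \le 0$ on $3\Omega$, so $w$ is subharmonic there. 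However $w$ need not be non-negative a priori only because $g\ge 0$ and the added quadratic is non-negative, so in fact $w_n\ge g_n\ge 0$ on $3\Omega$; thus $w$ is a non-negative subharmonic function on a domain containing the tripled cube $3\Omega$, and Lemma \ref{lem:MH-1}, estimate \eqref{eq:MH-1}, applies to $w$.

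First I would apply \eqref{eq:MH-1} to $w$: since $3\Omega$ itself is a cube whose triple contains the relevant domain — here one must be slightly careful and instead apply \eqref{eq:MH-1} with $\Omega$ in the role of the small cube and $3\Omega$ in the role of its triple, which is exactly the hypothesis we are given — we obtain
\[
    \ell(\Omega)^d \sup_{\xi\in\Omega} w_\xi^2 \le C \sum_{n\in 3\Omega} w_n^2 .
\]
Next I would unwind both sides. On the left, since the added quadratic is non-negative, $\sup_\Omega w_\xi^2 \ge \sup_\Omega g_\xi^2$. On the right, using $(a+b)^2\le 2a^2+2b^2$ and the fact that on $3\Omega$ we have $|n_i-\xi_i|\le 3\ell(\Omega)$ so the quadratic term is at most $\frac{1}{2d}\cdot d\cdot (3\ell(\Omega))^2 = \tfrac92\ell(\Omega)^2$, we get
\[
    \sum_{n\in 3\Omega} w_n^2 \le 2\sum_{n\in 3\Omega} g_n^2 + 2\sum_{n\in 3\Omega}\Big(\tfrac92\ell(\Omega)^2\Big)^2 \le 2\sum_{n\in 3\Omega} g_n^2 + C'\,\ell(\Omega)^{d+4},
\]
since $|3\Omega|\le (3\ell(\Omega))^d \le C''\ell(\Omega)^d$. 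Combining the two displays and rearranging yields
\[
    \sum_{n\in 3\Omega} g_n^2 \ge \tfrac12 C^{-1}\ell(\Omega)^d \sup_\Omega g_\xi^2 - C'''\,\ell(\Omega)^{d+4},
\]
which is exactly \eqref{eq:MH-inhomo} with $c_H = \tfrac12 C^{-1}$ after absorbing constants; note $c_H<1$ can be arranged by decreasing it.

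The only genuinely delicate point — and the one I would check most carefully — is the bookkeeping of which cube plays the role of ``$\Omega$'' versus ``$3\Omega$'' when invoking \eqref{eq:MH-1}, so that the subharmonicity hypothesis of Lemma \ref{lem:MH-1} is met on exactly the set we have ($g$ and hence $w$ being sub-solutions on a domain containing $3\Omega$ is precisely what \eqref{eq:MH-1} needs when its ``$\Omega$'' is our $\Omega$). Everything else is routine: the quadratic-subtraction identity, the elementary inequalities $(a+b)^2\le 2a^2+2b^2$, and crude volume and diameter bounds for $3\Omega$. No new analytic input beyond Lemma \ref{lem:MH-1} is required, and the constant $c_H$ produced depends only on the dimension $d$, as asserted.
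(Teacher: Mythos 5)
Your proof takes essentially the same route as the paper's: add an explicit quadratic polynomial $q$ with $\Delta q = 1$ to convert the inhomogeneous sub-solution $g$ into a non-negative subharmonic function, then apply the sub-mean value inequality \eqref{eq:MH-1} to it and peel off the contribution of $q$ on both sides. The only cosmetic difference is the choice of quadratic: you use $q_n=\frac{1}{2d}\sum_i(n_i-\xi_i)^2$ centered at a point of $\Omega$, whereas the paper uses $h_n=\frac98\ell^2-\frac{1}{2d}\sum_i(n_i-a_i)(b_i-n_i)$ adapted to the corners of $3\Omega$; both are non-negative on $3\Omega$ with discrete Laplacian $1$, so the rest of the argument is identical.

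One small bookkeeping point to fix before you claim \eqref{eq:MH-inhomo}: the inequality you end with, $\sum g^2 \ge \tfrac12 C^{-1}\ell^d\sup g^2 - C'''\ell^{d+4}$ with $C'''>1$, is \emph{not} of the form $\ell^d\bigl(c_H\sup g^2-\ell^4\bigr)$, and you cannot get there ``by absorbing constants'' into $c_H$, since decreasing $c_H$ makes the right-hand side more negative, not less. The clean way out is the one the paper uses implicitly: collect all the constants in the intermediate inequality $\ell^d\sup g^2\le C\bigl(\sum g^2+\ell^{d+4}\bigr)$ behind a \emph{single} $C$ (take the max of the two coefficients you produced) and then divide by $C$, which gives $\sum g^2\ge \frac1C\ell^d\sup g^2-\ell^{d+4}$ with the required coefficient $1$ on $\ell^{d+4}$. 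Equivalently, since $\sum g^2\ge 0$ trivially, you may average: $\sum g^2\ge\theta\bigl(A\sup g^2-B\bigr)$ for any $\theta\in(0,1]$, and choose $\theta=1/C'''$. Either fix is one line; with it, the proof is complete.
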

\begin{proof}
Suppose $3\Omega=\llbracket a_1,b_1 \rrbracket\times \cdots \times  \llbracket a_d,b_d \rrbracket$, $b_i-a_i+1=3\ell,i=1,\cdots,d$, where $\ell=\ell(\Omega)$ is the side length of $\Omega$. Denote by $|3\Omega|=3^d\ell^d$ its cardinality  as usual.  For $n=(n_1,\cdots, n_d)\in 3\Omega$, let 
 
\[
h_n=\frac{9}{8} \ell^2-\frac{1}{2d}\sum_{i=1}^d(n_i-a_i)(b_i-n_i).
\]
 
Direct computations show that 
\[
(\Delta h)_n=1\quad {\mbox{and }} \quad 
0\le h_n \le  \frac{9}{8}  \ell ^2, \quad n\in 3\Omega. \]

Let $f_n=h_n+g_n$. Then $-(\Delta f)_n =-(\Delta h)_n-(\Delta g)_n\le   0$ and $f_n\ge g_n \ge 0$. We can apply Lemma \ref{lem:MH-1} to the non-negative subharmonic function $f_n$. The estimate \eqref{eq:MH-1} implies that
\[
|\ell(\Omega)|^d\sup_{n\in \Omega}g^2_n\le|\ell(\Omega)|^d\sup_{n\in \Omega}f^2_n
\le  C  \sum_{n\in 3\Omega}f^2_{n}
\le   C\Bigl(|\ell(\Omega)|^{4+d}+ \sum_{n\in 3\Omega}g_n^2\Bigr),
\]
as desired.
  
\end{proof}

Next, we study the discrete Harnack inequality for  sup-solutions of a homogeneous Schr\"odinger equation with a bounded potential.

\begin{lemma}\label{lem:Harnack}
Suppose $v=\{v_n\}$, $v_n\le V_{\max}$, is a bounded potential. Let $f$ be a non-negative super-solution for the Schr\"odinger equation on a cube $\Omega\subset \Z^d$ so that
\begin{equation}\label{eq:sup-sln}
    -(\Delta f)_n+v_nf_n\ge 0, \ \ f_n\ge 0,\ \ n\in\Omega.
\end{equation}
There is  a constant $C>0$ depending on $d$ and $V_{\max}$ such that for any cube $Q\subset \Omega$ of side length $\ell(Q)$, 
\begin{equation}\label{eq:Harnack}
   \sup_{n\in Q}f_n\le C^{\ell(Q)}  \inf_{n\in Q}f_n .
\end{equation}

\end{lemma}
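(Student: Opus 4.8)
The plan is to exploit a feature special to the lattice: the super-solution inequality gives an \emph{immediate} comparison between the value of $f$ at a vertex and its value at any nearest neighbour, and the full Harnack inequality then follows by chaining such comparisons along a lattice path joining the point of $Q$ where $f$ is smallest to the point where it is largest. This is the discrete substitute for the Moser iteration / De Giorgi--Nash machinery used in the continuum; it is much softer, at the price of a constant that deteriorates exponentially in the side length.

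\emph{One-step comparison.} Fix $n\in\Omega$. Writing out the discrete Laplacian $(\Delta f)_n=\sum_{|m-n|_1=1}(f_m-f_n)$, the hypothesis $-(\Delta f)_n+v_nf_n\ge0$ becomes $(2d+v_n)f_n\ge\sum_{|m-n|_1=1}f_m$. Since $f\ge0$, every summand on the right is non-negative, so keeping a single neighbour $n'$ of $n$ gives
\[
 f_{n'}\le(2d+v_n)f_n\le(2d+V_{\max})\,f_n=:c\,f_n ,
\]
and $c=2d+V_{\max}>1$. This is the only place the equation is used.

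\emph{Chaining along a path.} Let $Q\subset\Omega$ be a cube of side length $\ell=\ell(Q)$, and choose $n_{\min},n_{\max}\in Q$ where $f$ attains its minimum, resp.\ maximum, over $Q$. Since $Q$ is a coordinate box, there is a nearest-neighbour path $n_{\min}=p_0,p_1,\dots,p_k=n_{\max}$ with all $p_j\in Q$ and $k=|n_{\max}-n_{\min}|_1\le d(\ell-1)\le d\ell$, obtained by adjusting one coordinate at a time. Applying the one-step estimate successively at $p_0,\dots,p_{k-1}$ — each of which lies in $Q\subset\Omega$, and whose neighbours occurring in the sum also lie in $Q$, so the sign condition $f\ge0$ is available — we get $f_{p_{j+1}}\le c\,f_{p_j}$ for $0\le j\le k-1$, hence
\[
 \sup_Q f=f_{n_{\max}}=f_{p_k}\le c^{k}f_{p_0}=c^{k}f_{n_{\min}}\le c^{\,d\ell}\,\inf_Q f .
\]
Taking $C:=c^{\,d}=(2d+V_{\max})^{d}$ yields $\sup_Q f\le C^{\ell(Q)}\inf_Q f$, which is \eqref{eq:Harnack}.

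\emph{On the difficulty.} There is no serious obstacle here; the lattice trivialises the comparison. The only points to check are that the connecting path can be kept inside $Q$ (immediate, since a coordinate box is convex for monotone lattice paths and $Q\subset\Omega$), so every invoked instance of the inequality is legitimate and every neighbour appearing carries $f\ge0$; and that the degenerate case $\inf_Q f=0$ needs no separate treatment, the same chain forcing $f\equiv0$ on $Q$. What one \emph{cannot} avoid is that $C^{\ell(Q)}$ blows up with the side length — this is exactly why \eqref{eq:Harnack} is only used at bounded scales, as in the periodic case (Section~\ref{sec:peri}) and under the scaling condition (Section~\ref{sec:scaling}).
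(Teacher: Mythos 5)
Your argument is correct and is essentially the paper's own proof: derive the one-step comparison $(2d+V_{\max})f_n\ge\sum_{|m-n|_1=1}f_m\ge f_{n'}$ from the super-solution inequality and non-negativity, then chain along a monotone nearest-neighbour path of length $|n_{\max}-n_{\min}|_1\le d\,\ell(Q)$ to get $\sup_Q f\le(2d+V_{\max})^{d\ell(Q)}\inf_Q f$. One small inaccuracy worth noting: when you drop the other $2d-1$ terms of $\sum_{|m-p_j|_1=1}f_m$, you invoke $f_m\ge0$ for all neighbours $m$ of $p_j$, some of which may lie outside $Q$ (indeed outside $\Omega$ if $Q$ touches $\partial\Omega$), so the parenthetical claim that the neighbours "occurring in the sum also lie in $Q$" is not literally true; one needs $f\ge0$ on a one-step neighbourhood of $Q$ (the paper's proof has the same implicit requirement, and in the paper's applications $f$ is the globally positive landscape function, so it is harmless).
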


\begin{proof}
Assume that the finite dimensional vector $\{f_n\}_{n\in Q}$ attains its minimum and maximum at $m,n\in Q$ respectively. Connect $m,n$ by a discrete path $\{\gamma^j\}_{j=0}^s$ in $Q$, where $\gamma_0=m,\gamma_s=n$ and $\gamma^{j+1}=\gamma^j\pm e_i$ for some $i$. It is easy to check that the minimum number of steps needed to reach $n$ from $m$ is $s=|m-n|_1\le d\, \ell(Q)$. The upper bound for $v_n$ and \eqref{eq:sup-sln} imply   $
(2d+V_{\max})f_k\ge \sum_{|k'-k|_1=1}f_{k'},
$  for all $k\in \Omega$. Then 
\[
    f_m\ge \frac{1}{2d+V_{\max}}\sum_{|m'-m|_1=1}f_{m'}\ge \frac{1}{2d+V_{\max}}f_{\gamma^1}\ge \frac{1}{(2d+V_{\max})^2}f_{\gamma^2}\ge \cdots \frac{1}{(2d+V_{\max})^s}f_{\gamma^s}.
\]
Therefore,  
\[
   \inf_{n\in Q}f_n \ge  (2d+V_{\max})^{-d\, \ell(Q)} \sup_{n\in Q}f_n, 
\]
which gives \eqref{eq:Harnack}.
\end{proof}

%%%%%%%%%%%%%%%%%%%%%%%%%%%%%%%%%%%%%%%%%%%%%%%%%%%%%%%%%%%%%%%%%%%%%%%%%%%%
%%%%%%%%%%%%%%%%%%%%%%%%%%%%%%%%%%%%%%%%%%%%%%%%%%%%%%%%%%%%%%%%%%%%%%%%%%%%
%%%%%%%%%%%%%%%%%%%%%%%%%%%%%%%%%%%%%%%%%%%%%%%%%%%%%%%%%%%%%%%%%%%%%%%%%%%%
%%%%%%%%%%%%%%%%%%%%%%%%%%%%%%%%%%%%%%%%%%%%%%%%%%%%%%%%%%%%%%%%%%%%%%%%%%%%

\section{Chernoff bound}
\begin{lemma}[Chernoff–Hoeffding Theorem, \cite{Ho}]\label{lem:Chernoff}
Suppose $B\subset \Z^d$ and  $\{\zeta_j\}_{j\in B}$ are i.i.d. Bernoulli random variables, taking values in $\{0,1 \}$ with common expectation $p=\Ev{\zeta_j}\in(0,1)$. Then for any $0<\lambda<1-p$, 
\begin{equation}\label{eq:Chernoff-app}
    \P \Bigl\{ \sum_{j\in B } \zeta_j\ge (1-\lambda)|B|\Bigr\} \le e^{-D(1-\lambda \| p)\, |B|} , 
\end{equation}
where 
\begin{equation}\label{eq:KL-div-app}
    D(x\|y)=x\log  \frac{x}{y}+(1-x)\log  \frac{1-x}{1-y}
\end{equation}
is the Kullback–Leibler divergence between Bernoulli distributed random variables with parameters $x$ and $y$ respectively.
\end{lemma}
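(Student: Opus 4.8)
The plan is to use the classical exponential moment (Chernoff) method, optimized so that the resulting exponent is exactly the relative entropy $D(\cdot\|\cdot)$. Write $N=|B|$, $S=\sum_{j\in B}\zeta_j$, and put $q=1-\lambda$, so that the hypothesis $0<\lambda<1-p$ is precisely $p<q<1$ and the claim reads $\P\{S\ge qN\}\le e^{-N D(q\|p)}$. First I would apply the exponential Markov inequality: for every $t>0$,
\[
\P\{S\ge qN\}=\P\{e^{tS}\ge e^{tqN}\}\le e^{-tqN}\,\E\bigl[e^{tS}\bigr].
\]
Since the $\zeta_j$ are i.i.d.\ Bernoulli$(p)$, the moment generating function factorizes, $\E[e^{tS}]=(\E[e^{t\zeta_1}])^{N}=(1-p+pe^{t})^{N}$, so
\[
\P\{S\ge qN\}\le \exp\!\Bigl(-N\bigl[tq-\log(1-p+pe^{t})\bigr]\Bigr).
\]

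Next I would optimize the exponent $g(t):=tq-\log(1-p+pe^{t})$ over $t>0$. Since $g$ is concave with $g'(t)=q-\dfrac{pe^{t}}{1-p+pe^{t}}$, the maximum is attained at the point $t^{*}$ solving $e^{t^{*}}=\dfrac{q(1-p)}{p(1-q)}$; because $q>p$ this value exceeds $1$, so $t^{*}>0$ lies in the admissible range. Substituting back gives $\log(1-p+pe^{t^{*}})=\log\dfrac{1-p}{1-q}$ and $t^{*}q=q\log\dfrac{q}{p}+q\log\dfrac{1-p}{1-q}$, hence
\[
g(t^{*})=q\log\frac{q}{p}+(1-q)\log\frac{1-q}{1-p}=D(q\|p).
\]
Plugging $q=1-\lambda$ back into the previous display yields exactly \eqref{eq:Chernoff-app}.

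The argument carries no real obstacle: it is simply the textbook Chernoff bound in its Kullback–Leibler form, with the Bernoulli case also being a special instance of Hoeffding's inequality, so one may alternatively just invoke \cite{Ho}. The only points deserving a line of care are that the stationary point $t^{*}$ is strictly positive — which is guaranteed precisely by $q>p$, i.e.\ by $\lambda<1-p$, and which is also why the bound is nontrivial, $D(q\|p)>0$ — and the concavity of $g$, which ensures $t^{*}$ is the global maximizer of the exponent.
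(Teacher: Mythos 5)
Your proof is correct and follows exactly the same route as the paper: apply the exponential Markov (Chernoff) inequality, factorize the moment generating function for i.i.d.\ Bernoulli variables, and optimize the exponent over $t>0$ to recover the Kullback--Leibler divergence. Your substitution $q=1-\lambda$ is merely cosmetic, and the added remarks on $t^{*}>0$ and concavity are sound small refinements of the same argument.
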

We sketch the proof, following the arguments used in \cite{DFM} (which is also close to the original proof of Hoeffding),  for readers' convenience. 

\begin{proof}
Let $S=\sum_{j\in B } \zeta_j$. For any $t>0$,
\[
    \Pr{ S \ge (1-\lambda)|B|}= \Pr{e^{t S }\ge e^{t (1-\lambda)|B|}} 
    \le  e^{-t (1-\lambda)|B|}\, \Ev{e^{t S }}  
    =  e^{-t (1-\lambda)|B|}\, \left( \Ev{e^{t\zeta_1}} \right)^{|B|}.
\]

For any $j$, the expectation of $e^{t\zeta_j}$ is  $
\Ev{e^{t\zeta_1}}= e^t p + 1-p.
$ Therefore, for all $t>0$,
\begin{equation}
    \log \Pr{ S \ge (1-\lambda)|B|} 
  \le   -t (1-\lambda)|B|+|B|\log \left( e^t p + 1-p     \right) =: -|B|\, f(t)   \label{eq:temp3}.
\end{equation}
It is enough to optimize $f(t)$ in $t$. Under the condition of $1-\lambda-p>0$, the function $f(t)$ attains its only local maximum at 
$
    {t^\ast}=\log \left(\frac{1-\lambda}{\lambda}\frac{1-p}{p} \right).
$
Direct computations show that 
\[
    f(t^\ast)= t^\ast(1-\lambda)-\log \left( e^{t^\ast} p + 1-p     \right)
    = (1-\lambda)\log \frac{1-\lambda}{p}\, +\lambda\log \frac{\lambda}{1-p}=D(1-\lambda\|p)
\]
where $D(x\|y)$ is defined in \eqref{eq:KL-div-app}.  Combing with \eqref{eq:temp3}, we have 
\eqref{eq:Chernoff-app}.
\end{proof}

\Addresses

\end{document}